\newtheorem{theorem}{Theorem}[section]
\newtheorem{proposition}[theorem]{Proposition}
\newtheorem{lemma}[theorem]{Lemma}
\newtheorem{claim}[theorem]{Claim}
\newtheorem{corollary}[theorem]{Corollary}
\newtheorem{definition}[theorem]{Definition}
\newtheorem{remark}[theorem]{Remark}
\theoremstyle{remark}
\newcommand{\comment}[1]{}
\newcommand{\suggestion}[1]{}
\newcommand{\cA}{\mathcal{A}}\newcommand{\cB}{\mathcal{B}}
\newcommand{\cC}{\mathcal{C}}\newcommand{\cD}{\mathcal{D}}
\newcommand{\cE}{\mathcal{E}}\newcommand{\cF}{\mathcal{F}}
\newcommand{\cI}{\mathcal{I}}
\newcommand{\cL}{\mathcal{L}}
\newcommand{\cM}{\mathcal{M}}
\newcommand{\cO}{\mathcal{O}}
\newcommand{\cQ}{\mathcal{Q}}\newcommand{\cR}{\mathcal{R}}
\newcommand{\cV}{\mathcal{V}}
\newcommand{\bC}{\mathbb{C}}
\newcommand{\bF}{\mathbb{F}}
\newcommand{\bN}{\mathbb{N}}
\newcommand{\bZ}{\mathbb{Z}}
\newcommand{\1}{\mathds{1}}
\newcommand{\poly}{\operatorname{poly}}
\newcommand{\Enc}{\operatorname{Enc}}
\newcommand{\nc}{\newcommand}
\nc{\on}{\operatorname}
\nc{\Spec}{\on{Spec}}
\nc{\Aut}{\textit{Aut}}
\nc{\id}{\textit{id}}
\nc{\chr}{\on{char}}
\nc{\im}{\on{im}}
\nc{\Hom}{\on{Hom}}
\nc{\lcm}{\on{lcm}}
\nc{\dual}[1]{\prescript{t}{}{#1}}
\nc{\transpose}[1]{{#1}^{\intercal}}
\nc{\Sym}{\on{Sym}}
\nc{\End}{\on{End}}
\nc{\stab}{\on{stab}}
\nc{\Li}{\on{Li}}
\nc{\spn}{\on{span}}
\nc{\sgn}{\on{sign}}
\nc{\supp}{\on{supp}}
\nc{\Unif}{\on{Unif}}
\newcommand\footnoteref[1]{\protected@xdef\@thefnmark{\ref{#1}}\@footnotemark}
\title{Constant-Overhead Addressable Gates \\ via Single-Shot Code Switching}
\author{
  Louis Golowich \\
  UC Berkeley \& IBM Quantum \\
  \href{mailto:lgolowich@berkeley.edu}{\texttt{lgolowich@berkeley.edu}}
  \and
  Kathleen (Katie) Chang \\
  Yale University \& IBM Quantum \\
  \href{mailto:katie.chang@yale.edu}{\texttt{katie.chang@yale.edu}}
  \and
  Guanyu Zhu \\
  IBM Quantum \\
  \href{mailto:Guanyu.Zhu@ibm.com}{\texttt{Guanyu.Zhu@ibm.com}}
}
\begin{document}

\maketitle

\begin{abstract}
  It is a major challenge to perform addressable and parallel logical operations on constant-rate quantum LDPC (qLDPC) codes. Indeed, the overhead of targeting specific logical qubits represents a crucial bottleneck in many quantum fault-tolerance schemes.

  We introduce fault-tolerant protocols for performing various addressable as well as parallel logical operations with constant space-time overhead, on a family of constant-rate and polynomial-distance qLDPC codes. Specifically, we construct gadgets for a large class of permutations of logical qubits. We apply these logical permutations to construct gadgets for applying a targeted Hadamard (or $CNOT$) gate on any chosen logical qubit (pair). We also construct gadgets for preparing logical code states, and for applying Hadamard gates on all logical qubits in a codeblock. All of our gadgets use constant quantum space-time overhead along with polynomially bounded classical computation. Prior protocols for such operations required larger overhead, or else relied on codes with certain symmetries that lack known asymptotic constructions.

  Our codes are given by tensor (i.e.~hypergraph) products of classical codes constructed from lossless expander graphs. Our core technical contribution is a constant-overhead code-switching procedure between 2- and 3-dimensional product codes, which generalizes Bomb\'{i}n’s dimensional jump (arXiv:1412.5079). We provide rigorous fault-tolerance proofs for our gadgets, and specifically prove a constant threshold under locally stochastic noise. Along the way, we develop a small-set flip decoder for high-dimensional product codes from lossless expanders. Our techniques yield additional interesting consequences, such as single-shot state preparation of 2-dimensional product codes with constant space-time overhead. We also propose a method for performing parallel non-Clifford gates by extending our techniques to codes supporting transversal application of such gates.

\end{abstract}

\newpage

\tableofcontents

\newpage


\comment{
\textbf{Proposed structure for reorganization:}
\begin{enumerate}
    \item Introduction (unchanged)
    \item Preliminaries (unchanged)
    \item Constant-rate product codes from lossless expanders. Introduce of the code explicitly (there is currently no obvious section following preliminaries putting them  together and constructing the code family. it happens in subsection 9.1 code instantiation. this is WAY too late!)
    \begin{enumerate}
        \item introduce the code family
        \item state small set flip greedy decoder for product of codes from lossless expanders. relegate the proof to appendix
        \item state result of error correction gadget and refer to proof in the appendix. 
        \item state that these codes have single shot state preparation. can describe the working principle of the state preparation gadget for two-dimensional product codes. relegate the proof to appendix.
    \end{enumerate}
    \item Code switching gadgets
    \begin{enumerate}
        \item downwards code switching (in full, including proof)
        \item upwards code switching (in full, including proof)
        \begin{enumerate}
            \item mention basic gadget results, refer to the appendix, for transversal CNOT, logical pauli X and Z gadgets
        \end{enumerate}
    \end{enumerate}
    \item Targeting individual logical qubits (I suggest putting all 9.X subsections as full sections in the paper because these are the main results)
    \item Logical qubit permutations (note, right now it is confusing to see the diff. between 7.3 and 9.3)
    \item Logical targeted Hadamard
    \item Logical targeted CNOT
    \item Appendix. ``in the appendix, we prove basic gadgets used in code switching and our addressable logical gadgets"
    \begin{enumerate}
        \item Error correction gadget full statement and proof
        \item Single shot state prep gadget, full statement and proof
        \item transversal CNOT gadget
        \item transversal hadamard gadget
        \item swap and permutation gadget
        \item transversal X and Z measurement gadget
    \end{enumerate}
.\end{enumerate}}

\section{Introduction}
\label{sec:intro}
Constant-rate quantum LDPC (qLDPC) codes provide fault-tolerant quantum memories with just constant space overhead. However, it remains a major challenge to efficiently perform logical computations on the encoded data. Existing approaches typically incur large (growing in the block length) space-time overheads for such gates that perform highly targeted or parallel operations on specific logical qubits within a codeblock.

In this work, we construct protocols for performing both addressable (i.e.~targeted) and parallel fault-tolerant logical operations with constant space-time overhead, on constant-rate codes. Constant overhead here means that the number of physical qubits and physical timesteps are within a constant factor of the number of logical qubits and logical timesteps, respectively.
Specifically, we show the following:

\begin{theorem}[Informal]\footnote{This theorem statement combines the gadgets in Proposition~\ref{prop:stateprep} and Proposition~\ref{prop:switchdown} (for item~\ref{it:mainprep}), Corollary~\ref{cor:logcyc} (for item~\ref{it:mainpermcyc}), Proposition~\ref{prop:permslab} (for item~\ref{it:mainpermgen}), Proposition~\ref{prop:hadsame} (for item~\ref{it:mainHa}), and Corollary~\ref{cor:target} (for item~\ref{it:mainHt} and item~\ref{it:mainCNOTt}). By Lemma~\ref{lem:redPauli} and Lemma~\ref{lem:threshold}, the threshold for these gadgets follows by combining the results listed above with the error-correction gadget in Proposition~\ref{prop:errcorr}.}
  \label{thm:main}
  There exists an infinite family of\footnote{Recall that an $[[n,k,d]]$ code encodes $k$ logical qubits into $n$ physical qubits with distance $d$.} $[[n,\;k=\Theta(n),\;d=\poly(n)]]$ qLDPC codes for which the following logical operations can be performed in constant quantum space-time overhead, while using polynomially bounded noiseless classical computation. These constructions exhibit a threshold under locally stochastic noise. 
  \begin{enumerate}
  \item\label{it:mainprep} Initialize a new codeblock with all logical qubits in the $\ket{0}$ or $\ket{+}$ state.
  \item\label{it:mainperm} Permute the logical qubits in a codeblock according to an appropriate permutation $\pi:[k]\rightarrow[k]$, meaning that logical qubit $j$ is moved to $\pi(j)$. For instance:
    \begin{enumerate}
    \item\label{it:mainpermcyc} For every $s\in[k]$, we can choose $\pi$ be the cyclic shift $\pi(j)\equiv j+s\pmod{k}$.
    \item\label{it:mainpermgen} For every constant $r\geq 3$, we can set $k=\ell^r$ for $\ell\in\bN$. Then for every $h\in[r]$ and every permutation $\sigma:[\ell]\rightarrow[\ell]$ we can choose $\pi=I^{\otimes h-1}\otimes\sigma\otimes I^{r-h}$ to be the permutation on $[\ell]^r\cong[k]$ that applies $\sigma$ to the $h^{\mathrm{th}}$ coordinate.
    \end{enumerate}
  \item\label{it:mainHa} Perform logical Hadamard gates on all logical qubits in a codeblock.
  \item\label{it:mainHt} Perform a targeted logical Hadamard gate on any single logical qubit.
  \item\label{it:mainCNOTt} Perform a targeted logical $CNOT$ gate between any pair of logical qubits within a codeblock, or across two codeblocks.
  \end{enumerate}
\end{theorem}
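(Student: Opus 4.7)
The plan is to build a hierarchy of fault-tolerant gadgets, starting from a concrete constant-rate code family and working up through single-shot code switching to the targeted gates.

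\textbf{Code family and memory.} First I would construct the codes as 2-dimensional tensor products $C_1\otimes C_2$ of classical linear codes obtained from lossless bipartite expander graphs, together with their 3-dimensional analogues $C_1\otimes C_2\otimes C_3$. A generalization of the Leverrier--Tillich--Z\'emor small-set-flip decoder to product codes of arbitrary dimension yields the error-correction gadget (Proposition~\ref{prop:errcorr}) and hence a fault-tolerant memory at constant space overhead and constant threshold under locally stochastic noise. The distance bound $d=\poly(n)$ and constant rate $k=\Theta(n)$ are read off from the properties of the underlying expander codes.

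\textbf{Single-shot code switching.} The technical heart of the proof is the pair of gadgets that switch between the 2D and 3D codes. Downward switching (Proposition~\ref{prop:switchdown}) measures the extra stabilizers introduced by the third tensor factor and corrects using the high-dimensional small-set-flip decoder; upward switching prepares an ancillary codeblock along the new coordinate and merges it via transversal operations. Item~\ref{it:mainprep} (logical $\ket{0}$ and $\ket{+}$ preparation) then follows immediately by initializing a trivial product state in the 3D code and switching downward (Proposition~\ref{prop:stateprep}). Proving that switching is single-shot with constant overhead is the main obstacle I anticipate: the noisy syndromes produced by one round of measurements must remain correctable by the small-set-flip decoder, which requires careful tracking of how measurement errors and hook errors interact with the product structure, generalizing Bomb\'in's dimensional jump to a regime lacking a geometric boundary.

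\textbf{Parallel gates and permutations.} With switching available, the parallel Hadamard gadget (item~\ref{it:mainHa}, Proposition~\ref{prop:hadsame}) follows from the $X/Z$ symmetry of CSS product codes: a transversal physical Hadamard exchanges the two stabilizer types, and a compensating switch restores the original code while realizing a logical transversal Hadamard. Slab permutations (item~\ref{it:mainpermgen}, Proposition~\ref{prop:permslab}) are implemented by physically permuting the qubits in the corresponding tensor coordinate, since the stabilizer group and the logical operator basis of a tensor product code are equivariant under such permutations. Cyclic shifts (item~\ref{it:mainpermcyc}, Corollary~\ref{cor:logcyc}) are obtained by combining slab permutations with the code-switching protocol that reinterprets a 2D block as a 3D block of appropriate shape.

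\textbf{Addressable gates.} The targeted Hadamard (item~\ref{it:mainHt}) and targeted $CNOT$ (item~\ref{it:mainCNOTt}) then come from composing the above gadgets in the framework of Corollary~\ref{cor:target}. The strategy is to use slab permutations to bring the target logical qubit (or pair of qubits) into a distinguished position where a slab-local version of the parallel Hadamard (respectively transversal $CNOT$) acts only on the desired qubit(s), apply that operation, and then undo the permutations; the slab-local gate itself is built by switching only the relevant slab up to 3D, performing the symmetric physical gate, and switching back. A constant threshold for the full composite protocol follows from Lemma~\ref{lem:redPauli} and Lemma~\ref{lem:threshold}, which assemble the per-gadget fault-tolerance statements with the error-correction gadget into a single locally-stochastic noise analysis.
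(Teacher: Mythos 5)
The most serious gap is in your handling of slab permutations. You write that they ``are implemented by physically permuting the qubits in the corresponding tensor coordinate, since the stabilizer group and the logical operator basis of a tensor product code are equivariant under such permutations.'' This is false for the codes in question: a product $\cA^*\otimes\cB^*$ with $\cB^*$ coming from a lossless expander graph is \emph{not} invariant under permutations of the $\cB$-coordinates, because a random-like expander has essentially no nontrivial automorphisms. Permuting physical qubits in one tensor direction would scramble the parity checks, not realize a logical permutation. This is precisely why the paper must use code switching: Proposition~\ref{prop:permslab} first switches \emph{down} from the $r$-dimensional code to $|L^{\cB,1}|$ disjoint copies of an $(r-1)$-dimensional codeblock, then permutes those disjoint codeblocks via physical SWAPs (which is trivially allowed since they are separate blocks), and finally switches back \emph{up}. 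That is the central new idea enabling addressable operations on codes with no symmetry; without it items~\ref{it:mainperm}, \ref{it:mainHt}, and \ref{it:mainCNOTt} collapse.

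Two further points where your outline diverges from what actually works. For upward switching you say it ``prepares an ancillary codeblock along the new coordinate and merges it via transversal operations''; the paper instead performs a \emph{logical teleportation} using Bell pairs prepared between $(r-1)$- and $r$-dimensional codes (via state preparation, a downward switch, and a transversal $CNOT$ between codes of different dimensions, Lemma~\ref{lem:CNOTdiff}). A naive ``merge'' does not obviously undo a measurement-based downward switch, and the downward gadget is not a unitary encoder that could be run in reverse. For targeted gates, your ``slab-local version of the parallel Hadamard'' is not how the paper operates: transversal $H$ must act on an entire codeblock, so the paper first \emph{extracts} a single logical qubit into a fresh ancilla codeblock (via $2r-1$ slab swaps across $r$ ancillas, Proposition~\ref{prop:target}), applies the full transversal gate to that ancilla block, and then reverses the extraction. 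The rest of your plan (small-set-flip decoder for products, single-shot state preparation via stabilizer measurement plus small-set soundness, dual-code Hadamard restored by code switching, CRT for cyclic shifts, assembling a threshold via Lemma~\ref{lem:redPauli} and Lemma~\ref{lem:threshold}) matches the paper's strategy in spirit.
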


\begin{remark}
  Our qLDPC codes in Theorem~\ref{thm:main} are actually capable of encoding a larger number $k'=O(k)$ of logical qubits. However, due to the structure of our fault-tolerance scheme, we only encode our message into a fixed subset containing $k$ of these possible $k'$ logical qubits. 
\end{remark}

To our knowledge, our construction is the first achieving constant space-time overhead fault-tolerant (i.e.~exhibiting a threshold) gadgets for items~\ref{it:mainprep}--\ref{it:mainCNOTt} in Theorem~\ref{thm:main}. Many prior proposals for performing the fault-tolerant logical operations described in Theorem~\ref{thm:main} require space-time overhead growing polynomially in the code distance (or the protocol's fault distance) $d$, or else require specific code properties that are difficult to achieve with good parameters.

For instance, standard approaches for the state preparation in item~\ref{it:mainprep} incur polynomial overheads from repeated measurements (in space or time) or concatenation; see e.g.~\cite{gottesman_fault-tolerant_2014,bergamaschi_fault_2025}. While it was previously known that such state preparation can be performed with constant overhead using asymptotically good $[[n,\Theta(n),\Theta(n)]]$ quantum locally testable codes (qLTCs) \cite{pattison_personal_2024}, no such codes have yet been constructed. A state preparation scheme with low (though still super-constant) overhead was given by \cite{nguyen_quantum_2025} using nearly good qLTCs \cite{dinur_expansion_2024,kalachev_maximally_2025}. In contrast, we achieve the constant-overhead state preparation in item~\ref{it:mainprep} for codes as simple as (the disjoint union of multiple copies of) 2-dimensional hypergraph product codes \cite{tillich_quantum_2014}; see Section~\ref{sec:csinf} below.

Prior approaches for the constant-overhead cyclic permutations in item~\ref{it:mainpermcyc} used codes with strong symmetry properties for which good asymptotic families are not known \cite[Section~A.5]{xu_fast_2024}. Perhaps surprisingly, we obtain such cyclic permutations, as well as the more general permutations in item~\ref{it:mainpermgen}, on families of qLDPC codes without any special symmetry properties. Similarly, for product codes with an appropriate symmetry, the fold-transversal Hadamard gate \cite{breuckmann_fold-transversal_2024,quintavalle_partitioning_2023} achieves item~\ref{it:mainHa} up to some logical swaps, which then may require polynomial time to revert (e.g.~\cite[Section~A.5]{xu_fast_2024}). We avoid this polynomial overhead, as well as the symmetry assumption.

For the targeted logical Clifford gates in item~\ref{it:mainHt} and item~\ref{it:mainCNOTt}, prior approaches such as qLDPC-code lattice surgery (e.g.~\cite{cohen_low-overhead_2022,williamson_low-overhead_2024,cross_improved_2024,swaroop_universal_2025,he_extractors_2025}, which can be viewed as an application of stabilizer-weight reduction \cite{hastings_quantum_2023}) or concatenation \cite{gottesman_fault-tolerant_2014} naively incur a $\poly(d)$ time overhead. As described in \cite{gottesman_fault-tolerant_2014}, this time overhead can be reduced by encoding the $k$ logical qubits into $\poly(k)$ smaller codeblocks, and then trading off time for space when performing gates on these smaller codeblocks. Such techniques may yield similar results as item~\ref{it:mainHt} and item~\ref{it:mainCNOTt} in Theorem~\ref{thm:main}, though with some caveats. For instance, \cite{gottesman_fault-tolerant_2014} still requires super-constant time overhead for preparing certain logical ancilla states.\footnote{It may be possible to reduce this time overhead in \cite{gottesman_fault-tolerant_2014} to constant for the special case of Clifford circuits; it is an interesting direction of future work to further develop such optimizations and compare them to our methods.} Meanwhile, to our knowledge it remains an open question to construct an asymptotically efficient decoder for qLDPC-code lattice surgery schemes. In contrast, all of our gadgets use just polynomial-sized classical computation, including the cost of decoding.

\cite{nguyen_quantum_2025} provide gadgets for targeted logical gates that have a low space-time overhead in an amortized sense, meaning that the cost per gate is lower when many gates are performed in parallel. However, even this amortized overhead is super-constant, while the non-amortized overhead to perform a single targeted gate is at least polynomial in the distance. Furthermore, \cite{nguyen_quantum_2025} use somewhat involved distillation techniques within nearly good quantum locally testable codes. In contrast, we use more elementary codes; see Section~\ref{sec:csinf} below.

While \cite{he_quantum_2025,he_asymptotically_2025} construct asymptotically good codes with fully addressable and transversal non-Clifford (as well as Clifford) gates, their codes are not LDPC, but rather have linear-weight stabilizers. Therefore additional techniques such as concatenation will be needed to perform fault-tolerant syndrome extraction and error correction with these codes, which introduces a growing space-time overhead.

\subsection{Product Codes with Single-Shot Code Switching}
\label{sec:csinf}
Our codes in Theorem~\ref{thm:main} are given by tensor products of (co)chain complexes associated to lossless expander graphs, which in turn can be constructed randomly (e.g.~\cite[Theorem 4.16]{hoory_expander_2006}) or explicitly \cite{hsieh_explicit_2025} (see Section~\ref{sec:lossless}). Two-dimensional such product codes are often called hypergraph product codes \cite{tillich_quantum_2014}. The key technical ingredient behind Theorem~\ref{thm:main} is a \emph{single-shot code switching} procedure that we provide, which allows us to fault-tolerantly switch between product codes of different dimensions in constant space-time overhead, while preserving the encoded logical qubits. Specifically, for arbitrary fixed $r\geq 3$, the gadgets in Theorem~\ref{thm:main} are based on (sometimes repeated) switching between an $r$-dimensional $[[\Theta(n^r),\Theta(n^r),\Omega(n)]]$ product code and a collection of $\Theta(n)$ copies of an $(r-1)$-dimensional $[[\Theta(n^{r-1}),\Theta(n^{r-1}),\Omega(n)]]$ product code. For item~\ref{it:mainprep} in Theorem~\ref{thm:main}, we also develop a gadget for single-shot state preparation, which shares many technical ingredients with the code switching.

Our single-shot code switching procedure, which we prove exhibits a threshold under locally stochastic noise, has many interesting consequences. For instance, Section~\ref{sec:tarinf} below describes how we can use such code switching to target individual logical qubits. Code switching also allows us to prove Theorem~\ref{thm:main} for the code given by $\Theta(\sqrt{n})$ disjoint copies of a $[[n,\Theta(n),\Theta(\sqrt{n})]]$ 2-dimensional product code. In particular, we prove item~\ref{it:mainprep} by first preparing logical $\ket{0}$ or $\ket{+}$ states in a 3-dimensional product code, and then switching down to a collection of 2-dimensional codeblocks. Yet such 2-dimensional codes typically do not support constant-overhead logical state preparation, at least when preparing a single codeblock. It is therefore perhaps surprising that we are able to circumvent this barrier by preparing $\Theta(\sqrt{n})$ codeblocks at once.

Our code switching procedure can be viewed as a generalization of Bombin's code switching, or ``dimensional jump,'' on color/toric codes (\cite{bombin_dimensional_2016}, see also \cite{bombin_single-shot_2015-1}). Whereas a toric code is a product of classical repetition codes and therefore has poor rate, we obtain constant space-time overhead gadgets by taking products of constant-rate classical LDPC codes.

However, the fundamental idea is similar: to switch down from an $r$-dimensional codeblock to a collection of $(r-1)$-dimensional codeblocks, we simply measure out a subset of the physical qubits, then we run error correction on this measurement outcome, based on which we apply an appropriate Pauli correction. To switch back up to the $r$-dimensional code, we run a logical teleportation circuit, which requires logical bell pairs between $r$- and $(r-1)$-dimensional codes. To construct such logical bell pairs, we first prepare two $r$-dimensional codeblocks, one with logical $\ket{0}$ states and one with logical $\ket{+}$ states. We then apply our downwards switching to one of the codeblocks, and apply transversal $CNOT$ gates from the $\ket{+}$ block to the $\ket{0}$ block. Note that this transversal $CNOT$ is applied between an $(r-1)$-dimensional code and an $r$-dimensional code (see Lemma~\ref{lem:CNOTdiff}).

\subsection{Constant-Overhead Targeted Gates via Code Switching}
\label{sec:tarinf}

\begin{figure}
  \centering 
  \begin{subfigure}{0.4\textwidth}
    \includegraphics[width=\textwidth]{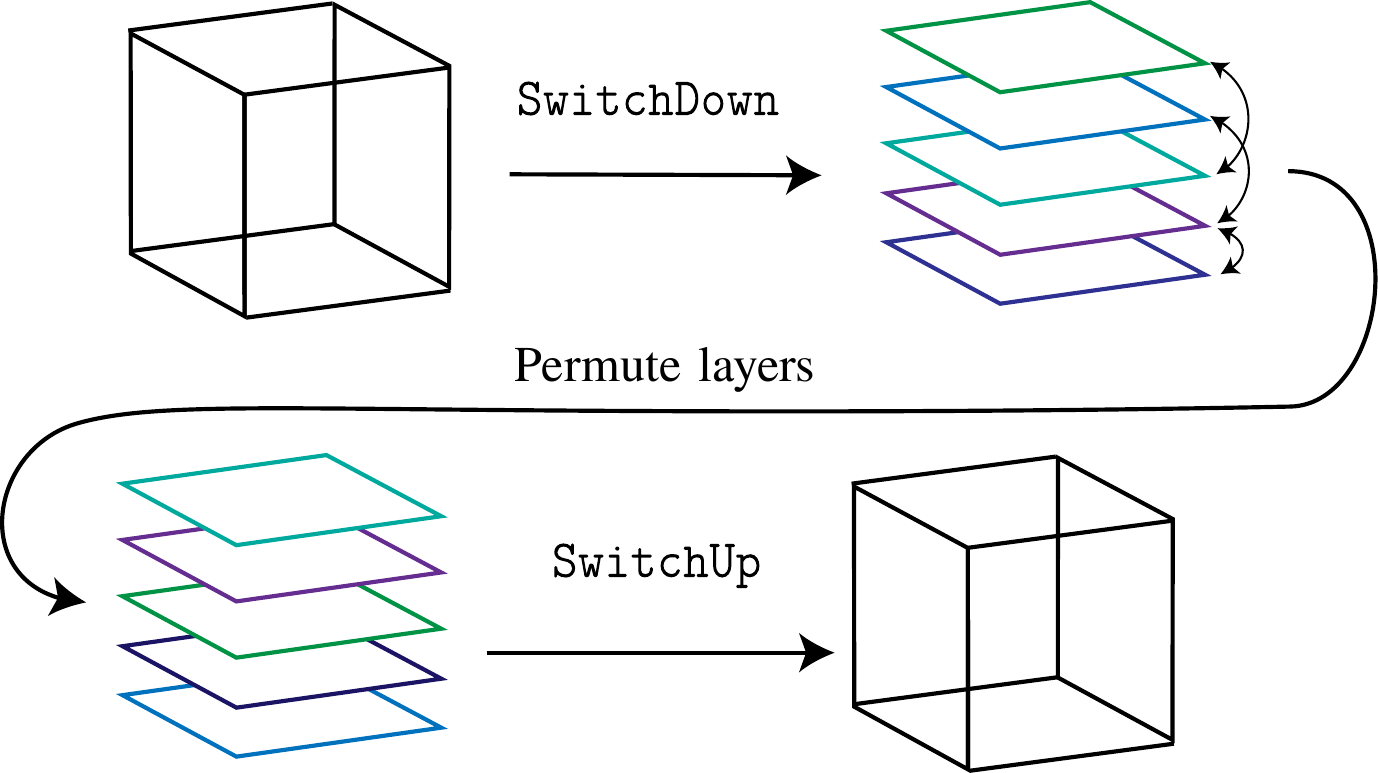}
    \captionsetup{width=\textwidth}
    \caption{\label{fig:swap} Gadget used in item~\ref{it:mainperm} of Theorem~\ref{thm:main} to permute ``slabs'' of logical qubits.}
  \end{subfigure}
  \begin{subfigure}{0.5\textwidth}
    \includegraphics[width=\textwidth]{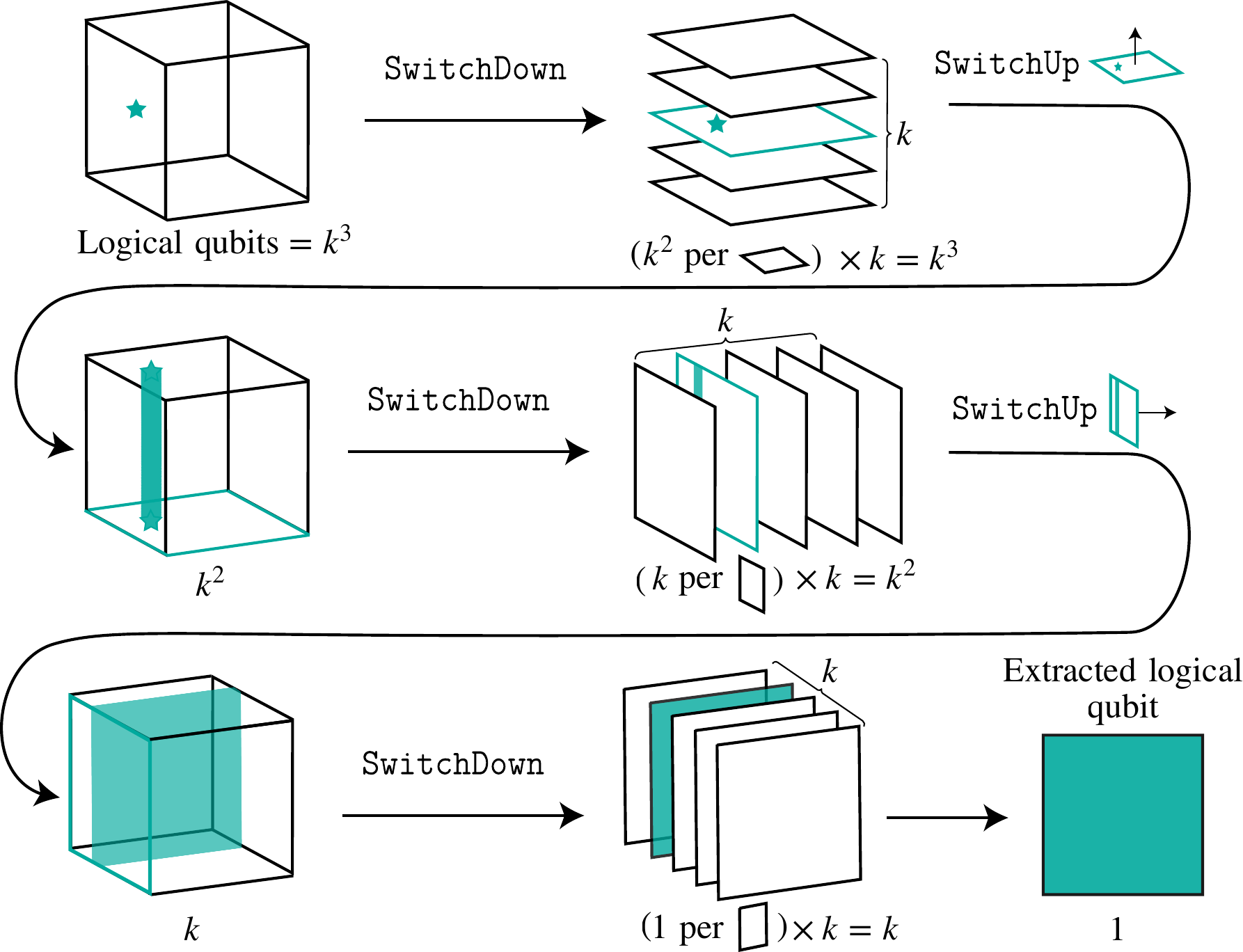}
    \captionsetup{width=\textwidth}
    \caption{\label{fig:target} Gadget used in item~\ref{it:mainHt} and item~\ref{it:mainCNOTt} of Theorem~\ref{thm:main} to target a logical qubit.}
  \end{subfigure}\hspace{0.05\textwidth}
  \caption{Applications of code switching to gadgets in Theorem~\ref{thm:main}.}
\end{figure}

We now outline how we use code switching to construct the gadgets in Theorem~\ref{thm:main}. For illustrative purposes here we consider the $r=3$ case; the generalization to arbitrary constant $r\geq 3$ is straightforward.

\textbf{Permuting logical qubits.} Figure~\ref{fig:swap} illustrates how we obtain item~\ref{it:mainpermgen} of Theorem~\ref{thm:main} by switching down, permuting the resulting 2-dimensional codeblocks, and then switching back up.

We then obtain item~\ref{it:mainpermcyc} by applying item~\ref{it:mainpermgen} in all $r$ directions. Specifically, letting $k=\ell_1\ell_2\cdots\ell_r$ for relatively prime $\ell_1,\dots,\ell_r\in\bN$, then $\bZ_k\cong\bZ_{\ell_1}\times\cdots\times\bZ_{\ell_r}$. Thus cyclic shifts of $k$ logical qubits can be realized by arranging the qubits in an ($r$-dimensional) $\ell_1\times\cdots\times\ell_r$ hypercube, and then performing cyclic shifts in each of the $r$ directions.

\textbf{Parallel logical Hadamard.} For item~\ref{it:mainHa} in Theorem~\ref{thm:main}, recall that applying transversal Hadamard gates across all physical qubits in a CSS code always induces logical Hadamard gates on all logical qubits, but also switches to the ``dual'' code, meaning that the $X$ and $Z$ stabilizers are swapped. We then use our code switching gadget (iteratively in each of the $r$ directions) to switch back to the original code.

\textbf{Targeting a single logical qubit in constant space and time.} Figure~\ref{fig:target} illustrates how we can begin with $k^3$ logical qubits encoded into a 3-dimensional $[[\Theta(k^3),k^3,\Omega(k)]]$ codeblock, and then perform code switching in each of the 3 directions interspersed with swap gates to extract any single desired logical qubit. Specifically, we begin with a 3-dimensional $k\times k\times k$ cube of logical qubits. By switching down and back up in the first direction, we are able to extract a $k\times k$ plane containing the desired logical qubit, which we encode into an ancilla 3-dimensional codeblock. Repeating this procedure in the other two directions, we ultimately extract a codeblock with a single logical qubit. We can then perform transversal $CNOT$ or Hadamard gates on this extracted qubit, before reverting the extraction procedure to insert the qubit back into its original location.

As each code switching step takes constant time, the entire procedure takes constant time. Although we extract one logical qubit into its own $\poly(k)$-sized codeblock, the scheme retains constant rate (i.e.~constant space overhead) because we do not parallelize the gadget. That is, we always have $k^3$ logical qubits collectively encoded into $\Theta(k^3)$ physical qubits. Thus we obtain item~\ref{it:mainHt} and item~\ref{it:mainCNOTt} of Theorem~\ref{thm:main}.

\subsection{Fault-Tolerance Analysis and Decoder}
\label{sec:ftinf}
An additional contribution of our paper lies in our rigorous proofs of fault-tolerance. Recall that a noise distribution is \emph{$\epsilon$-locally stochastic} if the probability that a set $S$ of physical qubits\footnote{We will not distinguish between physical qubit and measurement errors, as a measurement error is indistinguishable from qubit errors before and after the measurement.} lies within the support of the corruption is $\leq\epsilon^{|S|}$. As mentioned above, we prove that our gadgets exhibit fault-tolerance properties that imply a threshold under $\epsilon$-locally stochastic noise for sufficiently small constant $\epsilon>0$ (see Lemma~\ref{lem:redPauli} and Lemma~\ref{lem:threshold}). That is, under such physical noise, our logical error rate decays as $e^{-\poly(N)}$, where $N$ is the number of physical qubits. We prove such a threshold by adapting ideas from \cite{kovalev_fault_2013,gottesman_fault-tolerant_2014} to argue that errors cannot be highly concentrated inside appropriately-sized connected subgraphs of a constant-degree graph associated to our code.


We prove this threshold behavior under a new small-set flip decoder for high-dimensional product codes from lossless expanders. To construct this decoder, which uses constant quantum time and linear classical time, we generalize the decoder for 2-dimensional codes of \cite{leverrier_quantum_2015,fawzi_constant_2020}, while also applying ideas from \cite{dinur_good_2023,dinur_expansion_2024,kalachev_maximally_2025}. An informal version of our decoding result is stated below.

\begin{proposition}[Informal statement of Proposition~\ref{prop:ssflip} and Proposition~\ref{prop:errcorr}]
  \label{prop:decinf}
  For arbitrary constant $r\geq 2$, let $Q$ be a $[[\Theta(n^r),\Theta(n^r),\Omega(n)]]$ qLDPC code associated to an $r$-dimensional tensor product of $n$-vertex lossless expanders. Then there exists a decoder for $Q$ that uses constant quantum time along with $O(n^r)$ classical time, which provides:
  \begin{itemize}
  \item correction of up to $\delta n$ adversarial errors for a sufficiently small constant $\delta>0$, and
  \item a threshold (i.e.~logical error rate $\leq e^{-\poly(n)}$) under $\epsilon$-locally stochastic noise for a sufficiently small constant $\epsilon>0$.
  \end{itemize}
\end{proposition}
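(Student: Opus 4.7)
The plan is to build the decoder as a small-set flip procedure generalizing the 2-dimensional scheme of Leverrier--Tillich--Z\'emor and Fawzi--Grospellier--Leverrier. Concretely, at each iteration the decoder examines all error patterns supported on a constant-radius ball in the Tanner graph of $Q$ (where the radius is a constant depending on $r$ and the expansion parameters), and applies any such local pattern that reduces the syndrome weight by at least a constant fraction of its own support size. Because each stabilizer of the $r$-fold tensor product has constant weight and each qubit meets constantly many stabilizers, only $O(n^r)$ local balls must be searched and the total classical runtime is $O(n^r)$. The quantum circuit is constant depth: it only consists of one round of syndrome extraction (and then the Pauli correction is applied once at the end).

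The heart of the proof is a ``confinement plus local correctability'' lemma for $r$-dimensional products of lossless expanders: for a suitable constant $\delta>0$, every $X$- or $Z$-error $E$ of reduced weight (modulo stabilizers) at most $\delta n$ has syndrome weight $\Omega(|E|)$, and moreover there exists a constant-sized local flip that strictly reduces the syndrome by a constant factor of its size. I would prove this by induction on $r$. The base case $r=2$ is essentially the small-set flip analysis for hypergraph product codes from lossless expanders. For the inductive step, I would slice the $r$-dimensional code into a union of $(r{-}1)$-dimensional slabs along one coordinate and decompose $E$ accordingly. An averaging argument picks a slab whose projected error is nontrivial, and the inductive hypothesis combined with lossless expansion of the ``slicing'' factor produces either (i) a local flip entirely within a single slab, or (ii) a local flip that crosses a constant number of consecutive slabs. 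This is the step I expect to be most delicate: the boundary terms between adjacent slabs must be handled carefully, which is where the robustness/product-expansion ideas of Dinur--Hsieh--Lin--Lovett and Kalachev--Panteleev enter. Once confinement holds, adversarial correction follows immediately: the syndrome weight decreases monotonically, the algorithm halts with zero syndrome, and the total accumulated correction remains within the $\delta n$ ball and therefore equals $E$ modulo a stabilizer.

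To upgrade adversarial correction to a threshold under $\epsilon$-locally stochastic noise, I would run the standard percolation-style argument of Kovalev--Pryadko (as used in Gottesman's constant-overhead result and in Fawzi--Grospellier--Leverrier). Define an auxiliary graph $H$ on the set of corrupted qubits and syndrome bits, where two faults are connected if they lie within a constant-radius neighborhood in the Tanner graph. Because $Q$ has constant check weight and qubit degree, $H$ is a subgraph of a constant-degree graph. A counting argument for connected subgraphs combined with the $\epsilon$-locally stochastic bound shows that the probability of $H$ containing a connected component of size $s$ is at most $(C\epsilon)^{s}$ for some constant $C$. Choosing $\epsilon<1/C$ and taking a union bound, with probability $1 - e^{-\Omega(n)}$ every connected fault cluster has size $O(\log n) \ll \delta n$. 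Since the decoder corrects any configuration whose total fault weight within each cluster is below the adversarial threshold, and clusters are decoded independently, the logical error probability is $e^{-\poly(n)}$, yielding the desired threshold.

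The main obstacle I anticipate is the induction on $r$ for the confinement/local-flip lemma: tensoring lossless expanders preserves the right expansion properties in the bulk, but controlling the interaction between layers in a way that still produces a \emph{small} correction (rather than one of size $\Theta(n^{r-1})$) requires a careful choice of the local-flip search radius and a refined accounting of how syndrome weight is distributed across slabs. Once that combinatorial lemma is established, the remaining pieces of Proposition~\ref{prop:decinf} should be routine assembly of standard fault-tolerance arguments.
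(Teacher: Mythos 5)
Your high-level plan (small-set flip decoder, confinement lemma, percolation-style threshold) matches the paper, but the proof you sketch for the key combinatorial lemma takes a genuinely different route, and the route you propose has a real gap you have not resolved.

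The paper's proof of Proposition~\ref{prop:ssflip} (via Lemma~\ref{lem:sserrorflip}) is \emph{not} an induction on $r$. Instead, given a small error $e$, it iteratively constructs a vertex $v_h\in V_L^{(h)}$ together with a large subset $U_h\subseteq N_{G^{(h)}}(v_h)$ of unique neighbors, for each direction $h\in[r]$ \emph{simultaneously}. The basis elements in $\prod_h(\{v_h\}\cup U_h)$ carve out a constant-sized sub-complex $\cB^*$ that is exactly an $r$-fold tensor product of repetition codes, and one then applies the robustness of such products (Lemma~\ref{lem:robust}, drawn from the product-expansion results of Dinur et al.\ and Kalachev--Panteleev) to obtain the flip. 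Crucially, the ``boundary terms'' you worry about are bounded once and for all by the unique-neighbor property in equation~(\ref{eq:cobnotB}): any coboundary of the flip that lands outside $\cB$ must pass through the $\leq 2\epsilon\Delta_{\max}$ non-unique neighbors, giving a $O(r\epsilon\Delta_{\max}|e_B|)$ bound on the leakage. Your slab-induction does not see this because a single $(r{-}1)$-dimensional slab restriction forgets the syndrome contributions of the coboundary map $\delta^{(h)}$ in the slicing direction; the inductive hypothesis controls only the within-slab syndrome and gives you no leverage over cross-slab terms. You correctly identify this as the delicate point and correctly guess that robustness must enter there, but you have not shown how to make the induction close, and I do not think it can close without essentially redoing the paper's simultaneous all-directions construction — at which point the induction is superfluous. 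Separately, note that the paper actually needs two distinct decoders (an error-flip decoder given perfect syndrome access and a syndrome-flip decoder that tolerates a small corrupted syndrome, Definition~\ref{def:ssflip}), and the latter is derived from the former via a nontrivial argument (Lemma~\ref{lem:sssyndromeflip}); your proposal does not distinguish these, but both are needed for the single-round circuit with corrupted measurements that you describe.

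On the threshold: your percolation sketch is in the same Kovalev--Pryadko lineage as Lemma~\ref{lem:threshold}, but be careful with the claim that ``with high probability every connected fault cluster has size $O(\log n)$.'' The paper's bad-set condition (Definition~\ref{def:graphtobad}) is that no connected subgraph of size $\geq\eta$ has fault density $\geq\gamma$; this is weaker than requiring all fault clusters to be small, and the decoder correctness argument (via Lemma~\ref{lem:sslocal} and Lemma~\ref{lem:footprint}) is phrased in terms of the \emph{footprint} of the small-set flip algorithm staying within small-density connected components, not a naive ``decode each cluster separately.'' Your statement is stronger and still true for small enough $\epsilon$, so it is not wrong, but it does not match what is actually needed or used.
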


\subsection{Adding Magic to Achieve Universality}
While the results listed above provide constant-overhead gadgets for logical Clifford circuits, we can extend our techniques to obtain a scheme for universal fault-tolerant computation. For instance, we may add in a gadget for magic state distillation. Such gadgets with low (but super-constant) space-time overhead have previously been constructed, e.g.~in \cite{nguyen_quantum_2025}. 
We could alternatively try to apply our code switching gadgets to qLDPC codes supporting transversal non-Clifford gates \cite{bombin_exact_2007,bombin_topological_2007,bombin_gauge_2015,zhu_non-clifford_2023,lin_transversal_2024,golowich_quantum_2025-1,breuckmann_cups_2024,zhu_topological_2025}. 

We expand upon the latter approach in Appendix \ref{app:protocol}
by proposing a method for parallelizable logical non-Clifford gates using the codes of \cite{golowich_quantum_2025-1,zhu_topological_2025}. These codes support a transversal (i.e.~constant-depth) physical circuit inducing polynomially many logical $CCZ$ gates in a codeblock. However, we leave the problem of proving a threshold for single-shot code switching on such codes for future work; we discuss the challenges involved in Appendix \ref{app:protocol}. The independent and concurrent work of \cite{tan2025single} provides more results in this direction, for a class of codes based on those of \cite{zhu_topological_2025}.

\section{Technical Overview of Construction}
In this section, we provide an overview of the techinques underlying the fault-tolerant gadgets we use to prove Theorem~\ref{thm:main}, and we outline how the remainder of the paper is organized.

\subsection{Code Construction}
\label{sec:codeinf}
We first describe the codes we use to prove Theorem~\ref{thm:main}. For this purpose, we will need the notion of \emph{(co)chain complexes} and their products.

\begin{definition}[Informal statement of Definition~\ref{def:chaincom} and Definition~\ref{def:cctensor}]
  A \emph{$r$-dimensional cochain complex $\cC^*$ over $\bF+2$} is a sequence
  \begin{equation*}
    \cC^* = (C^0\xrightarrow{\delta_0}C^1\xrightarrow{\delta_1}\cdots\xrightarrow{\delta_{r-1}}C^r)
  \end{equation*}
  of $\bF_2$-vector spaces $\cC^i$ and \emph{coboundary maps} $\delta_i$ satisfying $\delta_i\delta_{i-1}=0$.

  For cochain complexes $\cA^*,\cB^*$ of respective dimensions $r^{\cA},r^{\cB}$, the \emph{tensor product $\cC^*=\cA^*\otimes\cB^*$} is the chain complex of dimension $r^{\cC}=r^{\cA}+r^{\cB}$ given by $\cC^i=\bigoplus_{j\in\bZ}\cA^j\otimes\cB^{i-j}$ and $\delta^{\cC}=\delta^{\cA}\otimes I+I\otimes\delta^{\cB}$.
\end{definition}

Recall that a length-$n$ quantum CSS code consists of a pair $Q=(Q_X,Q_Z)$ of classical linear codes $Q_X,Q_Z\subseteq\bF_2^n$ satisfying the orthogonality condition $Q_X^\perp:=\{y\in\bF_2^n:x\cdot y=0\;\forall x\in Q_X\}\subseteq Q_Z$. Therefore for an $r$-dimensional cochain complex $\cC^*$, for every level $1\leq i\leq r-1$ we have a naturally associated CSS code given by $Q_X=\ker(\delta_{i-1}^\top)$ and $Q_Z=\ker(\delta_i)$. Here the CSS orthogonality condition is equivalent to the cochain complex condition $\delta_i\delta_{i-1}=0$.

We prove Theorem~\ref{thm:main} using products of 1-dimensional cochain complexes that arise from a certain type of graph called a \emph{lossless expander}. Random bounded-degree graphs are lossless expanders with high probability, and explicit constructions are also known (see Section~\ref{sec:lossless}). In particular, for fixed $r\in\bN$ and for every $h\in[r]$, we let
\begin{equation*}
  {\cC^{(h)}}^* = \left(\bF_2^{V_L^{(h)}}\xrightarrow{\delta^{(h)}}\bF_2^{V_R^{(h)}}\right)
\end{equation*}
be a 1-dimensional cochain complex associated to a bounded-degree bipartite lossless expander $G^{(h)}=(V^{(h)}=V_L^{(h)}\sqcup V_R^{(h)},\;E^{(h)})$. The coboundary map $\delta^{(h)}$ is given by the bipartite adjacency matrix of $G^{(h)}$. We will choose such graphs for which all $|V^{(h)}|$ are the same. The quantum codes we use are those associated to some level $1\leq i\leq r-1$ of the product cochain complex
\begin{equation*}
  \cC^*={\cC^{(1)}}^*\otimes\cdots\otimes{\cC^{(r)}}^*.
\end{equation*}
Letting $n=|V^{(h)}|$ grow for fixed $r=O(1)$, then these codes are qLDPC with parameters $[[\Theta(n^r),\Theta(n^r),\Omega(n)]]$. The specific details of our instantiation are given in Section~\ref{sec:applycode}.

The $r=2$-dimensional case of these codes were introduced by \cite{tillich_quantum_2014}, where they were called ``hypergraph product codes.'' A key insight in our paper is that we can efficiently switch between product codes of different dimensions $r$.

\subsection{Error Correction, State Preparation, and Code Switching from Small-Set Flip Decoder}
\label{sec:decappinf}
The fault-tolerant procedures in Theorem~\ref{thm:main} are primarily based on three new gadgets that we construct for the product codes described in Section~\ref{sec:codeinf}. These gadgets, which perform error correction, state preparation, and code switching, respectively, all crucially rely on a new decoder that we develop for the codes in Section~\ref{sec:codeinf}. Specifically, in Section~\ref{sec:ssflip}, we construct a generalization of the \emph{small-set flip decoder}, which was originally introduced for $r=2$-dimensional tensor product codes \cite{leverrier_quantum_2015,fawzi_constant_2020}, to arbitrary $r\geq 2$.

Below, we define such small-set flip decoders, and state our result constructing them for product codes. Here we restrict attention to cochain complexes $\cC^*$ with a fixed basis $C^i$ for each $\cC^i$, so that $\cC^i=\bF_2^{C^i}$. We endow $C^0\sqcup\cdots\sqcup C^r$ with a partial order by letting $c^i\prec c^{i+1}$ for $c^i\in C^i,\;c^{i+1}\in C^{i+1}$ if $c^{i+1}\in\supp(\delta_i(\1_{c^i}))$, and then letting $c^i\prec c^j$ if $c^i\prec c^{i+1}\prec\cdots\prec c^{j-1}\prec c^j$ for some $c^{i+1}\in C^{i+1},\dots,c^{j-1}\in C^{j-1}$. We extend this notation to $c^i\in C^i$ and $c^j\in\cC^j=\bF_2^{C^h}$ by writing $c^i\preceq c^j$ if every $c'\in\supp(c^j)$ satisfies $c^i\preceq c'$. The cochain complexes we consider below have constant locality, meaning that for every $c^i\in C^i$, there are only a constant number of basis elements $c'$ satisfying $c^i\preceq c'$.

\begin{definition}[Abridged statement of Definition~\ref{def:ssflip}]
  \label{def:ssflipinf}
  For $m\in\bN$, we say a cochain complex $\cC^*$ has a \emph{$m$-small-set error-flip decoder at level $i$} if for every nonzero $e\in\cC^i$ of weight $|e|\leq m$, there exists a basis element $c^0\in C^0$ for which at least one of the following holds:
  \begin{enumerate}
  \item\label{it:notmininf} There exists a cochain $c^{i-1}\in\cC^{i-1}$ with $c^0\preceq c^{i-1}$ such that $|e+\delta(c^{i-1})|<|e|$, or
  \item\label{it:flipinf} There exists a cochain $c^i\in\cC^i$ with $c^0\preceq c^i$ such that $|\delta(e+c^i)|<|\delta(e)|$.
  \end{enumerate}
\end{definition}

\begin{proposition}[Informal statement of Proposition~\ref{prop:ssflip}]
  \label{prop:ssflipinf}
  The cochain complex $\cC^*={\cC^{(1)}}^*\otimes\cdots\otimes{\cC^{(r)}}^*$ in Section~\ref{sec:codeinf} has an $m$-small-set flip decoder at every level $0\leq i\leq r-1$ for $m=\Theta(|V^{(h)}|)$.\footnote{Recall here that all $h\in[r]$ have the same $|V^{(h)}|$.}
\end{proposition}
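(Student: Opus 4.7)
The plan is to generalize the small-set flip analysis of \cite{leverrier_quantum_2015,fawzi_constant_2020} for $r=2$ to arbitrary constant $r\geq 2$, exploiting the multipartite tensor structure together with the strong unique-neighbor property of lossless expanders. Fix a nonzero error $e\in\cC^i$ of weight $|e|\leq m$, where $m=\Theta(n)$ with $n=|V^{(h)}|$ for a sufficiently small implicit constant. The space $\cC^i$ decomposes as a direct sum over tuples $(j_1,\ldots,j_r)\in\{0,1\}^r$ with $j_1+\cdots+j_r=i$, so every basis element of $C^i$ is a tuple $(v_1,\ldots,v_r)$ where $v_h$ lies in $V_L^{(h)}$ or $V_R^{(h)}$ according to $j_h$. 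The first step is to pigeonhole: since $e\neq 0$ but $|e|$ is small, there must exist a direction $h\in[r]$ and a choice of basis elements in the other $r-1$ factors yielding a ``fiber'' on which $e$ restricts nontrivially, with weight strictly between $0$ and $\alpha n$ for a small constant $\alpha>0$.

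The second step is to apply lossless expansion of $G^{(h)}$ to this fiber. The restriction projects to a subset $S\subseteq V_L^{(h)}$ or $S\subseteq V_R^{(h)}$ of size $|S|\leq\alpha n$, so lossless expansion yields $|N(S)|\geq(1-\epsilon)d|S|$, and therefore at least $(1-2\epsilon)d|S|$ unique neighbors of $S$. Each such unique neighbor, combined with the fixed coordinates in the other $r-1$ directions, corresponds to a basis element of $C^{i+1}$ (or $C^{i-1}$) adjacent to $e$ through exactly one bit of the fiber. This yields the dichotomy required by Definition~\ref{def:ssflipinf}: either (a) a unique-neighbor check is unsatisfied, in which case flipping the corresponding bit gives a constant-weight $c^i$ satisfying $|\delta(e+c^i)|<|\delta(e)|$, matching case~\ref{it:flipinf}; or (b) all unique-neighbor checks are satisfied on the fiber, so by a standard expansion-implies-coboundary argument the fiber restriction equals $\delta(c^{i-1})$ for a short cochain $c^{i-1}$, and then $|e+\delta(c^{i-1})|<|e|$, matching case~\ref{it:notmininf}. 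Because the flip is supported in a neighborhood of a single fiber, I can take $c^0\in C^0$ to be a tuple sharing the fixed coordinates of that fiber (with $V_L^{(h)}$-neighbors substituted where $j_h=1$), ensuring $c^0\preceq c^{i-1}$ or $c^0\preceq c^i$ as required by the localization condition.

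I anticipate the main obstacle to be the quantitative bookkeeping in the second step: guaranteeing a strict decrease in one of $|e|$ or $|\delta(e)|$ demands that the unique-neighbor bound not be diluted by the other $r-1$ ``passive'' directions of the product. A naive recursion that applies 1-dimensional expansion in each direction sequentially loses a factor of $(1-O(\epsilon))$ per direction, and for $r=O(1)$ this would still work but only with very restrictive constants; to obtain a clean argument I plan to follow the strategy of \cite{dinur_good_2023,dinur_expansion_2024,kalachev_maximally_2025}, isolating a single direction in which expansion is used and treating the remaining $r-1$ directions as a passive index set. Because the product cochain complex has constant locality (each basis element of $C^i$ touches only a constant number of basis elements in $C^{i-1}$ and $C^{i+1}$), this passive treatment costs only a constant factor in the threshold weight, yielding $m=\Theta(n)$ with an implicit constant depending on $r$ and on the expansion parameters of the $G^{(h)}$.
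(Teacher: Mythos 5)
Your proposal correctly identifies the key ingredients---unique neighbors from lossless expansion, a flip-vs-coboundary dichotomy, and the relevance of \cite{dinur_good_2023,dinur_expansion_2024,kalachev_maximally_2025}---but the central construction you propose does not match the structure needed, and the gap you yourself flag in the last paragraph is not a bookkeeping nuisance but a fundamental obstruction to the single-fiber approach.

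The main problem is the pigeonhole in your first step. You fix all $r-1$ coordinates outside direction $h$ and work on a single one-dimensional fiber. But the coboundary map $\delta^{\cC} = \sum_h \delta^{(h)}\otimes I^{\otimes(r-1)}$ acts in \emph{all} $r$ directions at once. Flipping a bit $(v_1,\dots,v_r)\in C^i$ changes $\delta^{\cC}(e)$ not just at $\Delta$ positions in direction $h$ but at $\Theta(\Delta)$ positions in every direction $h'$ with $v_{h'}\in V_L^{(h')}$. So in your case~(a), even if the flipped bit sits below an unsatisfied unique-neighbor check in direction $h$, the flip can \emph{increase} the syndrome weight by a constant multiple of $\Delta$ in the other directions, and there is no reason for the net change to be negative. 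The symmetric problem occurs in your case~(b): the coboundary $\delta^{\cC}(c^{i-1})$ has components outside the fiber, so showing that $e$ restricted to the fiber is close to a coboundary does not imply $|e+\delta(c^{i-1})|<|e|$. Nothing in your argument controls these cross-direction contributions, and you cannot treat the other $r-1$ directions as a passive index set precisely because $\delta$ is not block-diagonal with respect to them.

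The paper's proof (Lemma~\ref{lem:sserrorflip}) resolves this by building an $r$-dimensional subcube $B=\prod_h(\{v_h\}\cup U_h)$ rather than a one-dimensional fiber. The sets $U_h$ are chosen iteratively using Lemma~\ref{lem:losslesstoun} so that the restriction of $\cC^*$ to $B$ is exactly a product of $r$ repetition codes (each factor is a star graph), \emph{and} so that almost all coboundary contributions from $e_B$ land back inside $B$ (this is the content of~(\ref{eq:cobnotB}) and the inductive condition on unique neighbors). The dichotomy is then driven by the robustness of the product of $r$ repetition codes (Lemma~\ref{lem:robust}, following \cite{dinur_expansion_2024,kalachev_maximally_2025}), which is a genuinely $r$-dimensional statement: either $|\delta^{\cB}(e_B)|$ is large, in which case one flips \emph{all} of $e_B$ (not a single bit) and uses the $B$-to-$B$ containment to argue syndrome weight drops; or $|\delta^{\cB}(e_B)|$ is small, in which case robustness produces a coboundary within $\cB^{i-1}$ whose coboundary in the full complex $\cC^*$ agrees with $e_B$ up to a small error. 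You gesture at this by citing the right papers, but the argument you actually sketch is the one-dimensional fiber argument that those papers replace, so as written there is a real gap.
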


To generalize the 2-dimensional argument of \cite{leverrier_quantum_2015,fawzi_constant_2020} to the higher-dimensional case in Proposition~\ref{prop:ssflipinf}, we adapt techniques used by \cite{dinur_expansion_2024,kalachev_maximally_2025} to construct quantum locally testable codes. Specifically, \cite{dinur_expansion_2024,kalachev_maximally_2025} (see also \cite{nguyen_quantum_2025}) proved properties similar to small-set flip decodability for high-dimensional product codes that impose certain local codes exhibiting a \emph{robustness} property on a global \emph{spectral expander graph}. As our codes in Section~\ref{sec:codeinf} are constructed from lossless expanders instead of spectral expanders, we apply expansion in a different way, but we ultimately still leverage the robustness of local codes, which in our case are simply repetition codes (see Lemma~\ref{lem:robust}).

We apply this small-set flip decoder for two purposes, namely to perform error correction, and to show a sort of ``small-set soundness'' for high-dimensional product codes. For the error correction application, we simply take $e$ taken to be a low-weight error on a code state, so that Definition~\ref{def:ssflipinf} gives a correction $c^i$ that reduces the syndrome weight. Our error correction gadget in Section~\ref{sec:errcorr} simply repeatedly applies such corrections while they exist. We generalize the percolation-based arguments in \cite{fawzi_constant_2020} to the high-dimensional case to show that this error correction succeeds even in the presence of random data and syndrome errors. We similarly apply such error correction within our state preparation gadget in Section~\ref{sec:stateprep}, our code switching gadget in Section~\ref{sec:switchdown}, and our measurement gadget in Section~\ref{sec:measure}.

Meanwhile, for the ``small-set soundness'' application, we apply Definition~\ref{def:ssflipinf} to a low-weight \emph{syndrome} $e=\delta(f)$ for some $f\in C^{i-1}$. Intuitively, because $\cC^*$ has constant locality, then by repeatedly applying Definition~\ref{def:ssflipinf} to such an $e$, we can conclude that every sufficiently low-weight syndrome $e$ arises from some error $f$ of proportionally low weight $O(|e|)$. Note that this property is stronger than the ``small-set coboundary expansion'' property (see e.g.~\cite{dinur_good_2023,hopkins_explicit_2022-1}), which only requires that low-weight $f$ have sufficiently high-weight syndromes $e=\delta_{i-1}(f)$ so that $|f|\leq O(|e|)$.

We crucially apply this small-set soundness property in our state preparation gadget in Section~\ref{sec:stateprep}. This gadget prepares logical code states by preparing physical $\ket{0^n}$ or $\ket{+^n}$, and then measuring all the code stabilizers. We then run error correction on these measurement outcomes to compute an estimate $\tilde{s}$ of the syndrome, and then compute some error $\tilde{f}$ with $\delta(\tilde{f})=\tilde{s}$, and apply a correction based on $\tilde{f}$. We want to show that $\tilde{f}$ is close to the true error $f$ on the code state, given that $\tilde{s}=\delta(\tilde{f})$ is close to the true syndrome $s=\delta(f)$. Small-set soundness precisely ensures this property; we also again apply a percolation argument to deal with large numbers of random errors.

In our code switching gadget outlined in Section~\ref{sec:csinf} above, we similarly apply small-set soundness to argue that the error-corrected measurement outcomes lead to an accurate Pauli correction.



\subsection{Fault-Tolerance Analysis}
A significant technical contribution of our work lies in our proof of threshold under a constant rate of locally stochastic noise. Although our codes have sublinear distance, we are still able to handle such random linear-weight errors using percolation arguments adapted from \cite{kovalev_fault_2013,gottesman_fault-tolerant_2014}. Specifically, if our logical qubits are encoded in a CSS code associated to a cochain complex $\cC^*$, we track the propagation of errors through a \emph{connectivity graph} $G^{\cC}$ associated to $\cC^*$. Formally, the vertices of this graph are basis elements in $C^0\sqcup\cdots\sqcup C^r$, and an edge connects every $c,c'$ for which either there is some $c^0\in C^0$ with $c^0\preceq c,c'$, or some $c^r\in C^r$ with $c,c'\preceq c^r$ (see Definition~\ref{def:cctodec}). We then argue that under faults that are not too dense within any large connected subgraph of $G^{\cC}$, our gadgets preserve such a low density of errors within connected subgraphs (see Definition~\ref{def:graphtobad}). By standard percolation arguments, locally stochastic noise will indeed have low density within such large connected subgraphs (see Lemma~\ref{lem:threshold}). We furthermore apply a standard decomposition to reduce general noise to Pauli noise (see Lemma~\ref{lem:redPauli}).

Our fault-tolerance proofs crucially rely on the fact that the small-set flip decoder described in Section~\ref{sec:decappinf} by definition performs updates within constant-sized neighborhoods of the connectivity graph $G^{\cC}$. Therefore both when performing error correction and applying small-set soundness to compute Pauli corrections (see Section~\ref{sec:decappinf}), we control the propagation of errors within connected subgraphs of $G^{\cC}$. An interesting consequence of this argument is that our proof may not immediately extend to more global decoders, such as a maximum-liklihood decoder that computes the most likely error.

\subsection{Roadmap}
The remainder of this paper is organized as follows. Section~\ref{sec:prelim} provides necessary preliminary notions and prior results. In Section~\ref{sec:ssflip}, we present our small-set flip decoder. We present our gadgets for downwards and upwards code switching in Section~\ref{sec:switchdown} and Section~\ref{sec:switchup}, respectively. We postpone the proof for the upwards code switching gadget to Appendix~\ref{app:suproof}, as it is a direct application of other gadgets in the paper. We present our gadget for state preparation in Section~\ref{sec:stateprep}, and our gadget for error correction in Section~\ref{sec:errcorr}; the analyses are somewhat similar to that of the downwards code switching, so we postpone them to Appendix~\ref{app:spproof} and Appendix~\ref{app:ecproof}, respectively. Section~\ref{sec:basicgadgets} presents some more basic gadgets, namely for transversal $CNOT$ and Hadamard gates, as well as logical measurements. While these gadgets in Section~\ref{sec:basicgadgets} were previously known, we state and analyze them in our specific fault-tolerance setup for completeness. In Section~\ref{sec:apply}, we combine all of our gadgets listed above to prove Theorem~\ref{thm:main}.



\section{Preliminaries}
\label{sec:prelim}
This section presents preliminary notions and relevant known results.

\subsection{Notation}
This section describes the basic notation that we use throughout. For $n\in\bN$, we let $[n]=\{1,2,\dots,n\}$. For a set $S$, we let $2^S=\{S':S'\subseteq S\}$ denote the power set of $S$. We denote by $\bF_2=\{0,1\}$ the finite field on two elements. For $x\in\bF_2^n$, we let $|x|=|\{i\in[n]:x_i\neq 0\}|$ denote the Hamming weight of $x$. For $x,y\in\bF_2^n$, we let $x\cdot y=\sum_{i=1}^nx_iy_i$ denote the standard bilinear form. For $i\in[n]$, we let $\1_i\in\bF_2^n$ denote the indicator vector for component $i$, so that $(\1_i)_j=1$ iff $i=j$. In a slight abuse of notation (that will be made clear from context), for an event $E$, we also sometimes write $\1_E$ to be the indicator function for $E$ occuring, so that for instance $\1_{i=j}$ equals $1$ if $i=j$ and $0$ otherwise.

An $n$-qubit pure quantum state is specified by a vector $\ket{\psi}\in(\bC^2)^{\otimes n}=\bC^{2^n}$. We use the standard notation $\bC^2=\spn\{\ket{0},\ket{1}\}$ and $\ket{+}=(\ket{0}+\ket{1})/\sqrt{2}$. For a set $S\subseteq\bF_2^n$, we let $\ket{S}=(1/\sqrt{|S|})\sum_{x\in S}\ket{x}$ denote the uniform superposition of elements in $S$. A $n$-qubit density operator is a self-adjoint positive semi-definite operator $\rho\in\bC^{2^n\times 2^n}$ of trace $1$. We refer to linear maps $\cO:\bC^{2^n\times 2^n}\rightarrow\bC^{2^n\times 2^n}$ as \emph{superoperators}. A superoperator that is completely positive and trace-preserving (CPTP) is called a quantum channel. We let $\cI_n:\bC^{2^n\times 2^n}\rightarrow\bC^{2^n\times 2^n}$ denote the identity channel $\cI_n(\rho)=\rho$.

Letting $I,X,Y,Z\in\bC^{2\times 2}$ denote the standard single-qubit Pauli matrices, then an $n$-qubit Pauli matrix is a tensor product of $n$ single-qubit Pauli matrices. For $P\in\{I,X,Y,Z\}$ and $x\in\bF_2^n$, we write $P^x=\bigotimes_{i=1}^nP^{x_i}$. An $n$-qubit Pauli superoperator is a map of the form $\rho\mapsto A\rho B$ for $n$-qubit Paulis $A,B$.

We use ordinary big-$O$ notation $O(\cdot)$, $\Theta(\cdot)$, $\Omega(\cdot)$. Subscripts will be used to denote variables that are held fixed, so that for example $f(r,n)=O_r(g(n))$ if for every fixed $r$, there exists some $c(r)>0$ for which $f(r,n)\leq c(r)\cdot g(n)$.

\subsection{Classical and Quantum Codes}
This section presents standard definitions of classical and quantum codes.

\begin{definition}
  A \emph{classical (binary linear) code of length $n$} is a linear subspace $C\subseteq\bF_2^n$. The code's \emph{dimension} is $k=\dim(C)$, and the distance is $d=\min_{c\in C\setminus\{0\}}|c|$. We summarize these parameters by saying that $C$ is an $[n,k,d]$ code.

  The \emph{dual} of $C$ is the code
  \begin{equation*}
    C^\perp=\{x\in\bF_2^n:x\cdot c=0\;\forall c\in C\}.
  \end{equation*}

  A \emph{parity-check matrix} for $C$ is a matrix $H\in\bF_2^{m\times n}$ such that $C=\ker(H)$. We say that $C$ with associated parity-check matrix $H$ is \emph{LDPC of locality $w$} if every row and column of $H$ has $\leq w$ nonzero entries. We say a family of classical codes is \emph{LDPC} if they are all LDPC of constant locality $w=O(1)$.
\end{definition}

\begin{definition}
  A \emph{quantum (binary CSS) code of length $n$} is a pair of classical codes $Q=(Q_X,Q_Z)$. The code's \emph{dimension} is $k=\dim(Q_Z)-\dim(Q_X^\perp)$, and the distance is
  \begin{equation*}
    d=\min_{c\in(Q_X\setminus Q_Z^\perp)\cup(Q_Z\setminus Q_X^\perp)}|c|.
  \end{equation*}
  We summarize these parameters by saying that $Q$ is an $[[n,k,d]]$ code.

  We say that $Q$ is \emph{LDPC of locality $w$} if $Q_X,Q_Z$ are classical LDPC codes of locality $w$. We similarly say that a family of quantum codes is \emph{LDPC} if they are all LDPC of constant locality $w=O(1)$.
\end{definition}

\subsection{Classical Codes from Lossless Expanders}
\label{sec:lossless}

In this section, we describe constructions of classical codes from constant-degree lossless expander graphs. The quantum codes we study arise from tensor products of cochain complexes associated to these classical codes, and will inherit many desirable properties from these expander graphs, such as constant encoding rate and decodability from the expansion.

First, we define lossless expander graphs.

\begin{definition}
\label{def:losslessexpandergraph}
  Let $G=(V=V_L\sqcup V_R,E\subseteq V_L\times V_R)$ be a bipartite graph of maximum left and right degrees $\Delta_L$ and $\Delta_R$ respectively. For a set of vertices $S\subseteq V$, we let $N_G(S)\subseteq V$ denote the neighborhood of $S$, that is, the set of vertices sharing an edge with some element of $S$. We say $G$ is a \emph{$(\mu,\epsilon)$-lossless expander} if it holds for every $S\subseteq V_L$ of size $|S|\leq\mu|V_L|$ that $|N_G(S)|\geq(1-\epsilon)\Delta_L|S|$, and for every $S\subseteq V_R$ of size $|S|\leq\mu|V_R|$ that $|N_G(S)|\geq(1-\epsilon)\Delta_R|S|$.
\end{definition}

The following basic lemma shows that lossless expansion implies that every small set $S$ contains some vertex with many \emph{unique neighbors}, meaning many neighbors that avoid the neighborhoods of all other vertices in $S$.

\begin{lemma}
  \label{lem:losslesstoun}
  Let $G=(V=V_L\sqcup V_R,E)$ be a $(\mu,\epsilon)$-lossless expander of maximum left degree $\Delta_L$. Then for every $S\subseteq V_L$ of size $|S|\leq\mu|V_L|$, there exists some $v\in S$ such that
  \begin{equation}
    \label{eq:losslesstoun}
    |N_G(v)\setminus N_G(S\setminus\{v\})| \geq (1-2\epsilon)\Delta_L.
  \end{equation}
  (An analogous statement also holds for subsets $S\subseteq V_R$ of size $|S|\leq\mu|V_R|$.)
\end{lemma}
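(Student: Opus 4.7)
The plan is to use a standard double-counting (averaging) argument based on the lossless expansion. For $v \in S$, let $U(v) = N_G(v) \setminus N_G(S \setminus \{v\})$ denote the set of unique neighbors of $v$ with respect to $S$. The goal is to lower bound $\sum_{v \in S} |U(v)|$ and then apply pigeonhole to conclude that some $v \in S$ satisfies \eqref{eq:losslesstoun}.

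First I would partition $N_G(S)$ into the set $N_1 \subseteq N_G(S)$ of vertices having exactly one neighbor in $S$ and the set $N_{\geq 2} \subseteq N_G(S)$ of vertices having at least two neighbors in $S$. By definition, $\sum_{v \in S} |U(v)| = |N_1|$, since a vertex $u$ contributes to $U(v)$ precisely when $u$ is adjacent to $v$ and to no other element of $S$. Next I would count edges from $S$ in two ways: on one hand, the total edge count is at most $\Delta_L |S|$ by the degree bound; on the other hand, it equals $\sum_{u \in N_G(S)} d_S(u) \geq |N_1| + 2|N_{\geq 2}| = |N_G(S)| + |N_{\geq 2}|$, where $d_S(u)$ denotes the number of neighbors of $u$ in $S$. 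Rearranging gives $|N_{\geq 2}| \leq \Delta_L |S| - |N_G(S)|$.

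Now I would invoke the hypothesis that $|S| \leq \mu |V_L|$ together with the $(\mu, \epsilon)$-lossless expansion of $G$ to get $|N_G(S)| \geq (1-\epsilon)\Delta_L |S|$, which yields $|N_{\geq 2}| \leq \epsilon \Delta_L |S|$ and hence
\begin{equation*}
  \sum_{v \in S} |U(v)| = |N_1| = |N_G(S)| - |N_{\geq 2}| \geq (1-\epsilon)\Delta_L |S| - \epsilon \Delta_L |S| = (1-2\epsilon)\Delta_L |S|.
\end{equation*}
By averaging, some $v \in S$ must satisfy $|U(v)| \geq (1-2\epsilon)\Delta_L$, which is exactly \eqref{eq:losslesstoun}. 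The analogous statement for $S \subseteq V_R$ follows by the symmetric argument, exchanging the roles of left and right in Definition~\ref{def:losslessexpandergraph}.

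I do not expect any real obstacle: the argument is essentially a textbook unique-neighbor-from-expansion calculation, and the factor of $2$ in $(1-2\epsilon)$ arises naturally from the fact that each ``collision'' vertex in $N_{\geq 2}$ simultaneously reduces $|N_G(S)|$ (by comparison with the maximum $\Delta_L|S|$) and destroys at least two potential unique neighbors across $S$. The only thing to be careful about is ensuring the lossless expansion bound is applied to $S$ itself (which is valid since $|S| \leq \mu|V_L|$) rather than to a subset, and making sure the degree bound $e(S, N_G(S)) \leq \Delta_L |S|$ uses the maximum left degree rather than the average.
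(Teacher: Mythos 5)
Your proof is correct and takes essentially the same approach as the paper: both arguments bound the number of ``collision'' neighbors (your $N_{\geq 2}$, the paper's vertices hit by $\geq 2$ edges from $S$) by $\epsilon\Delta_L|S|$ via the expansion bound, conclude that at least $(1-2\epsilon)\Delta_L|S|$ neighbors are unique, and finish by pigeonhole. The only difference is presentational — you phrase the count via the partition $N_G(S) = N_1 \sqcup N_{\geq 2}$ and the inequality $|N_1| + 2|N_{\geq 2}| \leq \Delta_L|S|$, whereas the paper fixes a representative-edge set $E'$ and counts the $\leq \epsilon\Delta_L|S|$ edges outside $E'$ — but these are the same double-count.
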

\begin{proof}
  Let $E'\subseteq E$ be a set of $|E'|=|N_G(S)|\geq(1-\epsilon)\Delta_L|S|$ edges such that for each $v\in N_G(S)$, $E'$ contains a single edge from a vertex in $S$ to $v$. As there are $\leq \Delta_L|S|$ outgoing edges from $S$, it follows that at most $\Delta_L|S|-|E'|\leq\epsilon \Delta_L|S|$ of these edges do not lie in $E'$, so at most $\epsilon \Delta_L|S|$ vertices in $N_G(S)$ have $\geq 2$ edges coming from $S$. Therefore there are at least $|N_G(S)|-\epsilon \Delta_L|S|\geq(1-2\epsilon)\Delta_L|S|$ vertices in $N_G(S)$ that are incident to exactly $1$ vertex in $S$, that is,
  \begin{equation*}
    \sum_{v\in S}|N_G(v)\setminus N_G(S\setminus\{v\})| \geq (1-2\epsilon)\Delta_L|S|.
  \end{equation*}
  Thus by the pigeonhole principle, there is some $v\in S$ satisfying
  \begin{equation*}
    |N_G(v)\setminus N_G(S\setminus\{v\})| \geq (1-2\epsilon)\Delta_L,
  \end{equation*}
  as desired.  
\end{proof}


The following result shows that there exist explicit constant-degree expander graphs.

\begin{proposition}[\cite{hsieh_explicit_2025}]
  \label{prop:expanders}
  For every fixed $\epsilon>0$ and $\beta_2>\beta_1>0$, there exists $\mu>0$ and $\Delta_L,\Delta_R\in\bN$ with $\beta_1<\Delta_R/\Delta_L<\beta_2$ and $\Delta_L,\Delta_R\geq 1/\epsilon$ such that there is an infinite explicit sequence $(G^i=(V^i=V^i_L,V^i_R,E^i))_{i\in\bN}$ of $(\Delta_L,\Delta_R)$-biregular $(\mu,\epsilon)$-lossless expanders, such that each $|V^i|<|V^{i+1}|\leq 2|V^i|$.\footnote{\cite{hsieh_explicit_2025} only stated their result with $|V^{i+1}|\leq O(|V^i|)$. However, the constant in the big-$O$ can without loss of generality be reduced to $2$ by considering disjoint unions of copies of the graphs, at the cost of a constant-factor reduction in $\mu$.}
\end{proposition}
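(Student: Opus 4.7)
The statement is essentially an importation of the main result of \cite{hsieh_explicit_2025}, with the growth bound tightened from $|V^{i+1}|\leq C|V^i|$ for some potentially large absolute constant $C$ down to $|V^{i+1}|\leq 2|V^i|$. The plan is to treat \cite{hsieh_explicit_2025} as a black box that produces an explicit infinite sequence $(\tilde G^i)_{i\in\bN}$ of $(\Delta_L,\Delta_R)$-biregular $(\tilde\mu,\epsilon)$-lossless expanders satisfying every condition of the proposition except that the growth is only guaranteed as $|V(\tilde G^{i+1})|\leq C|V(\tilde G^i)|$. The only remaining work is then to interpolate this sequence by taking disjoint-union copies, so as to force the $2$-Lipschitz growth.

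Concretely, I would form the interpolated family by including, for every $i\in\bN$ and every integer $t\geq 1$, the graph $t\tilde G^i$ obtained as the vertex-disjoint union of $t$ copies of $\tilde G^i$, and then sorting these graphs by vertex-set size (breaking any ties arbitrarily) to produce the desired sequence $(G^i)_{i\in\bN}$. Biregularity and the exact left/right degrees $\Delta_L,\Delta_R$ are manifestly preserved under disjoint union, so only the growth bound and the expansion parameter require checking. Within a single chain $\{t\,|V(\tilde G^i)|\}_{t\geq 1}$, consecutive sizes have ratio $(t+1)/t\leq 2$, and any cross-chain gap $(a,b)$ with $b/a>2$ can be filled by an intermediate multiple $s\,|V(\tilde G^j)|\in(a,2a]$, which exists for any $j$ with $|V(\tilde G^j)|\leq a$ because the interval $(a/|V(\tilde G^j)|,\,2a/|V(\tilde G^j)|]$ has length $\geq 1$ and hence contains an integer.

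For the expansion, if $S\subseteq V(t\tilde G^i)$ has size at most $\tilde\mu\,|V(\tilde G^i)|$, then its restriction to each of the $t$ copies also has size at most $\tilde\mu\,|V(\tilde G^i)|$, so applying Definition~\ref{def:losslessexpandergraph} inside each copy and summing yields $|N(S)|\geq(1-\epsilon)\Delta_L|S|$. Hence $t\tilde G^i$ is a $(\tilde\mu/t,\epsilon)$-lossless expander on $t\,|V(\tilde G^i)|$ vertices, and since the interpolation uses only $t\leq\lceil C\rceil=O(1)$ copies at any stage, setting $\mu:=\tilde\mu/\lceil C\rceil$ gives a uniform positive expansion constant across the whole family. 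The same argument handles the right side. The main ``obstacle'' is essentially external: all of the nontrivial combinatorics is encapsulated in \cite{hsieh_explicit_2025}, and the disjoint-union interpolation costs only a constant factor in $\mu$ while preserving biregularity, both degrees, and explicitness.
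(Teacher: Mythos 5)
Your proposal is correct and matches the paper's approach exactly: the paper's footnote to Proposition~\ref{prop:expanders} reduces the growth constant to $2$ by taking disjoint unions of copies of the graphs from \cite{hsieh_explicit_2025}, at the cost of a constant-factor reduction in $\mu$, which is precisely the interpolation-by-multiples argument you spell out. Your verification that $t\tilde G^i$ is a $(\tilde\mu/t,\epsilon)$-lossless expander and your density argument for filling gaps in the size sequence are correct fillings-in of the detail the paper leaves implicit.
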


\begin{remark}
\label{rmk:explicitexpanders}
  The explicitness condition in Proposition~\ref{prop:expanders} simply means that each graph $G^i$ can be constructed by a $\poly(|V^i|)$-time algorithm. If this condition is dropped, then it is well-known that random biregular graphs also give lossless expanders that satisfy the other conditions in Proposition~\ref{prop:expanders} (e.g.~\cite[Theorem 4.16]{hoory_expander_2006}).
\end{remark}

We construct classical LDPC codes by defining parity-check matrices from bipartite adjacency matrices of lossless expanders, as defined below.

\begin{definition}
\label{def:codefromgraph}
   For a bipartite graph $G=(V=V_L\sqcup V_R,E)$, the associated \emph{parity-check matrix} (also called the \emph{bipartite adjacency matrix}) $H_G\in\bF_2^{V_L\times V_R}$, is given by $(H_G)_{u,v}=\1_{(u,v)\in E}$.
\end{definition}

Our code switching gadgets will require identifying certain code components that contain the encoded logical information. The following lemma provides equivalent conditions for a set of bits in a classical code to contain such logical information.





\begin{lemma}
  \label{lem:extendable}
  For a code $C\subseteq\bF_2^n$ and a set $S\subseteq[n]$, the following are equivalent:
  \begin{enumerate}
  \item\label{it:ext} Every $x\in\bF_2^S$ can be extended to a (not necessarily unique) $\tilde{x}\in\bF_2^n$, so that $\tilde{x}|_S=x$.
  \item\label{it:dual} Every $y\in C^\perp\setminus\{0\}$ satisfies $\supp(y)\not\subseteq S$.
  \item\label{it:coset} The cosets $\1_v+C^\perp\in\bF_2^n/C^\perp$ for $v\in S$ are all linearly independent.
  \end{enumerate}
\end{lemma}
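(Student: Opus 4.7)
The plan is to prove the two equivalences $(\ref{it:ext})\Leftrightarrow(\ref{it:dual})$ and $(\ref{it:dual})\Leftrightarrow(\ref{it:coset})$, both by straightforward duality arguments. I read condition $(\ref{it:ext})$ as requiring the extension $\tilde x$ to lie in $C$ (without such a restriction the statement would be vacuous); under this reading $(\ref{it:ext})$ says precisely that the coordinate restriction map $\pi_S:C\to\bF_2^S$, $c\mapsto c|_S$, is surjective. Throughout, for $y\in\bF_2^S$ I write $\tilde y\in\bF_2^n$ for its extension by zero outside $S$, so that $y\cdot(c|_S)=\tilde y\cdot c$ for every $c\in\bF_2^n$.

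For $(\ref{it:ext})\Leftrightarrow(\ref{it:dual})$, since $\bF_2^S$ is finite-dimensional, $\pi_S$ is surjective iff the annihilator $(\im\pi_S)^\perp\subseteq\bF_2^S$ is trivial. A vector $y\in\bF_2^S$ lies in $(\im\pi_S)^\perp$ iff $\tilde y\cdot c=0$ for every $c\in C$, i.e.\ iff $\tilde y\in C^\perp$; and $\tilde y$ is nonzero iff $y$ is, with $\supp(\tilde y)\subseteq S$. Hence $(\ref{it:ext})$ fails iff $C^\perp\setminus\{0\}$ contains a vector supported in $S$, which is exactly the negation of $(\ref{it:dual})$.

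For $(\ref{it:dual})\Leftrightarrow(\ref{it:coset})$, the cosets $\{\1_v+C^\perp\}_{v\in S}$ are linearly dependent in $\bF_2^n/C^\perp$ precisely when there is a nonempty $T\subseteq S$ with $\sum_{v\in T}\1_v\in C^\perp$; the sum $\1_T=\sum_{v\in T}\1_v$ is then a nonzero element of $C^\perp$ with support $T\subseteq S$, contradicting $(\ref{it:dual})$. Conversely, any $y\in C^\perp\setminus\{0\}$ with $\supp(y)\subseteq S$ yields the dependence $\sum_{v\in\supp(y)}(\1_v+C^\perp)=y+C^\perp=0+C^\perp$. The entire argument is a bookkeeping exercise in the duality between $C$ and $C^\perp$, and no result from earlier in the paper needs to be invoked; the only ``obstacle'' is fixing the correct reading of $(\ref{it:ext})$, after which both equivalences are one-line dualities.
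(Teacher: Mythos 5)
Your proof is correct and follows essentially the same duality argument as the paper's: condition (1) is read (as the paper also does) as $C|_S=\bF_2^S$, i.e.\ surjectivity of the restriction map, and both equivalences reduce to observing that nontrivial linear relations on $C$ supported in $S$ are exactly nonzero elements of $C^\perp$ supported in $S$. The only difference is that you spell out the annihilator argument explicitly where the paper states it in one line; the content is identical.
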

\begin{proof}
  Condition~\ref{it:ext} simply says that $C|_S=\bF_2^S$, which is in turn equivalent to the requirement that $C$ satisfies no nontrivial linear constraints supported inside $S$; this statement in turn is precisely Condition~\ref{it:dual}.

  Meanwhile, Condition~\ref{it:coset} says that every nonzero linear combination of elements $\1_v$ for $v\in S$ (i.e.~every nonzero vector supported inside $S$) lies outside of $C^\perp$, which is precisely Condition~\ref{it:dual}.
\end{proof}

\begin{definition}
  \label{def:extendable}
  A set $S\subseteq[n]$ is an \emph{extendable set} for a code $C\subseteq\bF_2^n$ if $S$ satisfies the equivalent conditions in Lemma~\ref{lem:extendable}. A maximal extendable set, that is an extendable set $S$ which becomes unextendable upon adding any addition element in $[n]\setminus S$, is called an \emph{information set}.
\end{definition}

Note that by Condition~\ref{it:coset} in Lemma~\ref{lem:extendable}, an information set for $C$ must have size $\dim(\bF_2^n/C^\perp)=\dim(C)$. That is, the bits in an information set contain the entire encoded message (up to some basis change); all bits outside of the information set are redundant bits used for error correction. An extendable set contains a subset of the bits in the message.

Below, we present a corollary of Proposition~\ref{prop:expanders}, in which we show that there exists lossless expanders that remain lossless expanders after removing the vertices associated to a linear-sized extendable set. This additional property will be crucial for our code switching gadget in Section~\ref{sec:switchdown} that switches from $r$-dimensional to $(r-1)$-dimensional product codes.

Specifically, to perform this downwards code switching, we measure out qubits that (in one of the $r$ directions) lie outside of a fixed extendable set. As our measurement outcomes may be noisy, we run them through our decoder described in Section~\ref{sec:ssflip}. However, this decoder requires lossless expansion, but we have only performed measurements on positions that lie outside of our fixed extendable set. Hence we need lossless expansion to be preserved under removing the extendable set.

Logical qubits outside of this extendable set are destroyed during the code switching, and hence should not be used to encode the message. Therefore we want this fixed extendable set to be linear-sized as in Corollary~\ref{cor:losslessfamily} below, so that we still have a constant encoding rate.

\begin{corollary}
  \label{cor:losslessfamily}
  For every fixed $0<\epsilon<1$, there exist $\mu,R>0$ and $\Delta_L,\Delta_R\in\bN$ with $1/4<\Delta_R/\Delta_L<1/2$ for which there is an infinite explicit sequence $(G^i=(V^i=V^i_L\sqcup V^i_R,E^i))_{i\in\bN}$ of $(\Delta_L,\Delta_R)$-biregular graphs satisfying:
  \begin{enumerate}
  \item $|V^i|<|V^{i+1}|\leq 2|V^i|$.
  \item $G^i$ is a $(\mu,\epsilon)$-lossless expander.
  \item There exists a set $S_i\subseteq V^i_R$ of size $|S_i|\geq R|V^i_R|$ that is extendable for the code $\ker(H_{G^i})$, such that $G^i$ remains a $(\mu,\epsilon)$-lossless expander if all vertices in $S$ are removed (along with any edges with a vertex in $S$).
  \end{enumerate}
\end{corollary}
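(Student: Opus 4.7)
The plan is to produce the desired family by combining two independent lossless expanders that share the same left vertex set: the ``main'' graph provides the code structure and the ``auxiliary'' graph's right vertices form the extendable set $S_i$. First, given the target $\epsilon$, I would choose $\epsilon' < \epsilon$ together with compatible ratios and apply Proposition~\ref{prop:expanders} to obtain an explicit sequence of $(\tilde{\Delta}_L, \tilde{\Delta}_R)$-biregular $(\mu', \epsilon')$-lossless expanders $\tilde{G}^i = (V_L^i, \tilde{V}_R^i, \tilde{E}^i)$ whose ratio $\tilde{\Delta}_R/\tilde{\Delta}_L$ lies in a sub-interval of $(1/2, 1)$. Then I would apply Proposition~\ref{prop:expanders} again (on the same left vertex set, arranged by scaling) to obtain a second explicit sequence of $(\Delta_L^{\mathrm{new}}, \tilde{\Delta}_R)$-biregular $(\mu', \epsilon')$-lossless expanders $G^{\mathrm{new}, i} = (V_L^i, V_R^{\mathrm{new},i}, E^{\mathrm{new},i})$ with the \emph{same} right degree $\tilde{\Delta}_R$. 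Set $G^i$ to be the union $(V_L^i,\; \tilde{V}_R^i \sqcup V_R^{\mathrm{new},i},\; \tilde{E}^i \sqcup E^{\mathrm{new},i})$ and let $S_i = V_R^{\mathrm{new},i}$.

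By construction $G^i$ is biregular of degrees $(\tilde{\Delta}_L + \Delta_L^{\mathrm{new}}, \tilde{\Delta}_R)$, and by choosing the component degrees appropriately (e.g.\ $\tilde{\Delta}_L \approx 1.5\,\tilde{\Delta}_R$ and $\Delta_L^{\mathrm{new}} \approx \tilde{\Delta}_R$) I can arrange $\Delta_R/\Delta_L = \tilde{\Delta}_R/(\tilde{\Delta}_L + \Delta_L^{\mathrm{new}}) \in (1/4, 1/2)$, and the relative size $|S_i|/|V_R^i| = \Delta_L^{\mathrm{new}}/(\tilde{\Delta}_L + \Delta_L^{\mathrm{new}})$ is a positive constant $R$. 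Extendability of $S_i$ then follows from a short argument: every element of $\operatorname{rowspace}(H_{G^i})$ has the form $(H_{\tilde{G}^i}^\top z,\; H_{G^{\mathrm{new},i}}^\top z)$ for some $z \in \bF_2^{V_L^i}$, and any such element supported inside $S_i$ must satisfy $H_{\tilde{G}^i}^\top z = 0$. Using the unique-neighbor property (Lemma~\ref{lem:losslesstoun}) together with an appropriate parity choice of $\tilde{\Delta}_R$ to handle the large-$|z|$ regime, this injectivity forces $z = 0$, so no nonzero element of the row space is supported in $S_i$. Finally $G^i \setminus S_i = \tilde{G}^i$ is automatically a $(\mu', \epsilon')$-lossless expander, hence in particular a $(\mu, \epsilon)$-lossless expander.

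The main obstacle will be verifying that $G^i$ itself is a $(\mu, \epsilon)$-lossless expander, specifically the right-side condition. Left-side expansion is immediate since $N_{G^i}(T) = N_{\tilde{G}^i}(T) \sqcup N_{G^{\mathrm{new},i}}(T)$ is a disjoint union in $\tilde{V}_R^i \sqcup V_R^{\mathrm{new},i}$, so the $(1-\epsilon')$-expansion on each side adds. For right-side expansion, however, the neighborhoods $N_{\tilde{G}^i}(T^0)$ and $N_{G^{\mathrm{new},i}}(T^{\mathrm{new}})$ both live in the shared $V_L^i$, and we need their union to have size close to $(1-\epsilon)\tilde{\Delta}_R(|T^0|+|T^{\mathrm{new}}|)$, which in turn requires the overlap of these two neighborhoods to be small for every pair of small subsets. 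This would fail if the two sequences were adversarially aligned. To handle this, I would argue that a ``typical'' pair of lossless expanders has small overlap of small-set neighborhoods by a counting/union-bound argument, and then either derandomize using the flexibility of \cite{hsieh_explicit_2025}'s explicit construction (e.g.\ composing one of the two families with a suitable permutation of $V_L^i$ chosen to break alignment) or, alternatively, appeal to Remark~\ref{rmk:explicitexpanders} to work with random expanders for the second component, at the cost of losing strict explicitness.
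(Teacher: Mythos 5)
Your approach is genuinely different from the paper's, and it has two real gaps that you partly acknowledge but do not resolve.

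The paper does not combine two expanders. It applies Proposition~\ref{prop:expanders} once with a slightly better expansion parameter $\epsilon' = \epsilon/2$, and then \emph{locates} the set $S_i$ inside the right vertex set of a single graph $G^i$ as follows: partition $V_R$ into $\Delta_L\Delta_R + 1$ color classes so that no two same-colored right vertices share a left neighbor (this is a proper coloring of the distance-2 graph on $V_R$, which has degree $\Delta_L\Delta_R$), intersect the densest class with an arbitrary information set of $\ker(H_{G^i})$, and call the result $S_i$. Extendability is automatic since $S_i$ sits inside an information set, and the size bound follows from pigeonhole. The point of the coloring is that each left vertex is adjacent to \emph{at most one} element of $S_i$, so removing $S_i$ shrinks each left neighborhood by at most $|W|$ elements; since $\Delta_L \geq 1/\epsilon'$, the left-to-right expansion only degrades from $(1-\epsilon')\Delta_L$ to $(1-2\epsilon')\Delta_L = (1-\epsilon)\Delta_L$. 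Right-to-left expansion can only improve when right vertices are removed. This is a one-graph, entirely combinatorial argument, and explicitness is inherited with no extra work.

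Your two-graph construction runs into the obstacle you already flag, and I don't think you have a viable route around it. For a right-side set $T = T^0 \sqcup T^{\mathrm{new}}$ split between $\tilde{V}_R^i$ and $V_R^{\mathrm{new},i}$, the neighborhoods $N_{\tilde{G}^i}(T^0)$ and $N_{G^{\mathrm{new},i}}(T^{\mathrm{new}})$ both land in the shared $V_L^i$, and losslessness of the union at rate $(1-\epsilon)\tilde{\Delta}_R$ requires the two neighborhoods to have intersection of size $O(\epsilon\tilde{\Delta}_R(|T^0|+|T^{\mathrm{new}}|))$ for \emph{every} such pair. Even for random graphs, when $|T^0|,|T^{\mathrm{new}}|$ are a constant fraction of $|V_L^i|$ (which is the regime $\mu|V_R^i|$ lives in), the expected intersection is on the order of $\tilde{\Delta}_R^2|T^0||T^{\mathrm{new}}|/|V_L^i|$, and this is not controlled by $\epsilon\tilde{\Delta}_R|T|$ unless $\mu$ shrinks with $\tilde{\Delta}_R$; you would then have to re-verify that this smaller $\mu$ still serves the rest of the paper, and a union bound over exponentially many pairs $(T^0,T^{\mathrm{new}})$ would need to be carried out from scratch. ``Compose one family with a suitable permutation'' or ``appeal to Remark~\ref{rmk:explicitexpanders}'' are not arguments; the second choice also drops the explicitness requirement that the corollary asserts. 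So this is a genuine gap, not a bookkeeping detail.

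Your extendability argument is also incomplete. You correctly reduce to showing $\ker(H_{\tilde{G}^i}^\top) = 0$, and Lemma~\ref{lem:losslesstoun} handles $z$ with $0 < |z| \leq \mu|V_L^i|$. But for $\mu|V_L^i| < |z|$ the unique-neighbor argument does not apply, and the parity-of-$\tilde{\Delta}_R$ idea only rules out the all-ones vector (and, if $\tilde{\Delta}_R$ is even, its complement is still a problem): it does not handle an arbitrary $z$ in the middle weight regime. A biregular lossless expander need not have full-row-rank adjacency matrix in general --- a disjoint union of two lossless expanders is still a lossless expander with a smaller $\mu$ but has a nontrivial $\ker(H^\top)$ when the right degree is even --- so this step would need a real proof. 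The paper sidesteps this entirely by taking $S_i$ inside an information set, for which extendability is by definition automatic.
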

\begin{proof}
  For a given value of $\epsilon>0$, we let $(G^i)_{i\in\bN}$ be a family of $(\Delta_L,\Delta_R)$-biregular $(\mu,\epsilon')$-lossless expanders from Proposition~\ref{prop:expanders} with expansion parameter $\epsilon'=\epsilon/2$, and with $\beta_1=1/4$ and $\beta_2=1/2$. 


  For a given $i\in\bN$, we define the extendable set $S_i$ as follows. Let $\ell=\Delta_L\Delta_R+1$, and choose a partition $V_R=U_1\sqcup\cdots\sqcup U_\ell$ such that no length-2 path in $G^i$ (i.e.~no edge in $(G^i)^2$) connects a pair of distinct vertices in the same set $U_j$. Because $(G^i)^2$ has degree $\Delta_L\Delta_R$, there exists such a partition of size $\ell=\Delta_L\Delta_R+1$.

  Let $T\subseteq V_R$ denote an information set for $\ker(H_{G^i})$, so that $|T|\geq\dim(\ker(H_{G^i}))\geq|V_R^i|-|V_L^i|\geq|V_R^i|/2$. By the pigeonhole principle, there exists some $j$ with $|U_j\cap T|\geq|T|/r\geq|V_R^i|/2(\Delta_L\Delta_R+1)$. We then let $S_i=U_j\cap T$, so that $|S_i|\geq R|V_R^i|$ for $R=1/2(\Delta_L\Delta_R+1)$.

  By definition $S_i\subseteq T$ is extendable for $\ker(H_{G^i})$. It remains to be shown that the graph $G^i\setminus S_i$ obtained by removing all vertices in $S_i$ from $G^i$ is a $(\mu,\epsilon)$-lossless expander. For right-to-left expansion, every set $W\subseteq V_R\setminus S_i$ of size $|W|\leq\mu|V_R^i\setminus S_i|\leq\mu|V_R^i|$ by definition satisfies
  \begin{equation*}
    |N_{G^i\setminus S_i}(W)| = |N_{G^i}(W)| \geq (1-\epsilon')\Delta_R|W| \geq (1-\epsilon)\Delta_R|W|.
  \end{equation*}
  For left-to-right expansion, consider a set $W\subseteq V_L$ of size $|W|\leq\mu|V_L^i|$. By construction, every $v\in W$ is incident to at most one element of $S_i\subseteq U_j$, so $|N_{G^i\setminus S_i}(W)|\geq|N_{G^i}(W)|-|W|$. Therefore
  \begin{equation*}
    |N_{G^i\setminus S_i}(W)| \geq (1-\epsilon')\Delta_L|W|-|W| = (1-\epsilon'-1/\Delta_L)\Delta_L|W| \geq (1-2\epsilon')\Delta_L|W|=(1-\epsilon)\Delta_L|W|,
  \end{equation*}
  where the second inequality above holds by the fact that $\Delta_L\geq 1/\epsilon'$ from Proposition~\ref{prop:expanders}. The two inequalities above imply that $G^i\setminus S_i$ is a $(\mu,\epsilon)$-lossless expander, as desired.
\end{proof}


\subsection{Chain Complexes}
This section describes the notion of (co)chain complexes, which provide a useful language for constructing quantum CSS codes.

\begin{definition}
  \label{def:chaincom}
  An \emph{$r$-dimensional chain complex $\cC_*$ over $\bF_2$} is a graded $\bF_2$-vector space $\cC=\bigoplus_{i=0}^r\cC_i$ together with a \emph{boundary map} $\partial:\cC\rightarrow\cC$ satisfying $\partial^2=0$ and $\partial(\cC_i)\subseteq\cC_{i-1}$. We write $\partial_i=\partial|_{\cC_i}$, and we summarize this data by writing
  \begin{equation*}
    \cC_* = (\cC_r\xrightarrow{\partial_r}\cC_{r-1}\xrightarrow{\partial_{r-1}}\cdots\xrightarrow{\partial_1}\cC_0).
  \end{equation*}
  We call $\cC_i$ the space of \emph{$i$-chains} of $\cC_*$, and we define the
  \begin{align*}
    i\text{-cycles} \; Z_i(\cC) &= \{z\in\cC_i:\partial(z)=0\} \\
    i\text{-boundaries} \; B_i(\cC) &= \{\partial(c):c\in\cC_{i+1}\} \\
    i\text{-homology} \; H_i(\cC) &= Z_i(\cC)/B_i(\cC).
  \end{align*}
  In a slight abuse of notation, for $i\in\bZ\setminus\{0,\dots,r\}$ we write $C_i=\emptyset$ and $\cC_i=\{0\}$.

  The \emph{cochain complex}
  \begin{equation*}
    \cC^* = (C^0\xrightarrow{\delta_0}C^1\xrightarrow{\delta_1}\cdots\xrightarrow{\delta_{r-1}}C^r)
  \end{equation*}
  associated to $\cC_*$ is the chain complex with the same vector space $\cC=\bigoplus_{i=0}^r\cC^i$ with each $\cC^i=\cC_i$, but whose boundary map is the \emph{coboundary map} $\delta:\cC\rightarrow\cC$ given by $\delta=\partial^\top$. We then write $\delta_i=\delta|_{\cC^i}=\partial_{i+1}^\top$. We call $\cC^i$ the space of \emph{$i$-cochains}, and we similarly define the
  \begin{align*}
    i\text{-cocycles} \; Z^i(\cC) &= \{z\in\cC^i:\delta(z)=0\} \\
    i\text{-coboundaries} \; B^i(\cC) &= \{\delta(c):c\in\cC_{i-1}\} \\
    i\text{-cohomology} \; H^i(\cC) &= Z^i(\cC)/B^i(\cC).
  \end{align*}

  We will assume our chain complexes are \emph{based}, meaning that $\cC_i=\bF_2^{C_i}$ for a specified set $C_i$. We then define Hamming weights of elements of $\cC_i=\cC^i$ with respect to this basis. For $c_{i-1}\in C_{i-1}$ and $c_i\in C_i$, we write $c_{i-1}\triangleleft c_i$ if $c_{i-1}\in\supp(\partial(\1_{c_i}))$. For $c_j\in C_j$ and $c_i\in C_i$, with $i<j$, we write $c_i\prec c_j$ if there exists a sequence $c_{i+1},\dots,c_{j-1}$ such that $c_i\triangleleft c_{i+1}\triangleleft\cdots\triangleleft c_{j-1}\triangleleft c_j$. This partial order provides the set $C=\bigsqcup_iC_i$ with the structure of a graded poset. For $i<j$, we extend the partial order notation to apply for a basis element $c_i\in C_i$ and a chain $c_j\in\cC_j=\bF_2^{C_j}$, so that $c_i\prec c_j$ if every $c'\in\supp(c_j)$ satisfies $c_i\prec c'$.


  The \emph{$i$-systolic distance $d_i(\cC)$} and \emph{$i$-cosystolic distance $d^i(\cC)$} are defined as
  \begin{align*}
    d_i(\cC) &= \min_{z\in Z_i(\cC)\setminus B_i(\cC)}|z| \\
    d^i(\cC) &= \min_{z\in Z^i(\cC)\setminus B^i(\cC)}|z|.
  \end{align*}

  We say that $\cC_*$ has \emph{locality $w$} if for every $c\in C$ there are $\leq w$ basis elements $c'\in C$ with $c'\preceq c$ or $c'\succeq c$.
\end{definition}

\begin{definition}
  For $r$-dimensional chain complexes $\cA_*,\cB_*$, a \emph{chain map $\phi:\cA_*\rightarrow\cB_*$} is a sequence of maps $(\phi_i:\cA_i\rightarrow\cB_i)_{i\in[r]}$ satisfying $\partial_i^{\cB}\circ\phi_i=\phi_{i-1}\circ\partial_i^{\cA}$. Every chain map naturally induces a map on homology $\tilde{\phi}:H_*(\cA)\rightarrow H_*(\cB)$.
\end{definition}

\begin{definition}
  \label{def:cctensor}
  For chain complexes $\cA_*,\cB_*$ of respective dimensions $r_{\cA},r_{\cB}$, the \emph{tensor product $\cC_*=\cA_*\otimes\cB_*$} is the chain complex of dimension $r_{\cC}=r_{\cA}+r_{\cB}$ for which the $i$-chain basis is $C_i=\bigsqcup_{j\in\bZ}A_j\times B_{i-j}$, so that $\cC_i=\bigoplus_{j\in\bZ}\cA_j\otimes\cB_{i-j}$, and the boundary map is given by $\partial^{\cC}=\partial^{\cA}\otimes I+I\otimes\partial^{\cB}$. The tensor product of cochain complexes is defined analogously, so that $\cA^*\otimes\cB^*=(\cA_*\otimes\cB_*)^*$.
\end{definition}

The following formula describing how homology behaves under tensor products is well known.

\begin{proposition}[K\"{u}nneth formula]
  \label{prop:kunneth}
  For chain complexes $\cA_*,\cB_*$ with tensor product $\cC_*=\cA_*\otimes\cB_*$, then there is an isomorphism
  \begin{align*}
    H_i(\cC) &\cong \bigoplus_{j\in\bZ}H_j(\cA)\otimes H_{i-j}(\cB),
  \end{align*}
  which for $a\in Z_i(\cA)$, $b\in Z_i(\cB)$ is given by
  \begin{align*}
    a\otimes b+B_i(\cC) &\mapsfrom (a+B_i(\cA))\otimes(b+B_i(\cB)).
  \end{align*}
\end{proposition}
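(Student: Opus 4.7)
The plan is to exploit the fact that we are working over the field $\bF_2$, so every short exact sequence of vector spaces splits. For each chain complex $\cA_*$, I would choose an $\bF_2$-linear decomposition $\cA_i = B_i(\cA) \oplus H_i'(\cA) \oplus U_i(\cA)$, where $H_i'(\cA)$ is a fixed complement of $B_i(\cA)$ inside $Z_i(\cA)$ (so $H_i'(\cA) \cong H_i(\cA)$), and $U_i(\cA)$ is a fixed complement of $Z_i(\cA)$ inside $\cA_i$. Since $\partial^{\cA}_i$ vanishes on $Z_i(\cA)$ and restricts to an isomorphism $U_i(\cA) \xrightarrow{\sim} B_{i-1}(\cA)$, this realizes $\cA_*$ as an internal direct sum of elementary subcomplexes of two types: \emph{homology pieces}, each a single copy of $H_j'(\cA)$ concentrated in one degree with zero boundary, and \emph{acyclic pieces} of the form $0 \to U_i(\cA) \xrightarrow{\sim} B_{i-1}(\cA) \to 0$. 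The same decomposition is chosen for $\cB_*$.

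Next, since tensor product distributes over direct sums, $\cC_* = \cA_* \otimes \cB_*$ decomposes as a direct sum of tensor products of these elementary pieces. For two homology pieces in degrees $j$ and $i-j$, their tensor contributes a single summand $H_j'(\cA) \otimes H_{i-j}'(\cB)$ in degree $i$ with zero boundary. For any piece tensored with an acyclic piece (in either factor), a direct computation using $\partial^{\cC} = \partial^{\cA} \otimes I + I \otimes \partial^{\cB}$ shows the resulting two-term complex is again acyclic (this is immediate because tensoring with a one-dimensional vector space preserves the isomorphism $\partial$). Passing to $i^{\mathrm{th}}$ homology then kills every summand except the homology-piece tensor products, yielding the canonical isomorphism
\begin{equation*}
  H_i(\cC) \;\cong\; \bigoplus_{j\in\bZ} H_j'(\cA) \otimes H_{i-j}'(\cB) \;\cong\; \bigoplus_{j\in\bZ} H_j(\cA) \otimes H_{i-j}(\cB).
\end{equation*}

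Finally, I would verify this isomorphism agrees with the natural map described in the statement. For $a \in Z_j(\cA)$ and $b \in Z_{i-j}(\cB)$, we have $\partial^{\cC}(a \otimes b) = \partial^{\cA}(a) \otimes b + a \otimes \partial^{\cB}(b) = 0$, so $a \otimes b \in Z_i(\cC)$. Moreover, if $a = \partial^{\cA}(a')$ then $\partial^{\cC}(a' \otimes b) = a \otimes b + a' \otimes \partial^{\cB}(b) = a \otimes b$ since $b$ is a cycle, so $a \otimes b$ is a boundary; symmetrically if $b$ is a boundary. Hence $(a + B_j(\cA)) \otimes (b + B_{i-j}(\cB)) \mapsto a \otimes b + B_i(\cC)$ is well-defined and bilinear on $H_j(\cA) \otimes H_{i-j}(\cB)$, and under the splitting above it is visibly the identity on each homology-piece tensor. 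The only non-bookkeeping step is the acyclicity claim for the tensor with an acyclic piece; everything else is routine linear algebra over a field, so I do not anticipate a significant obstacle.
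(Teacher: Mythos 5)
The paper states this Künneth formula as a well-known fact and gives no proof, so there is no argument in the paper to compare against. Your approach --- splitting each complex over $\bF_2$ into homology pieces (zero differential, concentrated in one degree) and acyclic pieces (isomorphism as the differential), then distributing the tensor product over the direct sum and checking each summand --- is the standard proof over a field and is fundamentally sound, including the final verification that the induced map agrees with $(a + B_j(\cA)) \otimes (b + B_{i-j}(\cB)) \mapsto a \otimes b + B_i(\cC)$.

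There is, however, one spot where the sketch asserts something incorrect. You write that for any piece tensored with an acyclic piece "the resulting two-term complex is again acyclic (this is immediate because tensoring with a one-dimensional vector space preserves the isomorphism $\partial$)." This justification only covers a homology piece times an acyclic piece (and even there the homology piece need not be one-dimensional, though that does not matter since tensoring with any vector space preserves isomorphisms). When \emph{both} factors are acyclic pieces $U_1 \xrightarrow{\sim} B_1$ and $U_2 \xrightarrow{\sim} B_2$, their tensor product is a \emph{three}-term complex
\begin{equation*}
  U_1 \otimes U_2 \;\longrightarrow\; (U_1 \otimes B_2) \oplus (B_1 \otimes U_2) \;\longrightarrow\; B_1 \otimes B_2,
\end{equation*}
and the middle differential is a genuine Koszul-type map $\partial^{\cA} \otimes I + I \otimes \partial^{\cB}$ rather than an isomorphism tensored with a fixed space, so the quoted reason does not apply. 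The claim is still true: identifying $U_1 \cong B_1 =: V$ and $U_2 \cong B_2 =: W$ via the differentials, the complex becomes $V \otimes W \xrightarrow{x \mapsto (x,x)} (V \otimes W)^{\oplus 2} \xrightarrow{(y,z) \mapsto y + z} V \otimes W$, which is visibly exact over $\bF_2$. A cleaner way to dispose of all three mixed-and-acyclic cases at once is to observe that an acyclic piece $U \xrightarrow{\sim} B$ is contractible, and the tensor of a contractible complex with any complex is contractible, hence acyclic. As written, your proposal silently skips the acyclic-times-acyclic case, so you should fill in one of these arguments before the proof is complete.
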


For a product complex $\cC_*=\cC^{(1)}_*\otimes\cdots\otimes\cC^{(r)}_*$ with basis elements $c=(c_1,\dots,c_r),\;c'=(c_1',\dots,c_r')\in C^{(1)}\times\cdots\times C^{(r)}=C$, the graded poset structure of $C$ by definition has $c\preceq c'$ iff it holds for every $i\in[r]$ that $c_i\preceq c'_i$.

\begin{definition}
  \label{def:cctoCSS}
  The \emph{quantum CSS code associated to level $i$ of a cochain complex $\cC^*$} is given by $Q=(Q_X=\ker(\partial_i),\;Q_Z=\ker(\delta_i))$.
\end{definition}

The quantum code $Q$ in Definition~\ref{def:cctoCSS} by definition has length $n=\dim(\cC^i)$, dimension $k=\dim(H^i(\cC))$, and distance $d=\min\{d^i(\cC),d_i(\cC)\}$. Furthermore, $Q$ has naturally associated $X,Z$ parity-check matrices $\partial_i,\delta_i$ respectively, so if $\cC^*$ has locality $w$ then by definition $Q$ is LDPC of locality $w$.

\subsection{Fault-Tolerance Model}
\comment{for the purposes of reorg, I'm currently thinking of whether this should be in prelims or appendix depending on how many proofs that depend on these defns get moved to the appendix}

In this section, we describe the model of circuits, faults, and fault-tolerance that we use in this paper. Our model will largely follow that of \cite{nguyen_quantum_2025} (and the closely related \cite{he_composable_2025}). The most important distinction is that whereas \cite{nguyen_quantum_2025} consider noisy quantum circuits that are permitted to run a constant-depth noiseless classical circuit at each timestep, we allow a polynomial-depth noiseless classical circuit at each timestep. This polynomial-depth classical computation will ultimately be used for solving linear systems of equations needed in our constant-overhead state preparation and code switching gadgets.

We also introduce some notational differences compared to \cite{nguyen_quantum_2025}. For instance, as our focus in this paper is to provide certain new gadgets with reduced overhead rather than an end-to-end fault-tolerance scheme, we allow the inputs and outputs of circuits to be quantum states, rather than simply classical bit strings.

Below, we present our definition of a quantum circuit that may also perform (noisless) classical computation between timesteps.

\comment{might be worth splitting up the clifford v non clifford defn from the adaptive quantum circuit one (so one defn each)}

\begin{definition}
  \label{def:circuit}
  We define the \emph{Clifford gate set} to consist of the single-qubit untaries $I,X,Y,Z,H,S=\sqrt{Z}$, the 2-qubit unitary $CNOT$, the single-qubit Pauli $X,Y,Z$ measurement channels, and the single-qubit reset channels that reset a qubit to $\ket{0}$ or $\ket{+}$. For a non-Clifford gate $U$ such as the $3$-qubit unitary $U=CCZ$, the \emph{Clifford+$U$} gate set consists of the Clifford gates along with $U$.
  
  An \emph{adaptive quantum circuit $\cQ=(\cO,C)$ using quantum space $N$, classical space $M$, and time $T$} consists of a sequence $\cO=(\cO_1,\dots,\cO_T)$ of quantum channels acting on $N$ qubits, and a sequence $C=(C_1,\dots,C_T)$ of classical circuits acting on a classical register of some number $M\geq N$ of bits. Formally, we represent this classical register by an $M$-qubit quantum register that always remains in (a mixture of) computational basis states. If for each $t\in T$, the channel $\cO_t=\bigotimes_{i=1}^{g_t}\cO_{t,i}$ is a product of Clifford (resp.~Clifford+$U$) gates $\cO_{t,i}$ acting on disjoint sets of qubits, we say $\cQ$ is an adaptive \emph{Clifford (resp.~Clifford+$U$)} circuit.

  We now define an $(N+M)$-qubit superoperator $\cQ(\cdot)$ associated to $\cQ$. Given an $N$-qubit quantum input state $\rho\in\bC^{2^N\times 2^N}$ and an $M$-bit classical input state $\ket{x}\bra{x}\in\bC^{2^M\times 2^M}$ for some $x\in\bF_2^M$, we perform the following operators on our $N+M$ qubits for each $t=1,\dots,T$:
  \begin{enumerate}
  \item Apply $C_t$ to the current classical state $\ket{x}\bra{x}$.
  \item For every $i\in[g_t]$ such that either $x_i=1$ or $\cO_{t,i}$ is a single-qubit Pauli measurement, apply $\cO_{t,i}$ to the (appropriate qubits of the) current quantum state $\rho$. If $\cO_{t,i}$ is a measurement, update $x_i\in\bF_2$ to equal the measurement outcome.
  \end{enumerate}
  The output $\cQ(\rho,\ket{x}\bra{x})\in\bC^{2^{N+M}\times 2^{N+M}}$ is then defined to be the resulting $N+M$ qubit state. Note that in the construction above, the entire $N+M$ qubit state is always a mixture of states $\ket{\phi}\bra{\psi}$ whose partial trace down to the $M$-(qu)bit classical register is a computational basis state $\ket{x}\bra{x}$. Furthermore, if every $\cO_{t,i}$ is a channel, then $\cQ(\cdot)$ is a channel.
\end{definition}

\begin{remark}
  In all of the constructions in this paper, the number $M$ of classical bits and the size of the classical circuits $C_t$ will grow at most polynomially in the number of qubits $N$.
\end{remark}

\begin{remark}
  \label{remark:noclassical}
  For most of the adaptive quantum circuits considered in this paper, the classical input and output are not used, and rather the classical state $x$ is simply reset and then used as a ``scratchpad'' to perform some intermediate computation. In such cases, as a shorthand we often leave out the classical inputs and outputs, and for instance write $\cQ(\rho)\in\bC^{2^N\times 2^N}$ to denote the result $\tr_{[M]}(\cQ(\rho\otimes\ket{0^M}\bra{0^M}))$ from executing $\cQ$ with an all-$0$s classical input, and then discarding (i.e.~tracing out) the classical output. If the classical input (resp.~output) is nonempty but contains fewer than $M$ bits, we just assume the extra bits are initialized to $0$ (resp.~traced out). When constructing gadgets throughout the paper, we will explicitly state whenever we are not using this shorthand, and have meaningful classical inputs or outputs.
\end{remark}

\begin{definition}
  \label{def:fault}
  For an adaptive quantum circuit $\cQ=(\cO,C)$ using quantum space $N$ and time $T$, a \emph{fault $\cF$} is a sequence $\cF=(\cF_1,\dots,\cF_T)$ of $N$-qubit superoperators $\cF_t$. We say $\cF$ is a \emph{Pauli fault} if each $\cF_t$ is a Pauli superoperator, that is, $\cF_t(\rho)=F_t\rho F_t'$ for some $N$-qubit Paulis $F_t,F_t'$. We say such a Pauli fault is \emph{diagonal} if every $F_t=F_t'$.

  The \emph{$\cF$-corrupted circuit $\cQ[\cF]$} is the adaptive quantum circuit using quantum space $N$ and time $2T$ given by
  \begin{equation*}
    \cQ[\cF] = ((\cO_1,\cF_1,\cO_2,\cF_2,\dots,\cO_T,\cF_T),(C_1,I,C_2,I,\dots,C_t,I)),
  \end{equation*}
  where $I$ denotes the identity (i.e.~idling) circuit.\footnote{As a slight abuse of notation, when considering the superoperator $\cQ[\cF](\cdot)$ associated to the $\cF$-corruped circuit $\cQ[\cF]$, we assume that the $M$-(qu)bit classical register $x$ in Definition~\ref{def:circuit} is never updated in timesteps associated to corruptions (i.e.~even timesteps), even if some $\cF_t$ performs a single-qubit Pauli measurement. This convention is useful so that we can decompose arbitrary faults into superpositions of Pauli faults without changing the behavior of the associated superoperator. Note that this convention does not meaningfully change the error model, as the adversary can still corrupt a measurement outcome by corrupting the appropriate qubit in the timestep before the measurement.}

  The \emph{fault path} or \emph{support} $\supp(\cF)\subseteq[N]\times[T]$ of $\cF$ is the set of all space-time locations on which $\cF$ acts nontrivially, that is, $\supp(\cF)=\bigsqcup_{t\in[T]}\supp(\cF_t)\times\{t\}$.
\end{definition}

\begin{remark}
  One could consider a more general error model where an adversary has a private register that can perform entangled corruptions across fault locations. For simplicity, we do not pursue this direction in this paper, but refer the reader to \cite{he_composable_2025} for more details.
\end{remark}

Our primary aim is to provide gadgets that exhibit a threshold under a locally stochastic fault distribution:

\begin{definition}
  For an adaptive quantum circuit using quantum space $N$ and time $T$, a probability distribution over faults $\cF$ is \emph{$\epsilon$-locally stochastic} if for every $S\subseteq[N]\times[T]$,
  \begin{equation*}
    \Pr[S\subseteq\supp(\cF)] \leq \epsilon^{|S|}.
  \end{equation*}
\end{definition}

\begin{definition}
  For a family of sets $\cE\subseteq 2^{[n]}$, we say that a set $S\subseteq[n]$ is \emph{$\cE$-avoiding} if no element of $\cE$ is contained in $S$.

  We extend this definition to faults in the natural way, so that a fault $\cF$ on a circuit using quantum space $\leq N$ and time $\leq T$ is \emph{$\cE$-avoiding} for $\cE\subseteq 2^{[N]\times[T]}$ if no element of $\cE$ is contained in $\supp(\cF)$.

  For families $\cE_1\subseteq 2^{[n_1]}$, $\cE_2\subseteq 2^{[n_2]}$, we let $\cE_1\sqcup\cE_2\subseteq 2^{[n_1]\sqcup[n_2]}$ denote the disjoint union of $\cE_1$ and $\cE_2$ on separate sets of underlying elements. Therefore for $\cE\subseteq 2^{[n]}$ and $t\in\bN$, we write $\cE^{\sqcup t}\subseteq 2^{[n]\times[t]}$ to denote $\cE\sqcup\cdots\sqcup\cE$ (with $t$ copies of $\cE$).
\end{definition}

\begin{definition}
  \label{def:deccode}
  Let $Q$ be an $[[n,k]]$ quantum code with associated encoding isometry $\Enc:(\bC^2)^{\otimes k}\rightarrow(\bC^2)^{\otimes n}$, which we also view as a channel in the natural way. For $n'\geq n$, let $\cE\subseteq 2^{[n']}$ be a family of subsets of qubits, which we call \emph{bad sets for $Q$}. For a density operator $\rho\in\bC^{2^k\times 2^k}$, and for $N\geq n$, we say an operator $\sigma\in\bC^{2^N\times 2^N}$ is a \emph{(Pauli) $\cE$-deviation of $\Enc(\rho)$} if there exists a (Pauli) superoperator $\cF:\bC^{2^n\times 2^n}\rightarrow\bC^{2^n\times 2^n}$ whose support is $\cE$-avoiding such that $\tr_{[N]\setminus[n]}(\sigma)\propto\cF\circ\Enc(\rho)$.

  We say that the data $D=(Q,\Enc,\cE)$ forms a \emph{decorated (quantum) code}. We let $\emptyset$ denote the \emph{trivial decorated code} on $n=0$ qubits.

  For decorated codes $D=(Q,\Enc,\cE)$ and $D'=(Q',\Enc',\cE')$, we write
  \begin{equation*}
    D\sqcup D'=(Q\sqcup Q',\;\Enc\sqcup\Enc',\;\cE\sqcup\cE')
  \end{equation*}
  to denote the decorated code consisting of disjoint blocks of $D$ and $D'$.
\end{definition}

\begin{definition}
  For a 2-register state $\rho\in\bC^{2^{k+M}\times 2^{k+M}}$, we say that the second (size-$M$) register is a \emph{computational basis state} if $\rho$ is invariant under measuring the second register in the computational (Pauli $Z$) basis. Equivalently, the second register is a computational basis state if there exists a decomposition
  \begin{equation*}
    \rho = \sum_{x\in\bF_2^M}\rho_x\otimes\ket{x}\bra{x}.
  \end{equation*}
\end{definition}

\begin{definition}
  \label{def:gadget}
  For $\alpha\in\{\mathrm{in},\mathrm{out}\}$, let $D_\alpha=(Q_\alpha,\Enc_\alpha,\cE_\alpha)$ be a decorated $[[n_\alpha,k_\alpha]]$ CSS code, and let $M_\alpha\in\bZ_{\geq 0}$. Let $\bar{\cO}$ be a quantum channel with $k_{\mathrm{in}}+M_{\mathrm{in}}$ input qubits and $k_{\mathrm{out}}+M_{\mathrm{out}}$ output qubits, such that if the $M_{\mathrm{in}}$-qubit register of the input is a computational basis state, then so is the $M_{\mathrm{out}}$-qubit register of the output. Let $\cQ$ be an adaptive quantum circuit using quantum space $N\geq n_{\mathrm{in}},n_{\mathrm{out}}$, classical space $M\geq M_{\mathrm{in}},M_{\mathrm{out}}$, and time $T$ with an associated family of \emph{bad sets} $\cE_{\mathrm{run}}\subseteq 2^{[N]\times[T]}$.

  We say that the data $((\cQ,\cE_{\mathrm{run}}),D_{\mathrm{in}},D_{\mathrm{out}})$ provides a \emph{fault-tolerant gadget for $\bar{\cO}$} if for every $\ell\in\bN$, every density operator $\rho\in\bC^{2^{k+M_{\mathrm{in}}+\ell}\times 2^{k+M_{\mathrm{in}}+\ell}}$ for which the $M_{\mathrm{in}}$-qubit register is a computational basis state, every $\cE_{\mathrm{in}}$-deviation $\sigma\in\bC^{2^{N+M+\ell}\times 2^{N+M+\ell}}$ of $\Enc_{\mathrm{in}}\otimes\cI_{M_{\mathrm{in}}+\ell}(\rho)$ for which the $M$-qubit register is a computational basis state, and every $\cE_{\mathrm{run}}$-avoiding fault $\cF$ for $\cQ$, then $\cQ[\cF]\otimes\cI_\ell(\sigma)\in\bC^{2^{N+M+\ell}\times 2^{N+M+\ell}}$ is a $\cE_{\mathrm{out}}$-deviation of $(\Enc_{\mathrm{out}}\circ\bar{\cO})\otimes\cI_\ell(\rho)$.

  The gadget is said to be \emph{Pauli fault-tolerant} if the same statement holds when restricting attention to Pauli deviations and Pauli faults.

  Similarly as in Remark~\ref{remark:noclassical}, unless explicitly stated otherwise, we by default assume that there is no classical input/output, that is, $M_{\mathrm{in}}=M_{\mathrm{out}}=0$.
\end{definition}

\begin{remark}
  In Definition~\ref{def:deccode}, we allow $N$-qubit deviations of $n$-qubit code states for $N\geq n$, so that a gadget may use a greater number $N$ of qubits than taken by its input or output. For notational convenience, we also allow families of bad sets to be defined on a superset of the physical qubits; bad sets containing any extraneous elements outside of the physical qubits can be ignored.
\end{remark}

\begin{remark}
  \label{remark:refsys}
  The $\ell$-qubit ancilla system in Definition~\ref{def:gadget} that is never encoded can be similarly found in \cite[Definition 2.25]{nguyen_quantum_2025}, where it is called a \emph{reference system}. While this reference system does not meaningfully change the analysis of our gadgets, \cite{nguyen_quantum_2025} use the reference system to show that the parallel execution of multiple fault-tolerant gadgets remains fault-tolerant; see Proposition~\ref{prop:parcomp} below.
\end{remark}

\begin{proposition}[Proposition~2.30 of \cite{nguyen_quantum_2025}]
  \label{prop:parcomp}
  For (Pauli) fault-tolerant gadgets $((\cQ,\cE_{\mathrm{run}}),D_{\mathrm{in}},D_{\mathrm{out}})$, $((\cQ',\cE_{\mathrm{run}}'),D_{\mathrm{in}}',D_{\mathrm{out}}')$ for $\bar{\cO}$, $\bar{\cO}'$ respectively, then the parallel composition
  \begin{equation*}
    ((\cQ\sqcup\cQ',\; \cE_{\mathrm{run}}\sqcup\cE_{\mathrm{run}}'),\; D_{\mathrm{in}}\sqcup D_{\mathrm{in}}',\; D_{\mathrm{out}}\sqcup D_{\mathrm{out}}')
  \end{equation*}
  is a (Pauli) fault-tolerant gadget for $\bar{\cO}\sqcup\bar{\cO}'=\bar{\cO}\otimes\bar{\cO}'$.
\end{proposition}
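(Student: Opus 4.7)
The plan is to apply the fault-tolerance hypothesis of each component gadget in turn, using the reference system in Definition~\ref{def:gadget} to absorb the other block's state. The proof rests on the simple observation that if $\cE\subseteq 2^{[n]}$ and $\cE'\subseteq 2^{[n']}$ are families of bad sets on disjoint underlying sets, then a set $S\subseteq[n]\sqcup[n']$ is $(\cE\sqcup\cE')$-avoiding if and only if $S\cap[n]$ is $\cE$-avoiding and $S\cap[n']$ is $\cE'$-avoiding. Consequently, the given fault $\cF\sqcup\cF'$ restricts to an $\cE_{\mathrm{run}}$-avoiding fault $\cF$ on $\cQ$ and an $\cE_{\mathrm{run}}'$-avoiding fault $\cF'$ on $\cQ'$, and similarly any $(\cE_{\mathrm{in}}\sqcup\cE_{\mathrm{in}}')$-deviation's witnessing superoperator has block-1 support that is $\cE_{\mathrm{in}}$-avoiding and block-2 support that is $\cE_{\mathrm{in}}'$-avoiding.

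First I would view the composed input $\sigma_0$ as an $\cE_{\mathrm{in}}$-deviation of the gadget-1 encoded input, using an enlarged reference of size $\ell_1 = n_{\mathrm{in}}' + M_{\mathrm{in}}' + \ell$ that absorbs block 2's code qubits, the gadget-2 classical register, and the original reference of size $\ell$. The decomposition above ensures the required $\cE_{\mathrm{in}}$-avoiding structure on block 1. Applying fault-tolerance of gadget 1 with fault $\cF$ yields that $\sigma_1 := (\cQ[\cF]\otimes\cI_{\ell_1})(\sigma_0)$ is an $\cE_{\mathrm{out}}$-deviation of $((\Enc_{\mathrm{out}}\circ\bar{\cO})\otimes\cI_{\ell_1})(\rho_1)$ for the appropriate effective logical input $\rho_1$. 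Since the gadget-1 circuit acts trivially on block 2, the state $\sigma_1$ restricted to block 2 is structurally unchanged from $\sigma_0$, so $\sigma_1$ is also an $\cE_{\mathrm{in}}'$-deviation of the gadget-2 encoded input with enlarged reference $\ell_2 = n_{\mathrm{out}} + M_{\mathrm{out}} + \ell$ (now absorbing gadget 1's output together with the original reference). A symmetric application of fault-tolerance of gadget 2 with fault $\cF'$ gives $\sigma_2 := (\cI \otimes \cQ'[\cF'])(\sigma_1) = ((\cQ \sqcup \cQ')[\cF \sqcup \cF'])(\sigma_0)$ as an $\cE_{\mathrm{out}}'$-deviation of the gadget-2 target. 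Combining the two deviation witnesses via the disjoint-union decomposition, $\sigma_2$ is an $(\cE_{\mathrm{out}} \sqcup \cE_{\mathrm{out}}')$-deviation of $((\Enc_{\mathrm{out}}\circ\bar{\cO}) \otimes (\Enc_{\mathrm{out}}'\circ\bar{\cO}'))\otimes\cI_\ell(\rho)$, exactly as desired. The Pauli case is handled identically with Pauli-only superoperators throughout.

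The main obstacle is verifying that the block-2 portion of the deviation is actually preserved through the application of gadget 1, i.e., that the witness $\cF_1$ supplied by gadget 1's fault-tolerance property can be chosen with $\cE_{\mathrm{in}}'$-avoiding support on block 2. The existential form of Definition~\ref{def:gadget} only constrains the witness on block 1's output qubits, leaving its action on the reference in principle arbitrary. The fix is to construct the witness explicitly: decompose the initial deviation $\cF_0$ into Kraus operators, commute each operator's block-2 factor (which acts trivially with respect to $\cQ[\cF]\otimes\cI$) to the end of the computation, and invoke gadget 1's fault-tolerance only on the block-1 factor. The resulting combined witness factors as an $\cE_{\mathrm{out}}$-avoiding operator on block 1 times an $\cE_{\mathrm{in}}'$-avoiding operator on block 2, as required. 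In the Pauli case this factorization is immediate since $\cF_0$ already decomposes into a sum of tensor products of block-1 and block-2 Paulis which propagate independently through the commuting circuits $\cQ[\cF]$ and $\cQ'[\cF']$.
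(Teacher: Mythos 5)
The paper imports this proposition directly from [nguyen\_quantum\_2025, Prop.~2.30] without reproducing a proof, so there is no in-paper argument to compare against; the assessment below is of your proof on its own terms.

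Your high-level plan — absorb block 2 into the reference, apply gadget 1's fault tolerance, re-absorb, apply gadget 2 — is the right one, and you correctly spot the genuine subtlety: Definition~\ref{def:gadget} only asserts the \emph{existence} of an $\cE_{\mathrm{out}}$-avoiding output witness $\cF_1$, and since $\cE_{\mathrm{out}}$ is a family of subsets of block-1 qubits, it places no constraint whatsoever on $\cF_1$'s action on block-2 qubits sitting in the enlarged reference. However, your proposed fix does not close this gap. Stripping the block-2 factor $\cF^{(2)}$ off the input witness (so that $\sigma_0'$ has a block-1-only witness) and then invoking gadget 1's guarantee \emph{still only} yields some $\cE_{\mathrm{out}}$-avoiding $\cF_1$; nothing in the existential statement forces that $\cF_1$ to be supported inside block 1 just because the input witness was. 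So the ``combined witness $\cF^{(2)}\circ\cF_1$'' need not have $\cE_{\mathrm{in}}'$-avoiding support on block 2, and the application of gadget 2's guarantee in the next step is not licensed. Closing this requires an additional argument — e.g., in the Pauli setting, a direct computation that $\cQ[\cF]$, being a block-1-supported Pauli-corrupted circuit, propagates a block-1 Pauli error to a block-1 Pauli error, or a quasi-uniqueness argument for Pauli witnesses modulo stabilizers — rather than another appeal to the existential guarantee. Your write-up gestures at the direct computation (``propagate independently through the commuting circuits'') only for the Pauli case, where it is indeed the escape hatch; but you frame it as an afterthought rather than the load-bearing step.

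Separately, the Kraus-decomposition step in the general (non-Pauli) branch is ill-posed: a single Kraus operator of a superoperator on two blocks does not, in general, factor as a tensor product of block-1 and block-2 operators, so ``each operator's block-2 factor'' is undefined. The commute-out trick you describe works cleanly only for Pauli superoperators, where the witness is a single pair $(P,Q)$ that does factor. For the non-Pauli case you would need a different mechanism — or a strengthening of the deviation definition (e.g., requiring witnesses to act trivially on the reference) — neither of which your argument supplies. Since Lemma~\ref{lem:redPauli} in the paper is precisely what lets one reduce to Pauli faults, it would be worth making explicit that your proof of the Pauli case, suitably repaired, is all that is actually used downstream, and flag the non-Pauli case as relying on the imported result.
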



A standard technique for proving that a computation is fault-tolerant is to prove that it is Pauli fault-tolerant, and then decompose a more general fault into a linear combination Pauli faults. Each term in this linear combination can then be analyzed using the Pauli fault-tolerance. For instance, the following lemma is (implicitly) shown by \cite{nguyen_quantum_2025} (see the subsection ``Fault analysis and reduction to Pauli noise'' of the proof of \cite[Theorem~3.20]{nguyen_quantum_2025}).

\begin{lemma}[Folklore; see e.g.~\cite{nguyen_quantum_2025}]
  \label{lem:redPauli}
  Let $((\cQ,\cE_{\mathrm{run}}),D_{\mathrm{in}},D_{\mathrm{out}})$ be a Pauli fault-tolerant gadget for a channel $\bar{\cO}$ with entirely classical input and output, so that $k_{\mathrm{in}}=k_{\mathrm{in}}=0$ (see Definition~\ref{def:gadget}). Then $((\cQ,\cE_{\mathrm{run}}),D_{\mathrm{in}},D_{\mathrm{out}})$ is also a fault-tolerant gadget for $\bar{\cO}$.
\end{lemma}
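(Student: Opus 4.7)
The strategy is to reduce general fault-tolerance to the assumed Pauli fault-tolerance via a linear decomposition in the Pauli basis for superoperators. The key observation is that the Pauli superoperators $\cP_{P,Q}(\rho)=P\rho Q^\dagger$ form a basis for all superoperators on a finite-dimensional system, and any $\cP_{P,Q}$ appearing in the Pauli expansion of a superoperator $\cF$ has support $\supp(P)\cup\supp(Q)$ contained in $\supp(\cF)$; consequently, every $\cE$-avoiding superoperator is a linear combination of $\cE$-avoiding Pauli superoperators.

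First I would decompose each fault location $\cF_t=\sum_{j_t}\beta_{t,j_t}\cP_{t,j_t}$, yielding $\cF=\sum_j\beta_j\cF^{(j)}$ as a linear combination of $\cE_{\mathrm{run}}$-avoiding Pauli faults $\cF^{(j)}$. Analogously, I would decompose the $\cE_{\mathrm{in}}$-avoiding deviation superoperator $\cF_{\mathrm{in}}$ associated to $\sigma$ as $\cF_{\mathrm{in}}=\sum_i\alpha_i\cP^{\mathrm{in}}_i$, and correspondingly write $\sigma=\sum_i\alpha_i\sigma_i$ as a linear combination of Pauli $\cE_{\mathrm{in}}$-deviations (using that $\cQ$ initializes its workspace qubits, so $\cQ[\cF]$ depends on $\sigma$ only through its code-register partial trace). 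By linearity of $\cQ[\cF]\otimes\cI_\ell(\cdot)$ and multilinearity of $\cQ[\cdot]$ in its fault locations,
\begin{equation*}
\cQ[\cF]\otimes\cI_\ell(\sigma)=\sum_{i,j}\alpha_i\beta_j\,\cQ[\cF^{(j)}]\otimes\cI_\ell(\sigma_i).
\end{equation*}
Applying the hypothesized Pauli fault-tolerance to each summand yields an $\cE_{\mathrm{out}}$-avoiding Pauli $\cP^{(i,j)}_{\mathrm{out}}$ and a proportionality constant $C_{i,j}$ so that after tracing out workspace the summand equals $C_{i,j}\cdot\cP^{(i,j)}_{\mathrm{out}}(\ket{\psi_{\mathrm{out}}}\bra{\psi_{\mathrm{out}}})\otimes\mu$. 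Here the essential role of $k_{\mathrm{out}}=0$ is that the classical-and-reference factor $\mu=(\bar{\cO}\otimes\cI_\ell)(\rho)$ is the same for every $(i,j)$, so the sum cleanly factors through $\mu$ and defines a candidate output deviation $\cF_{\mathrm{out}}=\sum_{i,j}\alpha_i\beta_j C_{i,j}\cP^{(i,j)}_{\mathrm{out}}$ satisfying the desired target equation.

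The hard part will be verifying that $\supp(\cF_{\mathrm{out}})$ itself is $\cE_{\mathrm{out}}$-avoiding, not merely that each $\cP^{(i,j)}_{\mathrm{out}}$ is. A priori the support of a sum of Paulis is bounded only by the union of the individual supports, and $\cE_{\mathrm{out}}$-avoidance need not be preserved under unions. To overcome this, I would use that for our adaptive Clifford gadgets Pauli errors propagate deterministically through the circuit, so the output Pauli deviation returned by Pauli fault-tolerance can be chosen with support contained in a canonical forward ``light cone'' $L(\supp(\cF),\supp(\cF_{\mathrm{in}}))$ depending only on the input supports and the circuit structure. Invoking Pauli fault-tolerance once on a single Pauli fault and Pauli input deviation saturating these supports then certifies that this single light-cone set is $\cE_{\mathrm{out}}$-avoiding; since every $\cP^{(i,j)}_{\mathrm{out}}$ lies inside this common set, so does $\supp(\cF_{\mathrm{out}})$, completing the argument.
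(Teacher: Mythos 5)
The basic approach — expand the input deviation and each fault location in the Pauli basis, use multilinearity of $\cQ[\cdot]$ (which the paper's conventions in Definition~\ref{def:fault} are designed to support), and invoke Pauli fault-tolerance term by term — is the right starting point, and you correctly identify the core subtlety: even if every $\cP^{(i,j)}_{\mathrm{out}}$ has $\cE_{\mathrm{out}}$-avoiding support, the support of the sum $\cF_{\mathrm{out}}=\sum_{i,j}\alpha_i\beta_jC_{i,j}\cP^{(i,j)}_{\mathrm{out}}$ is only bounded by the union, and unions of $\cE_{\mathrm{out}}$-avoiding sets need not be $\cE_{\mathrm{out}}$-avoiding.

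The light-cone fix, however, does not close this gap. Pauli fault-tolerance is an \emph{existence} statement: for each Pauli summand it guarantees \emph{some} $\cE_{\mathrm{out}}$-avoiding Pauli $\cP^{(i,j)}_{\mathrm{out}}$ so that the output is proportional to $\cP^{(i,j)}_{\mathrm{out}}\circ\Enc_{\mathrm{out}}(\cdot)$ — and when the proportionality constant $C_{i,j}$ vanishes, this Pauli is not even determined by the output. It does not assert that any particular superset of qubits, such as a forward light cone, is $\cE_{\mathrm{out}}$-avoiding. A single ``saturating'' Pauli fault does not certify the full light cone: the returned $\cP_{\mathrm{out}}$ for that fault may have support strictly smaller than the light cone (e.g.~if errors are killed by resets), and the outputs for other $(i,j)$ need not have supports contained in $\supp(\cP_{\mathrm{out}})$ for the saturating fault, since different Paulis on the same qubit propagate differently. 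On top of this, in an adaptive circuit a Pauli error can flip a measurement outcome and change which gates are applied downstream, so ``the'' light cone is not canonically determined by $\supp(\cF)$ and $\supp(\cF_{\mathrm{in}})$ alone; one would have to take a union over all classical branches, making the certification claim even less plausible.

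More importantly, the fix misallocates where the hypothesis $k_{\mathrm{in}}=k_{\mathrm{out}}=0$ does its work. You claim the essential role of $k_{\mathrm{out}}=0$ is to make the classical/reference factor $\mu$ common to all summands, but that factor would be common even with $k_{\mathrm{out}}>0$, since all summands arise from the \emph{same} $\rho$. The actual role of the ``entirely classical input and output'' hypothesis (the paper's intended application is the end-to-end circuit, with $D_{\mathrm{in}}=D_{\mathrm{out}}=\emptyset$) is that there is \emph{no output code register}: the deviation superoperator $\cF_{\mathrm{out}}$ acts on $n_{\mathrm{out}}=0$ qubits and hence degenerates to a scalar, so the support-of-a-sum issue you are fighting simply does not arise. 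In that regime the argument is exactly your first paragraph and then stops: linearity gives $\cQ[\cF]\otimes\cI_\ell(\sigma)=\sum_{i,j}\alpha_i\beta_j C_{i,j}\cdot(\Enc_{\mathrm{out}}\circ\bar{\cO})\otimes\cI_\ell(\rho)$, which is manifestly proportional to the ideal classical output, and you are done. Your proof as written tacitly assumes $n_{\mathrm{out}}>0$ (you write $\ket{\psi_{\mathrm{out}}}$ and $\cP^{(i,j)}_{\mathrm{out}}$), which is precisely the case where the lemma does \emph{not} follow from the proposed fix, and where the paper's own discussion after the lemma signals that ``additional assumptions may be necessary.''
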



For simplicity, we have stated Lemma~\ref{lem:redPauli} for the case of gadgets with classical input and output. As we typically think of the ultimate circuits being run on quantum computers as having classical inputs and outputs, we may compile our smaller fault-tolerant gadgets into such end-to-end algorithms, and then apply Lemma~\ref{lem:redPauli} to the gadget corresponding to the entire algorithm. Furthermore, similar results as Lemma~\ref{lem:redPauli} can also be proven for quantum inputs/outputs, though additional assumptions may be necessary, such as needing the decorated code $D_{\mathrm{out}}$ to be able to correct all errors that avoid its specified bad sets.

Many of the above definitions closely mirror those in \cite{nguyen_quantum_2025,he_composable_2025}; the reader is referred to the discussion there for more context for these definitions.

The following family of bad sets will be particularly useful.

\begin{definition}
  \label{def:graphtobad}
  For a graph $G=(V,E)$ and real numbers $\eta,\gamma>0$, define a family $\cE(G,\eta,\gamma)\subseteq 2^V$ by
  \begin{equation*}
    \cE(G,\gamma,\eta) = \{S\subseteq V:\text{$S$ lies in a connected subgraph }G'\subseteq G\text{ with }|V(G')|\geq\eta\text{ and }|S|/|V(G')|\geq\gamma\}.
  \end{equation*}
\end{definition}

Lemma~\ref{lem:threshold} below, which (in a similar form) is for instance shown in \cite{kovalev_fault_2013,gottesman_fault-tolerant_2014}, implies that if $G$ has constant $O(1)$ maximum degree, $\eta\geq\poly(|V|)$, and $\gamma=\Omega(1)$, then for every sufficiently small constant $\epsilon>0$, an $\epsilon$-locally stochastic error will contain a bad set in $\cE(G,\gamma,\eta)$ with exponentially small probability $\leq e^{-\poly(|V|)}$.

Throughout this paper, we prove that our gadgets are Pauli fault-tolerant with respect to such families $\cE(G,\gamma,\eta)$ of bad sets. Then Lemma~\ref{lem:redPauli} and Lemma~\ref{lem:threshold} together imply that algorithms constructed using our gadgets exhibit a threshold under locally stochastic noise.

\begin{lemma}[Well known; see e.g.~\cite{kovalev_fault_2013,gottesman_fault-tolerant_2014}]
  \label{lem:threshold}
  Let $G=(V,E)$ be a graph of maximum degree $\Delta$, and let $\eta,\gamma,\epsilon>0$ be real numbers with $\epsilon$ is sufficient small such that\footnote{Here $e=2.718\dots$ denotes Euler's number.} $\epsilon\leq\gamma/e$ and $\Delta^2\cdot(e\epsilon/\gamma)^\gamma\leq 1/2$. Let $F$ be a random subset $F\subseteq V$ whose distribution satisfies $\Pr[S\subseteq F]\leq\epsilon^{|S|}$ for every $S\subseteq V$. Then
  \begin{align*}
    \Pr[F\text{ is not }\cE(G,\eta,\gamma)\text{-avoiding}] = \Pr[\exists S\in\cE(G,\eta,\gamma)\text{ with }S\subseteq F]
    &\leq |V| \cdot 2\left(\frac{e\epsilon}{\gamma}\right)^{\eta}.
  \end{align*}
\end{lemma}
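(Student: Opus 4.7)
The plan is a classical Peierls-style union bound over connected subgraphs of $G$. First, I would reformulate the event: $F$ fails to be $\cE(G,\eta,\gamma)$-avoiding if and only if there exists a connected subgraph $G'\subseteq G$ with $|V(G')|\geq\eta$ and $|V(G')\cap F|\geq\gamma|V(G')|$. One direction is immediate from Definition~\ref{def:graphtobad}; conversely, given such a $G'$, the set $V(G')\cap F$ itself lies in $\cE(G,\eta,\gamma)$ and is contained in $F$. This lets me upper-bound the target probability by a union bound over all such subgraphs, which I would then decompose by size $m\geq\eta$ and choice of a distinguished vertex $v\in V(G')$.

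For a fixed connected $G'$ with $|V(G')|=m$, the locally stochastic hypothesis together with a union bound over $\lceil\gamma m\rceil$-subsets of $V(G')$ and the standard estimate $\binom{m}{k}\leq(em/k)^k$ yields a per-subgraph bound
\[
\Pr\bigl[|V(G')\cap F|\geq \gamma m\bigr] \;\leq\; \binom{m}{\lceil\gamma m\rceil}\epsilon^{\lceil\gamma m\rceil} \;\leq\; \left(\frac{e\epsilon}{\gamma}\right)^{\gamma m},
\]
where the hypothesis $\epsilon\leq\gamma/e$ ensures $e\epsilon/\gamma\leq 1$ so dropping the ceiling only weakens the estimate. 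I would then count connected subgraphs using the standard Peierls combinatorial lemma: the number of connected size-$m$ subgraphs of $G$ containing a fixed vertex is at most $\Delta^{2m}$, provable by associating such a subgraph with its depth-first-search traversal, a closed walk of length $2m-2$ from $v$ in $G$. Summing over $v\in V$ and $m\geq\eta$ (which only overcounts each subgraph by a harmless factor of $m$ in a union bound) then gives
\[
\Pr[F\text{ not avoiding}] \;\leq\; |V|\sum_{m\geq\eta}\bigl[\Delta^{2}(e\epsilon/\gamma)^{\gamma}\bigr]^{m},
\]
and the hypothesis $\Delta^2(e\epsilon/\gamma)^\gamma\leq 1/2$ is exactly the condition needed to bound this geometric series by twice its leading term, yielding exponential decay in $\eta$.

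The main obstacle, such as it is in this otherwise routine computation, is the combinatorial subgraph-count. The DFS-walk encoding giving the $\Delta^{2m}$ bound is the cleanest choice because its base matches the $\Delta^2$ appearing in the hypothesis; sharper variants (e.g.\ the Knuth-type bound $(e(\Delta-1))^{m-1}$ via Galton--Watson encodings) would give a tighter geometric ratio but require a correspondingly modified hypothesis. Everything else — the binomial tail estimate, the geometric sum, and the reduction from bad sets to connected subgraphs — is mechanical once the reformulation in the first step is in place.
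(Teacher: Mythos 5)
Your proposal is correct and follows essentially the same route as the paper: a union bound over connected induced subgraphs stratified by size $m\geq\eta$, the binomial tail estimate $\binom{m}{\lceil\gamma m\rceil}\epsilon^{\lceil\gamma m\rceil}\leq(e\epsilon/\gamma)^{\gamma m}$ (dropping the ceiling via $\epsilon\leq\gamma/e$), the spanning-tree/closed-walk count $|V|\cdot\Delta^{2m}$ for connected subsets, and the geometric-series summation using $\Delta^2(e\epsilon/\gamma)^\gamma\leq 1/2$. The only distinction is stylistic (you phrase the subgraph count via a DFS closed walk of length $2m-2$, the paper via a length-$2u$ walk on a spanning tree), and your aside about tighter Knuth-type counts is a correct observation but not needed here.
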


We include a brief proof of Lemma~\ref{lem:threshold} for completeness.

\begin{proof}[Proof of Lemma~\ref{lem:threshold}]
  For $u\in\bN$, let $\cV_u\subseteq 2^V$ be the collection of sets $V'\subseteq V$ of size $|V'|=u$ that induce a connected subgraph of $G$. Then
  \begin{equation}
    \label{eq:cVubound}
    |\cV_u| \leq |V| \cdot \Delta^{2u}.
  \end{equation}
  Specifically, every $V'\in\cV_u$ contains a $u$-vertex spanning tree. In turn, every $u$-vertex subtree of $G$ can be traversed by starting at one of the $|V|$ vertices in $G$, and then taking a length-$2u$ walk in $G$, at each step of which there are $\Delta$ choices of a neighbor to step to. Thus there are at most $|V|\cdot\Delta^{2u}$ such subtrees, and therefore~(\ref{eq:cVubound}) holds.

  For every $V'\in\cV_u$,
  \begin{align*}
    \Pr[|V'\cap F|\geq\gamma u]
    &\leq {u \choose \lceil\gamma u\rceil}\cdot\epsilon^{\lceil\gamma u\rceil} \leq \left(\frac{eu}{\lceil\gamma u\rceil}\right)^{\lceil\gamma u\rceil}\cdot\epsilon^{\lceil\gamma u\rceil} \leq \left(\frac{e\epsilon}{\gamma}\right)^{\gamma u},
  \end{align*}
  where the final inequality above holds by the assumption that $\epsilon\leq\gamma/e$, so that $(e\epsilon/\gamma)^{\lceil\gamma u\rceil}\leq(e\epsilon/\gamma)^{\gamma u}$. 
  
  Then union-bounding over every $V'\in\cV_u$ and over every $u\geq\eta$ gives that
  \begin{align*}
    \Pr[\exists S\in\cE(G,\eta,\gamma)\text{ with }S\subseteq F]
    &\leq \sum_{u=\lceil\eta\rceil}^{|V|}\sum_{V'\in\cV_u}\Pr[|V'\cap F|\geq\gamma u] \\
    &\leq \sum_{u=\lceil\eta\rceil}^{|V|}|V|\cdot\Delta^{2u} \cdot \left(\frac{e\epsilon}{\gamma}\right)^{\gamma u} \\
    &\leq |V| \cdot 2\left(\frac{e\epsilon}{\gamma}\right)^{\eta},
  \end{align*}
  where the third inequality above holds by the assumption that $\Delta^2\cdot(e\epsilon/\gamma)^\gamma\leq 1/2$.
\end{proof}

\comment{I personally think that def~\ref{def:cctodec} and def.~\ref{lem:enc1dim}and def~\ref{def:encrdim} belong up after Kunneth's formula. Or somewhere in preliminaries after all the graph stuff. The fault tolerance definitions kind of break the flow, and I think its reasonable to want to see how the graphs construct classical+quantum codes before moving to fault tolerance}

We now describe how we decorate the quantum codes underlying Theorem~\ref{thm:main}, which come from products $\cC^*$ of 1-dimensional cochain complexes arising from bipartite graphs. For this purpose, we need to specify both an encoding map and a graph $G$ with which to define bad sets via Definition~\ref{def:graphtobad}. As defined below, the encoding map will arise naturally from encoding maps for the underlying classical codes. Meanwhile, we choose the graph $G$ to describe the incidence structure of $\cC^*$.


\begin{definition}[Encoding map and connectivity graph from cochain complex]
  \label{def:cctodec}
  Let $\cC^*$ be an $r$-dimensional cochain complex. We say a \emph{CSS encoding map} for the $[[n,k]]$ CSS code $Q$ associated to level $i$ of $\cC^*$ is an isomorphism $\Enc:\bF_2^k\xrightarrow{\sim}H^i(\cC)$. Such an encoding map naturally induces an encoding isometry $\Enc:(\bC^2)^{\otimes k}\rightarrow(\bC^2)^{\otimes n}$ given by $\Enc\ket{x}=\ket{\Enc(x)}$, which we may also view as a quantum channel.

  For $I\subseteq[r]$, we define a \emph{level-$I$ connectivity graph $G^{\cC}_I$ for $\cC^*$} to have vertex set $\bigsqcup_{i\in I}C^i\subseteq C$, and have an edge connecting every two vertices $c,c'\in C$ for which either there exists some $c^0\in C^0$ with $c,c'\succeq c^0$, or there exists some $c^r\in C^r$ with $c,c'\preceq c^r$.
\end{definition}

Given an $r$-dimensional cochain complex $\cC^*$ with a level $i\in[r]$ and real numbers $\eta,\gamma>0$, Definition~\ref{def:cctodec} then gives an associated decorated code
\begin{equation*}
  (Q,\Enc,\cE(G^{\cC}_i,\eta,\gamma)).
\end{equation*}
In our fault-tolerance analysis, we will often use such decorated codes, but with $G^{\cC}_i$ replaced by some larger graph that contains $G^{\cC}_i$ as a subgraph.

We now provide more details on how we construct CSS encoding maps for cochain complexes given by tensor products of 1-dimensional complexes. Note that there is a bijection between 1-dimensional cochain complexes and bipartite graphs, where the coboundary map of the complex corresponds to the parity-check/bipartite adjacency matrix of the graph.

\begin{lemma}
  \label{lem:enc1dim}
  Let
  \begin{equation*}
    \cC^* = \left(\bF_2^{V_L}\xrightarrow{\delta^{\cC}=H_G}\bF_2^{V_R}\right)
  \end{equation*}
  be a 1-dimensional cochain complex associated to a bipartite graph $G=(V_L\sqcup V_R,E)$. Let $M^0\subseteq V_L$ and $M^1\subseteq V_R$ be information sets for $\ker(\delta^{\cC})$ and $\ker(\partial^{\cC})$ respectively. Define a cochain complex
  \begin{equation*}
    \cM^* = (\bF_2^{M^0}\xrightarrow{\delta^{\cM}}\bF_2^{M^1})
  \end{equation*}
  with coboundary map $\delta^{\cM}=0$. Define the cochain map $\Enc:\cM^*\rightarrow\cC^*$ such that for $x\in M^0$, then $\Enc^0(x)$ equals the unique codeword $\tilde{x}\in\ker(\delta)$ whose restriction $\tilde{x}|_{S_L}=x$, and for $x\in M^1$, then $\Enc^1(x)=(x,0^{V_R\setminus M^1})$. Then $\Enc$ induces an isomorphism on cohomology.
\end{lemma}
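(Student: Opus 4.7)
The plan is to verify that $\Enc$ is a well-defined cochain map and that the induced maps on cohomology at levels $0$ and $1$ are isomorphisms. Both reductions are short linear-algebra checks powered by Lemma~\ref{lem:extendable}.

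First I would show well-definedness of $\Enc^0$ and dispose of $H^0$. Since $M^0$ is an information set for $\ker(\delta^{\cC})$, condition~\ref{it:ext} of Lemma~\ref{lem:extendable} gives surjectivity of the restriction map $\ker(\delta^{\cC}) \to \bF_2^{M^0}$, and the equality $|M^0| = \dim\ker(\delta^{\cC})$ (the defining size of an information set) upgrades it to a bijection. Hence every $x \in \bF_2^{M^0}$ has a unique lift $\tilde{x} \in \ker(\delta^{\cC})$, so $\Enc^0$ is a well-defined linear map into $\ker(\delta^{\cC})$. The cochain-map condition $\delta^{\cC}\circ\Enc^0 = \Enc^1\circ\delta^{\cM}$ then holds because both sides vanish: the left since $\im(\Enc^0)\subseteq\ker(\delta^{\cC})$, the right since $\delta^{\cM}=0$. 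At the $H^0$ level, $\cC^{-1}=0$ gives $H^0(\cC)=\ker(\delta^{\cC})$ and $H^0(\cM)=\bF_2^{M^0}$, and the induced $\tilde\Enc^0$ is precisely the inverse of the restriction bijection, hence an isomorphism.

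The interesting piece is the $H^1$ isomorphism. Here $H^1(\cM)=\bF_2^{M^1}$ and $H^1(\cC)=\bF_2^{V_R}/\im(\delta^{\cC})$, with $\tilde\Enc^1$ sending $\1_v$ to $\1_v+\im(\delta^{\cC})$ for each $v\in M^1$. I would invoke the standard $\bF_2$-linear-algebra duality $\im(\delta^{\cC})=\im((\partial^{\cC})^{\top})=\ker(\partial^{\cC})^{\perp}$; combined with condition~\ref{it:coset} of Lemma~\ref{lem:extendable} applied to the information set $M^1\subseteq V_R$ for $\ker(\partial^{\cC})$, this says exactly that the cosets $\1_v+\im(\delta^{\cC})$ for $v\in M^1$ are linearly independent in $\bF_2^{V_R}/\im(\delta^{\cC})$, so $\tilde\Enc^1$ is injective. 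A dimension count via $\dim H^1(\cC)=\dim\ker(\partial^{\cC})=|M^1|$ (the first equality by rank-nullity applied to $\delta^{\cC}$) then promotes injectivity to an isomorphism.

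The only conceptually non-routine step is the $H^1$ check: one must recognize that the dual characterization of an information set in condition~\ref{it:coset} is precisely what makes the naive coordinate inclusion $\bF_2^{M^1}\hookrightarrow\bF_2^{V_R}$ descend to an injection on $H^1(\cC)$, once $\im(\delta^{\cC})$ has been identified with $\ker(\partial^{\cC})^{\perp}$. Everything else is direct bookkeeping.
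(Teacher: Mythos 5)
Your proof is correct and takes essentially the same approach as the paper's: well-definedness and the cochain-map condition follow from the information-set property plus the vanishing of both coboundary maps, the $H^0$ isomorphism follows from item~\ref{it:ext} of Lemma~\ref{lem:extendable}, and the $H^1$ isomorphism follows from item~\ref{it:coset}. You unpack the linear-algebra duality $\im(\delta^{\cC})=\ker(\partial^{\cC})^\perp$ and the dimension counts that the paper's three-sentence proof leaves implicit, but the underlying argument is identical.
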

\begin{proof}
  Note that $\Enc^0(x)$ is well-defined and unique by the definition of an information set. Then $\Enc$ is a well-defined chain map because for every $x\in\bF_2^{M^0}$, by definition $\delta^{\cC}(\Enc^0(x))=0=\Enc^1(\delta^{\cM}(x))$. Now $\Enc^0$ induces an isomorphism on $0$-cohomology by item~\ref{it:ext} of Lemma~\ref{lem:extendable}, while $\Enc^1$ induces an isomorphism on $1$-cohomology by item~\ref{it:coset} of Lemma~\ref{lem:extendable}.
\end{proof}

\begin{definition}[Product encoding map]
  \label{def:encrdim}
  For $r\in\bN$ and $h\in[r]$, let
  \begin{equation*}
    {\cC^{(h)}}^* = \left(\bF_2^{V_L^{(h)}}\xrightarrow{\delta^{(h)}=H_{G^{(h)}}^\top}\bF_2^{V_R^{(h)}}\right)
  \end{equation*}
  be a 1-dimensional cochain complex associated to a bipartite graph $G^{(h)}=(V_L^{(h)}\sqcup V_R^{(h)},E^{(h)})$. For $h\in[r]$, let ${M^{(h)}}^0\subseteq V_L^{(h)}$ and ${M^{(h)}}^1\subseteq V_R^{(h)}$ be information sets for $\ker(\delta^{(h)})$ and $\ker(\partial^{(h)})$ respectively, and let $\cM^{(h)}$ and $\Enc^{(h)}$ be the associated cochain complex and cochain map respectively defined in Lemma~\ref{lem:enc1dim}.

  Let $\cC^*=\cC^{(1)}\otimes\cdots\otimes\cC^{(r)}$, $\cM^*=\cM^{(1)}\otimes\cdots\otimes\cM^{(r)}$, and $\Enc=\Enc^{(1)}\otimes\cdots\otimes\Enc^{(r)}$. For $i\in[r]$, by Lemma~\ref{lem:enc1dim} along with the K\"{u}nneth formula (Proposition~\ref{prop:kunneth}), $\Enc$ induces an isomorphism $\widetilde{\Enc}^i:\cM^i=H^i(\cM)\xrightarrow{\sim}H^i(\cC)$ on cohomology. We call this isomorphism a \emph{product encoding map}.



\end{definition}

\subsection{Basic Subroutines}

\begin{lemma}[Vizing's theorem (e.g.~\cite{misra_constructive_1992})]
  \label{lem:vizing}
  For every graph $G=(V,E)$ of maximum degree $\Delta$, there exists a $O(|V|\cdot|E|)$-time algorithm to compute a coloring of the edges with $\Delta+1$ colors such that two edges sharing a vertex have the same color.
\end{lemma}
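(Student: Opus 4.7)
\textbf{Proof proposal for Lemma~\ref{lem:vizing} (Vizing's theorem).} I would prove this via the standard Misra--Gries constructive algorithm, which processes edges one at a time and maintains a proper edge coloring throughout. At each stage some edge $(u,v_0)$ is uncolored. Let the \emph{missing} colors at a vertex $w$ be those in $\{1,\dots,\Delta+1\}$ not used on edges incident to $w$; since the degree is at most $\Delta$, every vertex has at least one missing color.

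The plan is to first build a maximal \emph{Vizing fan} at $u$ starting from $v_0$: a sequence $v_0,v_1,\dots,v_k$ of distinct neighbors of $u$ such that for each $i\geq 1$, the color of the edge $(u,v_i)$ is missing at $v_{i-1}$. The fan is extended greedily and halts when no further extension is possible. Let $c$ be a color missing at $u$ and $d$ a color missing at $v_k$. Two cases arise:
\begin{enumerate}
\item[(i)] If $d$ is also missing at $u$, then ``rotate'' the fan: for $i=1,\dots,k$, replace the color of $(u,v_i)$ by the previous color of $(u,v_{i+1})$ (reading $v_{k+1}$ as the newly colored neighbor using color $d$), and color $(u,v_0)$ with the former color of $(u,v_1)$. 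A short check confirms the coloring remains proper.
\item[(ii)] Otherwise, consider the maximal alternating path $P$ of edges colored $c$ and $d$ starting at $v_k$. Swap $c$ and $d$ along $P$; this yields a proper edge coloring in which the largest prefix of the fan that can still be completed has $d$ missing at its terminal vertex, after which we recolor as in case (i). The standard argument shows $P$ does not return to $u$ in a way that breaks this, because maximality of the fan together with the sole missing color $c$ at $u$ forces the path endpoint to avoid conflict.
\end{enumerate}

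The key invariant to verify is that the recoloring preserves a proper coloring and uses at most $\Delta+1$ colors, and that the number of colored edges strictly increases. For the runtime: finding missing colors at any vertex takes $O(\Delta)=O(|V|)$ time given appropriate per-vertex color-usage arrays; the fan has length at most $\Delta$; and the Kempe chain $P$ has length at most $O(|V|)$. Hence each edge costs $O(|V|)$ time, yielding $O(|V|\cdot|E|)$ overall.

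The main obstacle, and the only part needing real care, is verifying correctness of the Kempe-chain swap in case (ii): one must show that swapping colors along $P$ does not reintroduce a conflict at $u$ or disturb the fan's structure beyond the recoverable prefix, so that the subsequent rotation yields a proper coloring. This is a standard but delicate case analysis based on which vertex of the fan the path $P$ first re-enters; I would handle it by a short lemma stating that $P$ cannot end at $u$ (since $c$ is missing at $u$) and that the recoloring restores the fan property up to the appropriate index.
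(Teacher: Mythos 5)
Your proposal is correct and is exactly the Misra--Gries constructive argument from the reference \cite{misra_constructive_1992} that the paper cites; the paper itself gives no proof, so there is no alternative route to compare against. (Incidentally, you have implicitly corrected a typo in the lemma statement, which should read that edges sharing a vertex receive \emph{different} colors, i.e.\ a proper edge coloring.)
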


The following standard corollary applies Vizing's theorem to obtain a low-depth syndrome extraction circuit for a CSS code, as given in Algorithm~\ref{alg:syndext}.

\begin{algorithm}
  \caption{\label{alg:syndext} Syndrome extraction circuit in Corollary~\ref{cor:syndext}. Note that by Lemma~\ref{lem:vizing}, each step of the outer loop over $t$ in line~\ref{li:seouter} only performs gates with disjoint supports, and hence can be implemented entirely within timestep $t$. We call the $n$ qubits in the input $\rho$ \emph{data qubits}.}
  \SetKwInOut{Input}{Input}
  \SetKwInOut{Output}{Output}

  \SetKwFunction{FnSyndExt}{SyndExt}
  \SetKwProg{Fn}{Function}{:}{}

  \Input{$n$-qubit state $\rho$, Pauli $P\in\{X,Z\}$, parity-check matrix $H\in\bF_2^{m\times n}$}
  \Output{Outcomes of measuring $P^{H_i}$ on $\rho$ for $i\in[m]$, along with post-measurement state}

  \Fn{\FnSyndExt{$\rho;P,H$}}{
    Initialize $m$ ancilla qubits to $\ket{0}$ (if $P=Z$) or $\ket{+}$ (if $P=X$) \\
    For the bipartite graph $G=([m]\sqcup[n],E)$ given by the parity-check matrix $H=H_G$, compute a coloring $\chi:E\rightarrow\{2,\dots,w+3\}$ using Lemma~\ref{lem:vizing}. \\
    \ForEach{$t=2,\dots,w+3$}{ \label{li:seouter}
      \ForEach{$(i,j)\in E:\chi(i,j)=t$}{
        \If{$P=X$}{
          Apply $CNOT$ to ancilla qubit $i$ and data qubit $j$
        }
        \If{$P=Z$}{
          Apply $CNOT$ to data qubit $j$ and ancilla qubit $i$
        }
      }
    }
    Measure Pauli $P$ on all $m$ ancilla qubits
  }
\end{algorithm}

\begin{corollary}[Well known]
  \label{cor:syndext}
  Let $H\in\bF_2^{m\times n}$ be a parity-check matrix of locality $w$, and for $i\in[m]$ let $H_i$ denote the $i$th row of $H$. Then for $P\in\{X,Z\}$, there exists a Clifford circuit using quantum space $N=n+m$ and time $T=w+4$ that takes as input a $n$-qubit state $\rho$, and outputs $\rho$ following measurements of the $n$-qubit Pauli operators $P^{H_i}$ for $i\in[m]$, along with the $m$ measurement outcomes.
\end{corollary}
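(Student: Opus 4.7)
The plan is to verify that Algorithm~\ref{alg:syndext} realizes the claimed circuit, by separately checking correctness of the measurements and the stated complexity bounds.

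\textbf{Space and correctness.} First I would observe that the quantum register consists of $n$ data qubits together with $m$ ancilla qubits, giving $N = n + m$ as claimed. For correctness, I would invoke the standard syndrome-extraction gadget: for the $Z$ case, an ancilla prepared in $\ket{0}$ with CNOTs controlled on each data qubit $j$ in $\supp(H_i)$ and targeted on ancilla $i$ implements a unitary that sends $\ket{\psi}\otimes\ket{0}$ to a state whose ancilla measurement in the $Z$ basis yields (with Born probabilities) the eigenvalue of $Z^{H_i}$, and collapses the data to the corresponding projector applied to $\ket{\psi}$. The $X$ case is the conjugate construction with $\ket{+}$ ancilla and CNOTs directed the other way, yielding a measurement of $X^{H_i}$. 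Since each ancilla $i$ interacts only with the data qubits in $\supp(H_i)$, and distinct ancillas do not interact, the $m$ measurements are simultaneous and mutually non-disturbing, so the overall effect is the desired joint measurement of $\{P^{H_i}\}_{i\in[m]}$.

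\textbf{Time bound via Vizing.} Next I would bound the depth. The bipartite graph $G = ([m]\sqcup[n], E)$ whose bipartite adjacency matrix is $H$ has maximum degree at most $w$, since $H$ has locality $w$ (both rows and columns have at most $w$ nonzero entries). Lemma~\ref{lem:vizing} therefore produces a proper edge coloring $\chi$ using at most $w+1$ colors, which fits inside $\{2,\dots,w+3\}$. Because $\chi$ is proper, all edges sharing an endpoint receive distinct colors, so for each fixed $t$ the set of CNOTs executed in that iteration of the outer loop act on pairwise disjoint pairs of qubits and can be applied in a single timestep. Summing one step for ancilla initialization, at most $w+2$ steps for the colored CNOT rounds, and one step for the terminal ancilla measurement gives depth at most $w+4$.

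\textbf{Circuit class.} Finally, all gates used — single-qubit resets to $\ket{0}$ or $\ket{+}$, CNOTs, and single-qubit Pauli $X$ or $Z$ measurements — lie in the Clifford gate set of Definition~\ref{def:circuit}, so the circuit is a Clifford circuit. There is essentially no obstacle here: the only mildly non-routine step is invoking Vizing's theorem for bipartite graphs of degree $w$ to obtain the parallelization, and that is handed to us by Lemma~\ref{lem:vizing}.
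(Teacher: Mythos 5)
Your proof is correct and matches the paper's (implicit) argument: the paper marks the corollary as well known, presents Algorithm~\ref{alg:syndext}, and relegates the depth justification to a remark in the algorithm's caption invoking Lemma~\ref{lem:vizing}. You fill in exactly the same ingredients — the standard syndrome-extraction correctness argument, the Vizing edge-coloring with $\leq w+1 \leq w+2$ colors to parallelize the CNOTs, and the timestep count $1 + (w+2) + 1 = w+4$ — so there is no substantive difference. (One small note: the paper's statement of Lemma~\ref{lem:vizing} contains a typo saying edges sharing a vertex have the ``same'' color, where it should say ``different'' colors for a proper edge coloring; you have correctly interpreted it as a proper coloring.)
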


\section{Small-Set Flip Decoding of Product Codes}
\label{sec:ssflip}
In this section, we construct a small-set flip decoder for product codes from lossless expanders. This decoder is a key ingredient in many of our fault-tolerance gadgets.

\begin{definition}
  \label{def:ssflip}
  For $m\in\bN$, we say a cochain complex $\cC^*$ has a \emph{$m$-small-set error-flip decoder at level $i$} if for every nonzero $e\in\cC^i$ of weight $|e|\leq m$, there exists a basis element $c^0\in C^0$ for which at least one of the following holds:
  \begin{enumerate}
  \item\label{it:notmin} There exists a cochain $c^{i-1}\in\cC^{i-1}$ with $c^0\preceq c^{i-1}$ such that $|e+\delta(c^{i-1})|<|e|$, or
  \item\label{it:flip} There exists a cochain $c^i\in\cC^i$ with $c^0\preceq c^i$ such that $|\delta(e+c^i)|<|\delta(e)|$.
  \end{enumerate}
  For $0<\nu<1$, we say $\cC^*$ has a \emph{$(m,\nu)$-small-set error-flip decoder at level $i$} if in condition~\ref{it:flip} above, we impose the stronger requirement that $|\delta(e+c^i)|<|\delta(e)|-(1-\nu)|\delta(c^i)|$.

  We say $\cC^*$ has a \emph{$(m,\gamma)$-small-set syndrome-flip decoder at level $i$} if for every nonzero $e\in\cC^i$ of weight $|e|\leq m$ and every $f\in\cC^{i+1}$ of weight $|f|\leq \gamma|\delta(e)|$, there exists a basis element $c^0\in C^0$ and a cochain $c^i\in\cC^i$ with $c^0\preceq c^i$ such that $|\delta(e+c^i)+f|<|\delta(e)+f|$.

  We say that $\cC^*$ has a \emph{$(m,\gamma)$-small-set flip decoder at level $i$} if $\cC^*$ has both a $m$-small-set error-flip decoder and a $(m,\gamma)$-small-set syndrome-flip decoder at level $i$.
\end{definition}

In Definition~\ref{def:ssflip}, an error-flip decoder is given access to a corrupted codeword and hence can compute a perfect syndrome. However, a syndrome-flip decoder is only given access to a syndrome, which may be corrupted in $|f|\leq\gamma|\delta(e)|$ locations. In Algorithm~\ref{alg:ssflip}, we translate these abstract notions of decoders into concrete algorithms.

\begin{remark}
  \label{remark:gdl1}
  A $(m,\gamma)$-small-set syndrome-flip decoder must have $\gamma<1$, as otherwise setting $f=\delta(e)$ for arbitrary $e\neq 0$ would yield $\delta(e)+f=0$, whose weight cannot be reduced.
\end{remark}


\begin{algorithm}
  \caption{\label{alg:ssflip} (Classical) decoding algorithms associated to a small-set flip decoder from Definition~\ref{def:ssflip}.}
  \SetKwInOut{Input}{Input}
  \SetKwInOut{Output}{Output}
  \SetKwProg{Fn}{Function}{:}{}

  \SetKwFunction{FnSSFlipSyn}{SSFlipSyn}

  \Input{$(i+1)$-cochain $s\in\cC^{i+1}$ of a cochain complex $\cC^*$}
  \Output{$a^i\in\cC^i$ with $\delta(a^i)$ close to $s$}
  \Fn{\FnSSFlipSyn{$s;i,\cC^*$}}{
    Initialize $a^i\gets 0\in\cC^i$ \\
    \While{$\exists c^0\in\cC^0,c^i\in\cC^i$ with $c^0\preceq c^i$ and $|s+\delta(a^i+c^i)|<|s+\delta(a^i)|$}{
      $a^i\gets a^i+c^i$
    }
    \Return{$a^i$}
  }
  
  \SetKwFunction{FnSSFlipErr}{SSFlipErr}

  \Input{$i$-cochain $e\in\cC^i$ of a cochain complex $\cC^*$}
  \Output{$a^{i-1}\in\cC^{i-1},a^i\in\cC^i$ with $e=a^i+\delta(a^{i-1})$}

  \Fn{\FnSSFlipErr{$e;i,\cC^*$}}{
    Initialize $a^{i-1}\gets 0\in\cC^{i-1}$, $a^i\gets 0\in\cC^i$ \\
    \While{$e\neq a^i+\delta(a^{i-1})$}{ \label{li:fewhile}
      \If{$\exists c^0\in\cC^0,c^{i-1}\in\cC^{i-1}$ with $c^0\preceq c^{i-1}$ and $|e+a^i+\delta(a^{i-1}+c^{i-1})|<|e+a^i+\delta(a^{i-1})|$}{ \label{li:im1flip}
        $a^{i-1} \gets a^{i-1}+c^{i-1}$
      }\ElseIf{$\exists c^0\in\cC^0,c^i\in\cC^{i}$ with $c^0\preceq c^i$ and $|\delta(e+a^i+c^i)|<|\delta(e+a^i)|$}{ \label{li:iflip}
        $a^i \gets a^i+c^i$
      }\Else{
        \Return{FAIL}
      }
    }
    \Return{$a^{i-1},a^i$}
  }
\end{algorithm}

Lemma~\ref{lem:sslocal} and Lemma~\ref{lem:footprint} below apply the notion that \FnSSFlipSyn{} and \FnSSFlipErr{} in Algorithm~\ref{alg:ssflip} are ``local algorithms'' (see e.g.~\cite{fawzi_efficient_2018}), meaning that they act independently on different connected components of the subgraph of the connectivity graph defined in Definition~\ref{def:cctodec} induced by the flipped cochain basis elements. The statements of these lemmas require the following definition.

\begin{definition}
  \label{def:footprint}
  Fix an $r$-dimensional cochain complex $\cC^*$ and some level $0\leq i\leq r-1$.

  For a data error $e\in\cC^i$ and a syndrome error $f\in\cC^{i+1}$, we say the \emph{footprint at time $t$} of \FnSSFlipSyn{$\delta(e)+f;i,\cC^*$} is the subset of $C^i\sqcup C^{i+1}$ given by the union of the values of the sets $\supp(e+a^i)\subseteq C^i$ and $\supp(\delta(e+a^i)+f)\subseteq C^{i+1}$ across the first $t$ iterations of the while loop.

  Similarly, for $e\in\cC^i$, we say the \emph{footprint at time $t$} of \FnSSFlipErr{$e;i,\cC^*$} is the subset of $C^{i-1}\sqcup C^i\sqcup C^{i+1}$ given by the union of the values of the sets $a^{i-1}\subseteq C^{i-1}$, $\supp(e+a^i+\delta(a^{i-1}))\subseteq C^i$ and $\supp(\delta(e+a^i))\subseteq C^{i+1}$ across the first $t$ iterations of the while loop.

  In both cases, by the \emph{footprint} we mean the footprint at time $t=\infty$ after the respective while loop has terminated.
\end{definition}

In Lemma~\ref{lem:sslocal} below, we slightly abuse the standard notation of restriction of vectors. Specifically, for $x\in\bF_2^n$ and $I\subseteq[n]$, we let $x|_I\in\bF_2^n$ denote the vector obtained from $x$ by replacing the values of every component in $[n]\setminus I$ with $0$; this notation can be seen as a shorthand for writing $(x|_I,0^{[n]\setminus I})$.

\begin{lemma}
  \label{lem:sslocal}
  Fix an $r$-dimensional cochain complex $\cC^*$. For $0\leq i\leq r-1$, $e\in\cC^i$, $f\in\cC^{i+1}$, let $S_{\mathrm{syn}}\subseteq C^i\sqcup C^{i+1}$ and $a^i\in\cC^i$ denote the footprint and output, respectively, of \FnSSFlipSyn{$\delta(e)+f;i,\cC^*$}. For every connected component $V$ of the subgraph of $G^{\cC}_{i,i+1}$ induced by $S_{\mathrm{syn}}$, the output of \FnSSFlipSyn{$\delta(e|_{V\cap C^i})+(f|_{V\cap C^{i+1}});i,\cC^*$} must equal $a^i|_{V\cap C^i}$.

  Similarly, for $0\leq i\leq r-1$, $e\in\cC^i$, let $S_{\mathrm{err}}\subseteq C^{i-1}\sqcup C^i\sqcup C^{i+1}$ and $a^{i-1}\in\cC^{i-1},\;a^i\in\cC^i$ denote the footprint and output, respectively, of \FnSSFlipErr{$e;i,\cC^*$}. For every connected component $V$ of the subgraph of $G^{\cC}_{i-1,i,i+1}$ induced by $S_{\mathrm{err}}$, the output of \FnSSFlipErr{$e|_{V\cap C^i};i,\cC^*$} must equal $a^{i-1}|_{V\cap C^{i-1}},\;a^i|_{\cap C^i}$.
\end{lemma}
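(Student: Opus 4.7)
The plan is to show that every single iteration of the while loop in \texttt{SSFlipSyn} (respectively \texttt{SSFlipErr}) only touches positions contained in a single connected component of the footprint in $G^{\cC}_{i,i+1}$ (respectively $G^{\cC}_{i-1,i,i+1}$). From that it follows that the whole execution splits as a non-interacting interleaving of sub-executions, one per component, each of which is a valid execution of the same algorithm on the input restricted to the component.

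The first step is verifying the locality of a single flip. Consider a flip in \texttt{SSFlipSyn} using $c^0\in C^0$, $c^i\in\cC^i$ with $c^0\preceq c^i$. By definition of $\preceq$, every basis element of $\supp(c^i)\subseteq C^i$ and of $\supp(\delta(c^i))\subseteq C^{i+1}$ has $c^0$ below it, so they all lie in a single connected component of $G^{\cC}_{i,i+1}$ via the common lower bound $c^0$ (cf.~Definition~\ref{def:cctodec}). Moreover, every $c\in\supp(c^i)$ lies in the footprint: the values $(e+a^i_t)(c)$ and $(e+a^i_{t+1})(c)=(e+a^i_t+c^i)(c)$ differ, so at least one is nonzero, placing $c$ in $\supp(e+a^i_t)\cup\supp(e+a^i_{t+1})$. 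An identical argument handles $\supp(\delta(c^i))$ (recorded in the second part of the footprint), and the two flip types inside \texttt{SSFlipErr} are treated analogously using a common $c^0$ below $c^{i-1}$ or $c^i$: the upward stars $\supp(c^{i-1})$, $\supp(\delta(c^{i-1}))$, $\supp(c^i)$, $\supp(\delta(c^i))$ each sit above $c^0$ and therefore inside a single component of $G^{\cC}_{i-1,i,i+1}$.

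The next step is to show that projecting the execution onto a fixed component $V$ gives a valid execution of the algorithm on the restricted input with the restricted output. The central identity is that for every $u\in V\cap C^{i+1}$,
\begin{equation*}
  \delta(e+a^i_t)(u) \;=\; \delta\bigl((e+a^i_t)|_{V\cap C^i}\bigr)(u),
\end{equation*}
and symmetrically for the restricted error side in \texttt{SSFlipErr}. Indeed, any $c\preceq u$ with $(e+a^i_t)(c)\neq 0$ belongs to $\supp(e+a^i_t)$, hence to the footprint; because $c\preceq u$ forces $c$ and $u$ to share a common $c^0$ in $C^0$ (any $c^0\preceq c$ also satisfies $c^0\preceq u$), they are adjacent in $G^{\cC}_{i,i+1}$, so $c\in V$. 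Consequently the residual syndrome on $V$ is exactly what the restricted algorithm sees, and any flip in the original execution whose support lies in $V$ reduces the total weight entirely within $V\cap C^{i+1}$, so it remains an improving move on the restricted problem; flips with support outside $V$ change nothing on $V$. Simulating the original execution flip-by-flip while keeping only the ones with support in $V$ therefore produces a legitimate run of \texttt{SSFlipSyn} on $\delta(e|_{V\cap C^i})+f|_{V\cap C^{i+1}}$ whose accumulated correction is precisely $a^i|_{V\cap C^i}$. The same scheme, applied with the footprint decomposition at levels $i-1,i,i+1$, yields the \texttt{SSFlipErr} case.

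The main obstacle, and the one I would write most carefully, is termination consistency: one must argue that when the restricted simulation has exhausted the $V$-supported flips of the original run, no further flip exists for the restricted input. This is the converse direction of locality: a hypothetical valid flip on the restricted problem comes with some $c^0\preceq c^i$ whose support lies inside $V$, and by the identity above it would reduce the global syndrome weight just as well, contradicting the termination of the original \texttt{SSFlipSyn}. The analogous argument for \texttt{SSFlipErr} uses that both the level $i-1$ and level $i$ flip conditions depend only on the residuals computed above, which we have already shown are component-local. Apart from this step, the argument is an essentially routine simulation once the single-flip locality claim is in hand.
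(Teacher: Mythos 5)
Your argument mirrors the paper's: both rest on single-flip locality (each flip, together with its coboundary, sits inside one connected component of the footprint, all sharing the common lower bound $c^0$), followed by a simulation claim that the restriction of the global run to $V$ is itself a valid run. The paper compresses this into two sentences and leaves the termination bookkeeping implicit; you expand it and correctly flag termination consistency as the delicate step.

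However, the argument you give for that step does not close, and the same issue is glossed over in the paper. You assert that any valid flip on the restricted problem has its support inside $V$, which presupposes that the restricted residual syndrome $\delta(e|_{V\cap C^i}) + f|_{V\cap C^{i+1}}$ is supported inside $V$. Restricting $e$ and $f$ separately does not commute with $\delta$, so this can fail: if $u \in C^{i+1}$ has $f(u) = 1$ and $\delta(e)(u) = 1$, then $u \notin S_{\mathrm{syn}}$ (the two cancel, so $u$ never enters the residual), yet if some $c \in V$ with $e(c) = 1$ satisfies $c\triangleleft u$, then (all contributors to $\delta(e)(u)$ being mutually adjacent through a common $c^r\succeq u$, hence all in $V$) the restricted residual at $u$ equals $\delta(e|_V)(u) + f|_V(u) = 1 + 0 = 1$, while the global residual there is $0$. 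Concretely, take $\cC^*=\cC^{G,*}\otimes\cC^{G,*}$ with $G=K_{2,2}$ (left $\{a,b\}$, right $\{c,d\}$), $i=1$, $e=\1_{(a,c)}$, $f=\1_{(d,c)}$. Then $\delta(e)+f=\1_{(c,c)}$ admits no weight-reducing flip (all level-$1$ coboundaries have even weight), so $a^i=0$ and $S_{\mathrm{syn}}=V=\{(a,c),(c,c)\}$; yet the restricted input $\delta(e|_V)+f|_V=\1_{(c,c)}+\1_{(d,c)}=\delta(\1_{(a,c)})$ is cancelled in one flip, giving restricted output $\1_{(a,c)}\neq a^i|_V$. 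The paper's applications are unaffected because they always either have $f=0$ (Propositions~\ref{prop:switchdown} and~\ref{prop:stateprep}) or take $V$ to be a component of a set known to contain $\supp(f)$ (Proposition~\ref{prop:errcorr}); under the stronger hypothesis that $V$ is a connected component of any set containing $S_{\mathrm{syn}}\cup\supp(f)$, the restricted residual agrees with the global one everywhere and your simulation and termination arguments close cleanly. I would add that hypothesis to the statement.
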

\begin{proof}
  By definition, every update $c^i$ or $c^{i-1}$ in any iteration of the while loop of either algorithm \FnSSFlipSyn{$\delta(e)+f;i,\cC^*$} or \FnSSFlipErr{$e;i,\cC^*$} must be supported within the neighborhood in $G^{\cC}_{i,i+1}$ or $G^{\cC}_{i-1,i,i+1}$ respectively of the footprint from the previous step. Thus different connected components of the subgraph induced by the footprint do not interact at all during the algorithm's execution, so the result follows.
\end{proof}

\begin{lemma}
  \label{lem:footprint}
  Fix an $r$-dimensional cochain complex $\cC^*$ of locality $w$.
  Then for every $0\leq i\leq r-1$, $e\in\cC^i$, $f\in\cC^{i+1}$, in Algorithm~\ref{alg:ssflip}, at least $\gamma_{\mathrm{syn}}=1/4w^3$-fraction of the vertices in every connected component in the subgraph of $G^{\cC}_{i,i+1}$ induced by the footprint at every time $t$ of \FnSSFlipSyn{$\delta(e)+f;i,\cC^*$} must lie inside $\supp(e)\sqcup\supp(f)$.

  Similarly, for every $0\leq i\leq r-1$, $e\in\cC^i$, at least $\gamma_{\mathrm{err}}=1/8w^4$-fraction of the vertices in every connected component in the subgraph of $G^{\cC}_{i-1,i,i+1}$ induced by the footprint at every time $t$ of \FnSSFlipErr{$e;i,\cC^*$} must lie inside $\supp(e)$.
\end{lemma}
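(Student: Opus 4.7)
The plan is to bound each connected component $V$ of the time-$t$ footprint by $|V| \leq O(w^c)\cdot |V \cap (\supp(e)\sqcup\supp(f))|$ (respectively $|V|\leq O(w^c)\cdot|V\cap\supp(e)|$) via a charging argument that tracks how the local updates grow the footprint. The enabling structural observation is that every update at some $c^0 \in C^0$ only touches basis elements in the upper star $\{c' : c^0 \preceq c'\}$, which by the locality assumption has size at most $w$ and forms a clique in $G^{\cC}_{i,i+1}$ (resp.\ $G^{\cC}_{i-1,i,i+1}$) through the common lower bound $c^0$. Because this clique is connected, if any vertex of the touched set lies in the time-$t$ connected component $V$ then the entire touched set must lie within $V$; conversely, updates whose touched set is disjoint from $V$ cannot affect the $V$-restricted state. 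This self-containment lets me analyze the growth of $V$ purely through ``$V$-updates.''

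For the syndrome-flip claim I will set $\alpha = |V\cap\supp(e)|$, $\gamma = |V\cap\supp(f)|$, and let $\beta = |V\cap\supp(\delta(e)+f)|$ denote the initial level-$(i+1)$ footprint inside $V$. Every $c^{i+1} \in V\cap\supp(\delta(e))$ is a connectivity-graph neighbor of some $c^i \in \supp(e)$, which must itself lie in $V$ as part of the initial footprint; since each $c^i$ has at most $w$ neighbors at level $i+1$, this yields $|V\cap\supp(\delta(e))| \leq w\alpha$ and hence $\beta \leq w\alpha + \gamma$. By self-containment, each $V$-update strictly decreases the level-$(i+1)$ weight of the current state inside $V$, so the total number of $V$-updates is at most $\beta$, and each contributes at most $w$ new vertices to $V$ (the upper star of its $c^0$). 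Assembling these estimates gives $|V| \leq \alpha + (w+1)\beta \leq (1+w+w^2)(\alpha+\gamma) \leq 4w^3(\alpha+\gamma)$, which is the claimed density bound.

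The error-flip case is slightly subtler because a type-2 update (line~\ref{li:iflip}), while strictly decreasing the syndrome $|\delta(e+a^i)|$, may increase the mismatch $|e+a^i+\delta(a^{i-1})|$ by up to $|c^i| \leq w$. To bound both update types uniformly I plan to introduce the weighted potential
\begin{equation*}
\Phi_V = |e+a^i+\delta(a^{i-1})|_V + (w+1)\,|\delta(e+a^i)|_V.
\end{equation*}
Type-1 updates (line~\ref{li:im1flip}) drop $\Phi_V$ by at least $1$ since they strictly reduce the first term while preserving the second, and type-2 updates drop $\Phi_V$ by at least $1$ because the coefficient $w+1$ on the syndrome absorbs the possible $w$-unit increase in the mismatch. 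With $\alpha = |V\cap\supp(e)|$, the initial $V$-footprint has size at most $(1+w)\alpha$ (zero at level $i-1$, $\alpha$ at level $i$, and $\leq w\alpha$ at level $i+1$ by the same neighbor argument as before) and the initial $\Phi_V$ is at most $\alpha + w(w+1)\alpha \leq 3w^2\alpha$, so $|V| \leq (1+w)\alpha + w\cdot 3w^2\alpha \leq 8w^4\alpha$, as required.

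The main obstacle will be cleanly formalizing the self-containment observation --- it is what permits reducing the analysis to a single connected component in the first place, and the cliqueness of the upper star (combined with the fact that the footprint is monotone in time, so connected components only merge) is the precise ingredient that makes it go through. Once that is in hand, the remaining arithmetic is a routine counting exercise, with the weighted potential being the only non-trivial additional ingredient specific to the error-flip case.
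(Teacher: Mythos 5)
Your proposal is correct and follows essentially the same route as the paper's proof: fix a connected component $V$ of the time-$t$ footprint, observe that each local update is supported in a set that is entirely inside or entirely outside $V$ (so the evolution within $V$ is self-contained), bound the initial footprint in $V$ in terms of $|V\cap(\supp(e)\sqcup\supp(f))|$ using the locality bound $|\supp(\delta(e))\cap V|\leq w\,|\supp(e)\cap V|$, and then bound the number of $V$-updates by a monotonically decreasing quantity. The two genuine (if minor) differences are: (i) you exploit that the touched set of an update at $c^0$ is the upper star of $c^0$, which is a clique of size $\leq w$, giving a tighter per-update footprint-growth bound of $w$, whereas the paper uses the looser $|c^i|+|\delta(c^i)|\leq w+w^2$; and (ii) for the error-flip case you package the paper's two-quantity bookkeeping (type-2 updates bounded by $w\alpha$, type-1 updates bounded by $(1+w^2)\alpha$) into a single weighted potential $\Phi_V = |e+a^i+\delta(a^{i-1})|_V + (w+1)|\delta(e+a^i)|_V$, which is a clean equivalent reformulation. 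Both lead to the same constants $\gamma_{\mathrm{syn}}=1/4w^3$ and $\gamma_{\mathrm{err}}=1/8w^4$. The self-containment step you flag as the ``main obstacle'' is indeed where the paper is also slightly informal (it relies on every update's touched set being recorded in the footprint, which depends on how the footprint at the boundary iteration $t$ is defined), but since an update's touched set not yet absorbed into the time-$t$ footprint also contributes nothing to $|V|$, this causes no actual gap.
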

\begin{proof}
  We first prove the claim regarding \FnSSFlipSyn{$\delta(e)+f;i,\cC^*$}. In this algorithm, by definition every update $c^i$ (that gets added in to $a^i$) at a given timestep has $\supp(c^i)$ within the neighborhood of the footprint at the prior timestep. Therefore if $S$ denotes the footprint at time $t$, then within every connected component $V$ of the subgraph of $G^{\cC}_{i,i+1}$ induced by $S$, the sequence of updates $c^i$ is the same as if all entries of $e,f$ outside of $V$ were replaced with $0$s. It also follows that each $\supp(c^i)$ either lies entirely inside or outside of $V$.

  Assume for a contradiction that $<\gamma_{\mathrm{syn}}$-fraction of the vertices in $V$ lie in $\supp(e)\sqcup\supp(f)$. Then because the value $|\supp(\delta(e+a^i)+f)\cap V|$ must decrease in each iteration of the while loop prior to time $t$ for which $\supp(c^i)\subseteq V$, starting from the value $|\supp(\delta(e)+f)\cap V|<w\cdot\gamma_{\mathrm{syn}}|V|$ at time $0$, and each $|c^i|\leq w$ and $|\delta(c^i)|\leq w^2$, we must have that
  \begin{align*}
    |V|
    &\leq |\supp(e\sqcup \delta(e)\sqcup f)\cap V| + |(\delta(e)+f)\cap V|\cdot(w+w^2) \\
    &< \gamma_{\mathrm{syn}}|V|((w+1) + w\cdot(w+w^2)) \\
    &\leq 4\gamma_{\mathrm{syn}}w^3|V| \\
    &= |V|.
  \end{align*}
  Specifically, the size of the intersection of the footprint and $V$ is at most $|\supp(e\sqcup \delta(e)\sqcup f)\cap V|$ at time $0$, and increases by at most $|c^i|+|\delta(c^i)|\leq w+w^2$ in at most $|\supp(\delta(e)+f)\cap V|$ iterations of the while loop prior to time $t$. But the above inequality $|V|<|V|$ is a contradiction, so the assumption that $<\gamma_{\mathrm{syn}}$-fraction of the vertices in $V$ lie in $\supp(e)\sqcup\supp(f)$ was false, as desired.

  We now similarly prove the claim regarding \FnSSFlipErr{$e;i,\cC^*$}. In this algorithm, by definition every update $c^{i-1}$ or $c^i$ at a given timestep has support within the neighborhood of the footprint at the prior timestep. Therefore if $S$ denotes the footprint at time $t$, then within every connected component $V$ of the subgraph of $G^{\cC}_{i,i+1,i+2}$ induced by $S$, the sequence of updates $c^{i-1}$ and $c^i$ is the same as if all entries of $e$ outside of $V$ were replaced with $0$s. It also follows that each $\supp(c^{i-1})$ and $\supp(c^i)$ either lies intirely inside or outside of $V$.

  Assume for a contradiction that $<\gamma_{\mathrm{err}}$-fraction of the vertices in $V$ lie in $\supp(e)$. Then either $|\supp(e+a^i+\delta(a^{i-1}))\cap V|$ or $|\supp(\delta(e+a^i))\cap V|$ must decrease in each iteration of the while loop prior to time $t$ for which $\supp(c^{i-1})\subseteq V$ or $\supp(c^i)\subseteq V$. By definition $|\supp(\delta(e+a^i))\cap V|$ can only decrease, starting from the value $|\supp(\delta(e))\cap V|\leq w|\supp(e)\cap V|$. Every time this value decreases, the value $|\supp(e+a^i+\delta(a^{i-1}))\cap V|$ increases by at most $|c^i|\leq w$, and hence there are at most $|\supp(e)\cap V|+w\cdot w|\supp(e)\cap V|=(1+w^2)|\supp(e)\cap V|$ iterations in which the algorithm adds in an update $c^{i-1}$ to decrease $|\supp(e+a^i+\delta(a^{i-1}))\cap V|$. Therefore we obtain a contradiction
  \begin{align*}
    |V|
    &\leq |\supp(e\sqcup\delta(e))\cap V| + w|\supp(e)\cap V|\cdot (w+w^2) + (1+w^2)|\supp(e)\cap V|\cdot(w+w^2) \\
    &\leq |\supp(e)\cap V|((1+w)+w(w+w^2)+(1+w^2)(w+w^2)) \\
    &< \gamma_{\mathrm{err}}|V|\cdot 8w^4 \\
    &\leq |V|,
  \end{align*}
  so the assumption that $<\gamma_{\mathrm{err}}$-fraction of the vertices in $V$ lie in $\supp(e)$ was false, as desired.
\end{proof}

We now prove the main result of this section, which is a small-set flip decoder for the tensor product of 1-dimensional cochain complexes associated to lossless expanders.


\begin{proposition}
  \label{prop:ssflip}
  For every $r\in\bN$ and $0<\beta\leq 1$, there exists $\epsilon=\epsilon(r,\beta)>0$ such that the following holds. For some $\mu>0$ and $\Delta_{\max}\in\bN$, for each $h\in[r]$ let $G^{(h)}=(V^{(h)}=V_L^{(h)}\sqcup V_R^{(h)},E^{(h)})$ be a $(\mu,\epsilon)$-lossless expander of maximum left-degree $\Delta_L^{(h)}\leq \Delta_{\max}$ and minimum left-degree $\geq\beta \Delta_{\max}$. Let
  \begin{equation*}
    {\cC^{(h)}}^* = \left(\bF_2^{V_L^{(h)}}\xrightarrow{\delta^{(h)}=H_{G^{(h)}}^\top}\bF_2^{V_R^{(h)}}\right)
  \end{equation*}
  be the associated 1-dimensional cochain complex. Then the tensor product $\cA^*:=\cC^{(1)}\otimes\cdots\otimes\cC^{(r)}$ has an $(m,\gamma)$-small-set flip decoder at every level $0\leq i\leq r-1$ for $m=\min_{h\in[r]}\mu|V_L^{(h)}|/(r\Delta_{\max}^{r+1}+1)$ and $\gamma=1/10$.
\end{proposition}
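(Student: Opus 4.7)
The plan is to reduce the small-set flip property of the product complex $\cA^*$ to the lossless expansion of a single factor $G^{(h^*)}$, with the direction $h^* \in [r]$ chosen adaptively according to where the error $e$ contributes most to the syndrome. A basis element $c^0 = (v_1, \ldots, v_r) \in V_L^{(1)} \times \cdots \times V_L^{(r)}$ at level $0$ has ``star'' $\{c \in \bigsqcup_j C^j : c^0 \preceq c\}$ consisting of tuples whose $h$-th coordinate is either $v_h$ or a neighbor of $v_h$ in $G^{(h)}$. A flip supported in this star has size at most $\prod_h (1 + \Delta_L^{(h)}) = O_r(1)$, so it qualifies as a ``small set'' in the sense of Definition~\ref{def:ssflip}.

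\textbf{Flip construction.} Given nonzero $e \in \cC^i$ with $|e| \leq m$ (and, for the syndrome-flip case, a perturbation $f \in \cC^{i+1}$ with $|f| \leq \gamma |\delta(e)|$ for $\gamma = 1/10$), I would first check whether condition~\ref{it:notmin} of Definition~\ref{def:ssflip} holds, i.e.\ whether some star-supported coboundary $\delta(c^{i-1})$ strictly reduces $|e|$; if yes, we are done. Otherwise, for each direction $h$ let $\pi_h(e) \subseteq V^{(h)}$ denote the projection of $\supp(e)$ onto the $h$-th factor. Since each basis element of $e$ has a single $h$-coordinate, $|\pi_h(e)| \leq |e| \leq m$, and the choice of $m$ in the statement ensures $|\pi_h(e)| \leq \mu |V^{(h)}|$ so that Lemma~\ref{lem:losslesstoun} applies in every direction. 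Because $\delta = \sum_h \delta^{(h)}$, the triangle inequality yields an $h^* \in [r]$ with $|\delta^{(h^*)}(e)| \geq |\delta(e)|/r$. Applying Lemma~\ref{lem:losslesstoun} in direction $h^*$ supplies a vertex $v \in \pi_{h^*}(e)$ with at least $(1 - 2\epsilon)\Delta_L^{(h^*)}$ unique neighbors in $G^{(h^*)}$. These unique neighbors correspond to $(i+1)$-cochain positions of $\delta^{(h^*)}(e)$ that are sourced exclusively by the fiber of $e$ passing through $v$, which localizes the error. A counting argument modelled on \cite{leverrier_quantum_2015,fawzi_constant_2020}---generalized to higher $r$ by using the robustness of the (trivially robust, repetition-code-like) transverse $r-1$ factors in the spirit of \cite{dinur_expansion_2024,kalachev_maximally_2025} (cf.~Lemma~\ref{lem:robust})---then produces a flip $c^i$ inside a star containing $v$ that strictly reduces $|\delta(e)|$, yielding condition~\ref{it:flip}. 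For the syndrome-flip version, the same vertex-selection applies to $\delta(e) + f$: the unique-neighbor margin $(1 - 2\epsilon) \Delta_L^{(h^*)}$ is large enough to absorb the $\gamma|\delta(e)|$ worth of adversarial syndrome noise provided $\epsilon$ is chosen sufficiently small in terms of $\gamma$, $r$, and $\beta$.

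\textbf{Main obstacle.} The genuinely new difficulty beyond the 2D argument of \cite{leverrier_quantum_2015,fawzi_constant_2020} is the cross-direction bookkeeping: when $r \geq 3$, the syndrome contributions $\delta^{(h)}(e)$ for different $h$ can land on the same $(i+1)$-basis element, so cancellations between directions must be ruled out when counting unique neighbors. The way to handle this is to exploit the product structure of each star---where the remaining directions behave as robustly decodable repetition-code factors---to enforce a ``small-set soundness'' locally, so that a unique neighbor in direction $h^*$ cannot be spuriously cancelled by errors acting in other directions. This step is what forces the $r\Delta_{\max}^{r+1} + 1$ denominator in the definition of $m$: it accounts for the $O_r(\Delta_{\max}^{r+1})$-locality of the product complex, which must be paid once per direction when propagating the lossless-expansion estimate across all $r$ factors.
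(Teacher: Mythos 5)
Your error-flip sketch is pointed in the right direction but is missing the central construction. Choosing a \emph{single} direction $h^*$ via a triangle-inequality averaging and applying Lemma~\ref{lem:losslesstoun} there gives only one coordinate $v_{h^*}$ of the center $c^0=(v_1,\dots,v_r)\in A^0$. To produce a flip supported on a star you must determine all $r$ coordinates, and the unique-neighbor property must be arranged in \emph{each} direction, not just $h^*$. The paper's proof of Lemma~\ref{lem:sserrorflip} does this with an $r$-step induction: for $t=1,\dots,r$ it projects the already-restricted support $\supp(e)\cap B_t$ onto direction $t$ and applies Lemma~\ref{lem:losslesstoun} there, carving out a box $B=\prod_h(\{v_h\}\cup U_h)$ with the key structural property that any $a\in\supp(e)$ lying below a basis element of $B^{i+1}$ must itself belong to $B^i$. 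That property is what lets the restriction $\cB^*$ (a product of $r$ stars, i.e.\ repetition codes) be analyzed cleanly via $\kappa$-robustness. Your "fiber through $v$" description localizes only in one coordinate; the cross-direction cancellation problem you yourself flag as the main obstacle is not resolved by a one-direction projection plus a gesture at robustness.

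Your syndrome-flip argument is structurally different from the paper's, and as written it does not go through. You claim the unique-neighbor margin $(1-2\epsilon)\Delta_L^{(h^*)}$ can "absorb" the syndrome noise $f$ if $\epsilon$ is small. But $|f|\leq\gamma|\delta(e)|$, which is unbounded in terms of $\Delta_{\max}$, while any flip $c^i$ supported on a star has $|\delta(c^i)|=O_r(\Delta_{\max}^2)$. An adversary can concentrate $f$ entirely on $\supp(\delta(c^i))$ and reverse the reduction, and choosing $\epsilon$ small does not prevent this: $\epsilon$ only controls how many unique neighbors $v$ has, not where $f$ lands. The paper's Lemma~\ref{lem:sssyndromeflip} instead runs the \emph{entire} error-flip decoder (with the quantitatively stronger $(m,\nu)$ guarantee), decomposes $\delta(e)=\sum_j\delta(c^i_j)$ with $\sum_j|\delta(c^i_j)|<|\delta(e)|/(1-\nu)$, and pigeonholes over the uniquely-covered sets $U_j$ to find one flip $c^i_{\tilde j}$ whose unique coverage mostly avoids $\supp(f)$. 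This averaging over many flips is essential and is entirely absent from your sketch.

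Two smaller points. First, your explanation of the $r\Delta_{\max}^{r+1}+1$ denominator in $m$ as "per-direction locality" is off: the error-flip lemma in the paper uses $m=\min_h\mu|V_L^{(h)}|$ with no denominator; the denominator enters only in the syndrome-flip proof, because running the error-flip decoder to completion can inflate the working error weight by a factor of $r\Delta_{\max}^{r+1}+1$, and one must stay below $\mu|V_L^{(h)}|$ throughout. Second, the paper does not "check condition~\ref{it:notmin} first"; it case-splits on whether the restricted syndrome $|\delta^{\cB}(e_B)|$ is large or small relative to $|e_B|$, proving condition~\ref{it:flip} in the former case (zero out $e_B$) and condition~\ref{it:notmin} in the latter (robustness of $\cB^*$ forces $e_B$ near a coboundary). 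Your framing is logically permissible but obscures which case you are in and what controls it.
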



To prove Proposition~\ref{prop:ssflip}, we will use the notion of \emph{robustness} studied in \cite{kalachev_two-sided_2023,dinur_expansion_2024,kalachev_maximally_2025}. While this notion of robustness applies to arbitrary collections of classical codes, we will only need to consider it for collections of repetition codes (of possibly different lengths), as described below.

\begin{definition}
  \label{def:robust}
  For $r\in\bN$ and $n_1,\dots,n_r\in\bN$, consider the 1-dimensional cochain complexes
  \begin{equation*}
    \cR^{(h)} = \left(\bF_2\xrightarrow{\delta^{(h)}}\bF_2^{n_h}\right)
  \end{equation*}
  in which each coboundary map $\delta^{(h)}$ is simply given by the $n_h\times 1$ matrix of all $1$s. We say that the collection of $r$ repetion codes $(\im(\delta^{(1)}),\dots,\im(\delta^{(r)}))$ (of lengths $n_1,\dots,n_r$ respectively) is \emph{$\eta$-robust} if the $r$-dimensional product cochain complex
  \begin{equation*}
    \cQ^* = \cR^{(1)}\otimes\cdots\otimes\cR^{(h)}
  \end{equation*}
  satisfies the following: for every $0\leq i\leq r-1$ and every $c\in\cQ^i$, there exists $b\in B^i(\cQ)$ such that
  \begin{equation*}
    \eta\cdot|b+c| \leq |\delta(c)|.
  \end{equation*}
\end{definition}

Note that if all $r$ repetition codes have the same length $n=n_1=\cdots=n_r$, then our robustness parameter $\eta$ equals $n$ times the robustness parameter defined in prior works \cite{kalachev_two-sided_2023,dinur_expansion_2024,kalachev_maximally_2025}. This slightly different convention is convenient for us because we will consider codes of different lengths $n_1\neq\cdots\neq n_r$.

\begin{lemma}[Follows from \cite{dinur_expansion_2024,kalachev_maximally_2025}]
  \label{lem:robust}
  For every $r\in\bN$ and $\beta>0$, there exists $\kappa=\kappa(r,\beta)>0$ such that for every $n\in\bN$, every $r$-tuple of repetition codes whose lengths all lie in $[\beta n,n]$ is $\kappa n$-robust.
\end{lemma}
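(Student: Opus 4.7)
The plan is to derive this by induction on $r$, using the robust tensor product framework of \cite{dinur_expansion_2024, kalachev_maximally_2025}.

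The base case $r=1$ is immediate: $B^0(\cR^{(1)})=0$ and $|\delta^{(1)}(c)|=n_1\cdot|c|\geq\beta n\cdot|c|$ for every $c\in\bF_2$, so $\eta=\beta n$ suffices.

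For the inductive step, I would write $\cQ=\cA\otimes\cR^{(r)}$ with $\cA=\cR^{(1)}\otimes\cdots\otimes\cR^{(r-1)}$, which is $\kappa_{r-1}n$-robust by the inductive hypothesis. Since $\cR^{(r),0}=\bF_2$ and $\cR^{(r),1}=\bF_2^{n_r}$, any cochain $c\in\cQ^i$ decomposes canonically as $(c^0,c^1)$ with $c^0\in\cA^i$ and $c^1=(c^1_t)_{t\in[n_r]}\in(\cA^{i-1})^{n_r}$, and one computes
\begin{equation*}
\delta^{\cQ}(c)\;=\;\bigl(\delta^{\cA}(c^0),\;(c^0+\delta^{\cA}(c^1_t))_{t\in[n_r]}\bigr).
\end{equation*}
A coboundary $b=\delta^{\cQ}(a^0,a^1)$ shifts $c^0\mapsto c^0+\delta^{\cA}(a^0)$ and each $c^1_t\mapsto c^1_t+a^0+\delta^{\cA}(a^1_t)$ in a coupled way. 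My plan is to pick $a^0$ first and then optimize each $a^1_t$ independently. For $1\leq i\leq r-2$, inductive robustness on $\cA$ at level $i$ applied to $c^0$, followed by inductive robustness at level $i-1$ applied slice-by-slice to $c^1_t+a^0$, gives matching $a^0$ and $a^1_t$; since $\delta^{\cA}(c^1_t+a^0)=s^1_t+(c^0+\delta^{\cA}(a^0))$, summing over $t$ and using $n_r\leq n$ yields a recursion of the form $\kappa_r=\Omega(\kappa_{r-1}^2)$. The case $i=0$ is trivial from the level-$0$ robustness of $\cA$ combined with $n_r\geq\beta n$.

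The main obstacle is the top-level case $i=r-1$, where no inductive robustness is available for $c^0\in\cA^{r-1}$. I would resolve this via a cohomological argument: $c^0$ is automatically a cocycle of $\cA$ (since $\cA^r=0$), and every syndrome slice $s^1_t=c^0+\delta^{\cA}(c^1_t)$ lies in the same class $[c^0]\in H^{r-1}(\cA)$. If $[c^0]=0$, choose $a^0$ with $\delta^{\cA}(a^0)=c^0$, making the $0$-part vanish, and then handle each slice by inductive robustness at level $r-2$ of $\cA$. If $[c^0]\neq 0$, let $\alpha$ denote the minimum weight of a representative of $[c^0]$; then $|c^0+\delta^{\cA}(a^0)|\geq\alpha$ for every $a^0$, but also $|s^1_t|\geq\alpha$ for every $t$, so $|s^1|\geq n_r\alpha\geq\beta n\alpha$. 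Choosing the $\alpha$-attaining $a^0$ and again handling the slices by inductive robustness gives $\kappa_r=\Omega(\beta\kappa_{r-1})$. Combining both cases, the recursion preserves strict positivity and yields the claimed $\kappa(r,\beta)>0$.
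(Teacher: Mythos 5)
Your proof is correct, and it follows a genuinely different route from the paper's. The paper derives Lemma~\ref{lem:robust} as a corollary of two external results: \cite{kalachev_maximally_2025} reduces product-expansion to local testability of the factor codes, and \cite{dinur_expansion_2024} reduces robustness to product-expansion; the paper then notes that repetition codes are locally testable (via expander vertex-edge incidence matrices and the expander mixing lemma). Your argument instead gives a direct, self-contained induction on $r$ that never invokes product-expansion or local testability.

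The key nontrivial step in your approach is the cohomological handling of the top level $i=r-1$, where no inductive robustness of $\cA$ is available for the $\cA^{r-1}$-component $c^0$. Your observation that $c^0$ is automatically a cocycle (since $\cA^r=0$), together with the fact that each slice syndrome $s^1_t=c^0+\delta^{\cA}(c^1_t)$ lies in the class $[c^0]\in H^{r-1}(\cA)$, lets you replace the missing robustness bound with the systolic bound $|s^1_t|\geq\alpha=\min_{a^0}|c^0+\delta^{\cA}(a^0)|$ and the multiplicity factor $n_r\geq\beta n$. This is where the tensor-with-a-repetition-code structure ($\cR^{(r),0}=\bF_2$, $\delta^{(r)}=$ all-ones) is used essentially: it is what makes $c^0$ appear in \emph{every} slice and so forces $|\delta(c)|\geq n_r\alpha$. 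The middle levels $1\leq i\leq r-2$ then give the expected $\kappa_r=\Omega(\kappa_{r-1}^2)$ recursion from the coupled choice of $a^0$ and then $a^1_t$, and $i=0$ follows directly from $n_r\geq\beta n$ (your mention of level-$0$ robustness of $\cA$ there is harmless but unnecessary, since $B^0(\cQ)=0$ and the $\cA^i\otimes\cR^1$-slots already carry $n_r$ copies of $c^0$). Each approach has its merits: the paper's is shorter and shows that the statement holds more generally for any locally testable factor codes, while yours is elementary, avoids any external black boxes, and exposes exactly what structural feature of repetition codes drives the top-level case.

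One minor point worth flagging if you write this up carefully: when you ``pick $a^0$ first,'' you should note explicitly that $a^0\in\cA^{i-1}$ (so that $\delta^{\cQ}(a^0,a^1)\in B^i(\cQ)$ is a legitimate coboundary), and that in the $i=r-1$ case the $\alpha$-attaining $a^0$ is well-defined by finiteness; both are true, and the indices of $a^1_t\in\cA^{i-2}$ then line up correctly, including the degenerate case $i=1$ where $\cA^{-1}=0$ forces $a^1_t=0$.
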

\begin{proof}[Proof sketch]
  We briefly explain how the lemma follows from results in \cite{dinur_expansion_2024,kalachev_maximally_2025}. \cite{dinur_expansion_2024} show that a collection of classical codes is robust if it satisfies a notion called \emph{product-expansion}. \cite{kalachev_maximally_2025} in turn showed that a collection of codes is product-expanding if all of the codes satisfy a more standard property called \emph{local testability}. But it is well known that classical repetition codes are locally testable, for instance with parity-check matrix given by the vertex-edge incidence matrix of a constant-degree spectral-expander graph $G=(V,E)$. Specifically, we place code bits on vertices in $V$, and parity-checks on edges in $E$ enforce that the bits assigned to the incident vertices are equal. Then for a repetition codeword with an error supported on vertices in a set $S\subseteq V$, the probability that a random parity-check fails equals the fraction of edges with one vertex in $S$ and one vertex outside of $S$, which by the expander mixing lemma (see e.g.~\cite{vadhan_pseudorandomness_2012}) is proportional to $\min\{|S|,|V|-|S|\}/|V|$. Therefore a random check fails with probability proportional to the error weight, so the repetition code is locally testable, and thus \cite{kalachev_maximally_2025} implies the desired product-expansion result.
\end{proof}

\begin{remark}
  While some of the results in \cite{dinur_expansion_2024,kalachev_maximally_2025} are stated for collections of codes that all have the same length, these results and their proofs extend naturally to our setting where all code lengths lie in an interval $[\beta n,n]$ for some constant $\beta>0$, at the cost of just worse constants in the expressions bounding product-expansion and robustness.
  
  For conciseness above we deduced that repetition codes are product-expanding by applying their local testability along with the results of \cite{kalachev_maximally_2025}. Alternatively, it can be shown more directly that collections of repetition codes are product-expanding.
\end{remark}

In Lemma~\ref{lem:sserrorflip} below, we construct the error-flip decoder for Proposition~\ref{prop:ssflip}. We will subsequently apply this error-flip decoder to construct a syndrome-flip decoder.

\begin{lemma}
  \label{lem:sserrorflip}
  For every $r\in\bN$, $0<\beta\leq 1$, and $0<\nu<1$, there exists $\epsilon=\epsilon(r,\beta,\nu)>0$ such that the following holds. Define $\mu,\Delta_{\max},G^{(h)},\Delta_L^{(h)},{\cC^{(h)}}^*,\cA^*$ as in Proposition~\ref{prop:ssflip}. Then $\cA^*$ has an $(m,\nu)$-small-set error-flip decoder at every level $0\leq i\leq r-1$ for $m=\min_{h\in[r]}\mu|V_L^{(h)}|$.
\end{lemma}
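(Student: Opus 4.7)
My plan is to generalize the 2D small-set flip argument of \cite{leverrier_quantum_2015,fawzi_constant_2020} to arbitrary dimension $r$, combining (a) robustness of tensor products of repetition codes (Lemma~\ref{lem:robust}) applied to the ``link'' of each 0-cell, with (b) lossless expansion of each factor graph $G^{(h)}$ (via Lemma~\ref{lem:losslesstoun}). The key structural observation I will exploit is that for each $c^0 = (v_1,\dots,v_r) \in C^0 = \prod_h V_L^{(h)}$, the basis elements of $\cA^*$ lying above $c^0$ span a sub-cochain complex $\cQ^{c^0,*} \subseteq \cA^*$ canonically isomorphic to the $r$-fold tensor product of repetition codes of lengths $\deg(v_h) \in [\beta\Delta_{\max},\Delta_{\max}]$. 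Since $\delta^{\cA}$ restricted to chains supported in the link $L(c^0) := \{c : c^0 \preceq c\}$ agrees with the local coboundary $\delta^{c^0}$ and preserves $L(c^0)$, Lemma~\ref{lem:robust} supplies $\kappa = \kappa(r,\beta) > 0$ such that every $\cQ^{c^0,*}$ is $\kappa\Delta_{\max}$-robust.

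Given nonzero $e \in \cC^i$ with $|e| \leq m$, I will write $e_{c^0} := e|_{L(c^0) \cap C^i}$ and $S^0 := \{c^0 : e_{c^0} \neq 0\}$, and split into two cases. \textbf{Case A}: some $c^0 \in S^0$ satisfies $|\delta^{c^0}(e_{c^0})| < \kappa\Delta_{\max}|e_{c^0}|$. Robustness then delivers $a \in \cQ^{c^0,i-1}$ with $|e_{c^0} + \delta^{c^0}(a)| < |e_{c^0}|$. Viewing $a$ in $\cA^{i-1}$ as $c^{i-1}$, we have $c^0 \preceq c^{i-1}$ and $\delta(c^{i-1}) \subseteq L(c^0)$, giving
\begin{equation*}
  |e + \delta(c^{i-1})| = |e| - |e_{c^0}| + |e_{c^0} + \delta^{c^0}(a)| < |e|,
\end{equation*}
which verifies condition~\ref{it:notmin}.

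\textbf{Case B}: for every $c^0 \in S^0$, $|\delta^{c^0}(e_{c^0})| \geq \kappa\Delta_{\max}|e_{c^0}|$. Here I will exploit lossless expansion. Iterating Lemma~\ref{lem:losslesstoun} in each direction $h \in [r]$, applied to the $V^{(h)}$-projection of $\supp(e)$ (of size $\leq |e| \leq \mu|V_L^{(h)}|$ by the choice of $m$), I plan to locate a basis element $c_*^i \in \supp(e)$ together with a 0-cell $c^0 \preceq c_*^i$ enjoying the following isolation property: for strictly more than a $(1-\nu/2)$-fraction of $c^{i+1} \in \supp(\delta^{c^0}(e_{c^0}))$, no $c^i \in \supp(e)$ with $c^0 \not\preceq c^i$ lies below $c^{i+1}$. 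Applying robustness to $e_{c^0}$ will produce $a$, and I set $c^i := e_{c^0} + \delta^{c^0}(a)$. Then $\delta(c^i) = \delta^{c^0}(e_{c^0})$ is supported in $L(c^0) \cap C^{i+1}$, and for each isolated $c^{i+1}$ in its support the ``external'' contribution $\sum_{c^i \prec c^{i+1},\,c^0 \not\preceq c^i} e(c^i)$ to $\delta(e)(c^{i+1})$ vanishes, so $\delta(e)(c^{i+1}) = \delta^{c^0}(e_{c^0})(c^{i+1}) = 1$. This gives $|\delta(e) \cap \delta(c^i)| > (1-\nu/2)|\delta(c^i)|$, which rearranges to $|\delta(e+c^i)| < |\delta(e)| - (1-\nu)|\delta(c^i)|$ — exactly the $\nu$-strengthened condition~\ref{it:flip}.

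The principal obstacle will be establishing the isolation property in Case B via iterated lossless expansion across all $r$ coordinate directions simultaneously. A single invocation of Lemma~\ref{lem:losslesstoun} retains a $(1-2\epsilon)$-fraction of unique neighbors in one direction; after $r$ iterations, the surviving fraction is at least $(1-2\epsilon)^r \geq 1 - 2r\epsilon$, and I need this to exceed $1 - \nu/2$ strictly. I plan to choose $\epsilon$ small enough depending on $r,\beta,\nu$ (roughly $\epsilon < \nu/(4r)$, with additional smallness required to accommodate $\kappa$ and the link up-degrees) to achieve the required dependence $\epsilon = \epsilon(r,\beta,\nu)$. The weight bound $|e| \leq m = \min_h \mu|V_L^{(h)}|$ is precisely what allows Lemma~\ref{lem:losslesstoun} to apply to the projected supports at each iteration.
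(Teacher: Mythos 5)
Your overall architecture — iterated lossless expansion plus robustness of the product of repetition codes in the link — is the same as the paper's. Your Case A is correct, and in fact slightly cleaner than the paper's small-syndrome case: since the coboundary $\delta^{\cA}(a)$ of any $a$ supported in the full link $L(c^0)$ stays inside $L(c^0)$, there is no leak to control and the equality $|e+\delta(c^{i-1})|=|e|-|e_{c^0}|+|e_{c^0}+\delta^{c^0}(a)|$ is exact. (For $i=0$ Case A is vacuous since $B^0(\cQ^{c^0})=0$ and robustness forces $|\delta^{c^0}(\1_{c^0})|\ge\kappa\Delta_{\max}$, so you never trigger it.)

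The gap is in Case B, and it is genuine. The isolation property you state — that for most $c^{i+1}\in\supp(\delta^{c^0}(e_{c^0}))$ no basis element of $\supp(e)$ outside $L(c^0)$ lies below $c^{i+1}$ — is strictly stronger than what iterated lossless expansion delivers. Lemma~\ref{lem:losslesstoun} applied inductively (the paper's version, where at step $t$ one projects $\supp(e)\cap B_t$, not the full $\supp(e)$; projecting $\supp(e)$ independently in each direction does not guarantee $e_{c^0}\ne 0$, so you must use the inductive restriction) only controls contributions landing in the restricted sub-cube $B=\prod_h(\{v_h\}\cup U_h)$, where $U_h$ is the set of \emph{unique} neighbors of $v_h$. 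It says nothing about cells in $L(c^0)\setminus B$, where some coordinate is a non-unique neighbor in $N(v_h)\setminus U_h$. But $\supp(\delta^{c^0}(e_{c^0}))$ can live largely outside $B^{i+1}$: the portion $e_{c^0}|_{L(c^0)\setminus B}$ of the error, which your argument never isolates, has coboundary support entirely in $L(c^0)\setminus B$ of size up to $r\Delta_{\max}\cdot(|e_{c^0}|-|e_B|)$, and those cells are exactly the ones where external contributions may be present. Since $|e_{c^0}|-|e_B|$ can be comparable to $|e_{c^0}|$, this is not absorbed by choosing $\epsilon$ small, and your Case B hypothesis $|\delta^{c^0}(e_{c^0})|\ge\kappa\Delta_{\max}|e_{c^0}|$ does not imply that the restricted syndrome $|\delta^{\cB}(e_B)|$ is large, which is what is actually needed.

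The fix is to invert the order of operations, as the paper does: first run the inductive lossless-expansion argument to produce $c^0=(v_1,\dots,v_r)$ and the restricted link $B$ with $e_B\ne 0$ (this gives perfect isolation relative to $B$), \emph{then} split into cases on $|\delta^{\cB}(e_B)|$ vs.\ $|e_B|$. In the large case, flip $a^i=e_B$ (not $e_{c^0}$); the cancellation is exact on $B^{i+1}$ because of isolation, and the only leak is $\delta^{\cA}(e_B)|_{A^{i+1}\setminus B^{i+1}}$, bounded by $2r\epsilon\Delta_{\max}|e_B|$, which the case assumption makes a $<\nu/2$ fraction of $|\delta^{\cA}(a^i)|$. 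In the small case, apply robustness to $e_B$ inside $\cB^*$ (not the full link), and bound the leak $\delta^{\cA}(a^{i-1})|_{A^i\setminus B^i}$ the same way. Your Case A can be kept as an additional shortcut, but it cannot substitute for the paper's case split because its negation does not bound $|\delta^{\cB}(e_B)|$ from below.
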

\begin{proof}
  Define $\kappa=\kappa(r,\beta/2)>0$ to be the value from Lemma~\ref{lem:robust}. Set
  \begin{equation}
    \label{eq:sseps}
    \epsilon = \min\left\{\frac{\beta\nu}{8},\frac{\kappa\nu}{16r}\right\}.
  \end{equation}
  
  Fix some $0\leq i\leq r-1$ and some nonzero $e\in\cA^i$ of weight $|e|\leq m$. Our goal is to show that there exists some basis element $a^0\in A^0$ satisfying one of the two conditions in Definition~\ref{def:ssflip} (with $\cC^*=\cA^*$).

  For this purpose, we first construct a sequence of vertices $(v_h\in V_L^{(h)})_{h\in[r]}$ and subsets $(U_h\subseteq N_{G^{(h)}}(v_h))_{h\in[r]}$ inductively as follows, where each $|U_h|\geq(1-\epsilon)\Delta_L^{(h)}$. We will then appeal to the fact that the restriction of $\cA^*$ to basis elements corresponding to vertices in $\{v_h\}\cup U_h$ for $h\in[r]$ has the structure of the product of repetition codes, as considered in Definition~\ref{def:robust} and Lemma~\ref{lem:robust}.

  For $t\in[r]$, assume we have already determined $v_1,\dots,v_{t-1}$ and $U_1,\dots,U_{t-1}$. Recall that $A=\bigsqcup_{j\in[r]}A^r=V^{(1)}\times\cdots\times V^{(r)}$, and for $h\in[r]$ let $\Pi^{(h)}:A\rightarrow V^{(h)}$ denote projection onto the $h$th component. Define $B_t\subseteq A$ by
  \begin{equation*}
    B_t = \{a\in A:a_h\in(\{v_h\}\cup U_h) \;\forall\; h\in[t-1]\},
  \end{equation*}
  and define $F_t\subseteq V^{(t)}$ by 
  \begin{equation*}
    F_t = \Pi^{(t)}(\supp(e)\cap B_t).
  \end{equation*}
  That is, $F_t$ contains vertices in $G^{(t)}$ that equal the $t$th component of some $a\in\supp(e)$ whose first $t-1$ components lie in the respective sets $\{v_h\}\cup U_h$ for $h\in[t-1]$.

  We will enforce the inductive hypothesis that $F_t\neq\emptyset$, and that
  for every $b=(b_1,\dots,b_r)\in B_t$ and $h\in[t-1]$, every $v\in V_L^{(h)}$ with $v\triangleleft b_h$ and $(b_1,\dots,b_{h-1},v,b_{h+1},\dots,b_r)\in\supp(e)$ must equal $v=v_h$.
  
  For the base case, when $t=1$ by definition $F_t=\Pi^{(1)}(\supp(e))$ is nonempty by the assumption that $e\neq 0$. Now for $t\geq 1$, assume the inductive hypothesis holds. If $F_t\cap V_L^{(t)}=\emptyset$, then we must have $F_t\cap V_R^{(t)}\neq\emptyset$, so we set $v_t$ to be any vertex in $V_L^{(t)}$ that is incident to any vertex in $F_t$, and we set $U_t=N_{G^{(t)}}(v_t)$. Otherwise, if $F_t\cap V_L^{(t)}\neq\emptyset$, then because by assumption $|F_t|\leq|e|\leq m\leq\mu|V_L^{(t)}|$, we may apply Lemma~\ref{lem:losslesstoun} with $S=F_t\cap V_L^{(t)}$ to choose some $v_t\in S$ such that the set
  \begin{equation*}
    U_t := N_{G^{(t)}}(v_t)\setminus N_{G^{(t)}}(S\setminus\{v_t\})
  \end{equation*}
  satisfies
  \begin{equation}
    \label{eq:Utbound}
    |U_t| \geq (1-2\epsilon)\Delta_L^{(t)}.
  \end{equation}

  In both cases above, we chose $v_t,U_t$ so that $(\{v_t\}\cup U_t)\cap F_t\neq\emptyset$. Therefore by the definition of $F_t$, there exists some $b\in\supp(e)\cap B_t$ with $b_t\in(\{v_t\}\cup U_t)\cap F_t$ and therefore $B_{t+1}=\{a\in B_t:a_t\in(\{v_t\}\cup U_t)\}$ also contains $b$, and thus is nonempty. To complete the proof of the inductive hypothesis, we must show that
  for every $b\in B_{t+1}$, every $v\in V_L^{(t)}$ with $v\triangleleft b_t$ and $(b_1,\dots,b_{t-1},v,b_{t+1},\dots,b_r)\in\supp(e)$ must equal $v=v_t$. But if $v\neq v_t$, then because $v\in F_t\cap V_L^{(t)}=S$ by definition, we have that $v\in S\setminus\{v_t\}$, so $b_t\in N_{G^{(t)}}(v)\subseteq N_{G^{(t)}}(S\setminus\{v_t\})$, and therefore $b_t\notin U_t$, contradicting the assumption that $b\in B_{t+1}$.
  Therefore the inductive hypothesis must indeed hold, as desired.

  Having completed the definition of $v_1,\dots,v_r$ and $U_1,\dots,U_r$, we let $B=B_r\subseteq A$, and we define an $r$-dimensional cochain complex $\cB^*$ with $j$-cochain basis $B^j=B\cap A^j$, and with $j$-coboundary map $\delta_j^{\cB}(b)=\delta_j^{\cA}(b)|_{B^{j+1}}$ (where we view $b\in\bF_2^{B^j}\subseteq\bF_2^{A^j}$ via the inclusion $B^j\hookrightarrow A^j$). As a point of notation, for $a\in\cA^j$, we let $a_B=a|_{B^j}$. By definition $\cB^*$ is the tensor product of $r$ 1-dimensional complexes obtained from the restriction of the graphs $G^{(h)}$ to the subgraphs induced by vertices in $\{v_h\}\cup U_h$ for $h\in[r]$. As $U_h\subseteq N_{G^{(h)}}(v_h)$, these subgraphs are all star graphs, so $\cB^*$ is precisely the cochain complex associated to $r$ repetition codes of lengths $|U_1|,\dots,|U_r|$ as described in Definition~\ref{def:robust}.

  Also note that the final $(t=r)$ iteration of the inductive hypothesis above implies the following:
  \begin{enumerate}
  \item $\supp(e)\cap B\neq\emptyset$, that is, $e_B\neq 0$, and
  \item For every $a\in\supp(e)$ and $b\in B$ satisfying $a\triangleleft b$, then $a\in B$, that is,
    \begin{equation*}
      a=(b_1,\dots,b_{h-1},v_h,b_{h+1},\dots,b_r)
    \end{equation*}
    for some $h\in[r]$.
  \end{enumerate}

  Now define $a^0=(v_1,\dots,v_r)\in A^0$. Our goal is to show that $a^0$ satisfies one of the two conditions in Definition~\ref{def:ssflip} (with $\cC^*=\cA^*$). We consider two cases separately:
  \begin{enumerate}
  \item $|\delta^{\cB}(e_B)|>4r\epsilon \Delta_{\max}|e_B|/\nu$: In this case, we show that condition~\ref{it:flip} holds in Definition~\ref{def:robust}. Intuitively, because $\delta^{\cB}(e_B)$ is large enough, we will show that flipping all bits of $e$ in $\supp(e_B)$ reduces the weight of the image under $\delta^{\cA}$, as we will zero out all bits in $B^{i+1}$, while at most flipping a small number of $0$s to $1$s outside of $B^{i+1}$.

    Specifically, define $a^i\in\cA^i$ by $a^i=e_B\in\bF_2^{B^i}\subseteq\bF_2^{A^i}$. Note that if $i=0$, we have defined $a^0$ twice, but both definitions agree in setting $a^0$ to be the indicator of the basis element $(v_1,\dots,v_r)\in A^0$ (as by construction $e_B\neq 0$ and $B^0=\{(v_1,\dots,v_r)\}$).

    We now must show that
    \begin{equation*}
      |\delta^{\cA}(e+a^i)| < |\delta^{\cA}(e)|-(1-\nu)|\delta^{\cA}(a^i)|.
    \end{equation*}
    For this purpose, we have that
    \begin{equation}
      \label{eq:resB}
      \delta^{\cA}(e)|_{B^{i+1}} = \delta^{\cA}(a^i)|_{B^{i+1}} = \delta^{\cB}(a^i) = \delta^{\cB}(e_B).
    \end{equation}
    Specifically, the latter two equalities above hold by definition, so we only need to show that $\delta^{\cA}(e)|_{B^{i+1}}=\delta^{\cB}(e_B)$. If instead some $b\in B^{i+1}$ had $\delta^{\cA}(e)_b\neq\delta^{\cB}(e_B)_b$, then there must be some $a\in\supp(e)\setminus B^i$ with $a\triangleleft b$. But as shown above from the inductive definition of $B$, the conditions $a\in\supp(e)$, $b\in B^{i+1}$, and $a\triangleleft b$ imply that $a\in B^i$, so $a\notin\supp(e)\setminus B^i$. Therefore~(\ref{eq:resB}) holds.
    Furthermore, we also must have that
    \begin{equation}
      \label{eq:cobnotB}
      \left|\delta^{\cA}(a^i)|_{A^{i+1}\setminus B^{i+1}}\right| = \left|\delta^{\cA}(e_B)|_{A^{i+1}\setminus B^{i+1}}\right| \leq 2r\epsilon \Delta_{\max}|e_B| < \frac{\nu}{2}|\delta^{\cB}(a^i)| \leq \frac{\nu}{2}|\delta^{\cA}(a^i)|,
    \end{equation}
    as for every $a\in\supp(e_B)=\supp(e)\cap B$, then every $b\in A^{i+1}\setminus B^{i+1}$ satisfying $a\triangleleft b$ must be of the form $b=(a_1,\dots,a_{h-1},u,a_{h+1},\dots,a_r)$ for some $h\in[r]$ with $a_h=v_h$ and some $u\in N_{G^{(h)}}(v_h)\setminus U_h$; for each of the $\leq r$ choices of $h$, there are $\leq 2\epsilon \Delta_L^{(h)}\leq 2\epsilon \Delta_{\max}$ such $u$ by~(\ref{eq:Utbound}). Thus there are at most $|e_B|\cdot r\cdot 2\epsilon \Delta_{\max}$ elements $b\in A^{i+1}\setminus B^{i+1}$ that can satisfy $a\triangleleft b$ for some $a\in\supp(e_B)$, which is a necessary condition for $b\in\supp(\delta^{\cA}(e_B)|_{A^{i+1}\setminus B^{i+1}})$, and therefore the first inequality in~(\ref{eq:cobnotB}) holds. The second inequality in~(\ref{eq:cobnotB}) follows by the assumption that $|\delta^{\cB}(e_B)|>4r\epsilon \Delta_{\max}|e_B|/\nu$, and the third inequality in~(\ref{eq:cobnotB}) follows from~(\ref{eq:resB}). Combining~(\ref{eq:resB}) and~(\ref{eq:cobnotB}), we obtain the desired bound
    \begin{align*}
      |\delta^{\cA}(e+a^i)|
      &= \left|\delta^{\cA}(e+a^i)|_{B^{i+1}}\right| + \left|\delta^{\cA}(e+a^i)|_{A^{i+1}\setminus B^{i+1}}\right| \\
      &\leq \left|\delta^{\cA}(e)|_{A^{i+1}\setminus B^{i+1}}\right| + \left|\delta^{\cA}(a^i)|_{A^{i+1}\setminus B^{i+1}}\right| \\
      &= |\delta^{\cA}(e)| - \left|\delta^{\cA}(e)|_{B^{i+1}}\right| + \left|\delta^{\cA}(a^i)|_{A^{i+1}\setminus B^{i+1}}\right| \\
      &= |\delta^{\cA}(e)| - \left|\delta^{\cA}(a^i)|_{B^{i+1}}\right| + \left|\delta^{\cA}(a^i)|_{A^{i+1}\setminus B^{i+1}}\right| \\
      &= |\delta^{\cA}(e)| - \left|\delta^{\cA}(a^i)\right| + 2\left|\delta^{\cA}(a^i)|_{A^{i+1}\setminus B^{i+1}}\right| \\
      &< |\delta^{\cA}(e)| - (1-\nu)|\delta^{\cA}(a^i)|.
    \end{align*}
    Specifically, the first inequality above holds by~(\ref{eq:resB}), the first inequality holds by the triangle inequality, the second equality holds by definition, the third equality holds by~(\ref{eq:resB}), the fourth equality holds by definition, and the final inequality holds by~(\ref{eq:cobnotB}).
  \item $|\delta^{\cB}(e_B)|\leq 4r\epsilon \Delta_{\max}|e_B|/\nu$: In this case, we show that condition~\ref{it:notmin} holds in Definition~\ref{def:robust}. Intuitively, because $|\delta^{\cB}(e_B)|$ is small enough, we apply robustness of $\cB^*$ (Lemma~\ref{lem:robust}) to show that $e_B$ is close to a coboundary $\delta^{\cB}(a^{i-1})$. We then argue that we can reduce the weight of $e$ by adding in the coboundary $\delta^{\cA}(a^{i-1})$.

    First, if $i=0$, then as $B^0=\{(v_1,\dots,v_r)\}$, we have $e_B=\1_{(v_1,\dots,v_r)}$ and
    \begin{equation*}
      |\delta^{\cB}(e_B)| = \sum_{h=1}^r|N_{G^{(h)}}(v_h)| \geq r\beta \Delta_{\max}.
    \end{equation*}
    The above inequality contradicts the assumption that $|\delta^{\cB}(e_B)|\leq 4r\epsilon \Delta_{\max}|e_B|/\nu=4r\epsilon \Delta_{\max}/\nu$ because $\epsilon<\beta\nu/4$ by~(\ref{eq:sseps}). Therefore we may assume that $i\geq 1$.

    Recall that $\cB$ is by definition the product of $r$ 1-dimensional cochain complexes associated to repetition codes of lengths at most $\Delta_{\max}$ and at least $|U_h|\geq(1-2\epsilon)\Delta_L^{(h)}\geq(1-2\epsilon)\beta \Delta_{\max}\geq(\beta/2)\Delta_{\max}$ (as $\epsilon\leq 1/4$ by~(\ref{eq:sseps})). Therefore Lemma~\ref{lem:robust} implies that this tuple of repetition codes is $\kappa \Delta_{\max}$-robust in the sense of Definition~\ref{def:robust} for $\kappa=\kappa(r,\beta/2)$ as defined above.

    Applying the definition of robustness to $e_B$, there exists some $i$-coboundary $e_B'\in B^i(\cB)$ such that
    \begin{equation*}
      \kappa \Delta_{\max}|e_B+e_B'| \leq |\delta^{\cB}(e_B)| \leq 4r\epsilon \Delta_{\max}|e_B|/\nu,
    \end{equation*}
    that is,
    \begin{equation*}
      |e_B+e_B'| \leq \frac{4r\epsilon}{\kappa\nu}|e_B|.
    \end{equation*}
    Now applying robustness to any element of $\cB^{i-1}$ whose coboundary equals $e_B'$, we obtain some $(i-1)$-cochain $a^{i-1}\in\cB^{i-1}\subseteq\cA^{i-1}$ with $\delta^{\cB}(a^{i-1})=e_B'$ and
    \begin{equation*}
      |a^{i-1}| \leq \frac{1}{\kappa \Delta_{\max}}|e_B'| \leq \frac{1}{\kappa \Delta_{\max}}\left(\frac{4r\epsilon}{\kappa\nu}+1\right)|e_B|
    \end{equation*}
    By definition $|e_B'|\geq(1-4r\epsilon/\kappa\nu)|e_B|>0$ because $e_B\neq 0$ and $\epsilon\leq\kappa\nu/16r$, so $a^{i-1}\neq 0$. As a point of notation, if $i=1$ then it appears we have overloaded the variable $a^0=a^{i-1}$, but indeed in this case $a^{i-1}$ must equal a nonzero element of $\cB^0$, and the unique such element is $a^0=\1_{(v_1,\dots,v_r)}$.

    Our goal is to show that
    \begin{equation*}
      \left|e+\delta^{\cA}(a^{i-1})\right| < |e|.
    \end{equation*}
    For this purpose, we have that
    \begin{align*}
      \left|e+\delta^{\cA}(a^{i-1})|_{B^i}\right|
      &= |e_B+e_B'| \leq \frac{4r\epsilon}{\kappa\nu}|e_B|
    \end{align*}
    and
    \begin{align*}
      \left|(e+\delta^{\cA}(a^{i-1}))|_{A^i\setminus B^i}-e|_{A^i\setminus B^i}\right|
      &= \left|\delta^{\cA}(a^{i-1})_{A^i\setminus B^i}\right| \leq 2r\epsilon \Delta_{\max}|a^{i-1}|,
    \end{align*}
    where the final inequality above holds by the definition of $B$, by similar reasoning used to show the first inequality in~(\ref{eq:cobnotB}) above. 
    Combining the above inequalities then gives the desired bound
    \begin{align*}
      \left|e+\delta^{\cA}(a^{i-1})\right|
      &= \left|e+\delta^{\cA}(a^{i-1})|_{B^i}\right| + \left|e+\delta^{\cA}(a^{i-1})|_{A^i\setminus B^i}\right| \\
      &\leq \frac{4r\epsilon}{\kappa\nu}|e_B| + \left|e|_{A^i\setminus B^i}\right| + 2r\epsilon \Delta_{\max}|a^{i-1}| \\
      &= \frac{4r\epsilon}{\kappa\nu}|e_B| + |e| - |e_B| + 2r\epsilon \Delta_{\max}|a^{i-1}| \\
      &\leq |e| - \left(1-\frac{4r\epsilon}{\kappa\nu}-\frac{2r\epsilon}{\kappa}\left(\frac{4r\epsilon}{\kappa\nu}+1\right)\right)|e_B| \\
      &\leq |e| - \left(1 - \frac{4r\epsilon}{\kappa\nu}\left(\frac{4r\epsilon}{\kappa\nu}+2\right)\right)|e_B| \\
      &< |e|,
    \end{align*}
    where the final inequality above holds because $\epsilon\leq\kappa\nu/16r$ by definition from~(\ref{eq:sseps}).
  \end{enumerate}
\end{proof}

We now apply the error-flip decoder from Lemma~\ref{lem:sserrorflip} to construct a syndrome-flip decoder for Proposition~\ref{prop:ssflip}.

\begin{lemma}
  \label{lem:sssyndromeflip}
  For every $r\in\bN$, $0<\beta\leq 1$, let $\nu=1/10$ and define $\epsilon=\epsilon(r,\beta,\nu)$ as in Lemma~\ref{lem:sserrorflip}. Then define $\mu,\Delta_{\max},G^{(h)},\Delta^{(h)}_L,\cC^*$ as in Proposition~\ref{prop:ssflip}. Then $\cA^*$ has an $(m,\gamma)$-small-set syndrome-flip decoder at every level $0\leq i\leq r-1$ for $m=\min_{h\in[r]}\mu|V_L^{(h)}|/(r\Delta_{\max}^{r+1}+1)$ and $\gamma=\nu$.
\end{lemma}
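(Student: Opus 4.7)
The plan is to adapt the proof of Lemma~\ref{lem:sserrorflip} to produce a local-star update $c^i$ that reduces the noisy syndrome weight $|s| = |\delta(e) + f|$. First, I would force the ``high-syndrome'' sub-case of that proof to apply: iteratively apply its ``low-syndrome'' sub-case, each of which replaces $e$ by some $e' = e + \delta(c^{i-1})$ with strictly smaller weight but identical coboundary $\delta(e') = \delta(e)$. Since the syndrome is preserved and the assumption $|\delta(e)| > 0$ (implicit from $|f| \leq \gamma|\delta(e)|$ being non-vacuous with $\gamma<1$) prevents the iteration from reaching $e^* = 0$, eventually the high-syndrome sub-case applies to some $e^*$, producing a star $B = B(v_1,\ldots,v_r)$ together with $c^i = e^*_B$ satisfying $|\delta^{\cB}(e^*_B)| > 4r\epsilon\Delta_{\max}|e^*_B|/\nu$ and $|\delta(c^i)|_{A^{i+1}\setminus B^{i+1}}| < (\nu/2)|\delta^{\cB}(e^*_B)|$.

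Next, a triangle-inequality computation closely analogous to the one in the proof of Lemma~\ref{lem:sserrorflip} yields
\begin{equation*}
  |s + \delta(c^i)| - |s| \;\leq\; 2\bigl|f|_{B^{i+1}}\bigr| - (1-\nu/2)|\delta^{\cB}(e^*_B)|,
\end{equation*}
exploiting the fact that $\delta(c^i)$ cancels $\delta(e)$ exactly on $B^{i+1}$. Thus the syndrome-flip inequality $|s + \delta(c^i)| < |s|$ follows once the noise is sparse within the chosen star, specifically $\bigl|f|_{B^{i+1}}\bigr| < (1-\nu/2)|\delta^{\cB}(e^*_B)|/2$.

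To guarantee the noise-sparsity condition, I would strengthen the use of Lemma~\ref{lem:losslesstoun} in the inductive construction of $v_1,\ldots,v_r$. The strengthening asserts that in a lossless expander, for any $S \subseteq V_L$ of size at most $\mu|V_L|$ and any ``bad'' set $T \subseteq V_R$, some $v \in S$ simultaneously satisfies (a)~$|N(v) \setminus N(S\setminus\{v\})| \geq (1-O(\epsilon))\Delta_L$ (by a reverse-Markov argument on the expansion sum, which puts at least half of $S$ on the good side) and (b)~$|N(v) \cap T| \leq O(|T|\Delta_R/|S|)$ (by Markov on $\sum_v |N(v) \cap T| \leq |T|\Delta_R$, which likewise covers at least half of $S$); the two halves must intersect. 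Applying this at step $t$ of the induction with $T_t$ the projection of $\supp(f) \cap B_{t+1}^{i+1}$ onto $V^{(t)}$ then yields a star $B$ with $\bigl|f|_{B^{i+1}}\bigr|$ small enough that the computation above closes.

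Main obstacle: propagating the $f$-avoidance condition inductively across all $r$ steps, since the bad set $T_t$ at step $t$ depends on the previously fixed $v_1,\ldots,v_{t-1}$ through $B_{t+1}$. Keeping $|T_t|/|F_t \cap V_L^{(t)}|$ below the threshold permitted by the strengthened lemma at every step is the delicate part; the factor $r\Delta_{\max}^{r+1}+1$ appearing in the budget $m' = m/(r\Delta_{\max}^{r+1}+1)$ is chosen precisely to afford enough slack for the cumulative noise-avoidance across the $r$ inductive steps, given that a single coboundary at level $i$ can have weight at most $O(r\Delta_{\max}^{r+1})$.
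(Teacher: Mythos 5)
Your plan takes a genuinely different route from the paper's, and the step you flag as ``delicate'' is, I believe, a real gap rather than a loose thread.

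The paper's argument runs the $(m,\nu)$-error-flip decoder on $e$ to completion, collecting \emph{all} the line-\ref{li:iflip} updates $c^i_1,\dots,c^i_{t_i}$. The $\nu$-strengthened decrement $(1-\nu)|\delta(c^i_j)|$ from Lemma~\ref{lem:sserrorflip} forces $\sum_j|\delta(c^i_j)| < |\delta(e)|/(1-\nu)$, so the $\{\delta(c^i_j)\}_j$ cover $\delta(e)$ with only a small total overlap. Since $|f|\leq\gamma|\delta(e)|$, a counting and pigeonhole argument then isolates a single $\tilde{j}$ for which $\delta(c^i_{\tilde{j}})$ is simultaneously mostly unique to $j=\tilde{j}$, mostly inside $\delta(e)$, and mostly outside $\supp(f)$, and $\tilde{a}^i = c^i_{\tilde{j}}$ is the returned flip. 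Crucially, the noise budget $\gamma|\delta(e)|$ is charged against the \emph{whole} decomposition, so averaging wins even if $f$ is adversarially concentrated on a single star.

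Your proposal instead commits to one terminal star $B$ and needs $f$ to be sparse on it. The triangle-inequality bound $|s+\delta(c^i)|-|s| \leq 2\bigl|f|_{B^{i+1}}\bigr| - (1-\nu/2)|\delta^{\cB}(e^*_B)|$ is correct, so the question is whether the hypothesis $2\bigl|f|_{B^{i+1}}\bigr| < (1-\nu/2)|\delta^{\cB}(e^*_B)|$ can be forced. I don't think the Markov-strengthened Lemma~\ref{lem:losslesstoun} delivers this. The Markov step gives, at coordinate $t$, a threshold of order $|T_t|\Delta_R/|S_t|$ on $|N(v_t)\cap T_t|$, and this must be small \emph{in absolute terms}: the high-syndrome case only guarantees $|\delta^{\cB}(e^*_B)| > 4r\epsilon\Delta_{\max}|e^*_B|/\nu$, which for a small terminal $|e^*_B|$ is just $O_r(\Delta_{\max})$, while $\bigl|f|_{B^{i+1}}\bigr|$ can also be $\Theta(\Delta_{\max})$ when $\delta(e)$ is spread across many stars and $f$ lands in the one you pick. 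But the ratio $|T_t|/|S_t|$ is uncontrolled precisely in the regime where your iteration of the low-syndrome branch has driven $|e^*|$ (hence $|S_t|=|F_t\cap V_L^{(t)}|$) down to constant size, at which point the Markov bound is vacuous ($\gtrsim\Delta_R$). Nothing in the inductive choice of $v_1,\dots,v_r$ then prevents $f$ from sitting on the surviving star, and nothing bounds the cumulative noise-deposit across the $r$ steps given the coordinate-projection structure of $f|_{B^{i+1}}$. The paper's averaging over all the $c^i_j$'s is exactly what sidesteps this adversarial concentration.

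You also misread the role of $r\Delta_{\max}^{r+1}+1$ in $m$. In the paper it is the bound $t_{i-1}\leq (r\Delta_{\max}^{r+1}+1)|e|$ on the number of while-loop iterations of the error-flip decoder, which ensures the intermediate weight $|e+a^i+\delta(a^{i-1})|$ never exceeds $\min_h\mu|V_L^{(h)}|$ so that Lemma~\ref{lem:sserrorflip} can keep being applied throughout the run. It is not a slack reserved for noise-avoidance across the $r$-step star construction.
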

\begin{proof}
  Fix some $0\leq i\leq r-1$, some nonzero $e\in\cA^i$ of weight $|e|\leq m$, and some $f\in\cA^{i+1}$ of weight $|f|\leq\gamma|\delta(e)|$. Our goal is to show that there exists some basis element $\tilde{a}^0\in A^0$ and some cochain $\tilde{a}^i\in\cA^i$ with $\tilde{a}^0\preceq \tilde{a}^i$ and $|\delta(e+\tilde{a}^i)+f|<|\delta(e)+f|$. Note that in this proof, we let $\delta=\delta^{\cA}$ denote the coboundary map for $\cA^*$.

  For this purpose, we first consider running \FnSSFlipErr{$e;i,\cA^*$} from Algorithm~\ref{alg:ssflip}. At each execution of line~\ref{li:iflip}, there are potentially multiple valid choices of $c^0,c^i$; assume that we choose whichever pair yields the greatest value of $(|\delta(e)|-|\delta(e+c^i)|)/|\delta(c^i)|$. Then let $c^{i-1}_1,\dots,c^{i-1}_{t_{i-1}}\in\cA^{i-1}$ and $c^i_1,\dots,c^i_{t_i}\in\cA^i$ be the sequence of all values for $c^{i-1}$ and $c^i$ chosen in line~\ref{li:im1flip} and line~\ref{li:iflip} respectively, across all runs of the while loop.

  We begin with the following claim, which follows by the assumption that $|e|\leq m$.

  \begin{claim}
    When we run \FnSSFlipErr{$e;i,\cA^*$} as described above, the output is some valid $a^{i-1}\in\cA^{i-1},a^i\in\cA^i$ with $e=a^i+\delta(a^{i-1})$, and the while loop must execute for $t_{i-1}+t_i$ iterations with $t_{i-1}\leq(r\Delta_{\max}^{r+1}+1)|e|$ and $t_i\leq|\delta(e)|\leq r\Delta_{\max}|e|$. Furthermore, for every $j\in[t_i]$ we have
    \begin{equation}
      \label{eq:bigdec}
      \left|\delta\left(e+\sum_{\ell=1}^{j-1}c^i_\ell\right)\right|-\left|\delta\left(e+\sum_{\ell=1}^jc^i_\ell\right)\right| > (1-\nu)|\delta(c^i_j)|.
    \end{equation}
  \end{claim}
  \begin{proof}
    Recall that the algorithm initializes $a^{i-1}\gets 0$ and $a^i\gets 0$. Every time the if statement in line~\ref{li:iflip} is satisfied, we decrease $|\delta(e+a^i)|$ by at least $1$, so this if statement can be satisfied at most
    \begin{equation*}
      t_i\leq|\delta(e)|\leq r\Delta_{\max}|e|
    \end{equation*}
    times. Each such update from this if statement can increase $|e+a^i+\delta(a^{i-1})|$ by at most $|c^i|\leq\Delta_{\max}^r$. Meanwhile, every time the if statement in line~\ref{li:im1flip} is satisfied, we decrease $|e+a^i+\delta(a^{i-1})|$ by at least $1$. Therefore the if statement in line~\ref{li:im1flip} can be satisfied at most
    \begin{equation*}
      t_{i-1}\leq \Delta_{\max}^r\cdot t_i+|e|\leq (r\Delta_{\max}^{r+1}+1)|e|
    \end{equation*}
    times, and the RHS above is also an upper bound on the maximum value of $|e+a^i+\delta(a^{i-1})|$ at any point in the algorithm's execution. Therefore because $|e|\leq m=\min_{h\in[r]}\mu|V_L^{(h)}|/(r\Delta_{\max}^{r+1}+1)$, we always have $|e+a^i+\delta(a^{i-1})|\leq\min_{h\in[r]}\mu|V_L^{(h)}|$. Lemma~\ref{lem:sserrorflip} then implies that one of the two if statements is satisfied in every iteration of the while loop, and that when the if statement in line~\ref{li:iflip} is satisfied, we have $|\delta(e+a^i)|-|\delta(e+a^i+c^i)|<(1-\nu)|\delta(c^i)|$, and hence~(\ref{eq:bigdec}) holds. It also then follows that the algorithm successfully returns some final values $a^{i-1},a^i$ satisfying $e=a^i+\delta(a^{i-1})$ (and does not return FAIL).
  \end{proof}

  Therefore \FnSSFlipErr{$e;i,\cA^*$} returns $a^i=\sum_{j=1}^{t_i}c^i_j$ with
  \begin{equation}
    \label{eq:de}
    \delta(e) = \delta(a^i) = \sum_{j=1}^{t_i}\delta(c^i_j)
  \end{equation}
  and
  \begin{equation}
    \label{eq:desize}
    |\delta(e)| = \sum_{j=1}^{t_i}\left(\left|\delta\left(e+\sum_{\ell=1}^{j-1}c^i_\ell\right)\right|-\left|\delta\left(e+\sum_{\ell=1}^jc^i_\ell\right)\right|\right) > (1-\nu)\sum_{j=1}^{t_i}|\delta(c^i_j)|.
  \end{equation}
  For each $j\in[t_i]$, let $U_j=\supp(\delta(e))\cap\supp(\delta(c^i_j))\setminus\bigcup_{j'\neq j}\supp(\delta(c^i_{j'}))$ denote the components that lie in the support of $\delta(e)$ and $\delta(c^i_j)$ but not in any other $\delta(c^i_{j'})$. By~(\ref{eq:de}), we have
  \begin{equation*}
    \supp(\delta(e))\subseteq\bigcup_{j\in[t_i]}\supp(\delta(c^i_j)).
  \end{equation*}
  Therefore across all the sets $\supp(\delta(c^i_j))$ for $j\in[t_i]$, there are $|\delta(e)|>(1-\nu)\sum_{j\in[t_i]}|\delta(c^i_j)|$ distinct elements that lie in $\supp(e)$, plus less than $\nu\sum_{j\in[t_i]}|\delta(c^i_j)|$ additional elements that either do not lie in $\supp(e)$, or may equal some of these $|\delta(e)|$ distinct elements. Thus every element of $\supp(e)$ lies in some set $\supp(c^i_j)$, and $<\nu\sum_{j\in[t_i]}|\delta(c^i_j)|$ of these elements lie in more than one such set, so by~(\ref{eq:desize}),
  \begin{equation*}
    \sum_{j\in[t_i]}|U_j| > (1-2\nu)\sum_{j\in[t_i]}|\delta(c^i_j)|.
  \end{equation*}
  Thus as by assumption $|f|\leq\gamma|\delta(e)|\leq\gamma\sum_{j\in[t_i]}|\delta(c^i_j)|$ with $\gamma=\nu$, we have
  \begin{equation*}
    \sum_{j\in[t_i]}|U_j\setminus\supp(f)| \geq \sum_{j\in[t_i]}|U_j|-|f| > (1-3\nu)\sum_{j\in[t_i]}|\delta(c^i_j)|.
  \end{equation*}
  Therefore there must exist some $\tilde{j}\in[t_i]$ with
  \begin{equation}
    \label{eq:Ujmf}
    |U_{\tilde{j}}\setminus\supp(f)| > (1-3\nu)|\delta(c^i_{\tilde{j}})|.
  \end{equation}
  Letting $\tilde{a}^i=c^i_{\tilde{j}}$, then by definition there exists some basis element $\tilde{a}^0\in A^0$ with $\tilde{a}^0\preceq\tilde{a}^i$; specifically, $\tilde{a}^0$ equals the value of $c^0$ during the execution of line~\ref{li:iflip} in Algorithm~\ref{alg:ssflip} in which $c^i=c^i_{\tilde{j}}$. Furthermore,
  \begin{align*}
    |\delta(e+\tilde{a}^i)+f|
    &= \left|\delta(e+\tilde{a}^i)+f|_{U_{\tilde{j}}\setminus\supp(f)}\right| + \left|\delta(e+\tilde{a}^i)+f|_{A^i\setminus(U_{\tilde{j}}\setminus\supp(f))}\right| \\
    &= \left|\delta(e+\tilde{a}^i)+f|_{A^i\setminus(U_{\tilde{j}}\setminus\supp(f))}\right| \\
    &\leq |\delta(e)+f|-\left|\delta(e)+f|_{U_{\tilde{j}}\setminus\supp(f)}\right|+\left|\delta(\tilde{a}^i)|_{A^i\setminus(U_{\tilde{j}}\setminus\supp(f))}\right| \\
    &\leq |\delta(e)+f|-|U_{\tilde{j}}\setminus\supp(f)|+(|\delta(\tilde{a}^i)|-|U_{\tilde{j}}\setminus\supp(f)|) \\
    &< |\delta(e)+f|-(1-6\nu)|\delta(\tilde{a}^i)|.
  \end{align*}
  where the second equality above holds because by construction $\delta(e)_b=\delta(\tilde{a}^i)_b=1$ and $f_b=0$ for every $b\in U_{\tilde{j}}\setminus\supp(f)$, the first inequality holds by the triangle inequality, the second inequality holds because by definition $U_{\tilde{j}}\subseteq\supp(\delta(e))\cap\supp(\tilde{a}^i)$ and $U_{\tilde{j}}\setminus\supp(f)\subseteq\supp(\tilde{a}^i)$, and the third inequality holds by~(\ref{eq:Ujmf}). Thus as by definition $\nu=1/10$, we conclude that $|\delta(e+\tilde{a}^i)+f|<|\delta(e)+f|$, as desired.
\end{proof}

We have now completed the proof of Proposition~\ref{prop:ssflip}:

\begin{proof}[Proof of Proposition~\ref{prop:ssflip}]
  The result follows immediately from Lemma~\ref{lem:sserrorflip} and Lemma~\ref{lem:sssyndromeflip}.
\end{proof}

\section{Downwards Code Switching Gadget via Direct Measurement}
\label{sec:switchdown}
This section presents our gadget for switching from an $r$-dimensional code to an $(r-1)$-dimensional code by directly measuring out some of the code's data qubits, and then applying an appropriate Pauli correction. We introduce notation and state our result in Section~\ref{sec:csstate}. The gadget is given in Algorithm~\ref{alg:switchdown}. We describe the noiseless execution of this gadget in Section~\ref{sec:csnoiseless}, and then we prove fault-tolerance for the noisy execution in Section~\ref{sec:csanalysis}.

A complementary gadget for switching up from an $(r-1)$-dimensional code to an $r$-dimensional code is given in Section~\ref{sec:switchup}. As described in Section~\ref{sec:csinf}, this upwards code switching performs logical teleportation using logical bell pairs between $(r-1)$- and $r$-dimensional codeblocks. We require the downwards code switching gadget, as well as the $CNOT$ and measurement gadgets in Section~\ref{sec:basicgadgets} below, in order to construct these bell pairs and perform the teleportation.

\subsection{Result Statement}
\label{sec:csstate}
In this section, we state our result providing a gadget for switching to a lower-dimensional code. We will first need the following notation.

For $r\in\bN$, let $\cA$ be an $(r-1)$-dimensional cochain complex. Let\footnote{Here we write $M^{\cA,*}$ as a shorthand for $(M^{\cA})^*$ (and similarly for $M^{\cB}$, etc.).} $\cM^{\cA,*}$ be the $(r-1)$-dimensional cochain complex with $\cM^{\cA,i}=\bF_2^{\dim(H^i(\cA))}$ and $\delta^{\cM^{\cA}}=0$, and let $\Enc^{\cA}:\cM^{\cA,*}\rightarrow\cA^*$ be a cochain map inducing an isomorphism on cohomology\footnote{Such a cochain map can be constructed by mapping each basis element in $M^{\cA,i}$ to some basis element of $H^i(\cA)$.}.

Let $\cB^*$ be a 1-dimensional cochain complex. Fix information sets $M^{\cB,0}\subseteq B^0$ and $M^{\cB,1}\subseteq B^1$ for $\ker(\delta^{\cB})$ and $\ker(\partial^{\cB})$ respectively, let $\cM^{\cB,*}=(\bF_2^{M^{\cB,0}}\xrightarrow{0}\bF_2^{M^{\cB,1}})$, and let $\Enc^{\cB}:\cM^{\cB,*}\rightarrow\cB^*$ be the cochain map defined in Lemma~\ref{lem:enc1dim}. Also fix a subset $L^{\cB,1}\subseteq M^{\cB,1}$, and let $\bar{L}^{\cB,1}=B^1\setminus L^{\cB,1}$. We let $\cB_{\bar{L}}^*=(\bF_2^{B^0}\xrightarrow{\delta^{\cB}_{\bar{L}}}\bF_2^{\bar{L}^{\cB,1}})$ denote the 1-dimensional cochain complex with coboundary map given by $\delta^{\cB}_{\bar{L}}(b)=\delta^{\cB}(b)|_{\bar{L}^{\cB,1}}$, and we similarly let $\cM_{\bar{L}}^{\cB,*}=(\bF_2^{M^{\cB,0}}\xrightarrow{0}\bF_2^{M^{\cB,1}\setminus L^{\cB,1}})$.

Because $L^{\cB,1}$ is extendable for $\ker(\partial^{\cB})$, no element of $B^1(\cB)=\im(\delta^{\cB})$ is supported inside $L^{\cB,1}$, and therefore $Z^0(\cB_{\bar{L}})=\ker(\delta^{\cB})=Z^0(\cB)$. Furthermore, the cosets $\1_x+B^1(\cB_{\bar{L}})$ for $x\in M^{\cB,1}\setminus L^{\cB,1}$ must be linearly independent, as any nontrivial linear dependence among them would yield a nonzero coboundary in $B^1(\cB)$ supported inside $M^{\cB,1}$. These cosets must also span all of $\cB_{\bar{L}}^1/B^1(\cB_{\bar{L}})$ because the cosets $\1_x+B^1(\cB)$ for $x\in M^{\cB,1}$ span all of $\cB^1/B^1(\cB)$ by Lemma~\ref{lem:extendable}. It follows by Lemma~\ref{lem:extendable} that $M^{\cB,0}$ and $M^{\cB,1}\setminus L^{\cB,1}$ are information sets for $Z^0(\cB_{\bar{L}})$ and $Z_1(\cB_{\bar{L}})$ respectively. We then let $\Enc_{\bar{L}}^{\cB}:\cM_{\bar{L}}^{\cB,*}\rightarrow\cB_{\bar{L}}^*$ be the encoding map for $\cB_{\bar{L}}^*$ given by Lemma~\ref{lem:enc1dim}. It follows by definition that for every $i\in\{0,1\}$ and $x\in\cM^{\cB,i}$,
\begin{equation}
  \label{eq:BresLenc}
  \Enc^{\cB}(x)|_{B_{\bar{L}}^i} = \Enc_{\bar{L}}^{\cB}(x|_{M_{\bar{L}}^{\cB,i}}).
\end{equation}


Let $\cC^*=\cA^*\otimes\cB^*$, let $\cM^{\cC,*}=\cM^{\cA,*}\otimes\cM^{\cB,*}$, and let $\Enc^{\cC}=\Enc^{\cA}\otimes\Enc^{\cB}:\cM^{\cC,*}\rightarrow\cC^*$, which by the K\"{u}nneth formula induces an isomorphism on cohomology and hence provides a CSS encoding map for codes associated to $\cC^*$.

Let $\cC_{\bar{L}}^*=\cA^*\otimes\cB_{\bar{L}}^*$. For some $2\leq i\leq r-1$, assume that $\cC_{\bar{L}}^*$ has a $(m,0)$-small-set flip decoder at level $i$. We similarly define $\cM_{\bar{L}}^{\cC,*}=\cM^{\cA,*}\otimes\cM_{\bar{L}}^{\cB,*}$ and $\Enc_{\bar{L}}^{\cC}=\Enc^{\cA}\otimes\Enc_{\bar{L}}^{\cB}$. Then by~(\ref{eq:BresLenc}), for every $x\in\cM^{\cC,i}$,
\begin{equation}
  \label{eq:CresLenc}
  \Enc^{\cC}(x)|_{C_{\bar{L}}^i} = \Enc_{\bar{L}}^{\cC}(x|_{M_{\bar{L}}^{\cC,i}}).
\end{equation}

Let $Q_{\mathrm{in}}$ be the $[[n_{\mathrm{in}},k_{\mathrm{in}}]]$ CSS code associated to level $i$ of $\cC^*$, and let $Q_{\mathrm{out}}$ be the $[[n_{\mathrm{out}},k_{\mathrm{out}}]]$ CSS code consisting of $|L^{\cB,1}|$ copies of the CSS code associated to level $i-1$ of $\cA^*$. Given some graph $G_{\mathrm{run}}=(V_{\mathrm{run}},E_{\mathrm{run}})$ that contains $G^{\cC}_{i-1,i,i+1}\subseteq G_{\mathrm{run}}$ as a subgraph and some
\begin{align*}
  \eta_{\mathrm{run}} &\leq \min\{m,\; d_i(\cC_{\bar{L}})\} \\
  \gamma_{\mathrm{run}} &\leq 1/150w^7,
\end{align*}
let
\begin{align*}
  \cE_{\mathrm{run}} &= \cE(G_{\mathrm{run}},\; \eta_{\mathrm{run}},\; \gamma_{\mathrm{run}}) \\
  \cE_{\mathrm{out}} &= \cE(G_{\mathrm{run}},\; \eta_{\mathrm{run}},\; 300w^7\gamma_{\mathrm{run}}),
\end{align*}
and define the decorated codes\footnote{Here we write $(\Enc^{\cA})^{\sqcup L^{\cB,1}}$ (or simply $(\Enc^{\cA})^{L^{\cB,1}}$) to denote $\Enc^{\cA}$ applied to $|L^{\cB,1}|$ disjoint copies of $Q_{\mathrm{out}}$, labeled by elements of $L^{\cB,1}$.}
\begin{align*}
  D_{\mathrm{in}} &= (Q_{\mathrm{in}},\; \Enc^{\cC},\; \cE_{\mathrm{run}}) \\
  D_{\mathrm{out}} &= (Q_{\mathrm{out}},\; (\Enc^{\cA})^{\sqcup L^{\cB,1}},\; \cE_{\mathrm{out}})
\end{align*}
In these decorated codes, the $n_{\mathrm{in}}=|C^i|$ physical qubits of $Q_{\mathrm{in}}$ are naturally associated with the set $C^i\subseteq C^{i-1}\sqcup C^i\sqcup C^{i+1}=V(G^{\cC}_{i-1,i,i+1})\subseteq V_{\mathrm{run}}$. The $n_{\mathrm{out}}=|A^{i-1}|\cdot|L^{\cB,1}|$ physical qubits of $Q_{\mathrm{out}}$ are then naturally associated to the subset $A^{i-1}\times L^{\cB,1}\subseteq A^{i-1}\times B^1\subseteq C^i\subseteq V_{\mathrm{run}}$.

The $k_{\mathrm{in}}$ logical qubits of $Q_{\mathrm{in}}$ are naturally associated to the set $M^{\cC,i}=(M^{\cA,i}\times M^{\cB,0})\sqcup (M^{\cA,i-1}\times M^{\cB,1})$. The $k_{\mathrm{out}}$ logical qubits of $Q_{\mathrm{out}}$ are naturally associated to the set $M^{\cA,i-1}\times L^{\cB,1}\subseteq M^{\cA,i-1}\times M^{\cB,1}\subseteq M^{\cC,i}$. Under these associations, we let $\bar{\cO}:\bC^{2^{k_{\mathrm{in}}}\times 2^{k_{\mathrm{in}}}} \rightarrow \bC^{2^{k_{\mathrm{out}}}\times 2^{k_{\mathrm{out}}}}$ be the map that simply discards (i.e.~traces out) all $k_{\mathrm{in}}-k_{\mathrm{out}}$ logical qubits in $M^{\cC,i}\setminus (M^{\cA,i-1}\times L^{\cB,1})$. That is, $\bar{\cO}(\rho)=\tr_{M^{\cC,i}\setminus (M^{\cA,i-1}\times L^{\cB,1})}(\rho)$

\begin{proposition}
  \label{prop:switchdown}
  Define all variables as above in Section~\ref{sec:csstate}. Then there exists a Pauli fault-tolerant gadget $((\cQ,\cE_{\mathrm{run}}^{\sqcup T}),D_{\mathrm{in}},D_{\mathrm{out}})$ for $\bar{\cO}$, where $\cQ$ is an adaptive quantum circuit using quantum space $N=n_{\mathrm{in}}$ and time $T=2$.
\end{proposition}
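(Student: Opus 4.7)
The plan is to construct $\cQ$ as a two-timestep circuit. In timestep~1, simultaneously measure every qubit $q \in C_{\bar{L}}^i := C^i \setminus (A^{i-1} \times L^{\cB,1})$ in the $X$ basis, obtaining an outcome vector $y \in \bF_2^{C_{\bar{L}}^i}$. In timestep~2, perform polynomial-time classical processing on $y$ and apply a $Z$-type Pauli correction $P_y$ on the remaining qubits at $A^{i-1} \times L^{\cB,1}$. Specifically, this classical step computes the $X$-syndrome $\sigma := \partial_i^{\cC_{\bar{L}}}(y)$, runs the small-set flip decoder (Algorithm~\ref{alg:ssflip}) for $\cC_{\bar{L}}^*$ at level $i$ to produce a low-weight correction $\hat a \in \cC_{\bar{L}}^i$ matching $\sigma$, and reads off $P_y$ from the cohomology class of $y+\hat a$ in $H^i(\cC_{\bar{L}})$ corresponding to the discarded logical positions $M^{\cC,i}\setminus(M^{\cA,i-1}\times L^{\cB,1})$.

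To verify noiseless correctness, I will use the tensor decomposition $\Enc^{\cC} = \Enc^{\cA}\otimes\Enc^{\cB}$. The $X$-logical operators associated with the surviving logical positions $M^{\cA,i-1}\times L^{\cB,1}$ have support entirely in $A^{i-1}\times L^{\cB,1}$ (because by Lemma~\ref{lem:enc1dim} the $\cB$-encoding of an information bit $b^1\in M^{\cB,1}$ is $\1_{b^1}$), so they commute with every measurement and propagate through the gadget unchanged. Conversely, $X$-logicals at the discarded positions are supported in $C_{\bar L}^i$ and get collapsed by the measurements. Equation~(\ref{eq:CresLenc}) then shows that the restriction of $\Enc^{\cC}$ to the unmeasured region equals $\Enc_{\bar L}^{\cC}$, which on the surviving logical coordinates coincides with $(\Enc^{\cA})^{\sqcup L^{\cB,1}}$. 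The correction $P_y$ compensates for the Pauli frame shift induced by the random outcomes, so the output state is $(\Enc^{\cA})^{\sqcup L^{\cB,1}}(\bar\cO(|\phi\rangle))$ as desired.

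The core of the proof is the Pauli fault-tolerance analysis. Given an input Pauli deviation $\cF_{\mathrm{in}}$ and a gadget Pauli fault $\cF=(\cF_1,\cF_2)$, all with supports that are $\cE_{\mathrm{run}}$-avoiding, I will commute every Pauli operator past the measurements. Only the $Z$-component $e_Z$ of the pre-measurement errors on $C_{\bar L}^i$ can flip outcomes (since $X$-measurements commute with $X$-errors on the same qubit), so the measured $y$ equals the noiseless value plus $e_Z|_{C_{\bar L}^i}$, and the syndrome fed to the decoder is $\partial_i^{\cC_{\bar L}}(e_Z|_{C_{\bar L}^i})$. Because the total Pauli support has weight $\leq \eta_{\mathrm{run}}\leq\min\{m,d_i(\cC_{\bar L})\}$, the error-flip property of the small-set flip decoder ensures that the decoder returns $\hat a$ with $e_Z|_{C_{\bar L}^i}+\hat a$ a coboundary, i.e.\ a $Z$-stabilizer of the code associated to $\cC_{\bar L}^*$, so the reconstructed $P_y$ matches the true frame shift modulo stabilizers and the logical action of the residual error is trivial.

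It remains to bound the support of the residual Pauli error on the output qubits $A^{i-1}\times L^{\cB,1}$, which combines (a)~the input/fault Pauli errors restricted to this set with (b)~the support of $P_y$, itself derived from $\hat a$. I will invoke the locality of the decoder (Lemma~\ref{lem:sslocal}) together with the footprint bound (Lemma~\ref{lem:footprint}): each $G^{\cC}_{i-1,i,i+1}$-connected component of the decoder's footprint contains at least a constant fraction (e.g.\ $\gamma_{\mathrm{err}}=1/(8w^4)$) of genuine input-error positions. Combining this with the constant circuit locality $w$, the entire residual lies in a constant-size neighborhood of the union of input- and fault-error supports inside $G_{\mathrm{run}}$, so by a simple counting argument the assumption $\gamma_{\mathrm{run}}\leq 1/(150w^7)$ yields that this residual is $\cE(G_{\mathrm{run}},\eta_{\mathrm{run}},300w^7\gamma_{\mathrm{run}})$-avoiding, i.e.\ $\cE_{\mathrm{out}}$-avoiding. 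The hardest step will be the careful bookkeeping that glues the single-algorithm locality of the small-set flip decoder together with the two-timestep circuit fault propagation, so that all residual errors remain inside a single $G_{\mathrm{run}}$-connected neighborhood of controlled density.
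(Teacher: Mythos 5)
Your high-level picture---measure out the discarded qubits, decode the syndrome, apply a Pauli frame correction, and use the locality of the small-set flip decoder to get fault-tolerance---matches the paper's strategy. But the concrete protocol you describe is not the one Algorithm~\ref{alg:switchdown} uses, and it would not work, for the following reasons.

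\emph{Wrong measurement basis, and the resulting syndrome is ill-defined.} You measure Pauli $X$ on $C_{\bar L}^i$ and compute $\sigma=\partial_i^{\cC_{\bar L}}(y)$; the paper measures Pauli $Z$ and computes $\delta^{\cC_{\bar L}}(z)$. This is not a harmless dualization. Because $L^{\cB,1}$ sits at the \emph{top} level $B^1$ of the one-dimensional complex $\cB^*$, the restricted complex $\cC_{\bar L}^*=\cA^*\otimes\cB_{\bar L}^*$ is a well-behaved sub-cochain-complex: restricting a cocycle $z\in Z^i(\cC)$ to $C_{\bar L}^i$ gives an element of $Z^i(\cC_{\bar L})$, so $\delta^{\cC_{\bar L}}(z)=0$ noiselessly. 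The boundary is not similarly well-behaved. If $\hat y\in Z_i(\cC)$, the restriction $\hat y|_{C_{\bar L}^i}$ is generally \emph{not} in $Z_i(\cC_{\bar L})$: one has $\partial^{\cC_{\bar L}}(\hat y|_{C_{\bar L}^i})=-\,(I\otimes\partial^{\cB})(\hat y|_{A^{i-1}\times L^{\cB,1}})$, which depends on the unmeasured output qubits and is typically nonzero. So in your protocol the noiseless syndrome is already nontrivial, and the decoding step you propose is meaningless. A valid dual of the paper's protocol exists (Remark~\ref{remark:sdexchange}), but it dualizes the whole complex so that the removed coordinates again sit at the top of the dual; just swapping ``measure $Z$'' for ``measure $X$'' while keeping $\cC_{\bar L}^*$ does not.

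Your correctness check is also internally inconsistent with the stated basis: you assert that ``$X$-logicals at the discarded positions are supported in $C_{\bar L}^i$ and get collapsed by the measurements,'' but Pauli $X$ operators on $C_{\bar L}^i$ \emph{commute} with $X$-basis measurements on those qubits and hence are read out rather than collapsed. That sentence is correct only for $Z$-basis measurement. Further, you run ``the small-set flip decoder for $\cC_{\bar L}^*$ at level $i$'' on the purported $\partial$-syndrome, but that decoder (Definition~\ref{def:ssflip}) acts on $\delta$-syndromes of $\cC_{\bar L}^*$; what you would need is a decoder for the chain complex $\cC_{\bar L,*}$. And the correction is not read off from ``the cohomology class of $y+\hat a$ in $H^i(\cC_{\bar L})$'': the cohomology class is exactly the discarded logical content that the gadget throws away. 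The actual correction in line~\ref{li:sdcorr} is derived from the \emph{coboundary} component $b_{\bar L}\in B^i(\cC_{\bar L})$ of the outcome: Gaussian elimination finds a preimage $c_{\bar L}$ with $\delta^{\cC_{\bar L}}(c_{\bar L})$ absorbing that component, and the applied Pauli is the restriction $X^{\delta^{\cC}(c_{\bar L})|_{A^{i-1}\times L^{\cB,1}}}$ of an $X$-stabilizer of the \emph{full} code to the output qubits.

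\emph{Gap in the fault-tolerance argument.} You write that ``the total Pauli support has weight $\leq\eta_{\mathrm{run}}\leq\min\{m,d_i(\cC_{\bar L})\}$'' and then invoke the small-set flip guarantee directly. This is false in the noise model of the proposition: an $\cE_{\mathrm{run}}$-avoiding fault can corrupt a constant fraction of all qubits; the constraint is on density within large connected subgraphs, not total weight. The paper's proof crucially needs the locality arguments (Lemma~\ref{lem:sslocal}, Lemma~\ref{lem:footprint}) to show the decoder's footprint splits into connected components each of size $<\eta_{\mathrm{run}}\leq m$, so that the small-set flip guarantee can be applied \emph{per component}; it also uses $d_i(\cC_{\bar L})$ together with the same component bound to rule out nontrivial residual $Z$-logical errors on the output. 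You do gesture at the locality/footprint lemmas later, but that contradicts the earlier ``total weight $\leq\eta_{\mathrm{run}}$'' claim, and the crucial per-component bookkeeping (Claims~\ref{claim:sdSFavoid}--\ref{claim:sderravoid}) is entirely missing.

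Until the measurement basis, the choice of complex, and the derivation of the Pauli correction are fixed, this proposal does not prove Proposition~\ref{prop:switchdown}.
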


The gadget $\cQ$ in Proposition~\ref{prop:switchdown} is given in Algorithm~\ref{alg:switchdown}.

\begin{remark}
  \label{remark:sdexchange}
  While we have stated Proposition~\ref{prop:switchdown} and presented Algorithm~\ref{alg:switchdown} to perform $Z$-basis measurements, an analogous result holds where we dualize to a chain complex, and exchange the roles of the Pauli $X$ and $Z$ bases everywhere in the construction and analysis.
\end{remark}

\begin{algorithm}
  \caption{\label{alg:switchdown} Gadget in Proposition~\ref{prop:switchdown} for code switching from $\cC^*=\cA^*\otimes\cB^*$ down to multiple copies of $\cA^*$. All variables are defined as in Section~\ref{sec:csstate}. The physical qubits of the circuit are labeled by the set $C^i=(A^i\times B^0)\sqcup (A^{i-1}\times B^1)$. Below, we state the ideal input and output under a noisless execution. Note that $\delta^{\cC}(c)\in \cC^i$ in line~\ref{li:sdcorr} is well-defined because $c\in\cC_{\bar{L}}^{i-1}\subseteq\cC^{i-1}$ via the inclusion $C_{\bar{L}}^{i-1}\subseteq C^{i-1}$.}
  \SetKwInOut{Input}{Input}
  \SetKwInOut{Output}{Output}

  \SetKwFunction{FnSwitchDown}{SwitchDown}
  \SetKwProg{Fn}{Function}{:}{}

  \Input{$\sigma=\Enc^{\cC,i}(\rho)$}
  \Output{$(\Enc^{\cA,i-1})^{L^{\cB,1}}(\tr_{M^{\cC,i}\setminus (M^{\cA,i-1}\times L^{\cB,1})}(\rho))$}

  \Fn{\FnSwitchDown{$\sigma;\; i,\; \cA^*,\; \cB^*$}}{
    Measure Pauli $Z$ on all qubits in $C_{\bar{L}}^i=(A^i\times B^0)\sqcup (A^{i-1}\times\bar{L}^{\cB,1})$, and let $z\in\cC_{\bar{L}}^i$ be the resulting outcome \\ \label{li:sdmeas}
    Run \FnSSFlipSyn{$\delta^{\cC_{\bar{L}}}(z);i,\cC_{\bar{L}}^*$} from Algorithm~\ref{alg:ssflip}, and let $a^i\in\cC_{\bar{L}}^i$ be the output \\ \label{li:sddec}
    Run Gaussian elimination to find some $c_{\bar{L}}\in\cC_{\bar{L}}^{i-1}$ with $z+a^i+\delta^{\cC_{\bar{L}}}(c_{\bar{L}})\in\im(\Enc_{\bar{L}}^{\cC,i-1})$ \\ \label{li:sdge}
    Apply $X^{\delta^{\cC}(c_{\bar{L}})|_{A^{i-1}\times L^{\cB,1}}}$ to $\sigma$ \\ \label{li:sdcorr}
    \Return{$\tr_{C^i\setminus(A^{i-1}\times L^{\cB,1})}(\sigma)$} \label{li:sdret}
  }
\end{algorithm}

\subsection{Noiseless Execution}
\label{sec:csnoiseless}
In this section, as a warm-up for Proposition~\ref{prop:switchdown}, we show that a noiseless execution of Algorithm~\ref{alg:switchdown} performs the desired code switching.

\begin{lemma}
  \label{lem:nesd}
  Define all variables as in Section~\ref{sec:csstate}. For every $\ell\in\bN$ and every $\rho\in\bC^{(\cM^{\cC}\otimes\bF_2^\ell)\times(\cM^{\cC}\otimes\bF_2^\ell)}$, the output of \FnSwitchDown{$\Enc^{\cC,i}\otimes\cI_\ell(\rho);i,\cA^*,\cB^*$} in Algorithm~\ref{alg:switchdown} is proportional to $(\Enc^{\cA,i-1})^{L^{\cB,1}}\otimes\cI_\ell(\tr_{M^{\cC,i}\setminus (M^{\cA,i-1}\times L^{\cB,1})}(\rho))$.
\end{lemma}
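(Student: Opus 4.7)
The plan is to reduce to basis-state inputs and then directly analyze the effect of the $Z$-measurement and the Pauli correction on the encoded state.

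First, by linearity of all superoperators involved, and by the fact that $\cQ$ acts trivially on the $\ell$-qubit reference system, it suffices to prove the claim for rank-one operators $\rho = \ket{x,y}\bra{x',y'}$ with $x,x' \in \cM^{\cC,i}$ and $y,y'\in\bF_2^\ell$; I will write $\mathrm{rel} = M^{\cA,i-1}\times L^{\cB,1}$ and $\mathrm{irr} = M^{\cC,i}\setminus\mathrm{rel}$ throughout.

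Second, I analyze the $Z$-measurement in line~\ref{li:sdmeas}. The input $\Enc^{\cC,i}\ket{x}$ is the uniform superposition over the coset $\Enc^{\cC,i}(x)+B^i(\cC)$, and a measurement outcome $z\in\cC_{\bar{L}}^i$ arises with nonzero amplitude from both $\ket{x}$ and $\ket{x'}$ precisely when $\Enc^{\cC,i}(x-x')\in B^i(\cC)+\bF_2^{A^{i-1}\times L^{\cB,1}}$. Via the K\"{u}nneth formula (Proposition~\ref{prop:kunneth}) and Lemma~\ref{lem:enc1dim}, this condition reduces to $x|_{\mathrm{irr}}=x'|_{\mathrm{irr}}$, which precisely matches the $\delta_{x|_{\mathrm{irr}},x'|_{\mathrm{irr}}}$ factor appearing in the partial trace on the right-hand side. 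Whenever a valid outcome $z$ does occur, a short calculation using $\delta^{\cC_{\bar{L}}}(z) = \delta^{\cC}(\tilde{z})|_{\cC_{\bar{L}}^{i+1}}$ for any extension $\tilde{z}\in\cC^i$ of $z$, together with the cocycle property of codewords and the identity $\cB^2=0$, shows that $\delta^{\cC_{\bar{L}}}(z)=0$. Hence the syndrome-flip decoder of line~\ref{li:sddec} terminates immediately with $a^i=0$. Since $\Enc_{\bar{L}}^{\cC}$ induces an isomorphism on cohomology (again by K\"{u}nneth and Lemma~\ref{lem:enc1dim}), Gaussian elimination on line~\ref{li:sdge} then produces $c_{\bar{L}}\in\cC_{\bar{L}}^{i-1}$ with $z+\delta^{\cC_{\bar{L}}}(c_{\bar{L}})=\Enc_{\bar{L}}^{\cC,i}(x'')$ for some $x''\in\cM_{\bar{L}}^{\cC,i}$.

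Finally, I verify that applying the correction $X^{\delta^{\cC}(c_{\bar{L}})|_{A^{i-1}\times L^{\cB,1}}}$ produces the claimed output. The core algebraic task is to exploit the factorizations $\cC^i=(\cA^i\otimes\cB^0)\oplus(\cA^{i-1}\otimes\cB^1)$ and $\Enc_{\bar{L}}^{\cC}=\Enc^{\cA}\otimes\Enc_{\bar{L}}^{\cB}$, together with the defining property of $\Enc_{\bar{L}}^{\cB}$ from Lemma~\ref{lem:enc1dim} (which sends each $\1_\ell$ for $\ell\in L^{\cB,1}\subseteq M^{\cB,1}$ to $\1_\ell\in\cB^1$), to identify the coset representative $\Enc_{\bar{L}}^{\cC,i}(x'')|_{A^{i-1}\times L^{\cB,1}}$ with $\bigoplus_{\ell\in L^{\cB,1}}\Enc^{\cA,i-1}\!\big(x|_{M^{\cA,i-1}\times\{\ell\}}\big)$. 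Once this identification is established, the remaining superposition structure on $A^{i-1}\times L^{\cB,1}$ is precisely the image of $|L^{\cB,1}|$ copies of the $\cA$-codespace under $\Enc^{\cA,i-1}$, yielding the claimed state $(\Enc^{\cA,i-1})^{L^{\cB,1}}\ket{x|_{\mathrm{rel}}}\bra{x'|_{\mathrm{rel}}}$ up to the overall scale factor allowed by the ``proportional to'' statement.

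The main obstacle is this last step: it requires carefully tracking how the cohomology class $x''\in\cM_{\bar{L}}^{\cC,i}$ produced by Gaussian elimination (which is only determined up to the choice of $c_{\bar{L}}$) depends on the original logical register $x|_{\mathrm{rel}}$, and crucially using the extendability of $L^{\cB,1}$ to ensure that the Pauli correction lands on the correct coset representative in each of the $|L^{\cB,1}|$ output blocks simultaneously, rather than on some permitted but shifted representative that would destroy the tensor-product structure of the encoded output.
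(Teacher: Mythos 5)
Your high-level structure matches the paper's: reduce to basis-state inputs by linearity, track the effect of the $Z$-measurement via cosets, observe that the decoder trivially returns $a^i=0$ since the measurement outcome $z$ is always a cocycle, and then argue about the Pauli correction computed via Gaussian elimination. Up to and including the claim $\delta^{\cC_{\bar{L}}}(z)=0$, your argument is sound (modulo some loose notation, e.g.\ ``$\cB^2=0$'').

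However, your final step, which is the crux of the lemma, contains a real error and is not actually executed. You propose to identify $\Enc_{\bar{L}}^{\cC,i}(x'')|_{A^{i-1}\times L^{\cB,1}}$ with $\bigoplus_{\ell\in L^{\cB,1}}\Enc^{\cA,i-1}(x|_{M^{\cA,i-1}\times\{\ell\}})$. But $\Enc_{\bar{L}}^{\cC,i}(x'')\in\cC_{\bar{L}}^i=(\cA^i\otimes\cB^0)\oplus(\cA^{i-1}\otimes\bF_2^{\bar{L}^{\cB,1}})$ has \emph{no} component on $A^{i-1}\times L^{\cB,1}$; that restriction is identically zero. Moreover $\Enc_{\bar{L}}^{\cB}$ is defined on $\cM_{\bar{L}}^{\cB,1}=\bF_2^{M^{\cB,1}\setminus L^{\cB,1}}$, so it does not ``send $\1_\ell$ for $\ell\in L^{\cB,1}$'' to anything. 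You have also inverted the dependence: $x''$ equals $x_{\bar{L}}=x|_{\mathrm{irr}}$ and is determined by the measured register and outcome; it does not depend on $x|_{\mathrm{rel}}$, which lives on the \emph{unmeasured} qubits.

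What is actually needed is the step you explicitly label ``the main obstacle.'' After the measurement, the unmeasured register is in the state $(\Enc^{\cA,i-1})^{L^{\cB,1}}(x_L)+b_L+\delta^{\cC}(c_{\bar{L}}(b_{\bar{L}}))|_{A^{i-1}\times L^{\cB,1}}$ for some fixed representative $c_{\bar{L}}(b_{\bar{L}})$ with $\delta^{\cC_{\bar{L}}}(c_{\bar{L}}(b_{\bar{L}}))=b_{\bar{L}}$. The Pauli $X^{\delta^{\cC}(c_{\bar{L}})|_{A^{i-1}\times L^{\cB,1}}}$ cancels the spurious term only if $\delta^{\cC}(c_{\bar{L}})=\delta^{\cC}(c_{\bar{L}}(b_{\bar{L}}))$ as elements of $\cC^i$, not merely of $\cC_{\bar{L}}^i$. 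One gets $\delta^{\cC_{\bar{L}}}(c_{\bar{L}})=\delta^{\cC_{\bar{L}}}(c_{\bar{L}}(b_{\bar{L}}))$ from the uniqueness of the coset decomposition (no nonzero coboundary lies in $\mathrm{im}(\Enc_{\bar{L}}^{\cC,i-1})$), and then one must show the difference $\delta^{\cC}(c_{\bar{L}}+c_{\bar{L}}(b_{\bar{L}}))$, being supported entirely in $A^{i-1}\times L^{\cB,1}$, must vanish; this is exactly where extendability of $L^{\cB,1}$ enters, since $\delta^{\cC}(\cC_{\bar{L}}^{i-1})$ has no nonzero element supported inside $A^{i-1}\times L^{\cB,1}$. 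You gesture at extendability playing this role but do not carry it out, and the identification you do write down is incorrect, so the proof is incomplete.
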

\begin{proof}
  To begin, we expand $\rho$ as
  \begin{align}
    \label{eq:sdexpand}
    \rho
    &= \sum_{x,x'\in\cM^{\cC,i}}\rho_{x,x'}\ket{x}\bra{x'}\otimes\nu_{x,x'}
  \end{align}
  for coefficients $\rho_{x,x'}\in\bC$ and reference-system matrices $\nu_{x,x'}\in\bC^{2^\ell\times 2^\ell}$. Then\footnote{\label{footnote:sdabuse} Note that here (and throughout this section) we abuse notation by letting $\Enc^{\cC}=\Enc^{\cC,i}$ denote the encoding chain map, its induced map on cohomology, the associated encoding isometry, and the associated encoding channel. We also sometimes write $\Enc^{\cC}$ as a shorthand for $\Enc^{\cC}\otimes\cI_\ell$. The meaning will always be made clear from the argument.}
  \begin{align*}
    \Enc^{\cC}(\rho)
    &= \Enc^{\cC}\rho{\Enc^{\cC}}^\dagger = \sum_{x,x'\in\cM^{\cC,i},\;b,b'\in B^i(\cC)}\frac{\rho_{x,x'}}{|B^i(\cC)|}\ket{\Enc^{\cC}(x)+b}\bra{\Enc^{\cC}(x)+b'}.
  \end{align*}
  The Pauli $Z$ measurements in line~\ref{li:sdmeas} then collapse the above state to
  \begin{equation}
    \label{eq:sdnecoll}
    \sum_{z\in\cC_{\bar{L}}^i}\ket{z}\bra{z} \otimes \sum_{x,x',b,b'}\rho_{x,x'}\ket{\Enc^{\cC}(x)+b|_{A^{i-1}\times L^{\cB,1}}}\bra{\Enc^{\cC}(x')+b'|_{A^{i-1}\times L^{\cB,1}}} \otimes \nu_{x,x'},
  \end{equation}
  where the second sum above is over all $x,x'\in\cM^{\cC,i}$ and $b,b'\in B^i(\cC)$ satisfying
  \begin{equation}
    \label{eq:zexpressions}
    \Enc^{\cC}(x)+b|_{C_{\bar{L}}^i}=z=\Enc^{\cC}(x')+b'|_{C_{\bar{L}}^i}.
  \end{equation}
  Note that here we only track the state up to global scalars/phases (such as the dropped factor of $1/|B^i(\cC)|$ above). Now by~(\ref{eq:CresLenc}),
  \begin{equation*}
    \Enc^{\cC}(x)+b|_{C_{\bar{L}}^i} = \Enc_{\bar{L}}^{\cC}(x|_{M_{\bar{L}}^{\cC,i}})+(b|_{C_{\bar{L}}^i}).
  \end{equation*}
  The following claim shows that the decomposition of $z$ on the RHS above is unique.
  
  \begin{claim}
    \label{claim:zdecomp}
    For every $z\in Z^i(\cC_{\bar{L}})$, there is a unique choice of $x_{\bar{L}}\in\cM_{\bar{L}}^{\cC,i}$ and $b_{\bar{L}}\in B^i(\cC_{\bar{L}})$ for which
    \begin{equation*}
      z = \Enc_{\bar{L}}^{\cC}(x_{\bar{L}})+b_{\bar{L}}.
    \end{equation*}
    Furthermore, for $z\in\cC_{\bar{L}}^i\setminus Z^i(\cC_{\bar{L}})$, there is no such $x_{\bar{L}},b_{\bar{L}}$.
  \end{claim}
  \begin{proof}
    Because $\Enc_{\bar{L}}^{\cC}:\cM_{\bar{L}}^{\cC,*}\rightarrow\cC_{\bar{L}}^*$ is a chain map inducing an isomorphism on cohomology, and $\cM_{\bar{L}}^{\cC,*}$ has coboundary map $\delta_{\bar{L}}^{\cC}=0$, it follows that $\Enc_{\bar{L}}^{\cC}$ maps every nonzero input to a nontrivial cohomology class in $H^*(\cC_{\bar{L}})$, and every such cohomology class has a unique representative with such a (unique) preimage under $\Enc_{\bar{L}}^{\cC}$. Thus the claim holds.
  \end{proof}

  Claim~\ref{claim:zdecomp} implies that every term with a nonzero contribution to the sum in~(\ref{eq:sdnecoll}) has $z\in Z^i(\cC_{\bar{L}})$, so line~\ref{li:sddec} has a trivial decoding problem and computes $a^i=0$.
  
  Furthermore, Claim~\ref{claim:zdecomp} implies that for~(\ref{eq:zexpressions}) to hold, we must have $z=\Enc_{\bar{L}}^{\cC}(x_{\bar{L}})+b_{\bar{L}}\in Z^i(\cC_{\bar{L}})$ for $x_{\bar{L}}:=x|_{M_{\bar{L}}^{\cC,i}}=x'|_{M_{\bar{L}}^{\cC,i}}$ and $b_{\bar{L}}:=b|_{C_{\bar{L}}^i}=b'|_{C_{\bar{L}}^i}$. Fix an arbitrary $c_{\bar{L}}(b_{\bar{L}})\in\cC_{\bar{L}}^{i-1}$ with $\delta_{\bar{L}}^{\cC}(c_{\bar{L}}(b_{\bar{L}}))=b_{\bar{L}}$. Then for~(\ref{eq:zexpressions}) to hold for a given $z$, we must have $x=(x_{\bar{L}},x_L)$ for some $x_L\in\cM^{\cA,i-1}\otimes\bF_2^{L^{\cB,1}}$, and $b=\delta^{\cC}(c_{\bar{L}}(b_{\bar{L}}))+b_L$ for some $b_L\in B^{i-1}(\cA)\otimes\bF_2^{L^{\cB,1}}$. Note that here we can apply $\delta^{\cC}$ to $c_{\bar{L}}(b_{\bar{L}})$ using the natural inclusion $\cC_{\bar{L}}^{i-1}\subseteq\cC_{\bar{L}}^{i-1}\oplus(\cA^{i-2}\otimes\bF_2^{L^{\cB,1}})=\cC^{i-1}$; the term $b_L$ is the image under $\delta^{\cC}$ of some $c_L\in\cA^{i-2}\otimes\bF_2^{L^{\cB,1}}$. By the same reasoning, (\ref{eq:zexpressions}) also implies that $x'=(x_{\bar{L}},x_L')$ for some $x_L'\in\cM^{\cA,i-1}\otimes\bF_2^{L^{\cB,1}}$ and $b=\delta^{\cC}(c_{\bar{L}}(b_{\bar{L}}))+b_L'$ for some $b_L'\in B^{i-1}(\cA)\otimes\bF_2^{L^{\cB,1}}$.
  
  Thus~(\ref{eq:sdnecoll}) can be expressed as
  \begin{align}
    \label{eq:sdneequiv}
    \begin{split}
      \hspace{1em}&\hspace{-1em} \sum_{x_{\bar{L}}\in\cM_{\bar{L}}^{\cC,i},\; b_{\bar{L}}\in B^i(\cC_{\bar{L}})}\ket{\Enc_{\bar{L}}^{\cC}(x_{\bar{L}})+b_{\bar{L}}}\bra{\Enc_{\bar{L}}^{\cC}(x_{\bar{L}})+b_{\bar{L}}} \\
      &\otimes \sum_{x_L,x_L'\in\cM^{\cA,i-1}\otimes\bF_2^{L^{\cB,1}},\; b_L,b_L'\in B^{i-1}(\cA)\otimes\bF_2^{L^{\cB,1}}}\rho_{x,x'} \ket{(\Enc^{\cA,i-1})^{L^{\cB,1}}(x_L)+b_L+(\delta^{\cC}(c_{\bar{L}}(b_{\bar{L}}))|_{A^{i-1}\times L^{\cB,1}})} \\
      &\hspace{19em} \bra{(\Enc^{\cA,i-1})^{L^{\cB,1}}(x_L')+b_L'+(\delta^{\cC}(c_{\bar{L}}(b_{\bar{L}}))|_{A^{i-1}\times L^{\cB,1}})} \\
      &\otimes \nu_{x,x'}.
    \end{split}
  \end{align}
  The $c_{\bar{L}}$ computed in line~\ref{li:sdge} of Algorithm~\ref{alg:switchdown} must have the same image under $\delta^{\cC}$ as $c_{\bar{L}}(b_{\bar{L}})$. Specifically, by definition we have $z+\delta^{\cC_{\bar{L}}}(c_{\bar{L}}),\; z+\delta^{\cC_{\bar{L}}}(c_{\bar{L}}) \in \im(\Enc_{\bar{L}}^{\cC,i-1})$. Because $\Enc_{\bar{L}}^{\cC,i-1}$ induces an isomorphism on cohomology, every nonzero element of $\im(\Enc_{\bar{L}}^{\cC,i-1})$ lies in a nontrivial cohomology class, and hence outside of $\im(\delta^{\cC_{\bar{L}}})$. Thus $\delta^{\cC_{\bar{L}}}(c_{\bar{L}})=\delta^{\cC_{\bar{L}}}(c_{\bar{L}}(b_{\bar{L}}))$, so if $\delta^{\cC}(c_{\bar{L}})\neq\delta^{\cC}(c_{\bar{L}}(b_{\bar{L}}))$, then $\delta^{\cC}(c_{\bar{L}}+c_{\bar{L}}(b_{\bar{L}}))\neq 0$ must be supported entirely outside of $C_{\bar{L}}^i$. But such an element cannot exist because $L^{\cB,1}$ is extendable for $\ker(\partial^{\cC})$ so that $\im(\delta^{\cB})$ has no nonzero elements supported entirely inside $L^{\cB,1}$, and therefore $\delta^{\cC}(\cC_{\bar{L}}^{i-1})$ has no nonzero elements supported entirely inside $A^{i-1}\times L^{\cB,1}$. Thus the correction in line~\ref{li:sdcorr} maps the above state to 
  \begin{align*}
    \hspace{1em}&\hspace{-1em} \sum_{x_{\bar{L}}\in\cM_{\bar{L}}^{\cC,i},\; b_{\bar{L}}\in B^i(\cC_{\bar{L}})}\ket{\Enc_{\bar{L}}^{\cC}(x_{\bar{L}})+b_{\bar{L}}}\bra{\Enc_{\bar{L}}^{\cC}(x_{\bar{L}})+b_{\bar{L}}} \\
                &\otimes \sum_{x_L,x_L'\in\cM^{\cA,i-1}\otimes\bF_2^{L^{\cB,1}},\; b_L,b_L'\in B^{i-1}(\cA)\otimes\bF_2^{L^{\cB,1}}}\rho_{x,x'} \ket{(\Enc^{\cA,i-1})^{L^{\cB,1}}(x_L)+b_L} \bra{(\Enc^{\cA,i-1})^{L^{\cB,1}}(x_L')+b_L'} \\
                &\otimes \nu_{x,x'}.
  \end{align*}
  Tracing out over qubits $M_{\bar{L}}^{\cC,i}$ (i.e.~the first register above) then gives
  \begin{align*}
    \hspace{1em}&\hspace{-1em}\sum_{x_{\bar{L}}\in\cM_{\bar{L}}^{\cC,i}}(\Enc^{\cA,i-1})^{L^{\cB,1}}\left(\sum_{x_L,x_L'\in\cM^{\cA,i-1}\otimes\bF_2^{L^{\cB,1}}}\rho_{x,x'}\ket{x_L}\bra{x_L}\right) \otimes \nu_{x,x'} \\
                &= (\Enc^{\cA,i-1})^{L^{\cB,1}}(\tr_{M^{\cC,i}\setminus (M^{\cA,i-1}\times L^{\cB,1})}(\rho)),
  \end{align*}
  as desired.
\end{proof}

\subsection{Noisy Execution}
\label{sec:csanalysis}
In this section, we prove Proposition~\ref{prop:switchdown}. Here we define all variables as in Section~\ref{sec:csstate}. We will perform a similar analysis as in Section~\ref{sec:csnoiseless}, but now allowing for Pauli errors.

\begin{proof}[Proof of Proposition~\ref{prop:switchdown}]
  The desired circuit $\cQ$ is given in Algorithm~\ref{alg:switchdown}. By definition line~\ref{li:sdmeas} and line~\ref{li:sdcorr} each use $1$ timestep, while line~\ref{li:sddec} and line~\ref{li:sdge} just perform classical computations and therefore use $0$ timesteps (in the sense of adaptive quantum circuits as defined in Definition~\ref{def:circuit}). Thus the entire circuit uses $T=2$ timesteps. Furthermore, the circuit by definition uses $N=|C^i|=n_{\mathrm{in}}$ physical qubits.

  Our goal is to show that for every $\ell\in\bN$ and every $\rho\in\bC^{(\cM^{\cC}\otimes\bF_2^\ell)\times(\cM^{\cC}\otimes\bF_2^\ell)}$, every Pauli $\cE_{\mathrm{run}}$-deviation\footnote{See Footnote~\ref{footnote:sdabuse}.} $\sigma=E_0\Enc^{\cC,i}(\rho)E_0'$ of $\Enc^{\cC,i}(\rho)$ (for Paulis $E_0,E_0'$ such that $\supp(E_0)\cup\supp(E_0')\subseteq C^i$ is $\cE_{\mathrm{run}}$-avoiding) and every $\cE_{\mathrm{run}}^{\sqcup T}$-avoiding Pauli fault $\cF$, the resulting output $\cQ[\cF](\sigma)$ is a Pauli $\cE_{\mathrm{out}}$-deviation of $(\Enc^{\cA,i-1})^{L^{\cB,1}}(\tr_{M^{\cC,i}\setminus (M^{\cA,i-1}\times L^{\cB,1})}(\rho))$.

  We will perform a similar analysis as in the proof of Lemma~\ref{lem:nesd}, but while tracking all errors that arise at any point during the execution of $\cQ$ using the graph $G_{\mathrm{run}}$. Let
  \begin{equation*}
    S_{\cF} = (\supp(E_0)\cup\supp(E_0'))\cup\supp(\cF_1)\cup\supp(\cF_2)\subseteq C^i
  \end{equation*}
  denote the set of all physical qubits that lie in the support of the input's Pauli error $E_0,E_0'$ or of the fault $\cF$ at any timestep. Note that because $\cQ$ has only single-qubit gates, these errors cannot propagate through gates to different qubits.

  \begin{claim}
    \label{claim:sdSFavoid}
    The set $S_{\cF}$ is $\cE(G_{\mathrm{run}},\; \eta_{\mathrm{run}},\; 3\gamma_{\mathrm{run}})$-avoiding.
  \end{claim}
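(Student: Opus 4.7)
The plan is to prove Claim~\ref{claim:sdSFavoid} by a direct union bound. I would decompose $S_{\cF} = S_{\mathrm{in}} \cup S_1 \cup S_2$, where $S_{\mathrm{in}} := \supp(E_0)\cup\supp(E_0')$, $S_1 := \supp(\cF_1)$, and $S_2 := \supp(\cF_2)$, each viewed as a subset of $V_{\mathrm{run}}$ via the inclusion $C^i \subseteq V_{\mathrm{run}}$. The key observation is that being $\cE(G_{\mathrm{run}},\eta_{\mathrm{run}},\gamma)$-avoiding is a density condition on connected subgraphs of size $\geq\eta_{\mathrm{run}}$, and such densities add under unions, so three $\cE_{\mathrm{run}}$-avoiding sets combine into a single $\cE(G_{\mathrm{run}},\eta_{\mathrm{run}},3\gamma_{\mathrm{run}})$-avoiding set.

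First I would check that each of $S_{\mathrm{in}},S_1,S_2$ is individually $\cE_{\mathrm{run}}$-avoiding. For $S_{\mathrm{in}}$ this is part of the hypothesis that $\sigma$ is a Pauli $\cE_{\mathrm{run}}$-deviation of $\Enc^{\cC,i}(\rho)$, as stated explicitly at the top of the proof. For $S_1$ and $S_2$, I would unfold the definition $\cE_{\mathrm{run}}^{\sqcup T} = \cE_{\mathrm{run}} \sqcup \cE_{\mathrm{run}}$ as a disjoint union on $[N] \times [T]$ indexed by timestep: by construction, $\cF$ being $\cE_{\mathrm{run}}^{\sqcup T}$-avoiding is equivalent to the statement that $\supp(\cF_t)$ is $\cE_{\mathrm{run}}$-avoiding for each $t \in [T]$ separately.

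Second, I would execute the density bound. Fix any connected subgraph $G' \subseteq G_{\mathrm{run}}$ with $|V(G')| \geq \eta_{\mathrm{run}}$. For each $S_\bullet \in \{S_{\mathrm{in}},S_1,S_2\}$, if we had $|S_\bullet \cap V(G')| \geq \gamma_{\mathrm{run}}|V(G')|$, then $S_\bullet \cap V(G')$ itself would lie in $\cE(G_{\mathrm{run}},\eta_{\mathrm{run}},\gamma_{\mathrm{run}})$ (witnessed by $G'$) and be contained in $S_\bullet$, contradicting $\cE_{\mathrm{run}}$-avoidance of $S_\bullet$. Hence $|S_\bullet \cap V(G')| < \gamma_{\mathrm{run}}|V(G')|$ for each of the three sets, and summing gives $|S_{\cF} \cap V(G')| < 3\gamma_{\mathrm{run}}|V(G')|$. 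Since $G'$ was an arbitrary sufficiently large connected subgraph, no element of $\cE(G_{\mathrm{run}},\eta_{\mathrm{run}},3\gamma_{\mathrm{run}})$ can sit inside $S_{\cF}$, which is the desired conclusion.

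I do not anticipate a substantive obstacle here; the argument is essentially bookkeeping. The only point requiring a brief check is the unfolding of $\cE_{\mathrm{run}}^{\sqcup T}$ into per-timestep constraints on $\supp(\cF_t)$, which is immediate from the disjoint-union definition in the preliminaries. The real work of proving Proposition~\ref{prop:switchdown} will come later, where this density control on $S_{\cF}$ is combined with the small-set flip decoder of Section~\ref{sec:ssflip} (and small-set soundness) to show that the decoding and Pauli correction in lines~\ref{li:sddec}--\ref{li:sdcorr} yield an output that is a Pauli $\cE_{\mathrm{out}}$-deviation of the ideal code-switched state, with the extra slack factor of $100w^7$ in the output density parameter being absorbed by the locality of the decoder footprints from Lemma~\ref{lem:footprint}.
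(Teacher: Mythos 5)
Your proof is correct and follows the same approach as the paper, which simply observes that $S_{\cF}$ is the union of three $\cE_{\mathrm{run}}$-avoiding sets and that densities on connected subgraphs add under unions; you have spelled out that one-line argument in full detail.
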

  \begin{proof}
    The claim holds because by definition $S_{\cF}$ is the union of three $\cE_{\mathrm{run}}=\cE(G_{\mathrm{run}},\; \eta_{\mathrm{run}},\; \gamma_{\mathrm{run}})$-avoiding sets.
  \end{proof}

  To begin, we expand $\rho$ as in~(\ref{eq:sdexpand}). Writing $E_0=Z^{e_Z}X^{e_X}$ and $E_0'=X^{e_X'}Z^{e_Z'}$, then the input state just prior to the measurements in line~\ref{li:sdmeas} is
  \begin{align*}
    E_0\Enc^{\cC}(\rho)E_0'
    &\propto \sum_{x,x'\in\cM^{\cC,i},\;b,b'\in B^i(\cC)}\rho_{x,x'} Z^{e_Z}X^{e_X}\ket{\Enc^{\cC}(x)+b}\bra{\Enc^{\cC}(x)+b'}X^{e_X'}Z^{e_Z'} \otimes \nu_{x,x'}.
  \end{align*}
  For $P\in\{X,Z\}$, write $e_P=(e_{P,\bar{L}},e_{P,L})$ for $e_{P,\bar{L}}=e_P|_{C_{\bar{L}}^i}$ and $e_{P,L}=e_P|_{A^{i-1}\times L^{\cB,1}}$ (and analogously for the primed versions $e_P=(e_{P,\bar{L}}',e_{P,L}')$). The measurements in line~\ref{li:sdmeas} project the state above to
  \begin{align}
    \label{eq:sdcoll}
    \begin{split}
      \hspace{1em}&\hspace{-1em} \sum_{z\in\cC_{\bar{L}}^i}Z^{e_{Z,\bar{L}}}\ket{z}\bra{z}Z^{e_{Z,\bar{L}}'} \\
      &\otimes \sum_{x,x',b,b'}\rho_{x,x'}Z^{e_{Z,L}}X^{e_{X,L}}\ket{\Enc^{\cC}(x)+b|_{A^{i-1}\times L^{\cB,1}}}\bra{\Enc^{\cC}(x')+b'|_{A^{i-1}\times L^{\cB,1}}}X^{e_{X,L}'}Z^{e_{Z,L}'} \otimes \nu_{x,x'},
    \end{split}
  \end{align}
  where the second sum above is over all $x,x'\in\cM^{\cC,i}$ and $b,b'\in B^i(\cC)$ satisfying
  \begin{align}
    \label{eq:zexpressionsnoisy}
    \begin{split}
      \Enc^{\cC}(x)+b|_{C_{\bar{L}}^i}+e_{X,\bar{L}} &= z =\Enc^{\cC}(x')+b'|_{C_{\bar{L}}^i}+e_{X,\bar{L}}'.
    \end{split}
  \end{align}
  Let $S_{\mathrm{syn}}\subseteq C^i\sqcup C^{i+1}$ be the footprint (see Definition~\ref{def:footprint}) of the call to \FnSSFlipSyn{$\delta^{\cC_{\bar{L}}}(z);i,\cC_{\bar{L}}^*$} in line~\ref{li:sddec} of Algorithm~\ref{alg:switchdown}.

  \begin{claim}
    \label{claim:sdsynavoid}
    The set $S_{\cF}\cup S_{\mathrm{syn}}$ is $\cE(G_{\mathrm{run}},\; \eta_{\mathrm{run}},\; 15w^3\gamma_{\mathrm{run}})$-avoiding.
  \end{claim}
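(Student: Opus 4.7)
The plan is to reduce to Lemma~\ref{lem:footprint} by first identifying the input of \FnSSFlipSyn{} in line~\ref{li:sddec} as the syndrome of the data error $e_{X,\bar{L}}$. Writing $a := \Enc^{\cC,i}(x) + b \in Z^i(\cC)$, the first identity in~(\ref{eq:zexpressionsnoisy}) gives $z = a|_{C_{\bar{L}}^i} + e_{X,\bar{L}}$, so it suffices to check that $a|_{C_{\bar{L}}^i} \in Z^i(\cC_{\bar{L}})$. Since $\delta^{\cC}(a) = 0$, we have $\delta^{\cC}(a|_{C_{\bar{L}}^i}) = \delta^{\cC}(a|_{A^{i-1}\times L^{\cB,1}})$; as $\cB^*$ is $1$-dimensional, $I\otimes\delta^{\cB}$ vanishes on $A^{i-1}\otimes B^1$, so this latter quantity lies in $A^i\otimes L^{\cB,1}$, which is disjoint from $C_{\bar{L}}^{i+1} = (A^{i+1}\times B^0)\sqcup(A^i\times\bar{L}^{\cB,1})$. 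Hence $\delta^{\cC_{\bar{L}}}(a|_{C_{\bar{L}}^i}) = \delta^{\cC}(a|_{C_{\bar{L}}^i})|_{C_{\bar{L}}^{i+1}} = 0$, so $\delta^{\cC_{\bar{L}}}(z) = \delta^{\cC_{\bar{L}}}(e_{X,\bar{L}})$.

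Given this, I would apply Lemma~\ref{lem:footprint} to the call \FnSSFlipSyn{$\delta^{\cC_{\bar{L}}}(e_{X,\bar{L}}); i, \cC_{\bar{L}}^*$} with $e = e_{X,\bar{L}}$ and $f = 0$, using that $\cC_{\bar{L}}^*$ inherits locality at most $w$ from $\cC^*$. This yields $|V_j\cap S_{\cF}|\geq\gamma_{\mathrm{syn}}|V_j|$ for $\gamma_{\mathrm{syn}} = 1/(4w^3)$ on every connected component $V_j$ of the subgraph of $G^{\cC_{\bar{L}}}_{i,i+1}$ induced by $S_{\mathrm{syn}}$, using that $\supp(e_{X,\bar{L}})\subseteq S_{\cF}$. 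Moreover, since $\triangleleft_{\cC_{\bar{L}}}$ is the restriction of $\triangleleft_{\cC}$ to pairs in $C_{\bar{L}}$, every edge of $G^{\cC_{\bar{L}}}_{i,i+1}$ lifts to an edge of $G^{\cC}_{i-1,i,i+1}\subseteq G_{\mathrm{run}}$, so each $V_j$ is connected in $G_{\mathrm{run}}$.

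For the main argument, I would suppose for contradiction that $T = S_{\cF}\cup S_{\mathrm{syn}}$ contains a bad set lying in a connected subgraph $G'\subseteq G_{\mathrm{run}}$ with $v := |V(G')|\geq\eta_{\mathrm{run}}$ and $|V(G')\cap T|\geq 15w^3\gamma_{\mathrm{run}} v$. Writing $p := |V(G')\cap S_{\cF}|$ and $q := |V(G')\cap S_{\mathrm{syn}}|$ (so $p+q \geq 15w^3\gamma_{\mathrm{run}} v$), if $p\geq 3\gamma_{\mathrm{run}} v$ then $G'$ directly contradicts Claim~\ref{claim:sdSFavoid}. Otherwise $q\geq (15w^3-3)\gamma_{\mathrm{run}} v\geq 12w^3\gamma_{\mathrm{run}} v$, and I would enlarge $V(G')$ to $V(G'') := V(G')\cup\bigcup_{j \,:\, V_j\cap V(G')\neq\emptyset}V_j$; connectivity of $V(G')$ and each such $V_j$ in $G_{\mathrm{run}}$ preserves connectivity of $V(G'')$. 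Setting $s := \sum_{j \,:\, V_j\cap V(G')\neq\emptyset}|V_j|\geq q$, we obtain $|V(G'')| = v-q+s\geq v\geq\eta_{\mathrm{run}}$ and $|V(G'')\cap S_{\cF}|\geq\gamma_{\mathrm{syn}} s$, so the density $|V(G'')\cap S_{\cF}|/|V(G'')|\geq\gamma_{\mathrm{syn}} s/(v-q+s)$. This ratio is increasing in $s\geq 0$, hence minimized at $s=q$, yielding $\geq\gamma_{\mathrm{syn}} q/v\geq 12w^3\gamma_{\mathrm{syn}}\gamma_{\mathrm{run}} = 3\gamma_{\mathrm{run}}$, which contradicts Claim~\ref{claim:sdSFavoid}. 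The main obstacle is keeping the constant $15w^3$ tight: it is forced by the requirement $(15w^3-3)\gamma_{\mathrm{syn}}\geq 3$, together with using the sharp identity $|V(G'')| = v-q+s$ rather than the cruder bound $v+s$.
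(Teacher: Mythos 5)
Your proof is correct, and it relies on the same two pillars as the paper: Lemma~\ref{lem:footprint} applied to \FnSSFlipSyn{$\delta^{\cC_{\bar{L}}}(e_{X,\bar{L}});i,\cC_{\bar{L}}^*$} with $f=0$, together with Claim~\ref{claim:sdSFavoid} to bound the $S_{\cF}$-density of an enlarged connected set. The organizational difference is modest: the paper enlarges $V$ once to contain whole connected components of the subgraph of $G_{\mathrm{run}}$ induced by $S_{\cF}\cup S_{\mathrm{syn}}$, then writes $|V\cap S_{\mathrm{syn}}|\leq 4w^3|V\cap S_{\cF}|$ and $|V\cap S_{\cF}|<3\gamma_{\mathrm{run}}|V|$ to contradict the assumed $(S_{\cF}\cup S_{\mathrm{syn}})$-density of $V$. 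You instead case-split on whether $S_{\cF}$ or $S_{\mathrm{syn}}$ dominates in the original $V(G')$, enlarge only by footprint components $V_j$, and derive a contradiction with Claim~\ref{claim:sdSFavoid} about the $S_{\cF}$-density of $V(G'')$ directly, using the sharp bookkeeping identity $|V(G'')|=v-q+s$ and the monotone ratio $\gamma_{\mathrm{syn}}s/(v-q+s)\geq\gamma_{\mathrm{syn}}q/v$. Both routes hit the constant $15w^3$; yours uses a more minimal enlargement at the cost of the case split. One added virtue of your write-up is that you explicitly verify $\delta^{\cC_{\bar{L}}}(z)=\delta^{\cC_{\bar{L}}}(e_{X,\bar{L}})$ by showing $\supp(\delta^{\cC}(a|_{A^{i-1}\times L^{\cB,1}}))\subseteq A^i\times L^{\cB,1}$ is disjoint from $C_{\bar{L}}^{i+1}$, and that $\triangleleft_{\cC_{\bar{L}}}$ restricts $\triangleleft_{\cC}$ so that $G^{\cC_{\bar{L}}}_{i,i+1}\subseteq G_{\mathrm{run}}$; the paper leaves the first of these as a one-line "by definition."
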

  \begin{proof}
    Assume for a contradiction that there exists some $V\subseteq V_{\mathrm{run}}$ of size $|V|\geq\eta_{\mathrm{run}}$ that induces a connected subgraph of $G_{\mathrm{run}}$ with $|V\cap(S_{\cF}\cup S_{\mathrm{syn}})|/|V|\geq 15w^3\gamma_{\mathrm{run}}$. We may repeatedly add points in $(S_{\cF}\cup S_{\mathrm{syn}})\setminus V$ that lie in the neighborhood of $V$, until no more such points exist. 
    The resulting set $V$ can only have larger size $|V|\geq\eta_{\mathrm{run}}$ and $(S_{\cF}\cup S_{\mathrm{syn}})$-density $|V\cap(S_{\cF}\cup S_{\mathrm{syn}})|/|V|\geq 15w^3\gamma_{\mathrm{run}}$, and contains all or none of every connected component of the subgraph of $G_{\mathrm{run}}$ induced by $S_{\cF}$.

    By~(\ref{eq:zexpressionsnoisy}), the call \FnSSFlipSyn{$\delta^{\cC_{\bar{L}}}(z);i,\cC_{\bar{L}}^*$} in line~\ref{li:sddec} of Algorithm~\ref{alg:switchdown} has $z=\Enc^{\cC}(x)+b|_{C_{\bar{L}}^i}+e_{X,\bar{L}}$, and by definition $\delta^{\cC_{\bar{L}}}(\Enc^{\cC}(x)+b|_{C_{\bar{L}}^i})=0$, so $\delta^{\cC_{\bar{L}}}(z)=\delta^{\cC_{\bar{L}}}(e_{X,\bar{L}})$. Then by Lemma~\ref{lem:footprint}, we have
    \begin{align*}
      |V\cap S_{\mathrm{syn}}|
      &\leq |V\cap\supp(e_{X,\bar{L}})|/\gamma_{\mathrm{syn}} \\
      &\leq |V\cap S_{\cF}| \cdot 4w^3 \\
      &< 3\gamma_{\mathrm{run}}|V| \cdot 4w^3,
    \end{align*}
    where the third inequality above holds by Claim~\ref{claim:sdSFavoid}. We then again apply Claim~\ref{claim:sdSFavoid} to conclude that
    \begin{equation*}
      |V\cap (S_{\cF}\cup S_{\mathrm{syn}})| \leq |V\cap S_{\cF}|+|V\cap S_{\mathrm{syn}}| < (4w^3+1)\cdot 3\gamma_{\mathrm{run}}|V|,
    \end{equation*}
    which contradicts the assumption that $|V\cap(S_{\cF}\cup S_{\mathrm{syn}})|\geq 15w^3\gamma_{\mathrm{run}}$, as desired.
  \end{proof}

  Recall that by assumption $\eta_{\mathrm{run}}\leq m$. Also by definition, $\gamma_{\mathrm{run}}\leq 1/15w^3$, so it follows by Claim~\ref{claim:sdsynavoid} that the subgraph of $G_{\mathrm{run}}$ induced by $S_{\cF}\cup S_{\mathrm{syn}}$ has no connected components containing $\geq\eta_{\mathrm{run}}$ vertices. Therefore within every connected component $V$ of this induced subgraph, by the assumption that $\cC_{\bar{L}}^*$ has a $(m,0)$-small-set syndrome-flip decoder at level $i$, the while loop in the restricted execution \FnSSFlipSyn{$\delta^{\cC_{\bar{L}}}(z|_{V\cap C_{\bar{L}}^i});i,\cC_{\bar{L}}^*$} will not terminate until the computed output $a^i|_{V\cap C_{\bar{L}}^i}$ satisfies $\delta^{\cC_{\bar{L}}}(z+a^i|_{V\cap C_{\bar{L}}^i})=0$. By definition $\supp(e_{X,\bar{L}})\subseteq S_{\cF}$, and by~(\ref{eq:zexpressionsnoisy}), it follows that $\delta^{\cC_{\bar{L}}}(e_{X,\bar{L}}+a^i|_{V\cap C_{\bar{L}}^i})=\delta^{\cC_{\bar{L}}}(z+a^i|_{V\cap C_{\bar{L}}^i})=0$. We analogously have $\supp(e_{X,\bar{L}}')\subseteq S_{\cF}$ and $\delta^{\cC_{\bar{L}}}(e_{X,\bar{L}}'+a^i|_{V\cap C_{\bar{L}}^i})=0$.

  We now define $S_{\mathrm{err}}$ and $\tilde{a}^{i-1},\tilde{a}^i$ (resp.~$S_{\mathrm{err}}'$ and $(\tilde{a}^{i-1})',(\tilde{a}^i)'$) to be the footprint and output of \FnSSFlipErr{$e_{X,\bar{L}}+a^i;i,\cC_{\bar{L}}^*$} (resp.~\FnSSFlipErr{$e_{X,\bar{L}}'+a^i;i,\cC_{\bar{L}}^*$}), respectively. We prove the following claim for the unprimed variables $e_{X,\bar{L}},\dots$, but it analogously applies to the primed variables $e_{X,\bar{L}}',\dots$.

  \begin{claim}
    \label{claim:sderravoid}
    The set $S_{\cF}\cup S_{\mathrm{syn}}\cup S_{\mathrm{err}}$ is $\cE(G_{\mathrm{run}},\; \eta_{\mathrm{run}},\; 150w^7\gamma_{\mathrm{run}})$-avoiding.
  \end{claim}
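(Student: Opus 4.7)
The plan is to mimic the template of Claim~\ref{claim:sdsynavoid}, but now tracking one additional layer of error propagation through the error-flip decoder. I would argue by contradiction: suppose there is some $V\subseteq V_{\mathrm{run}}$ of size $|V|\geq\eta_{\mathrm{run}}$ inducing a connected subgraph of $G_{\mathrm{run}}$ with $|V\cap(S_{\cF}\cup S_{\mathrm{syn}}\cup S_{\mathrm{err}})|/|V|\geq 150w^7\gamma_{\mathrm{run}}$. As in the previous claim, I would first enlarge $V$ by greedily absorbing any point of $(S_{\cF}\cup S_{\mathrm{syn}}\cup S_{\mathrm{err}})\setminus V$ that lies in the $G_{\mathrm{run}}$-neighborhood of $V$. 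The resulting $V$ still has size $\geq\eta_{\mathrm{run}}$ and density $\geq 150w^7\gamma_{\mathrm{run}}$, and in addition contains fully or not at all every connected component of the subgraph of $G^{\cC}_{i-1,i,i+1}\subseteq G_{\mathrm{run}}$ induced by $S_{\mathrm{err}}$ (and similarly for $S_{\mathrm{syn}}$).

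Next, I would observe that the input to \FnSSFlipErr{$e_{X,\bar L}+a^i;i,\cC_{\bar L}^*$} is supported in $\supp(e_{X,\bar L})\cup\supp(a^i)\subseteq S_{\cF}\cup S_{\mathrm{syn}}$, where the first inclusion is immediate from the definition of $S_{\cF}$ and the second follows because $a^i$ is produced by the syndrome-flip decoder and thus lies in its footprint. Applying Lemma~\ref{lem:footprint} component-by-component to the subgraph of $G^{\cC}_{i-1,i,i+1}$ induced by $S_{\mathrm{err}}$ inside $V$, with $\gamma_{\mathrm{err}}=1/8w^4$, then yields
\begin{equation*}
  |V\cap S_{\mathrm{err}}| \leq |V\cap\supp(e_{X,\bar L}+a^i)|/\gamma_{\mathrm{err}} \leq 8w^4\cdot|V\cap(S_{\cF}\cup S_{\mathrm{syn}})|.
\end{equation*}

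Finally I would invoke Claim~\ref{claim:sdsynavoid} to bound $|V\cap(S_{\cF}\cup S_{\mathrm{syn}})|<15w^3\gamma_{\mathrm{run}}|V|$ (using that $V$ has size $\geq\eta_{\mathrm{run}}$ and is connected in $G_{\mathrm{run}}$), and combine:
\begin{equation*}
  |V\cap(S_{\cF}\cup S_{\mathrm{syn}}\cup S_{\mathrm{err}})| \leq (1+8w^4)\cdot|V\cap(S_{\cF}\cup S_{\mathrm{syn}})| < (1+8w^4)\cdot 15w^3\gamma_{\mathrm{run}}|V| \leq 135w^7\gamma_{\mathrm{run}}|V|,
\end{equation*}
which contradicts the assumed density $\geq 150w^7\gamma_{\mathrm{run}}$.

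The main (minor) obstacle is just the bookkeeping of which graph is used where: Lemma~\ref{lem:footprint} is stated for connected components in $G^{\cC}_{i-1,i,i+1}$, whereas $V$ is connected in the larger graph $G_{\mathrm{run}}$. This is handled exactly as in Claim~\ref{claim:sdsynavoid}, by noting that $G^{\cC}_{i-1,i,i+1}\subseteq G_{\mathrm{run}}$ and that the greedy-expansion step ensures $V$ respects the component structure of both $S_{\mathrm{syn}}$ and $S_{\mathrm{err}}$ in the smaller graph, so one may sum the per-component bounds from Lemma~\ref{lem:footprint}. Everything else is a one-line combination of the previous claim with the analogous footprint bound.
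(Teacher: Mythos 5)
Your proof is correct and is essentially the same as the paper's: both argue by contradiction, greedily enlarge $V$ so that it respects the component structure of the relevant induced subgraphs, apply Lemma~\ref{lem:footprint} with $\gamma_{\mathrm{err}}=1/8w^4$ and the inclusion $\supp(e_{X,\bar{L}}+a^i)\subseteq S_{\cF}\cup S_{\mathrm{syn}}$ to bound $|V\cap S_{\mathrm{err}}|$, invoke Claim~\ref{claim:sdsynavoid} for the density of $S_{\cF}\cup S_{\mathrm{syn}}$ in $V$, and then combine to reach a contradiction (you make the final arithmetic $15w^3+120w^7\leq 135w^7<150w^7$ explicit, whereas the paper leaves it implicit, but the argument is the same).
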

  \begin{proof}
    Assume for a contradiction that there exists some $V\subseteq V_{\mathrm{run}}$ of size $|V|\geq\eta_{\mathrm{run}}$ that induces a connected subgraph of $G_{\mathrm{run}}$ with $|V\cap(S_{\cF}\cup S_{\mathrm{syn}}\cup S_{\mathrm{err}})|\geq 150w^7\gamma_{\mathrm{run}}$. Similarly as in Claim~\ref{claim:sdsynavoid}, we may repeatedly add points in $(S_{\cF}\cup S_{\mathrm{syn}}\cup S_{\mathrm{err}})\setminus V$ that lie in the neighborhood of $V$, in order to assume that $V$ contains all or none of every connected component of the subgraph of $G_{\mathrm{run}}$ induced by $S_{\cF}\cup S_{\mathrm{syn}}\cup S_{\mathrm{err}}$.

    By Lemma~\ref{lem:footprint}, we have
    \begin{align*}
      |V\cap S_{\mathrm{err}}|
      &\leq |V\cap\supp(e_{X,\bar{L}}+a^i)|/\gamma_{\mathrm{err}} \\
      &\leq |V\cap(S_{\cF}\cup S_{\mathrm{syn}})| \cdot 8w^4 \\
      &< 15w^3\gamma_{\mathrm{run}}|V| \cdot 8w^4,
    \end{align*}
    where the third inequality above holds by Claim~\ref{claim:sdsynavoid}. We then again apply Claim~\ref{claim:sdsynavoid} to conclude that
    \begin{equation*}
      |V\cap(S_{\cF}\cup S_{\mathrm{syn}}\cup S_{\mathrm{err}})| \leq |V\cap(S_{\cF}\cup S_{\mathrm{syn}})| + |V\cap S_{\mathrm{err}}| < (8w^4+1)15w^3\gamma_{\mathrm{run}}|V|,
    \end{equation*}
    which contradicts the assumption that $|V\cap(S_{\cF}\cup S_{\mathrm{syn}}\cup S_{\mathrm{err}})|\geq 150w^7\gamma_{\mathrm{run}}$, as desired.
  \end{proof}

  Recall that we use $\tilde{a}^{i-1},\tilde{a}^i$ to denote the variables that are updated throughout the execution of and then returned by \FnSSFlipErr{$e_{X,\bar{L}}+a^i;i,\cC_{\bar{L}}^*$}. As shown above, we have $\delta^{\cC_{\bar{L}}}(e_{X,\bar{L}}+a^i)$, and hence the entire execution of \FnSSFlipErr{$e_{X,\bar{L}}+a^i;i,\cC_{\bar{L}}^*$} will have $\tilde{a}^i=0$; that is, every iteration of the while loop (in line~\ref{li:fewhile} of Algorithm~\ref{alg:ssflip}) that does not fail must add some $c^{i-1}$ in to $\tilde{a}^{i-1}$. By definition, every such $c^{i-1}$ must have $\supp(c^{i-1})$ contained within the neighborhood in $G^{\cC}_{i-1,i,i+1}\subseteq G_{\mathrm{run}}$ of the footprint from the previous step. As a consequence, if we restrict $e_{X,\bar{L}}+a^i$ to a connected component $V$ if $S_{\cF}\cup S_{\mathrm{syn}}\cup S_{\mathrm{err}}$ (and zero out all values outside of $V$), the output of \FnSSFlipErr{$e_{X,\bar{L}}+a^i|_{V\cap C_{\bar{L}}^i};i,\cC_{\bar{L}}^*$} must equal the restriction $\tilde{a}^{i-1}|_{V\cap C_{\bar{L}}^{i-1}},\tilde{a}^i|_{V\cap C_{\bar{L}}^i}$ of the output $\tilde{a}^{i-1},\tilde{a}^i$ of \FnSSFlipErr{$e_{X,\bar{L}}+a^i;i,\cC_{\bar{L}}^*$} (see Lemma~\ref{lem:sslocal}).

  Recall that by assumption $\eta_{\mathrm{run}}\leq m$. Also by definition, $\gamma_{\mathrm{run}}\leq 1/150w^7$, so it follows by Claim~\ref{claim:sderravoid} that the subgraph of $G_{\mathrm{run}}$ induced by $S_{\cF}\cup S_{\mathrm{syn}}\cup S_{\mathrm{err}}$ has no connected components containing $\geq\eta_{\mathrm{run}}$ vertices. Therefore within every connected component $V$ of this induced subgraph, by the assumption that $\cC_{\bar{L}}^*$ has a $m$-small-set error-flip decoder at level $i$, the while loop in the restricted execution \FnSSFlipErr{$e_{X,\bar{L}}+a^i|_{V\cap C_{\bar{L}}^i};i,\cC_{\bar{L}}^*$} will not terminate until the computed output $\tilde{a}^{i-1}|_{V\cap C_{\bar{L}}^{i-1}}$ satisfies $\delta^{\cC_{\bar{L}}}(\tilde{a}^{i-1}|_{V\cap C_{\bar{L}}^{i-1}})=e_{X,\bar{L}}+a^i|_{V\cap C_{\bar{L}}^i}$; recall from above that we will have $\tilde{a}^i|_{V\cap C_{\bar{L}}^i}=0$. It also follows that the output will always be such a valid $\tilde{a}^{i-1}|_{V\cap C_{\bar{L}}^{i-1}}$, and never FAIL. Therefore we conclude that
  \begin{equation}
    \label{eq:sdta}
    \delta^{\cC_{\bar{}}}(\tilde{a}^{i-1}) = e_{X,\bar{L}}+a^i,
  \end{equation}
  and by definition $\supp(\tilde{a}^{i-1})\subseteq S_{\mathrm{err}}$.

  As in Lemma~\ref{lem:nesd}, for every $b_{\bar{L}}\in B^i(\cC_{\bar{L}})$, fix an arbitrary $c_{\bar{L}}(b_{\bar{L}})\in\cC_{\bar{L}}^{i-1}$ satisfying $\delta^{\cC_{\bar{L}}}(c_{\bar{L}}(b_{\bar{L}}))=b_{\bar{L}}$. Then it follows by~(\ref{eq:zexpressionsnoisy}) that for every measurement result $z$, then $c_{\bar{L}}(b|_{C_{\bar{L}}^i})+\tilde{a}^{i-1}$ is a valid choice for $c_{\bar{L}}$ in line~\ref{li:sdge} of Algorithm~\ref{alg:switchdown}. Thus because no nonzero element of $\im(\delta^{\cC}_{\bar{L}})$ lies in $\im(\Enc_{\bar{L}}^{\cC,i-1})$ (as every nonzero element of the latter lies in a nontrivial cohomology class; see the proof of Lemma~\ref{lem:nesd}), there is a unique valid value of $\delta^{\cC_{\bar{L}}}(c_{\bar{L}})$, so we must have
  \begin{equation*}
    c_{\bar{L}} = c_{\bar{L}}(b|_{C_{\bar{L}}^i})+\tilde{a}^{i-1}+z^{i-1}
  \end{equation*}
  for some cocycle $z^{i-1}\in Z^{i-1}(\cC_{\bar{L}})$, and hence
  \begin{equation}
    \label{eq:sdsamec}
    \delta^{\cC}(c_{\bar{L}}) = \delta^{\cC}(c_{\bar{L}}(b|_{C_{\bar{L}}^i})+\tilde{a}^{i-1})
  \end{equation}
  because $\delta^{\cC}(z^{i-1})=0$ as every nonzero element of $\delta^{\cC}(\cC_{\bar{L}}^{i-1})$ has nontrivial support inside $C_{\bar{L}}^i$ (see the proof of Lemma~\ref{lem:nesd}).

  Recall that the entire analysis above also applies to the primed variables $e_{X,\bar{L}}',\dots$, so we also have
  \begin{equation}
    \label{eq:sdtap}
    \delta^{\cC_{\bar{L}}}((\tilde{a}^{i-1})') = e_{X,\bar{L}}'+a^i,
  \end{equation}
  with $\supp(\tilde{a}^{i-1})\subseteq S_{\mathrm{err}}'$ and
  \begin{equation}
    \label{eq:sdsamecp}
    \delta^{\cC}(c_{\bar{L}}) = \delta^{\cC}(c_{\bar{L}}(b'|_{C_{\bar{L}}^i})+(\tilde{a}^{i-1})').
  \end{equation}

  By~(\ref{eq:zexpressionsnoisy}) and Claim~\ref{claim:zdecomp}, for every $x_{\bar{L}}\in\cM_{\bar{L}}^{\cC,i},\; b_{\bar{L}}\in B^i(\cC_{\bar{L}})$, there exists a unique choice of $x_{\bar{L}}'\in\cM_{\bar{L}}^{\cC,i},\; b_{\bar{L}}'\in B^i(\cC_{\bar{L}})$ such that
  \begin{equation*}
    \Enc_{\bar{L}}^{\cC}(x_{\bar{L}})+b_{\bar{L}}+e_{X,\bar{L}} = \Enc_{\bar{L}}^{\cC}(x_{\bar{L}}')+b_{\bar{L}}'+e_{X,\bar{L}}'.
  \end{equation*}
  By~(\ref{eq:sdta}) and~(\ref{eq:sdtap}), we have $\delta^{\cC_{\bar{L}}}(\tilde{a}^{i-1})=e_{X,\bar{L}}+e_{X,\bar{L}}'$, so $\Enc_{\bar{L}}^{\cC}(x_{\bar{L}}+x_{\bar{L}}')\in B^i(\cC_{\bar{L}})$, and thus we must have $x_{\bar{L}}=x'_{\bar{L}}$ because $\Enc_{\bar{L}}^{\cC}$ maps every nonzero input to a nontrivial cohomology class. Thus
  \begin{equation}
    \label{eq:sdbebep}
    b_{\bar{L}}+e_{X,\bar{L}} = b_{\bar{L}}'+e_{X,\bar{L}}'.
  \end{equation}
  
  Therefore by analogous reasoning as used to derive (\ref{eq:sdneequiv}) in the proof of Lemma~\ref{lem:nesd}, the state~(\ref{eq:sdcoll}) can be expressed as
  \begin{align}
    \label{eq:sdequiv}
    \begin{split}
      \hspace{1em}&\hspace{-1em} \sum_{x_{\bar{L}}\in\cM_{\bar{L}}^{\cC,i},\; b_{\bar{L}}\in B^i(\cC_{\bar{L}})}Z^{e_{Z,\bar{L}}}\ket{\Enc_{\bar{L}}^{\cC}(x_{\bar{L}})+b_{\bar{L}}+e_{X,\bar{L}}}\bra{\Enc_{\bar{L}}^{\cC}(x_{\bar{L}})+b_{\bar{L}}+e_{X,\bar{L}}}Z^{e_{Z',\bar{L}}} \\
      &\otimes \sum_{x_L,x_L'\in\cM^{\cA,i-1}\otimes\bF_2^{L^{\cB,1}},\; b_L,b_L'\in B^{i-1}(\cA)\otimes\bF_2^{L^{\cB,1}}} \\
      &\hspace{5em}\rho_{x,x'} Z^{e_{Z,L}}X^{e_{X,L}}\ket{(\Enc^{\cA,i-1})^{L^{\cB,1}}(x_L)+b_L+(\delta^{\cC}(c_{\bar{L}}(b_{\bar{L}}))|_{A^{i-1}\times L^{\cB,1}})} \\
      &\hspace{12em} \bra{(\Enc^{\cA,i-1})^{L^{\cB,1}}(x_L')+b_L'+(\delta^{\cC}(c_{\bar{L}}(b_{\bar{L}}'))|_{A^{i-1}\times L^{\cB,1}})}X^{e_{X,L}'}Z^{e_{Z,L}'} \\
      &\otimes \nu_{x,x'},
    \end{split}
  \end{align}
  where above $b_{\bar{L}}'$ is defined by~(\ref{eq:sdbebep}), that is $b_{\bar{L}}'=b_{\bar{L}}+e_{X,\bar{L}}+e_{X,\bar{L}}'$.

  Then running line~\ref{li:sdret} of Algorithm~\ref{alg:switchdown} on the state in~(\ref{eq:sdequiv}) simply applies the correction $X^{\delta^{\cC}(c_{\bar{L}})|_{A^{i-1}\times L^{\cB,1}}}$ and incurs the fault $\cF_2$ at timestep $t=2$, so it follows by~(\ref{eq:sdsamec}) and~(\ref{eq:sdsamecp}), the final state is
  \begin{align}
    \label{eq:sdfinal}
    \begin{split}
      \hspace{1em}&\hspace{-1em} \sum_{x_{\bar{L}}\in\cM_{\bar{L}}^{\cC,i},\; b_{\bar{L}}\in B^i(\cC_{\bar{L}})}Z^{f_{Z,\bar{L}}}X^{f_{X,\bar{L}}}\ket{\Enc_{\bar{L}}^{\cC}(x_{\bar{L}})+b_{\bar{L}}}\bra{\Enc_{\bar{L}}^{\cC}(x_{\bar{L}})+b_{\bar{L}}}X^{e_{X,\bar{L}}+e_{X,\bar{L}}'+f_{X,\bar{L}}'}Z^{f_{Z,\bar{L}}} \\
      &\otimes \sum_{x_L,x_L'\in\cM^{\cA,i-1}\otimes\bF_2^{L^{\cB,1}},\; b_L,b_L'\in B^{i-1}(\cA)\otimes\bF_2^{L^{\cB,1}}} \\
      &\hspace{5em}\rho_{x,x'} Z^{f_{Z,L}}X^{f_{X,L}}\ket{(\Enc^{\cA,i-1})^{L^{\cB,1}}(x_L)+b_L+(\delta^{\cC}(\tilde{a}^{i-1})|_{A^{i-1}\times L^{\cB,1}})} \\
      &\hspace{12em} \bra{(\Enc^{\cA,i-1})^{L^{\cB,1}}(x_L')+b_L'+(\delta^{\cC}(\tilde{a}^{i-1})'|_{A^{i-1}\times L^{\cB,1}})}X^{f_{X,L}'}Z^{f_{Z,L}'} \\
      &\otimes \nu_{x,x'},
    \end{split}
  \end{align}
  where all $f_\alpha$ and $f_\alpha'$ are obtained by adding the errors from $\cF_2$ in to $e_\alpha$ and $e_\alpha'$, respectively. Therefore by definition $f_X=(f_{X,\bar{L}},f_{X,L})$ and $f_Z=(f_{Z,\bar{L}},f_{Z,L})$ are supported inside $S_{\cF}$.

  Algorithm~\ref{alg:switchdown} returns the trace over the first register (i.e.~qubits in $C_{\bar{L}}^i$) of the state in~(\ref{eq:sdfinal}). If $e_{X,\bar{L}}+e_{X,\bar{L}}'+f_{X,\bar{L}}+f_{X,\bar{L}}'\neq 0$, then this state is $0$, so assume that $e_{X,\bar{L}}+e_{X,\bar{L}}'+f_{X,\bar{L}}+f_{X,\bar{L}}'=0$.

  If $f_{Z,\bar{L}}+f_{Z,\bar{L}}'\notin Z_i(\cC_{\bar{L}})=B^i(\cC_{\bar{L}})^\perp$, then for every $x_{\bar{L}}\in\cM_{\bar{L}}^{\cC,i}$, half the terms in the sum over $b_{\bar{L}}\in B^i(\cC_{\bar{L}})$ receive a $+1$ phase from the Paulis $Z^{f_{Z,\bar{L}}},Z^{f_{Z,\bar{L}}'}$ and half receive a $-1$ phase, so the overall state vanishes. Thus assume that $f_{Z,\bar{L}}+f_{Z,\bar{L}}'\in Z_i(\cC_{\bar{L}})$. Recall that $\supp(f_{Z,\bar{L}}+f_{Z,\bar{L}}')\subseteq S_{\cF}$. Because $\gamma_{\mathrm{run}}\leq 1/3$, Claim~\ref{claim:sdSFavoid} implies that the subgraph of $G_{\mathrm{run}}$ induced by $S_{\cF}$ has no connected components with $\geq\eta_{\mathrm{run}}$ vertices. Hence every connected component of the subgraph of $G^{\cC_{\bar{L}}}_i\subseteq G_{\mathrm{run}}$ induced by $\supp(f_{Z,\bar{L}}+f_{Z,\bar{L}}')$ has $<\eta_{\mathrm{run}}$ vertices, and by assumption $\eta_{\mathrm{run}}\leq d_i(\cC_{\bar{L}})$. The restriction of $f_{Z,\bar{L}}+f_{Z,\bar{L}}'$ to every such connected component must lie in $Z_i(\cC_{\bar{L}})$, as if any two such connected components shared a parity-check in $C_{\bar{L},i-1}$, they would necessarily be connected in $G^{\cC_{\bar{L}}}_i$, a contradiction. But then by the definition of the $i$-systolic distance $d_i(\cC_{\bar{L}})$, it follows that the restriction of $f_{Z,\bar{L}}+f_{Z,\bar{L}}'$ is a boundary in $B_i(\cC_{\bar{L}})$, and thence $f_{Z,\bar{L}}+f_{Z,\bar{L}}'\in B_i(\cC_{\bar{L}})$. But then because every $\Enc_{\bar{L}}^{\cC}(x_{\bar{L}})+b_{\bar{L}}\in Z^i(\cC_{\bar{L}})=B_i(\cC_{\bar{L}})^\perp$, every term in the sum over $x_{\bar{L}},b_{\bar{L}}$ receives the same phase $(-1)^{(f_{Z,\bar{L}}+f_{Z,\bar{L}})\cdot(e_{X,\bar{L}}+e_{X,\bar{L}}'+f_{X,\bar{L}}+f_{X,\bar{L}}')}$. Hence the final output returned by Algorithm~\ref{alg:switchdown} is proportional to
  \begin{align*}
    \begin{split}
      \hspace{1em}&\hspace{-1em} \sum_{x_{\bar{L}}\in\cM_{\bar{L}}^{\cC,i}}\; \sum_{x_L,x_L'\in\cM^{\cA,i-1}\otimes\bF_2^{L^{\cB,1}},\; b_L,b_L'\in B^{i-1}(\cA)\otimes\bF_2^{L^{\cB,1}}} \\
      &\hspace{3em}\rho_{x,x'} Z^{f_{Z,L}}X^{f_{X,L}+(\delta^{\cC}(\tilde{a}^{i-1})|_{A^{i-1}\times L^{\cB,1}})}\ket{(\Enc^{\cA,i-1})^{L^{\cB,1}}(x_L)+b_L} \\
      &\hspace{15em} \bra{(\Enc^{\cA,i-1})^{L^{\cB,1}}(x_L')+b_L'}X^{f_{X,L}'+(\delta^{\cC}(\tilde{a}^{i-1})'|_{A^{i-1}\times L^{\cB,1}})}Z^{f_{Z,L}'} \\
      &\otimes \nu_{x,x'},
    \end{split}
  \end{align*}
  which differs from the state
  \begin{align*}
    \hspace{1em}&\hspace{-1em}\sum_{x_{\bar{L}}\in\cM_{\bar{L}}^{\cC,i}}\; ((\Enc^{\cA,i-1})^{L^{\cB,1}}\otimes\cI_{\cL})\left(\sum_{x_L,x_L'\in\cM^{\cA,i-1}\otimes\bF_2^{L^{\cB,1}}}\rho_{x,x'}\ket{x_L}\bra{x_L'}\otimes\nu_{x,x'}\right) \\
                &= ((\Enc^{\cA,i-1})^{L^{\cB,1}}\otimes\cI_{\cL})(\tr_{M^{\cC,i}\setminus (M^{\cA,i-1}\times L^{\cB,1})}(\rho))
  \end{align*}
  by a Pauli error supported inside
  \begin{equation*}
    S_{\cF}\cup S_{\mathrm{run}}\cup S_{\mathrm{err}}\cup S_{\mathrm{err}}' = (S_{\cF}\cup S_{\mathrm{run}}\cup S_{\mathrm{err}}) \cup S_{\cF}\cup S_{\mathrm{run}}\cup S_{\mathrm{err}'}.
  \end{equation*}
  The set above is $\cC_{\mathrm{out}}=\cE(G_{\mathrm{run}},\; \eta_{\mathrm{run}},\; 300w^7\gamma_{\mathrm{run}})$-avoiding by Claim~\ref{claim:sderravoid}, so we have shown that the output of Algorithm~\ref{alg:switchdown} is a Pauli $\cE_{\mathrm{out}}$-deviation of $((\Enc^{\cA,i-1})^{L^{\cB,1}}\otimes\cI_{\cL})(\tr_{M^{\cC,i}\setminus (M^{\cA,i-1}\times L^{\cB,1})}(\rho))$, as desired.
\end{proof}

\section{Upwards Code Switching Gadget via Teleportation}
\label{sec:switchup}
This section presents our gadget for switching from an $(r-1)$-dimensional code to an $r$-dimensional code. Specifically, we prepare logical bell pairs between the two codes, which we use to teleport a state from the lower-dimensional code to the higher-dimensional code.

We will use our downwards code switching gadget from Proposition~\ref{prop:switchdown} to prepare these logical bell pairs. In Remark~\ref{remark:sulowdim} in Appendix~\ref{app:suproof}, we explain why we need to use this gadget when $r=3$, and how we can avoid using it when $r\geq 4$.

We describe notation and state our upwards-switching result below.
We prove the result in Appendix~\ref{app:suproof}, as it is a fairly direct application of other gadgets we develop in this paper, some of which we only present in the sections below. Specifically, the logical bell pair preparation will require the state preparation in Proposition~\ref{prop:stateprep}, the downwards code switching in Proposition~\ref{prop:switchdown}, and the transvesal $CNOT$ in Lemma~\ref{lem:CNOTdiff}; the teleportation will then require the transersal $CNOT$ in Lemma~\ref{lem:CNOTsame} and the logical Pauli measurements in Lemma~\ref{lem:measure}. The application of these gadgets is perhaps unsurprising, given that a teleportation circuit can be performed using qubit state preparation, $CNOT$ gates, Pauli measurements, and unitary Pauli gates (see Figure~\ref{fig:teleport}).

Fix $r\in\bN$, and define $\cA^*$, $\cM^{\cA,*}$, $\Enc^{\cA}$, $\cB^*$, $\cM^{\cB,*}$, $L^{\cB,1}$, $\Enc^{\cB}$, $\cB_{\bar{L}}^*$, $\cM^{\cB,*}$, $\Enc_{\bar{L}}^{\cB}$, $\cC^*$, $\cM^{\cC,*}$, $\Enc^{\cC}$ as in Section~\ref{sec:csstate}.

Fix some $2\leq i\leq r-1$. Assume that the chain complex\footnote{\label{footnote:labinv} Recall the label inversion for chain complexes vs cochain complexes; if $\cC_*$ here were viewed as a cochain complex, it would have a $(m,0)$-small-set flip decoder at level $r-(i-1)$.} $\cC_*=\cA_*\otimes\cB_*$ has a $(m,0)$-small-set flip decoder at level $i-1$. 
Also assume that the cochain complex\footnote{A further clarification following Footnote~\ref{footnote:labinv}: here $\cD^*=\cA^*\otimes\cB_*$ is not equal to $\cC^*=\cA^*\otimes\cB^*$, as we have dualized $\cB$.} $\cD^*:=\cA^*\otimes\cB_*$ has a $(m,0)$-small-set flip decoder at level $i$, 
and that the chain complex $\cD_{\bar{L},*}:=\cA_*\otimes\cB_{\bar{L}}^*$ has a $(m,0)$-small-set flip decoder at level $i-1$. 

We also let $\Enc^{\cD}$, $\Enc_{\bar{L}}^{\cD}$ be encoding maps for $\cD$, $\cD_{\bar{L}}$ respectively that are defined analogously to $\Enc^{\cC}$, $\Enc_{\bar{L}}^{\cC}$ in Section~\ref{sec:csstate} (with $\cD$ replacing $\cC$).

Let $G_{\mathrm{run}}=(V_{\mathrm{run}},E_{\mathrm{run}})$ be some graph that contains\footnote{Note that by definition there is an isomorphism between the basis sets $C\cong D$ given by an appropriate permutation, which induces an isomorphism between the graphs $G^{\cC}_{[r]}\cong G^{\cD}_{[r]}$.} $G^{\cC}_{[r]}\cong G^{\cD}_{[r]}\subseteq G_{\mathrm{run}}$ as a subgraph. There exists some sufficiently small $\gamma_{\mathrm{in}}(w)>0$ and some sufficiently large $\zeta_{\mathrm{out}}(w)>0$ such that for every
\begin{align*}
  \eta_{\mathrm{in}} &\leq \min\{m,\; d_{i-1}(\cA),\; d^{i-1}(\cA),\; d^i(\cD_{\bar{L}})\} \\
  \gamma_{\mathrm{in}} &\leq \gamma_{\mathrm{in}}(w) \\
  \gamma_{\mathrm{out}} &\geq \zeta_{\mathrm{out}}(w)\cdot\gamma_{\mathrm{in}},
\end{align*}
then Proposition~\ref{prop:switchup} below holds.

Let $Q_{\mathrm{in}}$ be the $[[n_{\mathrm{in}},k_{\mathrm{in}}]]$ CSS code consisting of $|L^{\cB,1}|$ copies of the CSS code associated to level $i-1$ of $\cA^*$, and let $Q_{\mathrm{out}}$ be the $[[n_{\mathrm{out}},k_{\mathrm{out}}]]$ CSS code associated to level $i$ of $\cC^*$. Let
\begin{align*}
  \cE_{\mathrm{in}} &= \cE(G^{\cA}_{[r-1]},\; \eta_{\mathrm{in}},\; \gamma_{\mathrm{in}}) \\
  \cE_{\mathrm{run}} &= \cE(G_{\mathrm{run}},\; \eta_{\mathrm{in}},\; \gamma_{\mathrm{in}}) \\
  \cE_{\mathrm{out}} &= \cE(G_{\mathrm{run}},\; \eta_{\mathrm{in}},\; \gamma_{\mathrm{out}}),
\end{align*}
and define the decorated codes
\begin{align*}
  D_{\mathrm{in}} &= (Q_{\mathrm{in}},\; \Enc_{\mathrm{in}}=(\Enc^{\cA})^{\sqcup L^{\cB,1}},\; \cE_{\mathrm{in}}^{\sqcup L^{\cB,1}}) \\
  D_{\mathrm{out}} &= (Q_{\mathrm{out}},\; \Enc_{\mathrm{out}}=\Enc^{\cC},\; \cE_{\mathrm{out}}).
\end{align*}
In these decorated codes, the $n_{\mathrm{in}}$ physical qubits of $Q_{\mathrm{in}}$ are naturally associated with the set $A^{i-1}\times L^{\cB,1}$, while the $n_{\mathrm{out}}$ physical qubits of $Q_{\mathrm{out}}$ are naturally associated with the set $C^i$; both sets are naturally subsets of $C^i\subseteq V_{\mathrm{run}}$.

Similarly as described in Section~\ref{sec:csstate} (but with the input and output codes reversed), the logical qubits $M^{\cA,i-1}\times L^{\cB,1}$ are naturally a subset of the logical qubits $M^{\cC,i}$ of $Q_{\mathrm{out}}$. We therefore let $\bar{\cO}:\bC^{2^{k_{\mathrm{in}}}\times 2^{k_{\mathrm{in}}}} \rightarrow \bC^{2^{k_{\mathrm{out}}}\times 2^{k_{\mathrm{out}}}}$ be the channel that acts as the identity on all the input qubits, and pads the input with an additional $k_{\mathrm{out}}-k_{\mathrm{in}}$ qubits in the $\ket{0}$ state. That is, $\bar{\cO}(\rho)=\rho\otimes(\ket{0}\bra{0})^{\otimes k_{\mathrm{out}}-k_{\mathrm{in}}}$.

\begin{proposition}
  \label{prop:switchup}
  Define all variables as above in Section~\ref{sec:switchup}. Then there exists a Pauli fault-tolerant gadget $((\cQ,(\cE_{\mathrm{run}}^{\sqcup u})^{\sqcup T}),D_{\mathrm{in}},D_{\mathrm{out}})$ for $\bar{\cO}$, where $\cQ$ is an adaptive quantum circuit using quantum space $N=O(n_{\mathrm{out}}w)$ and time $T=O(nw)$, and $u=O(1)$.
\end{proposition}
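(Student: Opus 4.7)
The plan is to realize the upwards switching by logical teleportation, using a logical Bell pair between an $(r{-}1)$-dimensional codeblock (the collection of $|L^{\cB,1}|$ copies of the $\cA^*$-code) and an $r$-dimensional codeblock (a $\cC^*$-code), together with the input state that already lives on an $(r{-}1)$-dimensional code. Concretely, the circuit $\cQ$ proceeds in the following stages:
\begin{enumerate}
\item \textbf{Ancilla preparation.} Prepare two $r$-dimensional $\cC^*$-codeblocks, one in the all-logical-$\ket{0}$ state and one in the all-logical-$\ket{+}$ state, by invoking Proposition~\ref{prop:stateprep} twice.
\item \textbf{Bell-pair creation.} Apply the downwards code-switching gadget of Proposition~\ref{prop:switchdown} (in its dual/$X$-basis form, cf.~Remark~\ref{remark:sdexchange}) to one of the two $r$-dimensional codeblocks, obtaining $|L^{\cB,1}|$ copies of the $\cA^*$-code. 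Then apply the cross-dimensional transversal $CNOT$ gadget (Lemma~\ref{lem:CNOTdiff}) between this $(r{-}1)$-dimensional block and the remaining $r$-dimensional $\ket{+}$ (or $\ket{0}$) block; this entangles the $k_{\mathrm{in}}$ logical qubits sitting on $M^{\cA,i-1}\times L^{\cB,1}$ with their counterparts inside $M^{\cC,i}$, while leaving the padding qubits in $M^{\cC,i}\setminus(M^{\cA,i-1}\times L^{\cB,1})$ in logical $\ket{0}$.
\item \textbf{Teleportation step.} Apply the same-dimensional transversal $CNOT$ gadget (Lemma~\ref{lem:CNOTsame}) between the input $(r{-}1)$-dimensional codeblock and the $(r{-}1)$-dimensional half of the Bell pair, then perform a transversal logical $X$-basis measurement on the input block and a transversal logical $Z$-basis measurement on the Bell-pair half via Lemma~\ref{lem:measure}.
\item \textbf{Pauli correction.} Using the classical measurement outcomes, compute and apply a Pauli correction on the surviving $r$-dimensional codeblock, completing the logical teleportation of the $k_{\mathrm{in}}$ input qubits into the first $k_{\mathrm{in}}$ logical slots of $Q_{\mathrm{out}}$; the remaining logical qubits are $\ket{0}$ by construction of the Bell pair.
\end{enumerate}

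The size bounds $N=O(n_{\mathrm{out}}w)$ and $T=O(nw)$ follow immediately by summing the (constant-number-of) gadget space/time costs from Propositions~\ref{prop:stateprep} and~\ref{prop:switchdown} and Lemmas~\ref{lem:CNOTdiff}, \ref{lem:CNOTsame}, and~\ref{lem:measure}; the parameter $u=O(1)$ records how many copies of $\cE_{\mathrm{run}}$ arise from stacking the various gadgets in parallel. The noiseless behavior is the standard teleportation identity, so the state computation is essentially mechanical once the Bell pair is produced.

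\textbf{Fault-tolerance argument.} The real content is propagating the decorated-code structure through the composition. For this we use Proposition~\ref{prop:parcomp} to compose gadgets in parallel, and we chain serial compositions by checking that the output bad-set family of each gadget is contained in the input bad-set family of the next. Every intermediate gadget is stated with bad sets of the form $\cE(G_{\mathrm{run}}, \eta, \gamma)$ on (a subgraph of) the same connectivity graph $G_{\mathrm{run}}$; the density parameter grows by a bounded multiplicative factor $\zeta_*(w)$ at each stage, so one picks $\gamma_{\mathrm{in}}(w)$ small enough that after the constantly-many stages the cumulative density is still below the tolerance required by the next gadget, and $\zeta_{\mathrm{out}}(w)$ large enough that the final density fits inside $\cE_{\mathrm{out}}$. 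The Pauli-correction step is handled as in Proposition~\ref{prop:switchdown}: the measurement outcomes can be corrupted only by errors in $\cE_{\mathrm{run}}$-sized connected components, and the cosystolic/systolic bounds $\eta_{\mathrm{in}}\leq\min\{d_{i-1}(\cA),d^{i-1}(\cA),d^i(\cD_{\bar L})\}$ are precisely what ensures that any $Z$-type residual supported on a small connected component lies in the relevant boundary subspace and thus preserves the logical state.

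\textbf{Main obstacle.} The only nontrivial bookkeeping is the Bell-pair creation: one must verify that applying the $X$-basis downwards switch to one of the $r$-dimensional blocks and then a cross-dimensional transversal $CNOT$ really produces $|L^{\cB,1}|$ copies of an encoded logical Bell pair across the $(r{-}1)$- and $r$-dimensional codes, with padding logical $\ket{0}$ in the unused slots of $M^{\cC,i}\setminus(M^{\cA,i-1}\times L^{\cB,1})$. This is where the need for the $(m,0)$-small-set flip decoders on both $\cD^*$ and $\cD_{\bar L,*}$ at level $i$ (for the cross-dimensional $CNOT$ and for the switch-down, respectively) enters; once those hypotheses are invoked, the teleportation identity and the parallel/serial composition lemmas give the result.
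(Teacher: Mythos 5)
Your high-level plan — build a logical Bell pair between the $(r{-}1)$-dimensional and $r$-dimensional codes, then teleport — is the same as the paper's. But your Step 1 and Step 2 pick the wrong codeblocks for the second ancilla, and this cannot be repaired under the hypotheses actually stated in Section~\ref{sec:switchup}.

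Concretely, you propose preparing \emph{two} $\cC^*$-codeblocks (one $\ket{0}$, one $\ket{+}$) and switching one of them down. But the hypotheses of Proposition~\ref{prop:switchup} only assume $(m,0)$-small-set flip decodability for the chain complex $\cC_*$ at level $i-1$ (which by Remark~\ref{remark:spexchange} lets you prepare logical $\ket{0^{k_{\mathrm{out}}}}$ on $Q_{\mathrm{out}}$), for the cochain complex $\cD^* = \cA^*\otimes\cB_*$ at level $i$, and for the chain complex $\cD_{\bar L,*}$ at level $i-1$. No decoder for the cochain complex $\cC^*$ at level $i+1$ is assumed, so Proposition~\ref{prop:stateprep} does not apply to prepare a logical $\ket{+}$ on level $i$ of $\cC^*$. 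Similarly, no decoder for $\cC_{\bar L}^*$ at level $i$ is assumed, so Proposition~\ref{prop:switchdown} does not apply to switch a $\cC^*$-codeblock down. Your Step 1 and Step 2 thus both appeal to gadgets whose hypotheses are not satisfied. The paper instead prepares the second ancilla on the $r$-dimensional complex $\cD^*=\cA^*\otimes\cB_*$ (dual of $\cC^*$ in the $\cB$-factor) at level $i-1$, using the decoder for $\cD^*$ at level $i$, and then switches that block down using the decoder for $\cD_{\bar L,*}$ at level $i-1$ — exactly the hypotheses that are assumed. This is the whole point of introducing $\cD$ in Section~\ref{sec:switchup}, and of Remark~\ref{remark:sulowdim}: when $r=3$, one \emph{cannot} directly prepare logical $\ket{+}$ on the $(r{-}1)$-dimensional block, and one also cannot get it by naively switching down a $\cC^*$-block, so a separate $\cD$-codeblock is needed.

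Your "main obstacle" paragraph actually names the correct hypotheses ($\cD^*$ and $\cD_{\bar L,*}$) but mis-attributes them: the cross-dimensional transversal $CNOT$ in Lemma~\ref{lem:CNOTdiff} is a depth-one unitary gadget and needs no small-set flip decoder at all, while the $\cD_{\bar L,*}$-decodability is what the switch-down actually needs — but only if the switch-down is applied to a $\cD$-codeblock, which your construction does not produce. To fix the proof you should replace the second of the two $\cC^*$-codeblocks in Step 1 with a codeblock on level $i-1$ of $\cD^*$, prepared via Proposition~\ref{prop:stateprep} using the assumed decoder for $\cD^*$ at level $i$, and then apply Proposition~\ref{prop:switchdown} in its dualized form (Remark~\ref{remark:sdexchange}) with the assumed decoder for $\cD_{\bar L,*}$. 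The remaining steps (cross-dimensional and same-dimensional $CNOT$ via Lemmas~\ref{lem:CNOTdiff} and~\ref{lem:CNOTsame}, logical measurements via Lemma~\ref{lem:measure}, Pauli correction, and the parallel/serial composition via Proposition~\ref{prop:parcomp}) are as you describe.
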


\begin{remark}
  \label{remark:suu}
  As described above (and more formally in Appendix~\ref{app:suproof}), the gadget $\cQ$ is entirely composed of (a constant number of) gadgets previously presented in the paper. Therefore the physical qubits are naturally partitioned into a constant number of blocks, each of which is naturally a subset of $C\cong D\subseteq V_{\mathrm{run}}$; the constant $u$ is the number of such blocks used.
\end{remark}

\section{State Preparation Gadget}
\label{sec:stateprep}
In this section, we present our gadget for preparing logical $\ket{0}$ and $\ket{+}$ states of quantum codes given by $\geq 3$-dimensional tensor products of classical lossless-expander codes. Our gadget crucially relies on the small-set flip decoder in Proposition~\ref{prop:ssflip}.

\begin{algorithm}
  \caption{\label{alg:stateprep} State preparation gadget in Proposition~\ref{prop:stateprep}.}
  \SetKwInOut{Input}{Input}
  \SetKwInOut{Output}{Output}

  \SetKwFunction{FnStatePrep}{StatePrep}
  \SetKwProg{Fn}{Function}{:}{}

  \Input{None}
  \Output{$\Enc(\ket{+^k}\bra{+^k})$}

  \Fn{\FnStatePrep{$i,\cC^*$}}{
    Initialize an $n=|C^i|$-qubit code register $\rho\gets\ket{+^{C^i}}\bra{+^{C^i}}$, and an additional $|C^{i+1}|$-qubit ancilla register $\alpha\gets\ket{0^{C^{i+1}}}\bra{0^{C^{i+1}}}$ \\ \label{li:spinit}
    Run \FnSyndExt{$\rho;Z,\delta_i$} from Algorithm~\ref{alg:syndext}, and let $s\in\cC^{i+1}$ be the resulting syndrome \\ \label{li:spse}
    Run \FnSSFlipSyn{$\delta(s);i+1,\cC^*$} from Algorithm~\ref{alg:ssflip}, and let $a^{i+1}\in\cC^{i+1}$ be the output \\ \label{li:spdec}
    Run Gaussian elimination to find some $x\in\cC^i$ with $\delta(x)=s+a^{i+1}$ \\ \label{li:spge}
    Apply the Pauli $X^x$ to $\rho$ \\ \label{li:spcorr}
    \Return{$\rho$} \label{li:spret}
  }
\end{algorithm}

\begin{proposition}
  \label{prop:stateprep}
  For $r\in\bN$ and $1\leq i\leq r-2$, let $\cC^*$ be an $r$-dimensional cochain complex of locality $w$ with a $(m,0)$-small-set flip decoder at level $i+1$ (see Definition~\ref{def:ssflip}). Let $Q$ be the $[[n,k]]$ CSS code associated to level $i$ of $\cC^*$, with an associated CSS encoding map $\Enc$ (see Definition~\ref{def:cctodec}). Given some graph $G_{\mathrm{run}}=(V_{\mathrm{run}},E_{\mathrm{run}})$ that contains $G^{\cC}_{i,i+1,i+2}\subseteq G_{\mathrm{run}}$ as a subgraph and some
  \begin{align*}
    \eta_{\mathrm{run}} &\leq m \\
    \gamma_{\mathrm{run}} &\leq \frac{1}{50w^7(w+10)2^{w+10}},
  \end{align*}
  let
  \begin{align*}
    \cE_{\mathrm{run}} &= \cE(G_{\mathrm{run}},\; \eta_{\mathrm{run}},\; \gamma_{\mathrm{run}}) \\
    \cE_{\mathrm{out}} &= \cE(G_{\mathrm{run}},\; \eta_{\mathrm{run}},\; 100w^7(w+10)2^{w+10}\gamma_{\mathrm{run}}),
  \end{align*}
  and define the decorated code (see Definition~\ref{def:deccode}) $D_{\mathrm{out}} = (Q,\;\Enc,\;\cE_{\mathrm{out}})$.
  Let $\bar{\cO}:\bC\rightarrow\bC^{2^k\times 2^k}$ be the channel given by $\bar{\cO}(1)=\ket{+^k}\bra{+^k}$.

  Then there exists a Pauli fault-tolerant gadget $((\cQ,\cE_{\mathrm{run}}^{\sqcup T}),\emptyset,D_{\mathrm{out}})$ for $\bar{\cO}$, where $\cQ$ is an adaptive quantum circuit using quantum space $N=O(nw)$ and time $T=O(w)$.
\end{proposition}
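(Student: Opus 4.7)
The plan is to mirror the three-step architecture used in the proof of Proposition~\ref{prop:switchdown}: (i) verify that the noiseless version of Algorithm~\ref{alg:stateprep} produces the encoded $\ket{+^k}$; (ii) propagate Pauli faults through the syndrome-extraction circuit to the instant just before the $Z$-measurements; (iii) control the resulting classical correction using the small-set flip decoder. For (i), the initial state $\ket{+^{C^i}}\propto\sum_{x\in\bF_2^{C^i}}\ket{x}$ is a joint $+1$-eigenstate of all $X$-stabilizers. The $Z$-stabilizer measurements in line~\ref{li:spse} collapse it to $\sum_{x:\delta_i(x)=s_0}\ket{x}$ for a random $s_0\in\im(\delta_i)$; because $\delta_{i+1}(s_0)=0$, the decoder in line~\ref{li:spdec} returns $a^{i+1}=0$, Gaussian elimination produces any $x_0$ with $\delta_i(x_0)=s_0$, and applying $X^{x_0}$ yields $\sum_{y\in\ker\delta_i}\ket{y}$, which by Definition~\ref{def:cctodec} is the encoded $\ket{+^k}$.

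For the noisy analysis, let $\cF$ be an $\cE_{\mathrm{run}}^{\sqcup T}$-avoiding Pauli fault and let $S_\cF$ be the union of the supports of $\cF_1,\dots,\cF_T$; a Claim~\ref{claim:sdSFavoid}-style union bound shows $S_\cF$ is $\cE(G_{\mathrm{run}},\eta_{\mathrm{run}},T\gamma_{\mathrm{run}})$-avoiding with $T=O(w)$. Propagating Pauli errors through the depth-$(w+4)$ syndrome-extraction circuit (Algorithm~\ref{alg:syndext}) using standard Clifford-propagation rules, the net effect of $\cF$ just before the ancilla measurements decomposes into a data $X$-error $e_X\in\bF_2^{C^i}$, a data $Z$-error $e_Z\in\bF_2^{C^i}$, and a syndrome flip $s_{\mathrm{err}}\in\bF_2^{C^{i+1}}$, each supported inside a set $S_{\mathrm{prop}}$ which (by a further union bound) is $\cE(G_{\mathrm{run}},\eta_{\mathrm{run}},O(w(w+10)2^{w+10})\gamma_{\mathrm{run}})$-avoiding. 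The exponential factor arises because a single ancilla fault occurring early in the $w+4$ CNOT layers can branch across the remaining layers before being absorbed into $e_X$, $e_Z$, or $s_{\mathrm{err}}$. The measured syndrome is $\tilde{s}=\delta_i(e_X)+s_{\mathrm{err}}$, so the decoder call in line~\ref{li:spdec} receives input $\delta_{i+1}(\tilde{s})=\delta_{i+1}(s_{\mathrm{err}})$ since $\delta_{i+1}\delta_i=0$.

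Lemma~\ref{lem:footprint} then bounds the decoder's footprint $S_{\mathrm{dec}}$ in terms of $S_{\mathrm{prop}}$, and a further Claim~\ref{claim:sdsynavoid}-style density argument shows that $S_{\mathrm{prop}}\cup S_{\mathrm{dec}}$ is $\cE_{\mathrm{out}}$-avoiding, i.e.~$\cE(G_{\mathrm{run}},\eta_{\mathrm{run}},100w^7(w+10)2^{w+10}\gamma_{\mathrm{run}})$-avoiding. Every connected component of this induced subgraph then has fewer than $\eta_{\mathrm{run}}\leq m$ vertices, so Lemma~\ref{lem:sslocal} together with the $(m,0)$-small-set syndrome-flip decoder hypothesis implies that the decoder terminates with $\delta_{i+1}(s_{\mathrm{err}}+a^{i+1})=0$ and $\supp(a^{i+1})\subseteq S_{\mathrm{dec}}$. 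Within each small component, the cocycle $s_{\mathrm{err}}+a^{i+1}$ has weight below $m$ and, by iterated application of the error-flip half of the decoder hypothesis, must actually be a coboundary; hence Gaussian elimination in line~\ref{li:spge} succeeds. The resulting $x$ differs from $e_X$, modulo $\ker\delta_i$, by a vector supported inside $S_{\mathrm{prop}}\cup S_{\mathrm{dec}}$, so applying $X^x$ followed by the fault $\cF_T$ produces a state that differs from the encoded $\ket{+^k}$ by a Pauli supported inside $S_{\mathrm{prop}}\cup S_{\mathrm{dec}}$, which is the desired Pauli $\cE_{\mathrm{out}}$-deviation.

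The main obstacle will be the error-propagation bookkeeping in the middle step: carefully relating qubit-level Pauli faults $\cF_t$ to the effective triple $(e_X,e_Z,s_{\mathrm{err}})$ while preserving the density bound, and justifying the $2^{w+10}$ amplification factor induced by branching through the $w+4$ CNOT layers. A secondary subtlety is verifying that the terminated cocycle $s_{\mathrm{err}}+a^{i+1}$ is actually a coboundary on each small connected component (so that Gaussian elimination can find a preimage); this follows from the error-flip half of the $(m,0)$-small-set flip decoder together with the component-size bound $<\eta_{\mathrm{run}}\leq m$.
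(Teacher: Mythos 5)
Your proposal follows the same architecture as the paper's proof in Appendix~\ref{app:spproof}: verify noiseless correctness (as in Lemma~\ref{lem:nesp}), propagate Pauli faults through the depth-$O(w)$ syndrome-extraction circuit to the measurement, bound the fault-lightcone and decoder-footprint sets via Claim~\ref{claim:SFavoid}/Claim~\ref{claim:synavoid}-style density arguments, and invoke Lemma~\ref{lem:sslocal} and the small-set flip hypothesis component-by-component. Both the decomposition into $(e_X,e_Z,s_{\mathrm{err}})$ and the observation that $\delta_{i+1}(\tilde{s})=\delta_{i+1}(s_{\mathrm{err}})$ match the paper's setup up to a change of parameterization, and the $2^{O(w)}$ amplification factor is correctly attributed to lightcone branching.

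There is one genuine gap, however. You establish that $s_{\mathrm{err}}+a^{i+1}$ is a coboundary ``by iterated application of the error-flip half,'' and then assert that the final residual Pauli is supported inside $S_{\mathrm{prop}}\cup S_{\mathrm{dec}}$. That is not true as written. The iterated error-flip procedure is itself a run of \FnSSFlipErr{} (on $s_{\mathrm{err}}+a^{i+1}$), and the preimage $\tilde{a}^i$ it produces is supported in the \emph{error-flip footprint} $S_{\mathrm{err}}$, which is not contained in $S_{\mathrm{prop}}\cup S_{\mathrm{dec}}$ (your $S_{\mathrm{dec}}$ being the \FnSSFlipSyn{} footprint). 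The residual error on the output state is exactly $X^{\tilde{a}^i}$ (modulo $Z^i(\cC)$, which acts trivially on $\ket{+^k}$) composed with the propagated data errors, so without a Claim~\ref{claim:erravoid}-style density bound on $S_\cF\cup S_{\mathrm{syn}}\cup S_{\mathrm{err}}$---applying Lemma~\ref{lem:footprint} a second time with the $\gamma_{\mathrm{err}}=1/8w^4$ constant---your final $\cE_{\mathrm{out}}$-avoidance claim has no support-bound to stand on. This second footprint bound is precisely where the extra factor of $w^4$ separating $5w^3$ from $50w^7$ in the proposition's parameters comes from, so the constant $100w^7(w+10)2^{w+10}$ you quote is not reachable by the argument as you've laid it out. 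A secondary omission: you work with a single triple $(e_X,e_Z,s_{\mathrm{err}})$, but a non-diagonal Pauli fault produces independent left- and right-side errors $E_0\neq E_0'$; each side yields its own error-flip footprint $S_{\mathrm{err}}$ and $S_{\mathrm{err}}'$, and the final support is the union $S_\cF\cup S_{\mathrm{syn}}\cup S_{\mathrm{err}}\cup S_{\mathrm{err}}'$, which is where the factor of $100$ rather than $50$ enters. Both issues are repairable by closely following the structure of Lemma~\ref{lem:oneterm}, but as stated your concluding support bound is incorrect.
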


The gadget $\cQ$ in Proposition~\ref{prop:stateprep}. is given in Algorithm~\ref{alg:stateprep}. It is well-known that the noisless execution prepares the desired logical $\ket{+^k}$ state; we include a brief proof below for completeness.

\begin{lemma}[Well known]
  \label{lem:nesp}
  In the absence of errors, \FnStatePrep{$i,\cC^*$} in Algorithm~\ref{alg:stateprep} returns $\Enc(\ket{+^k}\bra{+^k})$.
\end{lemma}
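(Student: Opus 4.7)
The plan is to track the state of the code register through each line of Algorithm~\ref{alg:stateprep} in the noiseless setting. After line~\ref{li:spinit} the code register is $\ket{+^{C^i}}=2^{-|C^i|/2}\sum_{v\in\cC^i}\ket{v}$. The syndrome extraction in line~\ref{li:spse} measures the $Z$-stabilizer $Z^{(\delta_i)_j}$ for each $j\in C^{i+1}$, which collapses the code register onto $\ket{\psi_s}\propto\sum_{v\in\cC^i\,:\,\delta_i(v)=s}\ket{v}$ for some realized outcome $s$. The only outcomes with nonzero probability lie in $\im(\delta_i)$, and since $\delta_{i+1}\circ\delta_i=0$, every such $s$ additionally satisfies $\delta(s)=0$.

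Because the decoder in line~\ref{li:spdec} is called on the zero syndrome, the while-loop condition (which would require strictly decreasing a quantity equal to $0$) is never satisfied, so the decoder returns $a^{i+1}=0$. The Gaussian elimination in line~\ref{li:spge} then finds some $x\in\cC^i$ with $\delta_i(x)=s+a^{i+1}=s$; such an $x$ exists precisely because $s\in\im(\delta_i)$. Applying $X^x$ in line~\ref{li:spcorr} yields
\begin{equation*}
  X^x\ket{\psi_s}\;\propto\;\sum_{v\,:\,\delta_i(v)=s}\ket{v+x}\;=\;\sum_{u\in\ker(\delta_i)}\ket{u}.
\end{equation*}

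The final step is to identify this uniform superposition over $\ker(\delta_i)=Z^i(\cC)$ with $\Enc(\ket{+^k}\bra{+^k})$. By the convention in Definition~\ref{def:cctodec}, the CSS encoding isometry sends each logical basis vector $\ket{y}$ to the uniform superposition over the coset $\Enc(y)+B^i(\cC)\subseteq Z^i(\cC)$, and summing these across $y\in\bF_2^k$ (as in $\ket{+^k}=2^{-k/2}\sum_y\ket{y}$) recovers the uniform superposition over all of $Z^i(\cC)$, matching the state computed above. There is no real obstacle here since the lemma is ``well known''; the only point of care is this last identification, which follows immediately from the encoding convention in Definition~\ref{def:cctodec}.
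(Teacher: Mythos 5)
Your proof is correct and follows essentially the same approach as the paper's: both identify the post-measurement state as a uniform superposition over the coset $\{v:\delta_i(v)=s\}$, show the decoder trivially returns $a^{i+1}=0$ on the zero coboundary, and conclude that applying $X^x$ with $\delta_i(x)=s$ shifts the support to $Z^i(\cC)=\ker(\delta_i)$, which is $\Enc\ket{+^k}$. The paper phrases the intermediate state as $X^b\Enc(\ket{+^k}\bra{+^k})X^b$ and argues via eigenstates of the $X$-stabilizers, whereas you track the state vector explicitly, but both routes hinge on the same final identification with the uniform superposition over the $i$-cocycles.
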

\begin{proof}
  By Corollary~\ref{cor:syndext}, following the syndrome-extraction measurements in line~\ref{li:spse}, the code register must equal $X^b\Enc(\ket{+^k}\bra{+^k})X^b$ for some $b\in\cC^i$ such that the measured syndrome is $s=\delta(b)$. Specifically, the overall state is given by the mixture
  \begin{equation*}
    \sum_{b\in\cC^i}X^b\Enc\left(\ket{+^k}\bra{+^k}\right)X^b\otimes\ket{\delta(b)}\bra{\delta(b)}.
  \end{equation*}
  This statement holds because the $X$ and $Z$ stabilizers of a CSS code always commute, and the initial state $\ket{+^n}$ lies in the $+1$ eigenstate of all $X$ stabilizers and logical operators, so the post-measurement state also lies in the $+1$ eigenstate of all $X$ stabilizers and logical operators, and hence must equal $\Enc(\ket{+^k}\bra{+^k})$ with some (superposition of) $X$ errors applied. The $Z$ measurements we performed must collapse this superposition down to a single Pauli error $X^b$, and by definition the syndrome is $s=\delta(b)$.

  Because $s=\delta(b)$, we have $\delta(s)=0$, so line~\ref{li:spdec} has a trivial decoding problem that gives $a^{i+1}=0$. Then line~\ref{li:spge} computes some $x\in\cC^i$ with $\delta(x)=s$, and line~\ref{li:spcorr} applies $X^x$ to compute the final output state $\rho=X^{b+x}\Enc(\ket{+^k}\bra{+^k})X^{b+x}$. By definition $\delta(b+x)=s+s=0$, so $X^{b+x}$ is a logical $X$-operator and hence preserves $\Enc\ket{+^k}$. Equivalently, $\Enc\ket{+^k}$ is by definition a uniform superposition over all $i$-cocycles in $Z^i(\cC)$, and this uniform superposition is preserved under adding in any given $i$-cocycle (such as $b+x$). Thus the output is $\rho=\Enc(\ket{+^k}\bra{+^k})$, as desired.
\end{proof}

\begin{remark}
  \label{remark:spexchange}
  While Proposition~\ref{prop:stateprep} is stated for the preparation of logical $\ket{+}$ states, the same result holds for preparation of logical $\ket{0}$ states, by simply exchanging the roles of the Pauli $X$ and $Z$ bases everywhere in the construction and analysis.
\end{remark}

To prove Proposition~\ref{prop:stateprep}, we analyze the noisy execution of Algorithm~\ref{alg:stateprep}. The proof is largely similar to the proof of Proposition~\ref{prop:switchdown} given in Section~\ref{sec:csanalysis} above, so we postpone the proof to Appendix~\ref{app:spproof}.

\section{Error Correction Gadget}
\label{sec:errcorr}
This section presents our gadget for performing error correction on codes with small-set syndrome-flip decoders.

\begin{algorithm}
  \caption{\label{alg:errcorr} Error correction gadget in Proposition~\ref{prop:errcorr}; all variables are defined as in the proposition. In particular, the output avoids a larger family $\cE_{\mathrm{out}}\supseteq\cE_{\mathrm{in}}$ of bad sets than the input, and hence is less corrupted. We call the $n=|C^i|$-qubit input the \emph{code register}, and the $|C^{i+1}|$ and $|C^{i-1}|$ qubit ancilla systems used by the syndrome extraction subroutines in line~\ref{li:ecseZ} and line~\ref{li:ecseX} respectively the $Z$ and $X$ \emph{syndrome registers}.}
  \SetKwInOut{Input}{Input}
  \SetKwInOut{Output}{Output}

  \SetKwFunction{FnErrCorr}{ErrCorr}
  \SetKwProg{Fn}{Function}{:}{}

  \Input{$\sigma\in\bC^{\cC^i\times\cC^i}$ that is a $\cE_{\mathrm{in}}$-deviation of $\Enc(\rho)$ for $\rho\in\bC^{2^k\times 2^k}$}
  \Output{$\sigma'\in\bC^{\cC^i\times\cC^i}$ that is a $\cE_{\mathrm{out}}$-deviation of $\Enc(\rho)$}

  \Fn{\FnErrCorr{$\sigma;i,\cC^*$}}{
    Run \FnSyndExt{$\sigma;Z,\delta^{\cC}_i$} from Algorithm~\ref{alg:syndext}, and let $s_Z\in\cC^{i+1}$ be the resulting syndrome \\ \label{li:ecseZ}
    Run \FnSSFlipSyn{$s_Z;i,\cC^*$} from Algorithm~\ref{alg:ssflip}, and let $a_Z^i\in\cC^i$ be the output \\ \label{li:ecdecZ}
    Apply the Pauli $X^{a_Z^i}$ to $\sigma$ \\ \label{li:eccorrZ}
    Run \FnSyndExt{$\sigma;X,\partial^{\cC}_i$} from Algorithm~\ref{alg:syndext}, and let $s_X\in\cC_{i-1}$ be the resulting syndrome \\ \label{li:ecseX}
    Run \FnSSFlipSyn{$s_X;i,\cC_*$} from Algorithm~\ref{alg:ssflip}, and let $a_{X,i}\in\cC_i$ be the output \\ \label{li:ecdecX}
    Apply the Pauli $X^{a_{X,i}}$ to $\sigma$ \\ \label{li:eccorrX}
    \Return{$\sigma$} \label{li:ecret}
  }
\end{algorithm}

\begin{proposition}
  \label{prop:errcorr}
  For $r\in\bN$ and $1\leq i\leq r-1$, let $\cC^*$ be an $r$-dimensional cochain complex of locality $w$ with a $(m,\gamma_{\mathrm{dec}})$-small-set flip decoder at level $i$. Let $Q$ be the $[[n,k]]$ CSS code associated to level $i$ of $\cC^*$, with an associated CSS encoding map $\Enc$ (see Definition~\ref{def:cctodec}). Given some graph $G_{\mathrm{run}}=(V_{\mathrm{run}},E_{\mathrm{run}})$ that contains $G^{\cC}_{i-1,i,i+1}\subseteq G_{\mathrm{run}}$ as a subgraph and some
  \begin{align}
    \label{eq:ecbadparams}
    \begin{split}
      \eta_{\mathrm{in}} &\leq m \\
      \gamma_{\mathrm{in}} &\leq \frac{1}{50w^7T2^{T+1}} \\
      \eta_{\mathrm{run}} &\leq \eta_{\mathrm{in}} \\
      \gamma_{\mathrm{run}} &\leq \gamma_{\mathrm{in}},
    \end{split}
  \end{align}
  let
  \begin{align*}
      \gamma_{\mathrm{out}} &= \frac{(w+1)T2^{T+6}\gamma_{\mathrm{run}}}{\gamma_{\mathrm{dec}}} \\
    \cE_{\mathrm{in}} &= \cE(G_{\mathrm{run}},\; \eta_{\mathrm{in}},\; \gamma_{\mathrm{in}}) \\
    \cE_{\mathrm{run}} &= \cE(G_{\mathrm{run}},\; \eta_{\mathrm{run}},\; \gamma_{\mathrm{run}}) \\
    \cE_{\mathrm{out}} &= \cE(G_{\mathrm{run}},\; \eta_{\mathrm{run}},\; \gamma_{\mathrm{out}}).
  \end{align*}
  Define the decorated codes (see Definition~\ref{def:deccode}) $D_{\mathrm{in}}=(Q,\;\Enc,\;\cE_{\mathrm{in}})$ and $D_{\mathrm{out}}=(Q,\;\Enc,\;\cE_{\mathrm{out}})$. Let $\bar{\cO}=\cI_k:\bC^{2^k\times 2^k}\rightarrow\bC^{2^k\times 2^k}$ be the identity channel.

  Then there exists a Pauli fault-tolerant gadget $((\cQ,\cE_{\mathrm{run}}^{\sqcup T}),D_{\mathrm{in}},D_{\mathrm{out}})$ for $\bar{\cO}$, where $\cQ$ is an adaptive quantum circuit using quantum space $N=O(nw)$ and time $T=O(w)$.
\end{proposition}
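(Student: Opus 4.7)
The plan is to adapt the fault-tolerance analysis of Proposition~\ref{prop:switchdown} from Section~\ref{sec:csanalysis} essentially line-by-line. The gadget $\cQ$ is Algorithm~\ref{alg:errcorr}: by Corollary~\ref{cor:syndext} each syndrome-extraction call uses $O(nw)$ qubits and depth $O(w)$, and the intervening decoding and Pauli corrections add constant depth, giving $N=O(nw)$ and $T=O(w)$. Since $\bar{\cO}=\cI_k$, it suffices to show that for every Pauli $\cE_{\mathrm{in}}$-deviation $\sigma=E_0\Enc(\rho)E_0'$ and every $\cE_{\mathrm{run}}^{\sqcup T}$-avoiding Pauli fault $\cF$, the output $\cQ[\cF](\sigma)$ is a Pauli $\cE_{\mathrm{out}}$-deviation of $\Enc(\rho)$. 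Let $S_{\cF}\subseteq V_{\mathrm{run}}$ denote the union of $\supp(E_0)\cup\supp(E_0')$ with the supports $\supp(\cF_t)$; by~(\ref{eq:ecbadparams}), $S_{\cF}$ is $\cE(G_{\mathrm{run}},\eta_{\mathrm{run}},(T+1)\gamma_{\mathrm{in}})$-avoiding.

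Next I would track error propagation through the CNOT cascades in Algorithm~\ref{alg:syndext}. A fault on an ancilla at depth $t$ can propagate via the remaining $\leq T$ CNOTs to at most $2^{T}$ Paulis on adjacent data/ancilla qubits, and symmetrically for faults on data; crucially, all propagated supports lie within one hop of $S_{\cF}$ in $G^{\cC}_{i-1,i,i+1}\subseteq G_{\mathrm{run}}$, since each ancilla only interacts with its locally adjacent data qubits. Thus the ``effective error set'' $S_{\mathrm{prop}}\supseteq S_{\cF}$ of all propagated Paulis has density at most $(w+1)T2^{T+1}$ times that of $S_{\cF}$ in any connected subgraph, hence it is $\cE(G_{\mathrm{run}},\eta_{\mathrm{run}},(w+1)T2^{T+1}(T+1)\gamma_{\mathrm{in}})$-avoiding. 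After line~\ref{li:ecseZ}, the measured syndrome decomposes as $s_Z=\delta^{\cC}_i(e_X)+f_Z$ where $e_X$ is the accumulated $X$-error on data and $f_Z$ is the ancilla-measurement error, with $\supp(e_X),\supp(f_Z)\subseteq S_{\mathrm{prop}}$. Letting $S_{Z,\mathrm{syn}}$ be the footprint of the \FnSSFlipSyn call in line~\ref{li:ecdecZ}, Lemma~\ref{lem:footprint} gives (exactly as in Claim~\ref{claim:sdsynavoid}) that $S_{\mathrm{prop}}\cup S_{Z,\mathrm{syn}}$ is $\cE(G_{\mathrm{run}},\eta_{\mathrm{run}},O(w^3)\cdot(w+1)T2^{T+1}(T+1)\gamma_{\mathrm{in}})$-avoiding. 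By the assumption $\gamma_{\mathrm{in}}\leq 1/(50w^7T2^{T+1})$, every connected component of the induced subgraph has fewer than $\eta_{\mathrm{run}}\leq m$ vertices, so by the $(m,\gamma_{\mathrm{dec}})$-small-set syndrome-flip decoder guarantee and locality (Lemma~\ref{lem:sslocal}), within each such component the decoder terminates with $|\delta(e_X+a_Z^i)+f_Z|=0$ restricted to the component. A short calculation (handling the $|f_Z|\leq\gamma_{\mathrm{dec}}|\delta(e_X)|$ condition component-by-component) then gives that the residual $X$-error $e_X+a_Z^i$ after line~\ref{li:eccorrZ} is supported in a set that is $\cE(G_{\mathrm{run}},\eta_{\mathrm{run}},O(\gamma_{\mathrm{in}}/\gamma_{\mathrm{dec}}))$-avoiding.

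The $X$-syndrome extraction phase (lines~\ref{li:ecseX}--\ref{li:eccorrX}) is analyzed symmetrically using the chain complex $\cC_*$ in place of $\cC^*$, so that the $(m,\gamma_{\mathrm{dec}})$-small-set syndrome-flip decoder at level $i$ of $\cC_*$ controls the residual $Z$-error. The outputs of the two phases combine to give a Pauli $\cE_{\mathrm{out}}$-deviation of $\Enc(\rho)$, where the $(w+1)T2^{T+6}/\gamma_{\mathrm{dec}}$ factor in $\gamma_{\mathrm{out}}$ absorbs the propagation factor $(w+1)T2^{T+1}$, the $O(w^3)$ syndrome-footprint blowup from Lemma~\ref{lem:footprint}, and the $1/\gamma_{\mathrm{dec}}$ loss from the syndrome-flip decoder inequality.

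The main obstacle will be the bookkeeping in the second phase: one must verify that the residual $X$-error produced by the first phase, which already has density up to $O(\gamma_{\mathrm{in}}/\gamma_{\mathrm{dec}})$, composes additively (not multiplicatively) with the fresh faults that occur during the $X$-syndrome-extraction CNOTs, and does not trigger spurious decoder updates in the $Z$-error decoding stage. This requires arguing that the first-phase residual $X$-error is effectively a ``static'' error on data qubits that contributes to the propagation set $S_{\mathrm{prop}}$ for the second phase with no further blowup, so that the density bounds all remain within the $\cE_{\mathrm{out}}$ budget. Checking that the two phases do not interact beyond this is the main technical point; everything else is a direct transcription of the argument in Section~\ref{sec:csanalysis}.
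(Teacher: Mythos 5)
Your plan follows the broad architecture of the paper's proof in Appendix~\ref{app:ecproof} (track lightcones, bound footprints via Lemma~\ref{lem:footprint}, localize using Lemma~\ref{lem:sslocal}), but there is a genuine gap at the crucial step, and it is precisely the step that makes this gadget an \emph{error-correction} gadget rather than just another error-propagating gadget.

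First, your intermediate claim that ``within each such component the decoder terminates with $|\delta(e_X+a_Z^i)+f_Z|=0$ restricted to the component'' is false. The syndrome fed to $\FnSSFlipSyn$ is $s_Z=\delta(e_X)+f_X$ with $f_X$ the measurement/syndrome error, and $f_X$ need not be a coboundary, so the decoder cannot drive the residual $s_Z+\delta(a_Z^i)=\delta(e_X+a_Z^i)+f_X$ to zero. The $(m,\gamma_{\mathrm{dec}})$-syndrome-flip guarantee only produces a weight-reducing flip as long as $|f_X|\leq\gamma_{\mathrm{dec}}|\delta(e_X+a_Z^i)|$, so the decoder halts with a nonzero residual syndrome once $|\delta(e_X+a_Z^i)|$ drops to roughly $|f_X|/\gamma_{\mathrm{dec}}$.

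Second, and consequentially, your density bound $O(\gamma_{\mathrm{in}}/\gamma_{\mathrm{dec}})$ on the residual $X$-error is the wrong bound, and would not prove the proposition. The statement to be shown is that $\cE_{\mathrm{out}}=\cE(G_{\mathrm{run}},\eta_{\mathrm{run}},\gamma_{\mathrm{out}})$ with $\gamma_{\mathrm{out}}\propto\gamma_{\mathrm{run}}/\gamma_{\mathrm{dec}}$ --- the output error density depends only on the \emph{fault} rate $\gamma_{\mathrm{run}}$, not on the \emph{input} error rate $\gamma_{\mathrm{in}}$. Since $\gamma_{\mathrm{run}}\leq\gamma_{\mathrm{in}}$ and typically $\gamma_{\mathrm{run}}\ll\gamma_{\mathrm{in}}$, a bound of $O(\gamma_{\mathrm{in}}/\gamma_{\mathrm{dec}})$ is strictly weaker than what is needed and does not show that errors are actually being \emph{reduced}. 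The missing argument (the paper's Claim~\ref{claim:tSavoid}) is: because the decoder terminated, within each connected component $V'$ the residual cannot be further reduced, so the $(m,\gamma_{\mathrm{dec}})$ property forces $|\delta(\tilde{a}_Z^i|_{V'\cap C^i})|\leq |f_X|_{V'}|/\gamma_{\mathrm{dec}}$, where $f_X$ is supported in $S_{\cF}$ only (density $\propto\gamma_{\mathrm{run}}$, not $\gamma_{\mathrm{in}}$). Combined with $|\tilde{a}_Z^i|\leq w|\delta(\tilde{a}_Z^i)|$ from the error-flip run (the analogue of~(\ref{eq:tadta})), this gives the residual-error density bound $\tilde\gamma\propto(w+1)\gamma_{\mathrm{run}}/\gamma_{\mathrm{dec}}$. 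Without this termination-based argument tying the residual to $f_X$ alone, the proposal does not establish the error-reducing property and therefore does not prove the stated $\cE_{\mathrm{out}}$.

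The remaining issue you flag as the ``main obstacle'' (interaction between the $Z$-correction phase's residual $X$-error and the subsequent $X$-syndrome extraction) is actually benign, as you can check: $X$ errors on data qubits commute with the $CNOT$s used for $X$-stabilizer measurement, so the residual $X$-error does not corrupt the $Z$-error syndrome; the two phases decouple cleanly. The real technical core you need to add is the claim above, which is where the $1/\gamma_{\mathrm{dec}}$ and the $\gamma_{\mathrm{run}}$-only dependence in $\gamma_{\mathrm{out}}$ come from.
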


The proof of Proposition~\ref{prop:errcorr} has a similar structure as the proof of Proposition~\ref{prop:switchdown} in Section~\ref{sec:csanalysis}, as well as the proof of Proposition~\ref{prop:stateprep} in Appendix~\ref{app:spproof}.

An important difference is that here in Proposition~\ref{prop:errcorr}, we must obtain a decorated output code that can have the specified family of bad sets $\cE_{\mathrm{out}}=\cE(G_{\mathrm{run}},\; \eta_{\mathrm{run}},\; \gamma_{\mathrm{out}})$ satisfying $\gamma_{\mathrm{out}}<\gamma_{\mathrm{in}}$. That is, here we must actually reduce the size of the error, rather than simply controlling its growth as in other gadgets. We accomplish this error reduction by arguing that if the output error were too large, then the small-set flip decoder in Algorithm~\ref{alg:errcorr} could not have terminated, which gives a contradiction. Related techniques are for instance used in \cite{fawzi_efficient_2018}. For readability, we have postponed the full proof to Appendix~\ref{app:ecproof}.

\section{Basic Gadgets}
\label{sec:basicgadgets}
In this section, we present basic gadgets that are (in some similar form) well-known or folklore in the literature. However, we provide formal result statements and brief proofs in our fault-tolerance framework so that we can integrate these gadgets with our other gadgets.

\subsection{$CNOT$ Gadgets}
We begin with a transversal $CNOT$ gadget between two identical codeblocks.

\begin{lemma}
  \label{lem:CNOTsame}
  For $r\in\bN$ and $1\leq i\leq r-1$, let $\cC^*$ be an $r$-dimensional cochain complex. Let $Q$ be the $[[n,k]]$ CSS code associated to level $i$ of $\cC^*$, with an associated CSS encoding map $\Enc$ (see Definition~\ref{def:cctodec}).

  Given some graph $G_{\mathrm{run}}=(V_{\mathrm{run}},E_{\mathrm{run}})$ with $C^i\subseteq V_{\mathrm{run}}$ and some $\eta_{\mathrm{in}},\gamma_{\mathrm{in}},\gamma_{\mathrm{run}}>0$, let
  \begin{align*}
    \gamma_{\mathrm{out}} &= 2\gamma_{\mathrm{in}}+\gamma_{\mathrm{run}} \\
    \cE_{\mathrm{in}} &= \cE(G_{\mathrm{run}},\; \eta_{\mathrm{in}},\; \gamma_{\mathrm{in}}) \\
    \cE_{\mathrm{run}} &= \cE(G_{\mathrm{run}},\; \eta_{\mathrm{in}},\; \gamma_{\mathrm{run}}) \\
    \cE_{\mathrm{in}} &= \cE(G_{\mathrm{run}},\; \eta_{\mathrm{in}},\; \gamma_{\mathrm{out}}).
  \end{align*}
  Define the decorated codes (see Definition~\ref{def:deccode}) $D_{\mathrm{in}}=(Q,\;\Enc,\;\cE_{\mathrm{in}})^{\sqcup 2}$ and $D_{\mathrm{out}}=(Q,\;\Enc,\;\cE_{\mathrm{out}})^{\sqcup 2}$. Let $\bar{\cO}:\bC^{2^{[k]\sqcup[k]}\times 2^{[k]\sqcup[k]}}\rightarrow\bC^{2^{[k]\sqcup[k]}\times 2^{[k]\sqcup[k]}}$ be the channel acting on two blocks of $k$ qubits that applies transversal $CNOT$ between the blocks, meaning that $CNOT_{(1,x),(2,x)}$ is applied for every $x\in[k]$, where our $2k$ logical qubits are labeled by the set $[2]\times[k]=[k]\sqcup[k]$. Equivalently, $\bar{\cO}(\rho)=CNOT^{\otimes k}\;\rho\; CNOT^{\otimes k}$.

  Then there exists a Pauli fault-tolerant gadget $((\cQ,(\cE_{\mathrm{run}}^{\sqcup 2})^{\sqcup T}),D_{\mathrm{in}},D_{\mathrm{out}})$ for $\bar{\cO}$, where $\cQ$ is an adaptive quantum circuit using quantum space $N=2n$ and time $T=1$.
\end{lemma}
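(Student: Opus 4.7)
The plan is to take $\cQ$ to be the depth-$1$ circuit that applies $CNOT_{(1,c),(2,c)}$ in parallel for every physical qubit $c\in C^i$, with the first block as control and the second as target. This is the standard transversal logical $CNOT$ for CSS codes, so the argument has three steps: verify the noiseless logical action, track Pauli error propagation through the single layer of $CNOT$s together with the time-$1$ fault $\cF_1$, and then bound the density of the resulting output error in $G_{\mathrm{run}}$ to obtain $\cE_{\mathrm{out}}$-avoidance.

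For the noiseless analysis, I would invoke the standard CSS fact: since both blocks use the same $\delta^{\cC}_i,\partial^{\cC}_i$, transversal $CNOT$ between matching qubits preserves the stabilizer group (each $X$-stabilizer in the control block gets multiplied by the corresponding $X$-stabilizer in the target block, and dually for $Z$), and acts on logical operators as $\bar X_j^{(1)}\mapsto \bar X_j^{(1)}\bar X_j^{(2)}$, $\bar X_j^{(2)}\mapsto \bar X_j^{(2)}$, $\bar Z_j^{(1)}\mapsto \bar Z_j^{(1)}$, $\bar Z_j^{(2)}\mapsto \bar Z_j^{(1)}\bar Z_j^{(2)}$. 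Since $\Enc$ is the CSS encoding map associated to level $i$ of $\cC^*$, the circuit implements $\bar\cO\otimes\cI_\ell$ on $(\Enc\otimes\Enc\otimes\cI_\ell)(\rho)$, giving correctness.

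For Pauli fault-tolerance, write the input deviations as $E_b=Z^{e_{Z,b}}X^{e_{X,b}}$, $E_b'=X^{e_{X,b}'}Z^{e_{Z,b}'}$ for $b\in\{1,2\}$, and set $S_b=\supp(E_b)\cup\supp(E_b')\subseteq C^i\subseteq V_{\mathrm{run}}$, which is $\cE_{\mathrm{in}}$-avoiding by hypothesis. Conjugating by the transversal $CNOT$ layer propagates each single-qubit Pauli at location $c$ in block $b$ either to the same qubit or to the pair $\{(1,c),(2,c)\}$ in both blocks (the rules $X^{(1)}_c\to X^{(1)}_cX^{(2)}_c$, $Z^{(2)}_c\to Z^{(1)}_cZ^{(2)}_c$, with the other two Paulis unchanged). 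Therefore the Pauli just before $\cF_1$ has support on block $b$ contained in $S_1\cup S_2$, viewed as a subset of block $b$'s copy of $C^i\subseteq V_{\mathrm{run}}$. Composing with $\cF_1$, whose restriction $S_\cF^{(b)}$ to block $b$ is $\cE_{\mathrm{run}}$-avoiding by the disjoint-union decoration $(\cE_{\mathrm{run}}^{\sqcup 2})^{\sqcup T}$, the final output Pauli on block $b$ is supported in $S_b^{\mathrm{out}}:=S_1\cup S_2\cup S_\cF^{(b)}$. Crucially, since no error can propagate to a qubit other than the matching one in the opposite block, these supports remain inside the fixed vertex set of $G_{\mathrm{run}}$, so the same connectivity graph controls both input and output bad sets.

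Finally, for any connected subgraph $V'\subseteq G_{\mathrm{run}}$ with $|V'|\geq\eta_{\mathrm{in}}$, subadditivity of density together with the three avoidance hypotheses gives
\begin{equation*}
  \frac{|V'\cap S_b^{\mathrm{out}}|}{|V'|} < \gamma_{\mathrm{in}}+\gamma_{\mathrm{in}}+\gamma_{\mathrm{run}} = \gamma_{\mathrm{out}},
\end{equation*}
so $S_b^{\mathrm{out}}$ is $\cE_{\mathrm{out}}$-avoiding for each $b\in\{1,2\}$; taking the disjoint union across the two blocks yields the required Pauli $D_{\mathrm{out}}$-deviation of $(\Enc\otimes\Enc)\circ\bar\cO\otimes\cI_\ell(\rho)$. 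There is no genuine obstacle here; the only thing to be careful about is the bookkeeping showing that transversal $CNOT$ only copies Paulis between matching qubits of the two blocks, so that the union of supports still lies in the single ambient graph $G_{\mathrm{run}}$ and the density bound applies cleanly on each block separately.
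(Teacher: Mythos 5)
Your proposal is correct and follows essentially the same approach as the paper's proof: the same depth-$1$ transversal $CNOT$ circuit, the same observation that the $CNOT$ layer only copies Paulis between matching positions so that each block's output error support is contained in the union of both input-error supports plus the fault support on that block, and the same density bound $2\gamma_{\mathrm{in}}+\gamma_{\mathrm{run}}=\gamma_{\mathrm{out}}$ on connected subgraphs of $G_{\mathrm{run}}$. The only cosmetic difference is in the noiseless correctness step, where the paper computes directly that $CNOT^{\otimes n}\ket{\Enc(x)}\ket{\Enc(x')}=\ket{\Enc(x)}\ket{\Enc(x+x')}$ using linearity of $\Enc$, whereas you invoke the standard conjugation action on stabilizers and logical operators; both are fine.
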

\begin{proof}
  The desired adaptive quantum circuit $\cQ$ is simply the depth-$1$ circuit that applies transversal physical $CNOT$, meaning that physical $CNOT_{(1,c),(2,c)}$ is applied for every $c\in C^i$, where our $2n$ physical qubits are labeled by the set $[2]\times C^i=C^i\sqcup C^i$.

  In the absence of noise, this physical circuit implements the desired logical $CNOT$ gates, as for every $x,x'\in\bF_2^k$, then
  \begin{align*}
    CNOT^{\otimes n}\ket{\Enc(x)}\ket{\Enc(x')}
    &= \sum_{y\in\Enc(x),y'\in\Enc(x')}\ket{y}\ket{y+y'} \\
    &= \sum_{y\in\Enc(x),y''\in\Enc(x+x')}\ket{y}\ket{y''} \\
    &= \ket{\Enc(x)}\ket{\Enc(x+x')},
  \end{align*}
  where the second equality above holds because $\Enc:\bF_2^k\xrightarrow{\sim}H^i(\cC)=Z^i(\cC)/B^i(\cC)$ is a linear map.

  Now consider running the circuit $\cQ[\cF]$ with noise from a $\cE_{\mathrm{run}}^{\sqcup 2}$-avoiding fault $\cF$ and a $\cE_{\mathrm{in}}^{\sqcup 2}$-avoiding Pauli error on the input. Within every connected component of $G_{\mathrm{run}}\sqcup G_{\mathrm{run}}$ with $\geq\eta_{\mathrm{in}}$ vertices, at most $\gamma_{\mathrm{in}}$-fraction of the qubits receive input errors from the first codeblock, at most $\gamma_{\mathrm{in}}$-fraction of the qubits receive input errors from the second codeblock, and at most $\gamma_{\mathrm{run}}$-fraction of the qubits receive errors from the fault $\cF$. Note that the connected component must be contained within one of the two copies of $G_{\mathrm{run}}$, yet the physical $CNOT$ gates can propagate errors between qubits at the same positions in the two blocks, and hence we can have $\gamma_{\mathrm{in}}$-fraction corruptions from each block. Thus the total fraction of qubits in the connected component with Pauli errors is $\leq 2\gamma_{\mathrm{in}}+\gamma_{\mathrm{run}}$, as desired.
\end{proof}

We now present a transversal $CNOT$ between two codes of different dimensions. Below, recall from Definition~\ref{def:deccode} that we use $\sqcup$ to denote the combination of disjoint code blocks.

\begin{lemma}
  \label{lem:CNOTdiff}
  For $r\in\bN$ define $\cA^*$, $\cM^{\cA,*}$, $\Enc^{\cA}$, $\cB^*$, $\cM^{\cB,*}$, $L^{\cB,1}$, $\Enc^{\cB}$ $\cC^*=\cA^*\otimes\cB^*$, $\cM^{\cC,*}=\cM^{\cA,*}\otimes\cM^{\cB,*}$, $\Enc^{\cC}=\Enc^{\cA}\otimes\Enc^{\cB}$ as in Section~\ref{sec:csstate} (though in this lemma we make no assumptions regarding small-set flip decodability). Let $Q_{\mathrm{con}}$ be the $[[n_{\mathrm{con}},k_{\mathrm{con}}]]$ CSS code consisting of $|L^{\cB,1}|$ copies (labeled by the set $L^{\cB,1}$) of the CSS code associated to level $i-1$ of $\cA^*$, and let $Q_{\mathrm{tar}}$ be the $[[n_{\mathrm{tar}},k_{\mathrm{tar}}]]$ CSS code associated to level $i$ of $\cC^*$. The physical qubits of $Q_{\mathrm{tar}}$ are naturally labeled by the set $C^i$, while the physical qubits of $Q_{\mathrm{con}}$ are labeled by the set $A^{i-1}\times L^{\cB,1}\subseteq A^{i-1}\times B^1\subseteq C^i$.

  Given some graph $G_{\mathrm{run}}=(V_{\mathrm{run}},E_{\mathrm{run}})$ with $C^i\subseteq V_{\mathrm{run}}$ and some $\eta_{\mathrm{in}},\gamma_{\mathrm{in}},\gamma_{\mathrm{run}}>0$, let
  \begin{align*}
    \gamma_{\mathrm{out}} &= 2\gamma_{\mathrm{in}}+\gamma_{\mathrm{run}} \\
    \cE_{\mathrm{in}} &= \cE(G_{\mathrm{run}},\; \eta_{\mathrm{in}},\; \gamma_{\mathrm{in}}) \\
    \cE_{\mathrm{run}} &= \cE(G_{\mathrm{run}},\; \eta_{\mathrm{in}},\; \gamma_{\mathrm{run}}) \\
    \cE_{\mathrm{out}} &= \cE(G_{\mathrm{run}},\; \eta_{\mathrm{in}},\; \gamma_{\mathrm{out}}).
  \end{align*}
  Let $Q=Q_{\mathrm{con}}\sqcup Q_{\mathrm{tar}}$ with encoding map $\Enc=(\Enc^{\cA})^{L^{\cB,1}}\sqcup\Enc^{\cC}$. Define the decorated codes (see Definition~\ref{def:deccode}) $D_{\mathrm{in}}=(Q,\;\Enc,\;\cE_{\mathrm{in}}^{\sqcup 2})$ and $D_{\mathrm{out}}=(Q,\;\Enc,\;\cE_{\mathrm{out}}^{\sqcup 2})$, where we associate each set of physical qubits $A^{i-1}\times L^{\cB,1}\subseteq V_{\mathrm{run}}$ and $C^i\subseteq V_{\mathrm{run}}$ with a separate copy of $G_{\mathrm{run}}$.

  Let $\bar{\cO}:\bC^{2^{[k_{\mathrm{con}}]\sqcup[k_{\mathrm{tar}}]}\times 2^{[k_{\mathrm{con}}]\sqcup[k_{\mathrm{tar}}]}}\rightarrow\bC^{2^{[k_{\mathrm{con}}]\sqcup[k_{\mathrm{tar}}]}\times 2^{[k_{\mathrm{con}}]\sqcup[k_{\mathrm{tar}}]}}$ be the channel acting on $k_{\mathrm{con}}+k_{\mathrm{tar}}$ logical qubits labeled by $(M^{\cA,{i-1}}\times L^{\cB,1})\sqcup M^{\cC,i}$ that applies applies $CNOT_{x,x}$ for every $x\in M^{\cA,{i-1}}\times L^{\cB,1}\subseteq M^{\cC,i}$, with control in the first (size-$k_{\mathrm{con}}$) block and target in the second (size-$k_{\mathrm{tar}}$) block. Equivalently, $\bar{\cO}(\rho)=CNOT^{\otimes k_{\mathrm{con}}}\;\rho\; CNOT^{\otimes k_{\mathrm{con}}}$.

  Then there exists a Pauli fault-tolerant gadget $((\cQ,(\cE_{\mathrm{run}}^{\sqcup 2})^{\sqcup T}),D_{\mathrm{in}},D_{\mathrm{out}})$ for $\bar{\cO}$, where $\cQ$ is an adaptive quantum circuit using quantum space $N=n_{\mathrm{con}}+n_{\mathrm{tar}}$ and time $T=1$.
\end{lemma}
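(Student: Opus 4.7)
The plan is to take $\cQ$ to be the depth-$1$ circuit applying a physical $CNOT$ between qubit $(a,l)$ of $Q_{\mathrm{con}}$ (as control) and the corresponding qubit $(a,l)$ of $Q_{\mathrm{tar}}$ (as target), for every $(a,l)\in A^{i-1}\times L^{\cB,1}$. The proof then splits into noiseless correctness and a fault-propagation argument, closely paralleling Lemma~\ref{lem:CNOTsame}.

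For noiseless correctness, the first step is to match logical-operator representatives on $A^{i-1}\times L^{\cB,1}$. For a targeted logical qubit $(y,l)\in M^{\cA,i-1}\times L^{\cB,1}$, the logical-$X$ representative in $Q_{\mathrm{tar}}$ produced by the K\"{u}nneth formula (Proposition~\ref{prop:kunneth}) is $\Enc^{\cA}(\1_y)\otimes\Enc^{\cB}(\1_l)$, which by Lemma~\ref{lem:enc1dim} equals $\Enc^{\cA}(\1_y)\otimes\1_l$, supported exactly on $\supp(\Enc^{\cA}(\1_y))\times\{l\}$; this coincides with the logical-$X$ of $(y,l)$ in the $l$-th copy of $Q_{\mathrm{con}}$. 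For logical $Z$, I would choose a homology representative $\tilde{y}\otimes\tilde{l}$ with $\tilde{l}\in\ker(\partial^{\cB})$ satisfying $\tilde{l}|_{M^{\cB,1}}=\1_l$, which exists by Lemma~\ref{lem:extendable}; then $\tilde{l}|_{L^{\cB,1}}=\1_l$ and the representative restricts on $A^{i-1}\times L^{\cB,1}$ to $\tilde{y}\otimes\1_l$, matching the logical-$Z$ of $(y,l)$ in $Q_{\mathrm{con}}$. Standard $CNOT$ propagation then gives $X^{\mathrm{con}}_{(y,l)}\mapsto X^{\mathrm{con}}_{(y,l)}X^{\mathrm{tar}}_{(y,l)}$ and $Z^{\mathrm{tar}}_{(y,l)}\mapsto Z^{\mathrm{con}}_{(y,l)}Z^{\mathrm{tar}}_{(y,l)}$, the desired logical $CNOT$ action. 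The untargeted logical qubits in $M^{\cC,i}\setminus(M^{\cA,i-1}\times L^{\cB,1})$ can be shown to be preserved by the same two formulas: for $l\in M^{\cB,1}\setminus L^{\cB,1}$ the $X$-representative is supported on $A^{i-1}\times\{l\}$, which is disjoint from $A^{i-1}\times L^{\cB,1}$, and Lemma~\ref{lem:extendable} gives $\tilde{l}|_{L^{\cB,1}}=0$; for qubits in $M^{\cA,i}\times M^{\cB,0}$ both representatives live in $A^i\times B^0$, also disjoint. Finally, stabilizer compatibility follows from the product structure: using $\delta^{\cB}(\1_l)=0$ one sees $\delta^{\cA}(a)\otimes\1_l=\delta^{\cC}(a\otimes\1_l)\in B^i(\cC)$, so every X-stabilizer of $Q_{\mathrm{con}}$ propagates to a genuine X-stabilizer of $Q_{\mathrm{tar}}$, and every Z-stabilizer of $Q_{\mathrm{tar}}$ restricts on $A^{i-1}\times L^{\cB,1}$ to an element of $B_{i-1}(\cA)\otimes\bF_2^{L^{\cB,1}}$, a Z-stabilizer of $Q_{\mathrm{con}}$.

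For fault tolerance, the analysis is nearly identical to that of Lemma~\ref{lem:CNOTsame}. Since $\cQ$ has depth $1$ with only two-qubit gates, a Pauli error on qubit $(a,l)$ in one codeblock can propagate to at most the corresponding qubit $(a,l)$ in the other codeblock. For any $V'\subseteq V_{\mathrm{run}}$ inducing a connected subgraph with $|V'|\geq\eta_{\mathrm{in}}$, the post-circuit Pauli-error density on $V'$ in either codeblock's copy of $G_{\mathrm{run}}$ is bounded by input errors in that codeblock ($\leq\gamma_{\mathrm{in}}|V'|$), plus errors propagated from the corresponding positions in the other codeblock ($\leq\gamma_{\mathrm{in}}|V'|$ by $\cE_{\mathrm{in}}$-avoidance of the identified subgraph there), plus the fault contribution ($\leq\gamma_{\mathrm{run}}|V'|$), for a total of at most $\gamma_{\mathrm{out}}|V'|$.

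The hardest step is the noiseless correctness: because $Q_{\mathrm{con}}$ and $Q_{\mathrm{tar}}$ have different dimensions, matching their logical operators on $A^{i-1}\times L^{\cB,1}$ requires delicate choices of representatives and hinges on the extendability of $L^{\cB,1}$ for $\ker(\partial^{\cB})$. Without this assumption, logical-$Z$ representatives for $(y,l)$ with $l\in L^{\cB,1}$ could acquire support at other positions in $L^{\cB,1}$, spoiling the propagation identity.
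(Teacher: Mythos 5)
Your proof is correct, and it takes a genuinely different route from the paper's for the noiseless-correctness part. The paper argues in the Schr\"odinger/computational-basis picture: it directly computes $CNOT^{\otimes k_{\mathrm{con}}}\ket{(\Enc^{\cA})^{L^{\cB,1}}(x)}\ket{\Enc^{\cC}(x')}=\ket{(\Enc^{\cA})^{L^{\cB,1}}(x)}\ket{\Enc^{\cC}(x+x')}$ in three lines of algebra, the crucial step being that the coset $(\Enc^{\cA})^{L^{\cB,1}}(x)$, after padding by zeros into $\cC^i$, is contained in the coset $\Enc^{\cC}(x)$. You instead work in the Heisenberg picture, tracking logical-$X$ and logical-$Z$ representatives and stabilizers through the circuit. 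Both arguments ultimately hinge on the same algebraic fact, $\delta^{\cA}(a)\otimes\1_l=\delta^{\cC}(a\otimes\1_l)\in B^i(\cC)$ (because $\delta^{\cB}_1=0$), which the paper packages as a coset inclusion and you as stabilizer propagation. Your route is more explicit --- it exhibits concrete logical-operator representatives, which some readers will find illuminating, and it makes transparent exactly where extendability of $L^{\cB,1}$ enters (in guaranteeing $\tilde{l}|_{L^{\cB,1}}=\1_l$ for targeted qubits and $\tilde{l}|_{L^{\cB,1}}=0$ for untargeted ones) --- but it is longer and requires carefully choosing dual-basis logical-$Z$ representatives, whereas the paper's coset computation needs no mention of logical $Z$ at all. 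Your fault-tolerance argument is essentially identical to the paper's.

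One minor remark: in the stabilizer-propagation step you write ``using $\delta^{\cB}(\1_l)=0$,'' which is literally true only because $\cB^2=0$ (so $\delta^{\cB}_1$ is the zero map); it is worth flagging that you mean $\delta^{\cB}_1$ and not $\delta^{\cB}_0$, since $\1_l\in\cB^1$.
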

\begin{proof}
  The desired adaptive quantum circuit $\cQ$ is simply the depth-$1$ circuit that applies transversal physical $CNOT$ between the entire first (control) codeblock and the subset of the second (target) codeblock whose qubit labels are shared with the first codeblock. That is we apply physical $CNOT_{x,x}$ for every $x\in A^{i-1}\times L^{\cB,1}\subseteq C^i$, where the control is in the first (size-$n_{\mathrm{con}}$) block and the target is in the second (size-$n_{\mathrm{tar}}$) block.

  In the absence of noise, this physical circuit implements the desired logical $CNOT$ gates, as for every $x\in\cM^{\cA,i-1}\otimes\bF_2^{L^{\cB,1}}$, $x'\in\cM^{\cB,i}$, then
  \begin{align*}
    CNOT^{\otimes k_{\mathrm{con}}}\ket{(\Enc^{\cA})^{L^{\cB,1}}(x)}\ket{\Enc^{\cC}(x')}
    &= \sum_{y\in(\Enc^{\cA})^{L^{\cB,1}}(x),y'\in\Enc^{\cC}(x')}\ket{y}\ket{y+y'} \\
    &= \sum_{y\in(\Enc^{\cA})^{L^{\cB,1}}(x),y''\in\Enc^{\cC}(x+x')}\ket{y}\ket{y''} \\
    &= \ket{(\Enc^{\cA})^{L^{\cB,1}}(x)}\ket{\Enc^{\cC}(x+x')}.
  \end{align*}
  Here we view $x\in\cM^{\cA,i-1}\otimes\bF_2^{L^{\cB,1}}$ as also belonging to $\cM^{\cB,i}\supseteq\cM^{\cA,i-1}\otimes\bF_2^{L^{\cB,1}}$ (by simply padding $x$ with $0$s in the additional components), and similarly we view $y'\in\cA^{i-1}\otimes\bF_2^{L^{\cB,1}}$ as also belonging to $\cC^i$. The second equality above then holds because by definition $(\Enc^{\cA})^{L^{\cB,1}}(x)\subseteq\Enc^{\cC}(x)$.

  The analysis of $\cQ$ in the presence of noise, which proves the fault-tolerance of our gadget, is essentially identical to that in the proof of Lemma~\ref{lem:CNOTsame}, so we omit it to avoid redundancy.
\end{proof}

\subsection{Hadamard Gadget}
We now present a transversal Hadamard gadget on a codeblock. Note that this gadget has different (specifically, dual) input and output codes. We will later combine this gadget with our code switching gadgets to obtain a constant-overhead gadget for logical Hadamard that has the same input and out codes (see Proposition~\ref{prop:hadsame}).

\begin{lemma}
  \label{lem:hadamard}
  For $r\in\bN$ and $1\leq i\leq r-1$, let $\cC^*$ be an $r$-dimensional cochain complex. Let $Q_{\mathrm{in}}$ be the $[[n,k]]$ CSS code associated to level $i$ of the cochain complex $\cC^*$, with an associated CSS encoding map $\Enc_{\mathrm{in}}$ (see Definition~\ref{def:cctodec}). Let $Q_{\mathrm{out}}$ be the $[[n,k]]$ CSS code associated to level $i$ of the chain complex $\cC_*$, with associated CSS encoding map $\Enc_{\mathrm{out}}$. Assume that for every $x,x'\in\bF_2^k$ and every $y\in\Enc_{\mathrm{in}}(x),\;y'\in\Enc_{\mathrm{out}}(x')$, we have
  \begin{equation}
    \label{eq:dualenc}
    x\cdot x' = y\cdot y'
  \end{equation}

  Given some graph $G_{\mathrm{run}}=(V_{\mathrm{run}},E_{\mathrm{run}})$ with $C^i\subseteq V_{\mathrm{run}}$ and some $\eta_{\mathrm{in}},\gamma_{\mathrm{in}},\gamma_{\mathrm{run}}>0$, let
  \begin{align*}
    \gamma_{\mathrm{out}} &= \gamma_{\mathrm{in}}+\gamma_{\mathrm{run}} \\
    \cE_{\mathrm{in}} &= \cE(G_{\mathrm{run}},\; \eta_{\mathrm{in}},\; \gamma_{\mathrm{in}}) \\
    \cE_{\mathrm{run}} &= \cE(G_{\mathrm{run}},\; \eta_{\mathrm{in}},\; \gamma_{\mathrm{run}}) \\
    \cE_{\mathrm{in}} &= \cE(G_{\mathrm{run}},\; \eta_{\mathrm{in}},\; \gamma_{\mathrm{out}}).
  \end{align*}
  Define the decorated codes (see Definition~\ref{def:deccode}) $D_{\mathrm{in}}=(Q_{\mathrm{in}},\;\Enc_{\mathrm{in}},\;\cE_{\mathrm{in}})$ and $D_{\mathrm{out}}=(Q_{\mathrm{out}},\;\Enc_{\mathrm{out}},\;\cE_{\mathrm{out}})$. Let $\bar{\cO}:\bC^{2^k\times 2^k}\rightarrow\bC^{2^k\times 2^k}$ be the channel acting on $k$ qubits that applies a Hadamard gate to every qubit. Equivalently, $\bar{\cO}(\rho)=H^{\otimes k}\;\rho\; H^{\otimes k}$.

  Then there exists a Pauli fault-tolerant gadget $((\cQ,\cE_{\mathrm{run}}^{\sqcup T}),D_{\mathrm{in}},D_{\mathrm{out}})$ for $\bar{\cO}$, where $\cQ$ is an adaptive quantum circuit using quantum space $N=n$ and time $T=1$.
\end{lemma}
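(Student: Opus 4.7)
The circuit $\cQ$ will simply be the depth-$T=1$ transversal physical Hadamard, applying a single-qubit $H$ to each of the $N=n$ physical qubits labeled by $C^i$. The proof then splits into a noiseless correctness check and a (trivial) noise-propagation analysis.

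\textbf{Noiseless correctness.} The plan is to verify that for every $x\in\bF_2^k$,
\[
H^{\otimes n}\ket{\Enc_{\mathrm{in}}(x)} = \ket{\Enc_{\mathrm{out}}(H^{\otimes k}\ket{x})},
\]
which combined with linearity (and the analogous statement for bras) gives the desired identity $H^{\otimes n}\Enc_{\mathrm{in}}(\rho)(H^{\otimes n})^\dagger = \Enc_{\mathrm{out}}(H^{\otimes k}\rho H^{\otimes k})$ for every density operator $\rho$. I would expand
\[
\ket{\Enc_{\mathrm{in}}(x)} = \frac{1}{\sqrt{|\Enc_{\mathrm{in}}(x)|}}\sum_{y\in\Enc_{\mathrm{in}}(x)}\ket{y},
\]
apply $H^{\otimes n}$ in the standard Fourier form $H^{\otimes n}\ket{y}=2^{-n/2}\sum_{y'\in\bF_2^n}(-1)^{y\cdot y'}\ket{y'}$, and then swap the order of summation. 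Hypothesis~(\ref{eq:dualenc}) lets me replace the inner phase $(-1)^{y\cdot y'}$ by $(-1)^{x\cdot x'}$ whenever $y'$ lies in a coset $\Enc_{\mathrm{out}}(x')$, and forces destructive cancellation for $y'$ outside $Z^i(\cC_*)=\bigsqcup_{x'\in\bF_2^k}\Enc_{\mathrm{out}}(x')$ (via standard character-sum orthogonality over the subgroup $\Enc_{\mathrm{in}}(0)=Z^i(\cC^*)$, whose annihilator under~(\ref{eq:dualenc}) is precisely $Z^i(\cC_*)$). Re-grouping the surviving terms gives exactly $\ket{\Enc_{\mathrm{out}}(H^{\otimes k}\ket{x})}$ up to an overall normalizing constant, which must equal $1$ since both states are unit vectors.

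\textbf{Noise propagation.} Let $\cF=(\cF_1)$ be a $\cE_{\mathrm{run}}$-avoiding Pauli fault, let the input be a Pauli $\cE_{\mathrm{in}}$-deviation $E\Enc_{\mathrm{in}}(\rho)E'$, and write $F_1,F_1'$ for the Paulis defining $\cF_1$. Because every gate in $\cQ$ is a single-qubit unitary, conjugation by $H^{\otimes n}$ sends each single-qubit Pauli on qubit $c\in C^i$ to another single-qubit Pauli on the same qubit $c$ (exchanging $X\leftrightarrow Z$ and fixing $Y$ up to sign). Hence the output of $\cQ[\cF]$ is $\tilde{E}\cdot H^{\otimes n}\Enc_{\mathrm{in}}(\rho)(H^{\otimes n})^\dagger\cdot\tilde{E}'$ for Paulis $\tilde{E},\tilde{E}'$ whose supports satisfy $\supp(\tilde{E})\cup\supp(\tilde{E}')\subseteq(\supp(E)\cup\supp(E'))\cup\supp(F_1)\cup\supp(F_1')$. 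Combined with the noiseless identity, this exhibits the output as a Pauli deviation of $\Enc_{\mathrm{out}}\circ\bar{\cO}(\rho)$ with error support contained in the union of an $\cE_{\mathrm{in}}$-avoiding and an $\cE_{\mathrm{run}}$-avoiding set. Within any connected subgraph of $G_{\mathrm{run}}$ of size $\geq\eta_{\mathrm{in}}$, the union has relative size $\leq\gamma_{\mathrm{in}}+\gamma_{\mathrm{run}}=\gamma_{\mathrm{out}}$, so the support is $\cE_{\mathrm{out}}$-avoiding, as required. Adjoining an $\ell$-qubit reference system changes nothing, since $H^{\otimes n}$ acts trivially on it.

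The only genuine computation is the character-sum manipulation in the first paragraph; I do not anticipate any real obstacle, since~(\ref{eq:dualenc}) was tailored precisely so that this Fourier-duality argument goes through.
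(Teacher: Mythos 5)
Your approach matches the paper's exactly: transversal physical Hadamard as the gadget, Fourier/character-sum manipulation for noiseless correctness using condition~(\ref{eq:dualenc}), and the trivial noise-propagation observation that single-qubit gates cannot spread errors between qubits. The paper computes $H^{\otimes n}\Enc_{\mathrm{in}}\ket{x}$ and $\Enc_{\mathrm{out}}H^{\otimes k}\ket{x}$ separately and observes they agree, whereas you transform one side into the other; this is the same computation organized slightly differently.

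One small slip in your sketch of the character sum: you write that orthogonality is ``over the subgroup $\Enc_{\mathrm{in}}(0)=Z^i(\cC^*)$,'' but $\Enc_{\mathrm{in}}(0)$ is the trivial coset $B^i(\cC)$ (the coboundaries), not all of $Z^i(\cC)$. The relevant sum is $\sum_{y\in\Enc_{\mathrm{in}}(x)}(-1)^{y\cdot y'}=(-1)^{\tilde{x}\cdot y'}\sum_{b\in B^i(\cC)}(-1)^{b\cdot y'}$ over the coset of $B^i(\cC)$, which vanishes unless $y'\in B^i(\cC)^\perp=Z_i(\cC)$ --- this is the surviving set you and the paper both ultimately use. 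As written, your sketch would instead give the annihilator $Z^i(\cC)^\perp=B_i(\cC)$, a strictly smaller set, which is inconsistent with your own conclusion. The fix is to replace $Z^i(\cC^*)$ by $B^i(\cC)$; the method is otherwise sound.
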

\begin{proof}
  The desired adaptive quantum circuit $\cQ$ is simply the depth-$1$ circuit that applies transversal physical Hadamard, that is, applies $H^{\otimes n}$.

  In the absence of noise, this physical circuit implements the desired logical $H$ gates, as for every $x\in\bF_2^k$, then
  \begin{align*}
    H^{\otimes n}\Enc_{\mathrm{in}}\ket{x}
    &= \sum_{y\in\Enc_{\mathrm{in}}(x)}H^{\otimes n}\ket{y} \\
    &\propto \sum_{y\in\Enc_{\mathrm{in}}(x)}\sum_{y'\in\bF_2^n}(-1)^{y\cdot y'}\ket{y'} \\
    &= \sum_{y\in\Enc_{\mathrm{in}}(x),\; y'\in Z_i(\cC)}(-1)^{y\cdot y'}\ket{y'},
  \end{align*}
  while
  \begin{align*}
    \Enc_{\mathrm{out}}H^{\otimes k}\ket{x}
    &\propto \Enc_{\mathrm{out}}\sum_{x'\in\bF_2^k}(-1)^{x\cdot x'}\ket{x'} \\
    &\propto \sum_{x'\in\bF_2^k}(-1)^{x\cdot x'}\sum_{y'\in\Enc_{\mathrm{out}}(x')}\ket{y'} \\
    &\propto \sum_{y\in\Enc_{\mathrm{in}}(x)}\sum_{y'\in Z_i(\cC)}(-1)^{y\cdot y'}\ket{y'},
  \end{align*}
  where the final $\propto$ above holds by~(\ref{eq:dualenc}). The two equations above have the same RHS, so we must have $H^{\otimes n}\Enc_{\mathrm{in}}=\Enc_{\mathrm{out}}H^{\otimes k}$. Thus $\cQ=H^{\otimes n}$ implements the desired logical $H^{\otimes k}$ in the absence of noise.

  Now consider running the circuit $\cQ[\cF]$ with noise from a $\cE_{\mathrm{run}}$-avoiding fault $\cF$ and a $\cE_{\mathrm{in}}$-avoiding Pauli error on the input. Within every connected component of $G_{\mathrm{run}}$ with $\geq\eta_{\mathrm{in}}$ vertices, at most $\gamma_{\mathrm{in}}$-fraction of the qubits receive input errors, and at most $\gamma_{\mathrm{run}}$-fraction of the qubits receive fault errors. Thus the total fraction of qubits in the connected component with Pauli errors is $\leq\gamma_{\mathrm{in}}+\gamma_{\mathrm{run}}$, as desired.
\end{proof}

\begin{remark}
  \label{remark:dualenc}
  Note that the condition~(\ref{eq:dualenc}) in Lemma~\ref{lem:hadamard} is by definition satisfied for $\Enc_{\mathrm{in}}=\Enc^i$, $\Enc_{\mathrm{out}}=\Enc_i$ for $\Enc^i$, $\Enc_i$ defined as in Lemma~\ref{lem:enc1dim}, and therefore also for the product maps $\Enc^i$, $\Enc_i$ defined as in Definition~\ref{def:encrdim}.
\end{remark}

\subsection{Logical Pauli $X$ and $Z$ Measurement Gadgets}
\label{sec:measure}
In this section, we show that measuring out an entire code block in the Pauli $Z$ basis induces logical $Z$ measurements on all of the encoded logical qubits. The same analysis implies an analogous result for Pauli $X$.

\begin{algorithm}
  \caption{\label{alg:measure} Logical Pauli $Z$ measurement gadget in Lemma~\ref{lem:measure}; all variables are defined as in the lemma. Here we write $\delta=\delta^{\cC}$.}
  \SetKwInOut{Input}{Input}
  \SetKwInOut{Output}{Output}

  \SetKwFunction{FnMeasure}{Measure}
  \SetKwProg{Fn}{Function}{:}{}

  \Input{$\sigma\in\bC^{\cC^i\times\cC^i}$ that is a $\cE_{\mathrm{in}}$-deviation of $\Enc(\rho)$ for $\rho\in\bC^{2^k\times 2^k}$}
  \Output{Classical string $x\in\bF_2^k$ giving outcome of Pauli $Z$ measurements on $\rho$}

  \Fn{\FnMeasure{$\sigma;i,\cC^*$}}{
    Measure Pauli $Z$ on all $n$ qubits, and let $z\in\cC^i$ be the measurement outcome \\ \label{li:mem}
    Run \FnSSFlipSyn{$\delta(z);i,\cC^*$} from Algorithm~\ref{alg:ssflip}, and let $a^i\in\cC^i$ be the output \\ \label{li:medec}
    \Return{$\Enc^{-1}(z+a^i)$} \label{li:meret}
  }
\end{algorithm}

\begin{lemma}
  \label{lem:measure}
  For $r\in\bN$ and $1\leq i\leq r-1$, let $\cC^*$ be an $r$-dimensional cochain complex of locality $w$ with a $(m,0)$-small-set flip decoder (see Definition~\ref{def:ssflip}) at level $i$. Let $Q$ be the $[[n,k]]$ CSS code associated to level $i$ of $\cC^*$, with an associated CSS encoding map $\Enc$ (see Definition~\ref{def:cctodec}). Given some graph $G_{\mathrm{in}}=(V_{\mathrm{in}},E_{\mathrm{in}})$ that contains $G^{\cC}_{i-1,i,i+1}\subseteq G_{\mathrm{in}}$ as a subgraph and some
  \begin{align*}
    \eta_{\mathrm{in}} &\leq \min\{m,\; d_i(\cC)\} \\
    \gamma_{\mathrm{in}} &\leq \frac{1}{50w^7},
  \end{align*}
  let
  \begin{align*}
    \cE_{\mathrm{in}} &= \cE(G_{\mathrm{in}},\; \eta_{\mathrm{in}},\; \gamma_{\mathrm{in}}),
  \end{align*}
  and define the decorated code (see Definition~\ref{def:deccode}) $D_{\mathrm{in}}=(Q,\;\Enc,\;\cE_{\mathrm{in}})$.

  Let $\bar{\cO}:\bC^{2^k\times 2^k}\rightarrow\bC^{2^k\times 2^k}$ be the channel that takes as input $k_{\mathrm{in}}=k$ logical qubits (and $M_{\mathrm{in}}=0$ classical bits) and outputs $M_{\mathrm{out}}=k$ classical bits (and $k_{\mathrm{out}}=0$ qubits), which is given by measuring the $k$ input qubits in the Pauli $Z$ basis, and outputting the measurement outcomes (see Definition~\ref{def:gadget}).

  Then there exists a Pauli fault-tolerant gadget $((\cQ,\emptyset),D_{\mathrm{in}},\emptyset)$ for $\bar{\cO}$, where $\cQ$ is an adaptive quantum circuit using quantum space $N=n$ and time $T=2$.
\end{lemma}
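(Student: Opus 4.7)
The plan is to observe that since the gadget's classical output is entirely determined by the transversal $Z$-measurements in line~\ref{li:mem}, which occur in the first quantum timestep, no fault $\cF$ during execution can affect the output; only the input Pauli deviation $\sigma$ matters. This justifies the empty $\cE_{\mathrm{run}} = \emptyset$ in the statement, and reduces the analysis to tracking how an input Pauli error propagates through the measurement and the syndrome-flip decoder.

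The noiseless case is immediate: expanding $\rho = \sum_{x,x'}\rho_{x,x'}\ket{x}\bra{x'}$ and encoding, a $Z$-basis measurement yields $z = \Enc(x) + b$ for $b \in B^i(\cC)$ uniform and $x$ distributed as $\rho_{x,x}$, so $\delta(z) = 0$, \FnSSFlipSyn{} returns $a^i = 0$, and the output $\Enc^{-1}(z) = x$ matches $\bar{\cO}(\rho)$. For a Pauli input deviation $\sigma = Z^{e_Z}X^{e_X}\Enc(\rho)X^{e_X'}Z^{e_Z'}$ with $\supp(e_X) \cup \supp(e_X') \cup \supp(e_Z) \cup \supp(e_Z') \subseteq S_\cF$ an $\cE_{\mathrm{in}}$-avoiding set, only the $X$-components shift the measurement outcome; the $Z$-phases $(-1)^{(e_Z+e_Z')\cdot z}$ are constant across each coset $\Enc(x) + B^i(\cC)$ by an argument analogous to the one at the end of the proof of Proposition~\ref{prop:switchdown} (using $d_i(\cC) \geq \eta_{\mathrm{in}}$ to show that $e_Z + e_Z'$ restricted to each small component of $S_\cF$ lies in $B_i(\cC) = Z^i(\cC)^\perp$). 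The effective measurement outcome is thus $z = \Enc(x) + b + f$ for $f := e_X$, giving $\delta(z) = \delta(f)$ with $\supp(f) \subseteq S_\cF$.

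Letting $S_{\mathrm{syn}}$ denote the footprint of \FnSSFlipSyn{$\delta(z); i, \cC^*$}, Lemma~\ref{lem:footprint} together with $\gamma_{\mathrm{in}} \leq 1/50w^7$ (in the spirit of Claim~\ref{claim:sdsynavoid} from the proof of Proposition~\ref{prop:switchdown}) shows that $S_\cF \cup S_{\mathrm{syn}}$ induces a subgraph of $G_{\mathrm{in}}$ with no connected component of size $\geq \eta_{\mathrm{in}}$. By Lemma~\ref{lem:sslocal} and the $(m, 0)$-small-set syndrome-flip decoder guarantee, the while loop inside each connected component cannot terminate until $\delta(z + a^i) = 0$ there, so $f + a^i \in Z^i(\cC)$ globally. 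Since each component has size $< \eta_{\mathrm{in}} \leq d_i(\cC)$ and the restriction of a cocycle to any such component is again a cocycle (distinct components share no $(i{-}1)$-check), every such restriction must be a boundary in $B^i(\cC)$, so $f + a^i \in B^i(\cC)$. Hence $z + a^i \in \Enc(x) + B^i(\cC)$ and $\Enc^{-1}(z + a^i) = x$, matching the logical $Z$-measurement outcome. The main obstacle will be careful bookkeeping for the $Z$-phase cancellations from off-diagonal Pauli components of the deviation; I expect to handle this by the same systolic-distance argument applied dually, mirroring the final steps of the proof of Proposition~\ref{prop:switchdown}.
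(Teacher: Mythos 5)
Your observation that the fault $\cF$ cannot affect the classical output is exactly the paper's first move, and your treatment of the $Z$-phases via the systolic distance $d_i(\cC)$ (showing $e_Z + e_Z' \in B_i(\cC) = (Z^i(\cC))^\perp$ on each small connected component of the error support) is also the paper's approach. The error-tracking framework via \FnSSFlipSyn{}, Lemma~\ref{lem:sslocal}, and Lemma~\ref{lem:footprint} is likewise on target.

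However, the step where you conclude $f + a^i \in B^i(\cC)$ has a genuine gap. You argue: each connected component has size $< \eta_{\mathrm{in}} \leq d_i(\cC)$, the restriction of $f + a^i$ to each component is a cocycle, hence a coboundary. But $f + a^i$ is a \emph{cocycle} (an element of $Z^i(\cC) = \ker(\delta_i)$), and the distance needed to conclude that a low-weight cocycle is a coboundary is the \emph{cosystolic} distance $d^i(\cC)$, not the systolic distance $d_i(\cC)$. The hypothesis of Lemma~\ref{lem:measure} only assumes $\eta_{\mathrm{in}} \leq d_i(\cC)$ (which you correctly used for the $Z$-error / homology side) and says nothing about $d^i(\cC)$. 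Also, your parenthetical justification that the restriction is still a cocycle --- ``distinct components share no $(i{-}1)$-check'' --- points in the wrong direction: the coboundary map $\delta$ raises the grading, so you need that distinct components share no element of $C^{i+1}$ above them (which does hold, since $G^{\cC}_{i,i+1} \subseteq G_{\mathrm{in}}$ connects any two $i$-cells sharing a $c^{i+1}$).

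The paper avoids invoking $d^i(\cC)$ entirely. Instead it runs the \emph{error-flip} decoder \FnSSFlipErr{$e_X + a^i$} as an analysis tool, tracks its footprint $S_{\mathrm{err}}$ with the same percolation bookkeeping (Claim~\ref{claim:meerravoid}), and shows the decoder outputs an explicit $\tilde{a}^{i-1}$ with $\delta(\tilde{a}^{i-1}) = e_X + a^i$, exhibiting the coboundary representative directly. This is what the $m$-small-set error-flip decoder at level $i$ buys you and why it is in the hypothesis. Your route is recoverable: the error-flip decodability already implies $d^i(\cC) > m \geq \eta_{\mathrm{in}}$ (a low-weight cocycle has trivial syndrome, so only type-(\ref{it:notmin}) flips apply, and iterating them must zero it out), but you would need to make this derivation explicit rather than cite $d_i(\cC)$. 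Finally, you wave at the off-diagonal terms ($x \neq x'$) as a remaining obstacle; the concrete fix is that $e_X + e_X' = (e_X + a^i) + (e_X' + a^i) \in B^i(\cC)$ (once you have both coboundary statements), which forces $y + y' \in B^i(\cC)$ and hence $x = x'$ for any surviving term, exactly as the paper does with its primed and unprimed SSFlipErr runs.
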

\begin{proof}
  The proof will be similar to that of Proposition~\ref{prop:switchdown}, but even simpler as here there is no product involved. We will therefore omit some details to avoid redundancy.

  The desired adaptive quantum circuit $\cQ$ is given in Algorithm~\ref{alg:measure}. By definition line~\ref{li:mem} and line~\ref{li:medec} each use $1$ timestep, so the entire circuit uses $T=2$ timesteps. Furthermore, the circuit by definition uses $N=|C^i|=n$ physical qubits.

  Our goal is to show that for every $\ell\in\bN$, every $\rho\in\bC^{2^{k+\ell}\times 2^{k+\ell}}$, and every Pauli $\cE_{\mathrm{in}}$-deviation\footnote{See Footnote~\ref{footnote:sdabuse}.} $\sigma=E\Enc(\rho)E'$ (for Paulis $E,E'$ such that $\supp(E)\cup\supp(E')\subseteq C^i$ is $\cE_{\mathrm{in}}$-avoiding), the resulting output $\cQ(\sigma)\in\bF_2^k$ is the result of measuring the first ($k$-qubit) register of $\rho$. Note that by Definition~\ref{def:fault}, a fault $\cF$ will only act on the post-measurement qubits in Algorithm~\ref{alg:measure}, which are discarded (i.e.~traced out) anyways. Hence the output of Algorithm~\ref{alg:measure} remains the same under arbitrary faults.

  Write $E=Z^{e_Z}X^{e_X}$, $E'=X^{e_X'}Z^{e_Z'}$, and expand
  \begin{align*}
    \rho
    &= \sum_{x,x'\in\bF_2^k}\rho_{x,x'}\ket{x}\bar{x'}\otimes\nu_{x,x'}
  \end{align*}
  for some $\nu_{x,x'}\in\bC^{2^\ell\times 2^\ell}$. Then the input state just prior to the measurements in line~\ref{li:mem} of Algorithm~\ref{alg:measure} is
  \begin{align*}
    E\Enc(\rho)E'
    &= \sum_{x,x',y,y'}\rho_{x,x'}Z^{e_Z}\ket{y+e_X}\bra{y'+e_X'}Z^{e_Z'}\otimes\nu_{x,x'}
  \end{align*}
  where the sum above is over all $x,x'\in\bF_2^k$, $y\in\Enc(x)$, and $y'\in\Enc(x')$. Performing the measurements in line~\ref{li:mem} then collapses the state to
  \begin{equation}
    \label{eq:mepm}
    \sum_{z\in C^i}\sum_{x,x',y,y'}\rho_{x,x'}Z^{e_Z}\ket{z}\bra{z}Z^{e_Z'}\otimes\nu_{x,x'},
  \end{equation}
  where the inner sum above is over all $x,x'\in\bF_2^k$, $y\in\Enc(x)$, $y'\in\Enc(x')$ for which
  \begin{equation}
    \label{eq:mez}
    y+e_X = z = y'+e_X'.
  \end{equation}

  Let $S_{\mathrm{in}}=\supp(E)\cup\supp(E')$, so that by definition $S_{\mathrm{in}}$ is $\cE_{\mathrm{in}}$-avoiding.

  Let $S_{\mathrm{syn}}\subseteq C^i\sqcup C^{i+1}$ be the footprint (see Definition~\ref{def:footprint}) of the call to \FnSSFlipSyn{$\delta(z);i,\cC^*$} in line~\ref{li:medec} of Algorithm~\ref{alg:measure}.

  \begin{claim}
    \label{claim:mesynavoid}
    The set $S_{\mathrm{in}}\cup S_{\mathrm{syn}}$ is $\cE(G_{\mathrm{in}},\; \eta_{\mathrm{in}},\; 5w^3\gamma_{\mathrm{in}})$-avoiding.
  \end{claim}

  The proof of Claim~\ref{claim:mesynavoid} is nearly identical to that of Claim~\ref{claim:sdsynavoid} (just with the subscript ``in'' replacing the subscript ``run,'' and the coefficient $5w^3$ replacing $15w^3$ as by definition here $S_{\mathrm{in}}$ is $\cE_{\mathrm{in}}$-avoiding), so we omit it to avoid redundancy.

  Furthermore, similarly as described following Claim~\ref{claim:sdsynavoid} in the proof of Proposition~\ref{prop:switchdown}, because $\gamma_{\mathrm{in}}\leq 1/5w^3$, the subgraph of $G_{\mathrm{in}}$ induced by $S_{\mathrm{in}}\cup S_{\mathrm{syn}}$ has no connected components containing $\geq\eta_{\mathrm{in}}$ vertices. Therefore because $\eta_{\mathrm{in}}\leq m$, within every such connected component, the restricted execution \FnSSFlipSyn{$\delta(z|_{V\cap C^i});i,\cC^*$} outputs $a^i|_{V\cap C^i}$, so that $\delta(z+a^i|_{V\cap C^i})=\delta(e_X+a^i|_{V\cap C^i})=0$. We analogously have $\delta(e_X'+a^i|_{V\cap C^i})=0$.

  We now define $S_{\mathrm{err}}$ and $\tilde{a}^{i-1},\tilde{a}^i$ (resp.~$S_{\mathrm{err}}'$ and $(\tilde{a}^{i-1})',(\tilde{a}^i)'$) to be the footprint and output of \FnSSFlipErr{$e_X+a^i;i,\cC^*$} (resp.~\FnSSFlipErr{$e_X'+a^i;i,\cC^*$}), respectively. We prove the following claim for the unprimed variables $e_X,\dots$, but it analogously applies to the primed variables $e_X',\dots$.

  \begin{claim}
    \label{claim:meerravoid}
    The set $S_{\mathrm{in}}\cup S_{\mathrm{syn}}\cup S_{\mathrm{err}}$ is $\cE(G_{\mathrm{in}},\; \eta_{\mathrm{in}},\; 50w^7\gamma_{\mathrm{in}})$-avoiding.
  \end{claim}

  The proof of Claim~\ref{claim:meerravoid} is nearly identical to that of Claim~\ref{claim:sderravoid}, so we omit it to avoid redundancy.

  Furthermore, similarly as described following Claim~\ref{claim:sderravoid} in the proof of Proposition~\ref{prop:switchdown}, we must have $\tilde{a}^i=0$. Also, because $\gamma_{\mathrm{in}}\leq 1/50w^7$, we conclude that the subgraph of $G_{\mathrm{in}}$ induced by $S_{\mathrm{in}}\cup S_{\mathrm{syn}}\cup S_{\mathrm{err}}$ has no connected components containing $\geq\eta_{\mathrm{in}}$ vertices. Then because $\eta_{\mathrm{in}}\leq m$, by the assumption that $\cC^*$ has an $m$-small-set error-flip decoder at level $i$, we conclude that within every connected component $V$ we have $\delta(\tilde{a}^{i-1}|_{V\cap C^{i-1}})=e_X+a^i|_{V\cap C^i}$, and thus
  \begin{equation}
    \label{eq:meta}
    \delta(\tilde{a}^{i-1}) = e_X+a^i.
  \end{equation}
  As the analysis above also applies to the primed variables, we also have
  \begin{equation*}
    \delta((\tilde{a}^{i-1})') = e_X'+a^i.
  \end{equation*}

  The above equations combined with~(\ref{eq:mez}) imply that
  \begin{equation*}
    y+y' = e_X+e_X' = \delta(\tilde{a}^{i-1}+(\tilde{a}^{i-1})') \in B^i(\cC).
  \end{equation*}
  Thus we must have $x=x'$, so if the post-measurement state in~(\ref{eq:mepm}) is nonzero, then we must have $e_X+e_X'\in B^i(\cC)$, and the state can be equivalently written as
  \begin{equation}
    \label{eq:mefin}
    \sum_{x\in\bF_2^k}\; \sum_{y\in\Enc(x)}\rho_{x,x'}Z^{e_Z}\ket{y+e_X}\bra{y+e_X}Z^{e_Z'}\otimes\nu_{x,x'}.
  \end{equation}

  Now similarly as in the proof of Proposition~\ref{prop:switchdown}, if $e_Z+e_Z'\notin Z_i(\cC)=B^i(\cC)^\perp$, then the phases induced by the Paulis $Z^{e_Z},Z^{e_Z'}$ in the inner sum in~(\ref{eq:mefin}) cancel, and the expression becomes $0$. Thus assume $e_Z+e_Z'\in Z_i(\cC)$. Now because $\supp(e_Z+e_Z')\subseteq S_{\mathrm{in}}$ is $\cE_{\mathrm{in}}$-avoiding, the subgraph of $G^{\cC}_i\subseteq G_{\mathrm{in}}$ induced by this set contains no connected components with $\geq\eta_{\mathrm{in}}$ vertices. Therefore by the assumption that $\eta_{\mathrm{in}}\leq d_i(\cC)$, we cannot have $e_Z+e_Z'\in Z_i(\cC)$ belonging to a nontrivial homology class, that is, we msut have $e_Z+e_Z'\in B_i(\cC)$. Thus all terms in the sum in~(\ref{eq:mefin}) receive the same phase $(-1)^{(e_Z+e_Z')\cdot e_X}$ from the Paulis $Z^{e_Z},Z^{e_Z'}$, so we can remove these Paulis while simply inducing a global phase on the overall expression. Tracing out the code register of the resulting state yields
  \begin{equation*}
    \sum_{x\in\bF_2^k}\rho_{x,x}\nu_{x,x} = \tr_{[k]}(\rho),
  \end{equation*}
  where for a given $x\in\bF_2^k$, by~(\ref{eq:meta}), Algorithm~\ref{alg:measure} will return
  \begin{equation*}
    \Enc^{-1}(z+a^i) = \Enc^{-1}(y+e_X+a^i) = \Enc^{-1}(y+B^i(\cC)) = x,
  \end{equation*}
  as desired.
\end{proof}




\section{Applications}
\label{sec:apply}
In this section, we combine our gadgets from the sections above to create new gadgets for constant-overhead fault-tolerant logical operations.


In Section~\ref{sec:applycode} below, we describe the lossless-expander-based codes with which we instantiate our gadgets. The subsequent sections apply these codes to construct our desired fault-tolerant gadgets in Theorem~\ref{thm:main}. Recall that Proposition~\ref{prop:stateprep} implies item~\ref{it:mainprep} of Theorem~\ref{thm:main}. Section~\ref{sec:appperm} and Section~\ref{sec:hadsame} below prove item~\ref{it:mainperm} and item~\ref{it:mainHa} respectively in Theorem~\ref{thm:main}. Section~\ref{sec:apptarget} then proves item~\ref{it:mainHt} and item~\ref{it:mainCNOTt} in Theorem~\ref{thm:main}. Section~\ref{sec:app2dprep} highlights a surprising consequence of item~\ref{it:mainprep} in Theorem~\ref{thm:main} (that follows from Proposition~\ref{prop:stateprep} and Proposition~\ref{prop:switchdown}), namely, that we can prepare 2-dimensional product code states in bulk with constant space-time overhead.


\begin{remark}
  As Theorem~\ref{thm:main} is stated with fault-tolerance defined in the sense of a threshold under locally stochastic noise, it follows from the results listed above combined with Proposition~\ref{prop:errcorr}, Lemma~\ref{lem:redPauli}, and Lemma~\ref{lem:threshold}.
\end{remark}

\subsection{Code Instantiation}
\label{sec:applycode}
In this section, we describe the codes arising from tensor products of 1-dimensional cochain complexes from lossless expanders that we will use to instantiate our gadgets throughout the remainder of Section~\ref{sec:apply}.

We first fix some $r\in\bN$ and $\epsilon>0$, and we define $\epsilon=\epsilon(r,1/4)$ to be the value defined in Proposition~\ref{prop:ssflip}. For this value of $\epsilon$, we then define $\mu,R,\Delta_L,\Delta_R$ as in Corollary~\ref{cor:losslessfamily}, and we fix some $(\Delta_L,\Delta_R)$-biregular $(\mu,\epsilon)$-lossless expander $G=(V_L\sqcup V_R,E)$ with specified set\footnote{The set $L^{G,1}$ here is called $S_i$ in Corollary~\ref{cor:losslessfamily}.} $L^{G,1}\subseteq V_R$ from the family in Corollary~\ref{cor:losslessfamily}.

Following the notation in Lemma~\ref{lem:enc1dim}, let $\cC^{G,*}$ denote the 1-dimensional cochain complex associated to $G$. We fix some information sets $M^{G,0}\subseteq V_L=C^{G,0}$, $M^{G,1}\subseteq V_R=C^{G,1}$ for $Z^0(\cC^G)$, $Z_1(\cC^G)$ respectively such that $L^{G,1}\subseteq M^{G,1}$. We then define $\cM^{G,*}$ and $\Enc^G:\cM^{G,*}\rightarrow\cC^{G,*}$ as in Lemma~\ref{lem:enc1dim}.

In this section, we consider codes associated to cochain complexes of the form $\cC^*={\cC^{(1)}}^*\otimes\cdots\otimes{\cC^{(r)}}^*$ with encoding map $\Enc^{\cC}=\Enc^{(1)}\otimes\cdots\otimes\Enc^{(r)}$ as defined in Definition~\ref{def:encrdim}, where each cochain complex ${\cC^{(h)}}^*$ equals either $\cC^{G,*}$ or its dual complex $\cC^G_*$. By Proposition~\ref{prop:ssflip}, this cochain complex $\cC^*$ (resp.~the chain complex $\cC_*$) has a $(m,\gamma_{\mathrm{dec}})$-small-set flip decoder at every level $0\leq i\leq r-1$ (resp.~$1\leq i\leq r$) for $m=\mu|V_L|/(r\Delta_L^{r+1}+1)$ and $\gamma_{\mathrm{dec}}=1/10$.

By Corollary~\ref{cor:losslessfamily} and Proposition~\ref{prop:ssflip}, we similarly have a $(m,\gamma_{\mathrm{dec}})$-small-set flip decoder for the cochain complex ${\cC^{(1)}}^*\otimes\cdots\otimes{\cC^{(r-1)}}^*\otimes\cB_{\bar{L}}^*$ (and its associated chain complex), where $\cB_{\bar{L}}^*$ is defined as in Section~\ref{sec:csstate} for $\cB={\cC^{(r)}}^*$ and $L^{\cB,1}=L^{G,1}$. Note that this statement holds when replacing any of the $r$ factors ${\cC^{(h)}}^*$ with $\cB_{\bar{L}}^*$.

Because the classical codes $Z^i(\cC^G),Z_i(\cC^G)$ have distance $\geq\mu|V_L|$ by Lemma~\ref{lem:losslesstoun}, the quantum code at every level $1\leq i\leq r-1$ of $\cC^*$ must have distance $\min\{d^i(\cC),d_i(\cC)\}\geq\mu|V_L|$ (see e.g.~\cite[Lemma~3.36]{golowich_quantum_2025-1}, which in turn is based on arguments in \cite{tillich_quantum_2014,bravyi_homological_2014}).

Here we think of $|V_L|=\Theta(|V_R|)$ as growing while $r$ and the resulting $r\mu,R,\Delta_L,\Delta_R$ remain constant. Hence the code associated to some level $1\leq i\leq r-1$ of $\cC^*$ has length
\begin{equation*}
  n = \Theta_r(|V_L|^{r})
\end{equation*}
and distance
\begin{equation*}
  d\geq\mu|V_L|=\Omega_r(n^{1/r}).
\end{equation*}
We will always assume that $i$ of the values $h\in[r]$ have ${\cC^{(h)}}^*=\cC^{G,*}$ and $r-i$ of these values have ${\cC^{(h)}}^*=\cC^G_*$, so that the code at level $i$ has dimension
\begin{equation*}
  k\geq(R|V_R|)^{r}=\Theta_r(n)
\end{equation*}
by Corollary~\ref{cor:losslessfamily} and Proposition~\ref{prop:kunneth}. Also by definition, the locality $w$ of $\cC^*$ does not depend on $|V_L|=\Theta(|V_R|)$, so that
\begin{equation*}
  w \leq O_r(1).
\end{equation*}

\subsection{Logical Qubit Permutations}
\label{sec:appperm}
In this section, we apply our code switching gadgets to perform highly flexible and parallel logical permutations with constant space-time overhead. We begin with a gadget that arbitrarily permutes the ``slabs,'' or axis-parallel codimension-1 hyperplanes, within a set (of arbitrary size $\kappa\in\bN$) of $r$-dimensional hypercubes $(L^{G,1})^r$ of logical qubits.

\begin{proposition}
  \label{prop:permslab}
  For arbitrary fixed $r\geq 3$, define all variables as in Section~\ref{sec:applycode}, and let $Q$ be the $[[n,k]]$ code associated to some level $1\leq i\leq r-1$ of $\cC^*={\cC^{(1)}}^*\otimes\cdots\otimes{\cC^{(r)}}^*$ with encoding map $\Enc^{\cC}$ as in Section~\ref{sec:applycode}.

  There exist sufficiently small $\gamma_{\mathrm{in}}(r)>0$ and $\zeta_{\mathrm{run}}(r)>0$ such that for every
  \begin{align*}
    \eta_{\mathrm{in}} &\leq m \\
    \gamma_{\mathrm{in}} &\leq \gamma_{\mathrm{in}}(r) \\
    \gamma_{\mathrm{run}} &\leq \zeta_{\mathrm{run}}(r)\cdot\gamma_{\mathrm{in}},
  \end{align*}
  the following holds.

  Let $G_{\mathrm{run}}=(V_{\mathrm{run}},E_{\mathrm{run}}):=G^{\cC}_{[r]}$ and
  \begin{align*}
    \cE_{\mathrm{in}} = \cE_{\mathrm{out}} &= \cE(G_{\mathrm{run}},\; \eta_{\mathrm{in}},\; \gamma_{\mathrm{in}}) \\
    \cE_{\mathrm{run}} &= \cE(G_{\mathrm{run}},\; \eta_{\mathrm{in}},\; \gamma_{\mathrm{run}}),
  \end{align*}
  and define the decorated code
  \begin{equation*}
    D = (Q,\; \Enc^{\cC},\; \cE_{\mathrm{in}}).
  \end{equation*}
  Here the $n=|C^i|$ physical qubits of $Q$ are naturally associated with the set $C^i\subseteq V_{\mathrm{run}}$.

  For every $\kappa\in\bN$, every $h\in[r]$, and every permutation\footnote{Recall that by definition $(L^{G,1})^{\sqcup\kappa}=[\kappa]\times L^{G,1}$.} $\pi:(L^{G,1})^{\sqcup\kappa}\rightarrow (L^{G,1})^{\sqcup\kappa}$, there exists a Pauli fault-tolerant gadget $((\cQ,((\cE_{\mathrm{run}}^{\sqcup\kappa})^{\sqcup u})^{\sqcup T}),D^{\sqcup\kappa},D^{\sqcup\kappa})$ for $\bar{\cO}$, where:
  \begin{itemize}
  \item $\cQ$ is an adaptive quantum circuit using quantum space $N=O_r(\kappa\cdot n)$ and time $T=O_r(1)$, and $u=O(1)$.
  \item $\bar{\cO}:\bC^{(\cM^{\cC,i})^{\kappa}\times(\cM^{\cC,i})^{\kappa}}\rightarrow\bC^{(\cM^{\cC,i})^{\kappa}\times(\cM^{\cC,i})^{\kappa}}$ is a channel that acts on $\kappa\cdot k$ qubits labeled by the set $(M^{\cC,i})^{\sqcup\kappa}$, that permutes the qubits in $((L^{G,1})^r)^{\sqcup\kappa}\subseteq(M^{\cC,i})^{\sqcup\kappa}$ as follows: under the isomorphism $((L^{G,1})^r)^{\sqcup\kappa}\cong(L^{G,1})^{h-1}\times(L^{G,1})^{\sqcup\kappa}\times(L^{G,1})^{r-h}$, then $\bar{\cO}$ permutes the qubits according to $I^{\otimes h-1}\otimes\pi\otimes I^{\otimes r-h}$ (so that the qubit at position $(b_1,\dots,b_h,\dots,b_r)\in(L^{G,1})^{h-1}\times(L^{G,1})^{\sqcup\kappa}\times(L^{G,1})^{r-h}$ is moved to $(b_1,\dots,\pi(b_h),\dots,b_r)$); the qubits in $(M^{\cC,i})^{\sqcup\kappa}\setminus((L^{G,1})^r)^{\sqcup\kappa}$ may be moved or reset arbitrarily.
  \end{itemize}
\end{proposition}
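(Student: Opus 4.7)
The plan is to compose four gadgets in sequence, following the schematic in Figure~\ref{fig:swap}: downwards code switching in direction $h$, a classical relabeling by $\pi$, upwards code switching in direction $h$, and a final round of error correction to restore the output decoration. Using the symmetric structure of the tensor product, I factor $\cC^* \cong \cA^* \otimes \cB^*$ with $\cB^* = \cC^{(h)*}$ and $\cA^*$ the tensor product of the remaining $r-1$ one-dimensional complexes, setting the extendable set to $L^{\cB,1} := L^{G,1}$. The small-set flip decoders on all relevant subcomplexes are guaranteed by the discussion in Section~\ref{sec:applycode}, which is precisely what the input hypotheses of Propositions~\ref{prop:switchdown}, \ref{prop:switchup}, and \ref{prop:errcorr} require.

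In parallel across the $\kappa$ input codeblocks (via Proposition~\ref{prop:parcomp}), I first apply Proposition~\ref{prop:switchdown}. This produces $\kappa \cdot |L^{G,1}|$ copies of an $(r-1)$-dimensional code associated to the appropriate level of $\cA^*$, naturally indexed by $(L^{G,1})^{\sqcup\kappa} = [\kappa] \times L^{G,1}$; the logical qubits whose $h$-th coordinate lies outside $L^{G,1}$ are discarded. Second, I permute these indices according to $\pi$: this is a pure relabeling of which physical codeblock plays which logical role, and contributes no physical operations or fault locations. Third, I apply Proposition~\ref{prop:switchup} in parallel to reassemble the permuted blocks into $\kappa$ new $r$-dimensional codeblocks; the switch-up reinitializes the previously-discarded logical qubits to $\ket{0}$. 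Fourth, I apply Proposition~\ref{prop:errcorr} in parallel on the $\kappa$ output codeblocks to bring the output decoration down to $\cE_{\mathrm{out}} = \cE_{\mathrm{in}}$. The net effect on logical qubits in $((L^{G,1})^r)^{\sqcup\kappa}$, each of which has $h$-th coordinate in $L^{G,1}$ and therefore survives the switch-down intact, is exactly $I^{\otimes h-1} \otimes \pi \otimes I^{\otimes r-h}$; logical qubits in $(M^{\cC,i})^{\sqcup\kappa} \setminus ((L^{G,1})^r)^{\sqcup\kappa}$ end up in arbitrary states, consistent with what the proposition allows.

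Pauli fault-tolerance of the composition follows from sequential composition of fault-tolerant gadgets together with Proposition~\ref{prop:parcomp}. The main obstacle is bookkeeping: threading the error-density parameters through the four subgadgets. The switch-down inflates $\gamma$ by a factor $300 w^7$, the switch-up inflates by a further factor $\zeta_{\mathrm{out}}(w)$, and the error correction then produces an output density $\gamma_{\mathrm{out}}^{\mathrm{ec}}$ that is proportional to its fault density but essentially independent of its input density (once that input lies in the correctable regime). Choosing $\gamma_{\mathrm{in}}(r)$ small enough that each intermediate gadget receives an input within its allowed input regime, and choosing $\zeta_{\mathrm{run}}(r)$ small enough that the final post-correction density is $\leq \gamma_{\mathrm{in}}$, yields all four threshold conditions simultaneously and hence the claimed parameters. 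The total time $T = O_r(1)$, space $N = O_r(\kappa n)$, and constant $u = O(1)$ are inherited additively from the constituent gadgets.
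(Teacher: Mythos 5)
Your overall architecture (switch down, permute, switch up, error-correct, all in parallel across $\kappa$ blocks via Proposition~\ref{prop:parcomp}) matches the paper's, but there are two genuine gaps.

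First, and most substantively: you claim the permutation step is ``a pure relabeling of which physical codeblock plays which logical role, and contributes no physical operations or fault locations.'' This does not work in the paper's framework. After switching down, the $\kappa\cdot|L^{G,1}|$ copies of the $(r-1)$-dimensional code sit at fixed physical locations $A^{i-1}\times\{b\}$ for $b\in(L^{G,1})^{\sqcup\kappa}$, and Proposition~\ref{prop:switchup} is a fixed quantum circuit with a fixed encoding map $(\Enc^{\cA})^{\sqcup L^{\cB,1}}\rightarrow\Enc^{\cC}$: the data physically sitting at location $b$ is encoded with logical label $b$ in the output $r$-dimensional block. If the data does not physically move, the upward switch reassembles the \emph{same} logical state with no permutation applied. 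The decorated-code formalism (Definitions~\ref{def:deccode}, \ref{def:gadget}) ties logical content to physical positions via the encoding map and ties the bad-set structure to the connectivity graph $G_{\mathrm{run}}$, so a classical relabeling cannot be absorbed without either re-deriving a permuted version of Proposition~\ref{prop:switchup} or, as the paper does, physically moving the qubits. The paper's proof uses two rounds of transversal swaps (each implemented by three layers of $CNOT$s, through an ancilla block), and then explicitly accounts for the $6$ CNOT timesteps and the extra block of qubits in the error-density bookkeeping (the ``$12\gamma_{\mathrm{run}}$'' term). Your proposal has none of this cost and would leave the logical permutation unimplemented.

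Second, you do not address the degenerate boundary cases. When ${\cC^{(h)}}^*=\cC^{G,*}$ and $i=1$ (or dually ${\cC^{(h)}}^*=\cC^G_*$ and $i=r-1$), the downward switch with $\cB^*={\cC^{(h)}}^*$ would land the logical data in the level-$0$ code of $\cA^*$, which has no small-set flip decoder in the $X$-basis and poor $X$-distance, so the hypotheses of Propositions~\ref{prop:switchdown} and \ref{prop:switchup} fail. The paper handles this by first performing a down/up switch in a different direction $h'\ne h$ to shift the code to an interior level, then doing the $h$-direction switch with swaps, then switching direction $h'$ back. You need something like this detour to make the construction well-defined at the boundary levels.

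Your remaining bookkeeping — choosing $\gamma_{\mathrm{in}}(r)$ small enough that every intermediate gadget stays in its correctable regime, and using the fact that Proposition~\ref{prop:errcorr}'s output density is proportional to $\gamma_{\mathrm{run}}$ rather than to its input density — is in the right spirit and matches the paper's analysis once the two issues above are fixed.
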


Analogously as described in Remark~\ref{remark:suu}, the gadget $\cQ$ in Proposition~\ref{prop:permslab} invokes a constant number of gadgets previously presented in the paper. As a result, the physical qubits are naturally partitioned into some constant number $u=O(1)$ of blocks, each of which is naturally a subset of $C^{\sqcup\kappa}=V_{\mathrm{run}}^{\sqcup\kappa}$.

\begin{proof}[Proof of Proposition~\ref{prop:permslab}]
  We first present the desired circuit $\cQ$ assuming that ${\cC^{(h)}}^*=\cC^{G,*}$ and $i\geq 2$; we then show how to relax these assumptions to also allow for ${\cC^{(h)}}^*=\cC^G_*$ or $i=1$.

  By the assumptions above, we may apply Proposition~\ref{prop:switchdown} with $\cA^*=\bigotimes_{h'\in[r]\setminus\{h\}}{\cC^{(h')}}^*$ and $\cB^*={\cC^{(h)}}^*$ to map each of the $\kappa$ input $Q$-codeblocks down to a set of $|L^{G,1}|$ codeblocks of the code $Q'$ associated to level $i-1$ of $\cA^*$ (so we obtain a total of $\kappa\cdot|L^{G,1}|$ codeblocks of $Q'$).

  To apply the logical permutation $I^{\otimes h-1}\otimes\pi\otimes I^{\otimes r-h}$ to this state, we first prepare an ancilla block of qubits, also labeled by the set $(A^{i-1}\times L^{G,1})^{\sqcup\kappa}$, and initialized arbitrarily. We swap our $\kappa\cdot|L^{G,1}|$ codeblocks of $Q'$ into this ancilla block, and then swap the qubits back into the original block. However, when performing this second round of swaps, each $Q'$-block (of which there are $\kappa\cdot|L^{G,1}|$) with label $b\in(L^{G,1})^{\sqcup\kappa}$ is moved into the location with label $\pi(b)\in(L^{G,1})^{\sqcup\kappa}$. Each physical swap here is implemented with three $CNOT$ gates. We subsequently apply Proposition~\ref{prop:switchup} to each of the $\kappa$ sets of $|L^{G,1}|$ codeblocks of $Q'$ to return to $\kappa$ codeblocks of $Q$. Finally, we apply the error correction gadget in Proposition~\ref{prop:errcorr} to each resulting $Q$-codeblock.

  If instead ${\cC^{(h)}}^*=\cC^{G}_*$ and $i\leq r-2$, an analogous circuit as described above works, with simply the roles of the $X$ and $Z$ bases swapped. The only remaining cases are when ${\cC^{(h)}}^*=\cC^{G,*}$ and $i=1$, or when ${\cC^{(h)}}^*=\cC^G_*$ and $i=r-1$; we consider the former, as the latter is analogous.

  Specifically, in this case, we cannot switch down with $\cA^*$ and $\cB^*$ as described above, as then the qubits would be in the code associated to level $0$ of $\cA^*$, which as poor distance and small-set flip decodability in the $X$-basis. Instead, for each input $Q$-codeblock, we first apply Proposition~\ref{prop:switchdown} followed by Proposition~\ref{prop:switchup} to change one of the other factors $h'\neq h\in[r]$ from ${\cC^{(h')}}^*=\cC^G_*$ to $\cC^{G,*}$, thereby bringing our logical qubits into the code associated to level $2$ of our $r$-dimensional complex. We can then apply Proposition~\ref{prop:switchdown}, the swap gates (see above), and Proposition~\ref{prop:switchup} as described above to induce the desired logical permutation, where these applications switch the $h$th factor. Finally, we switch back to the original code $Q$ with another more application of Proposition~\ref{prop:switchdown} followed by Proposition~\ref{prop:switchup}, where these final applications switch back the $h'$th factor to its original status. We then finish by applying the error correction gadget in Proposition~\ref{prop:errcorr} to each resulting $Q$-codeblock.

  We will now show that the results referenced above imply that this resulting circuit $\cQ$ yields a Pauli fault-tolerant gadget $((\cQ,((\cE_{\mathrm{run}}^{\sqcup\kappa})^{\sqcup u})^{\sqcup T}),D^{\sqcup\kappa},D^{\sqcup\kappa})$ for $\bar{\cO}$ using quantum space $N=O_r(\kappa\cdot n)$ and time $T=O_r(1)$, as desired.

  The two layers of swap gates described above each by definition use space $\leq 2\kappa n$ and time $3$, and cannot propagate errors between different physical qubits in $(A^{i-1}\times L^{G,1})^{\sqcup\kappa}$. Thus if the Pauli error on each $A^{i-1}$-block (of which there are $\kappa\cdot|L^{G,1}|$) within these qubits is $\cE(G^{\cA}_{[r-1]},\;\eta_{\mathrm{in}},\;\gamma)$-avoiding just prior to performing the swaps, then the error is $\cE(G^{\cA}_{[r-1]},\;\eta_{\mathrm{in}},\;\gamma+12\gamma_{\mathrm{run}})$-avoiding following the swaps. The factor of $12$ here arises because the swaps are comprised of $6$ layers of $CNOT$ gates, and in each of these $6$ timesteps there are two $(A^{i-1}\times L^{G,1})^{\sqcup\kappa}$-blocks of physical qubits that could experience an error from the fault.

  Combining the above reasoning with Proposition~\ref{prop:switchdown}, Proposition~\ref{prop:switchup}, and Proposition~\ref{prop:errcorr} then yields the desired result. First, it follows immediately that $\cQ$ uses quantum space $N=O_r(\kappa\cdot n)$ and time $T=O_r(1)$. Furthermore, by the fault-tolerance guarantees in these propositions and of the swap gates described above, every gadget invoked within $\cQ$ prior to the error correction gadget will take as input a set of codeblocks, each of which is a Pauli $\cE(G,\; \eta_{\mathrm{in}},\; \gamma)$-deviation of the desired logical state at that timestep for some $\gamma=O_r(\gamma_{\mathrm{in}})$. Here $G=G^{\cC}_{[r]}$ or $G=G^{\cA}_{[r-1]}$ for a $C^i$- or $A^{i-1}$-block of physical qubits, respectively. By the results listed above, each such gadget must then output a Pauli $\cE(G,\; \eta_{\mathrm{in}},\; \gamma')$-deviation of the desired logical state at the timestep following that gadget, for some $\gamma'=O_r(\gamma)$.

  Thus the error correction gadget receives as input a set of codeblocks, each of which is a Pauli $\cE(G_{\mathrm{run}},\; \eta_{\mathrm{in}},\; O_r(\gamma_{\mathrm{in}}))$-deviation of the desired output state. As long as $\gamma_{\mathrm{in}}(r)>0$ and $\zeta_{\mathrm{run}}(r)>0$ are sufficiently small constants depending only on $r$, then applying the assumption that $\gamma_{\mathrm{run}}\leq\zeta_{\mathrm{run}}(r)\cdot\gamma_{\mathrm{in}}$ with Proposition~\ref{prop:errcorr} implies that the error correction gadget outputs a $\cE_{\mathrm{in}}=\cE(G_{\mathrm{run}},\; \eta_{\mathrm{in}},\; \gamma_{\mathrm{in}})$-deviation of the desired output codeblock.
\end{proof}

For intuition, it is helpful to consider the $\kappa=1$ case in Proposition~\ref{prop:permslab}. In this case, the logical qubits of $Q$ are labeled by a single $r$-dimensional hypercube $(L^{G,1})^r$, and Proposition~\ref{prop:permslab} provides a constant-overhead gadget for arbirarily permuting the slabs (i.e.~axis-parallel codimension-1 hyperplanes) in any given direction $h\in[r]$.

The corollary below provides an example application of Proposition~\ref{prop:permslab}. Specifically, we show that by applying a cyclic permutation in each direction $h\in[r]$, we can induce a global cyclic permutation on a $\Theta(k)$-sized set of logical qubits.

\begin{corollary}
  \label{cor:logcyc}
  Define $r,Q,n,k,i,\cC^*,G_{\mathrm{run}},\cE_{\mathrm{in}}=\cE_{\mathrm{out}},\cE_{\mathrm{run}},D$ as in Proposition~\ref{prop:permslab}. Choose some relatively prime positive integers $\ell_1,\dots,\ell_r\leq|L^{G,1}|$. Assume that we fix some ordering of the set $L^{G,1}\cong[|L^{G,1}|]$. For every $h\in[r]$, given some $s_h\in[\ell_h]$, let $\pi_h^{s_h}:L^{G,1}\rightarrow L^{G,1}$ be the permutation that cyclically rotates the first $\ell_h$ elements forwards by $s_h$ positions, that is,
  \begin{equation*}
    \pi_h^{s_h}(j) = \begin{cases}
      j+s_h\pmod{\ell_h},&j\in[\ell_h]\\
      j,&j\in[|L^{G,1}|]\setminus[\ell_h].
    \end{cases}
  \end{equation*}

  Then there exists a Pauli fault-tolerant gadget $((\cQ,(\cE_{\mathrm{run}}^{\sqcup u})^{\sqcup T}),D,D)$ for $\bar{\cO}$, where:
  \begin{itemize}
  \item $\cQ$ is an adaptive quantum circuit using quantum space $N=O_r(n)$ and time $T=O_r(1)$, and $u=O(1)$.
  \item $\bar{\cO}:\bC^{\cM^{\cC,i}\times\cM^{\cC,i}}\rightarrow\bC^{\cM^{\cC,i}\times\cM^{\cC,i}}$ is a channel that acts on $k$ qubits labeled by the set $M^{\cC,i}$, that permutes the qubits in $(L^{G,1})^r\subseteq M^{\cC,i}$ according to $\pi_1^{s_1}\otimes\cdots\otimes\pi_r^{s_r}$ (so that the qubit at position $(b_1,\dots,b_r)\in(L^{G,1})^r$ is moved to $(\pi_1^{s_1}(b_1),\dots,\pi_r^{s_r}(b_r))$); the qubits in $M^{\cC,i}\setminus(L^{G,1})^r$ may be moved or reset arbitrarily.
  \end{itemize}
\end{corollary}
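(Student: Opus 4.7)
The plan is to obtain Corollary~\ref{cor:logcyc} as the sequential composition of $r$ invocations of Proposition~\ref{prop:permslab}, each taken with $\kappa=1$. Specifically, for each $h=1,\dots,r$ in turn, I invoke Proposition~\ref{prop:permslab} with that choice of $h$ and permutation $\pi=\pi_h^{s_h}:L^{G,1}\rightarrow L^{G,1}$. Since the input and output decorated codes in Proposition~\ref{prop:permslab} are both $D$ (because $\cE_{\mathrm{in}}=\cE_{\mathrm{out}}$ in the current setup), consecutive invocations compose cleanly: the output of the $h$th step is a valid input for the $(h{+}1)$th step, and the composition remains a Pauli fault-tolerant gadget from $D$ to $D$.

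To see that this composition implements the stated channel, note that the $h$th invocation applies $I^{\otimes h-1}\otimes\pi_h^{s_h}\otimes I^{\otimes r-h}$ on the logical qubits in $(L^{G,1})^r$. These factor-wise permutations in distinct tensor directions commute, and their product across $h=1,\dots,r$ is exactly $\pi_1^{s_1}\otimes\cdots\otimes\pi_r^{s_r}$, as desired. The qubits in $M^{\cC,i}\setminus(L^{G,1})^r$ may be moved or reset arbitrarily at each intermediate step, which is consistent with the conclusion of the corollary (which allows arbitrary behavior on those qubits).

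The resource accounting is immediate. Each of the $r=O_r(1)$ invocations of Proposition~\ref{prop:permslab} uses quantum space $O_r(n)$, time $O_r(1)$, and $u=O(1)$ blocks in its bad-set structure. Concatenating $r$ such gadgets preserves space $N=O_r(n)$ and time $T=O_r(1)$, with the total number of blocks still $O(1)$ (absorbing the factor of $r$ into the $O_r$). Since each step is Pauli fault-tolerant with respect to the common decorated code $D$, so is the concatenation.

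I do not anticipate a genuine obstacle here: given Proposition~\ref{prop:permslab}, the corollary is essentially a bookkeeping exercise. The only point that merits a sentence of care is that the relative primality hypothesis on $\ell_1,\dots,\ell_r$ plays no role in the proof itself; it is included because the intended downstream application of Corollary~\ref{cor:logcyc} (namely item~\ref{it:mainpermcyc} of Theorem~\ref{thm:main}) uses the Chinese Remainder isomorphism $\bZ_{\ell_1\cdots\ell_r}\cong\bZ_{\ell_1}\times\cdots\times\bZ_{\ell_r}$ to realize a single cyclic shift on $k=\ell_1\cdots\ell_r$ logical qubits as a product of per-direction cyclic shifts, exactly the permutations $\pi_h^{s_h}$ that the corollary produces.
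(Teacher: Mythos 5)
Your proposal matches the paper's proof exactly: invoke Proposition~\ref{prop:permslab} once for each direction $h\in[r]$ with $\kappa=1$ and permutation $\pi_h^{s_h}$, and compose, noting that $D_{\mathrm{in}}=D_{\mathrm{out}}=D$ makes sequential composition immediate. Your additional remarks (commutativity of the factor-wise permutations, the $O_r$ bookkeeping, and the observation that coprimality of the $\ell_h$ is not used in the proof but only in the downstream cyclic-shift application) are all correct and consistent with the paper's one-line proof and surrounding discussion.
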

\begin{proof}
  The desired circuit $\cQ$ simply applies Proposition~\ref{prop:permslab} $r$ times, with the permutation $\pi_h^{s_h}$ for every $h\in[r]$. The analysis follows directly from Proposition~\ref{prop:permslab}.
\end{proof}

The action of the logical permutation $\pi_1^{s_1}\otimes\cdots\otimes\pi_r^{s_r}$ on the logical qubits in $[\ell_1]\times\cdots\times[\ell_r]\subseteq(L^{G,1})^r$ is by definition equivalent to the action of addition by $(s_1,\dots,s_r)$ on the abelian group $\bZ_{\ell_1}\times\cdots\times\bZ_{\ell_r}$. Because we chose $\ell_1,\dots,\ell_r$ to be relatively prime, this group is isomorphic to a the cyclic group $\bZ_{\ell_1\cdots\ell_r}$. Thus as we can choose relatively prime $\ell_1,\dots,\ell_r=\Theta(|L^{G,1}|)=\Theta(|V_L|)$, we have obtained a constant-overhead fault-tolerant gadget for performing arbitrary cyclic shifts on a linear number of logical qubits (in a constant-rate code $Q$).

An alternative cyclic permutation gadget was given by \cite[Section~A.3]{xu_fast_2024}; however, this gadget relied on codes with a strong cyclic symmetry, for which it is unclear to what extend good asymptotic families can be found. In contrast, our gadget in Corollary~\ref{cor:logcyc} makes no symmetry assumptions on the code.

\subsection{Parallel Logical Hadamard}
\label{sec:hadsame}
The proposition below obtains a gadget to perform Hadamard gates in parallel on $\Theta(k)$ logical qubits, with the same input and output codes. For this purpose, we combine Lemma~\ref{lem:hadamard}, which performs transversal Hadamard but with different input and output codes, with our code switching gadgets. Our gadget below uses constant space-time overhead, which significantly improves upon the $\Omega(\sqrt{k})$ time overhead in the Hadamard gadget of \cite[Section~A.5]{xu_fast_2024}. In contrast to such prior works, our gadget also makes no symmetry assumptions on the underlying codes.

\begin{proposition}
  \label{prop:hadsame}
  Define $r,Q,n,k,i,\cC^*,G_{\mathrm{run}},\cE_{\mathrm{in}}=\cE_{\mathrm{out}},\cE_{\mathrm{run}},D$ as in Proposition~\ref{prop:permslab}. Then there exists a Pauli fault-tolerant gadget $((\cQ,(\cE_{\mathrm{run}}^{\sqcup u})^{\sqcup T}),D,D)$ for $\bar{\cO}$, where:
  \begin{itemize}
  \item $\cQ$ is an adaptive quantum circuit using quantum space $N=O_r(n)$ and time $T=O_r(1)$, and $u=O(1)$.
  \item $\bar{\cO}:\bC^{\cM^{\cC,i}\times\cM^{\cC,i}}\rightarrow\bC^{\cM^{\cC,i}\times\cM^{\cC,i}}$ is a channel that acts on $k$ qubits labeled by the set $M^{\cC,i}$, that performs a logical Hadamard gate on every qubit in $(L^{G,1})^r\subseteq M^{\cC,i}$; the qubits in $M^{\cC,i}\setminus(L^{G,1})^r$ may be moved or reset arbitrarily. 
  \end{itemize}
\end{proposition}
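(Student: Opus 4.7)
The plan follows the outline sketched in Section~\ref{sec:tarinf} and mirrors the structure of the proof of Proposition~\ref{prop:permslab}: first apply a single round of transversal physical Hadamard gates to realize the logical $H^{\otimes k}$, then sequentially switch each of the $r$ tensor factors back from its ``dualized'' form to the original form, and finally clean up accumulated errors.

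Step one invokes Lemma~\ref{lem:hadamard} on the codeblock. The dual-encoding hypothesis~(\ref{eq:dualenc}) holds for the product encoding maps by Remark~\ref{remark:dualenc}, so the output is a $\cE(G_{\mathrm{run}},\eta_{\mathrm{in}},O_r(\gamma_{\mathrm{in}}))$-deviation of the desired logical state, but now encoded in the CSS code at level $i$ of the dualized complex $\cC_*$. Since $\cC^* = \bigotimes_h {\cC^{(h)}}^*$ is a tensor product, dualizing this complex amounts to swapping $\cC^{G,*} \leftrightarrow \cC^G_*$ in every factor while keeping the level $i$ fixed.

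For step two, we iterate over $h = 1, \ldots, r$. In iteration $h$, we view the current complex as $\cA^* \otimes \cB^*$, where $\cA^*$ is the product of the other $r-1$ factors (in whatever form they currently hold) and $\cB^*$ is the $h$-th factor, currently in the dualized form. We apply Proposition~\ref{prop:switchdown} (or its $X/Z$-exchanged variant from Remark~\ref{remark:sdexchange}) to switch down in direction $h$, retaining only the logical qubits in $M^{\cA,i-1} \times L^{\cB,1}$. We then apply Proposition~\ref{prop:switchup} with $\cB^*$ replaced by its dual, i.e.\ the form matching the original $\cC^*$, to switch back up via logical teleportation, preparing fresh logical $\ket{0}$ states in the positions outside $M^{\cA,i-1} \times L^{\cB,1}$. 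The crucial observation is that every logical qubit in $(L^{G,1})^r$ has its $h$-th coordinate in $L^{G,1} = L^{\cB,1}$, so these logical qubits lie in $M^{\cA,i-1} \times L^{\cB,1}$ and are preserved in every iteration; the other logical qubits may be reset, consistent with the definition of $\bar{\cO}$. After $r = O_r(1)$ iterations, the complex has returned to $\cC^*$ and the encoding to $\Enc^{\cC}$.

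Step three runs Proposition~\ref{prop:errcorr} on the final codeblock to reduce the accumulated error density back to $\gamma_{\mathrm{in}}$, yielding the required $\cE_{\mathrm{in}} = \cE_{\mathrm{out}}$-deviation. The space and time costs compose: each of the $r$ switch-down/switch-up pairs uses $O_r(n)$ quantum space (with $O(1)$ fresh ancilla codeblocks per switch-up) and $O_r(1)$ time, and the final error correction is likewise bounded, so $N = O_r(n)$ and $T = O_r(1)$ with $u = O(1)$ blocks. Fault-tolerance composes multiplicatively: each gadget invoked (Lemma~\ref{lem:hadamard}, Proposition~\ref{prop:switchdown}, Proposition~\ref{prop:switchup}, Proposition~\ref{prop:errcorr}) scales the connected-error density by at most an $O_r(1)$ factor, so choosing the constants $\gamma_{\mathrm{in}}(r)$ and $\zeta_{\mathrm{run}}(r)$ sufficiently small ensures the final output is a $\cE_{\mathrm{in}}$-deviation. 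The main technical subtlety is the bookkeeping around how transversal Hadamard interacts with the tensor-complex structure and how the sequential factor swaps restore it while keeping $(L^{G,1})^r$ intact; the small-set flip decoder hypotheses needed for Proposition~\ref{prop:switchdown} and Proposition~\ref{prop:switchup} on each intermediate complex (including the relevant $\cC_{\bar L}^*$-restrictions) all follow uniformly from Proposition~\ref{prop:ssflip} and Corollary~\ref{cor:losslessfamily}, exactly as in the proof of Proposition~\ref{prop:permslab}.
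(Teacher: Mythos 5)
Your proof follows the same high-level strategy as the paper's: transversal Hadamard via Lemma~\ref{lem:hadamard}, then $r$ rounds of switch-down/switch-up to restore each dualized factor, then error correction. Most of your bookkeeping (dual-encoding hypothesis via Remark~\ref{remark:dualenc}, preservation of $(L^{G,1})^r$ because each logical qubit's $h$-th coordinate lies in $L^{\cB,1}$, multiplicative composition of error-density growth) matches the paper.

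However, you omit a genuine edge case that the paper explicitly addresses. The code-switching gadgets require the current level of the $r$-dimensional product complex to be strictly between the top and bottom: if at iteration $h$ the current level is $1$ with the $h$-th factor equal to $\cC^{G,*}$, switching down would land in the level-$0$ code of $\cA^*$, which has trivial distance in one basis (similarly for level $r-1$ with $\cC^G_*$). This edge case actually occurs: for example, with $r=3$ and $i=1$, immediately after the transversal Hadamard all factors are dualized and the relevant cochain level is $r-1=2$, so the very first iteration attempts a switch at the boundary. The fix (temporarily toggling some other factor $h'\neq h$ before and after the problematic switch, as spelled out in the proof of Proposition~\ref{prop:permslab}) must be invoked here too; your appeal to "exactly as in the proof of Proposition~\ref{prop:permslab}" is directed at the small-set flip decoder hypotheses, which is a different concern and does not cover the level-boundary issue. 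Adding a sentence acknowledging this case and referencing the same workaround would close the gap.
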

\begin{proof}
  The desired gadget $\cQ$ first applies transversal Hadamard to the input codeblock using Lemma~\ref{lem:hadamard}. By Lemma~\ref{lem:hadamard} and Remark~\ref{remark:dualenc}, the input logical state with $H^{\otimes k}$ applied is now encoded in the code associated to level $i$ of the \emph{chain} complex $\cC_*=\cC^{(1)}_*\otimes\cdots\otimes\cC^{(r)}_*$ with encoding map $\Enc^{\cC}_*$. By definition, this complex $\cC_*$ is given by our original cochain complex $\cC^*={\cC^{(1)}}^*\otimes\cdots\otimes{\cC^{(1)}}^*$ with every factor of $\cC^{G,*}$ replaced by $\cC^G_*$ and vice versa. To return to the original code $Q$, we therefore loop through the directions $h\in[r]$ one at a time, and for each $h$ we apply Proposition~\ref{prop:switchdown} to switch down to $|L^{G,1}|$ copies of a code associated to $\bigotimes_{h'\in[r]\setminus\{h\}}{\cC_0^{(h')}}^*$ (where ${\cC_0^{(h')}}^*$ denotes the current value of the factor in direction $h'$), and then we apply Proposition~\ref{prop:switchup} to switch up to a code associated to $\bigotimes_{h'\in[r]}{\cC_1^{(h')}}^*$, where ${\cC_1^{(h')}}^*={\cC^{(h')}}^*$ if $h'=h$ and ${\cC_1^{(h')}}^*={\cC_0^{(h')}}^*$ if $h'\neq h$. After all of these code switchings, we apply the error correction gadget in Proposition~\ref{prop:errcorr}.

  Similarly as in the proof of Proposition~\ref{prop:permslab}, the above procedure may fail if at some point (say at step $h\in[r]$) we end up with in the code at the bottom or top level of the cochain complex. However, this issue may be resolved exactly as in Proposition~\ref{prop:permslab}: we may temporarily switch one of the other factors $h'\neq h$ from ${\cC_0^{(h')}}^*$ to ${\cC_0^{(h')}}_*$, perform the desired switching on the $h$th factor, and then switch the $h'$th factor back to ${\cC_0^{(h')}}^*$.

  Analogously as in the proof of Proposition~\ref{prop:permslab}, the results listed above imply that $((\cQ,(\cE_{\mathrm{run}}^{\sqcup u})^{\sqcup T}),D,D)$ is a Pauli fault-tolerant gadget for $\bar{\cO}$, as desired.
\end{proof}

\subsection{Targeting of Individual Logical Qubits}
\label{sec:apptarget}

In this section, for the code $Q$ associated to some level $1\leq i\leq r-1$ of $\cC^*={\cC^{(1)}}^*\otimes\cdots{\cC^{(r)}}^*$, we present a gadget for swapping an arbitrary pair of logical qubits across a pair of $Q$-codeblocks, while leaving the remaining logical qubits unchanged. This gadget requires just constant space-time overhead, and is based on repeated applications of our logical permutation gadget in Proposition~\ref{prop:permslab}.

By using this gadget to extract a desired logical qubit (or pair of logical qubits) into an ancilla codeblock, and then performing transversal (e.g.~$CNOT$ or Hadamard) gates on the ancilla block, we can thereby perform constant-overhead targeted logical (e.g.~$CNOT$ or Hadamard) gates.

\begin{proposition}
  \label{prop:target}
  Define $r,Q,n,k,i,\cC^*,G_{\mathrm{run}},\cE_{\mathrm{in}}=\cE_{\mathrm{out}},\cE_{\mathrm{run}},D$ as in Proposition~\ref{prop:permslab}. Then for every $b,b'\in((L^{G,1})^r)^{\sqcup 2}$, there exists a Pauli fault-tolerant gadget $((\cQ,(\cE_{\mathrm{run}}^{\sqcup u})^{\sqcup T}),D,D)$ for $\bar{\cO}$, where:
  \begin{itemize}
  \item $\cQ$ is an adaptive quantum circuit using quantum space $N=O_r(n)$ and time $T=O_r(1)$, and $u=O_r(1)$.
  \item $\bar{\cO}:\bC^{(\cM^{\cC,i})^2\times(\cM^{\cC,i})^2}\rightarrow\bC^{(\cM^{\cC,i})^2\times(\cM^{\cC,i})^2}$ is a channel that acts on $2k$ qubits labeled by the set $(M^{\cC,i})^{\sqcup 2}$ (i.e.~two blocks of $M^{\cC,i}$), that swaps the two qubits $b,b'\in\times((L^{G,1})^r)^{\sqcup 2}$, and leaves all other qubits in $((L^{G,1})^r)^{\sqcup 2}$ unchanged; the qubits in $(M^{\cC,i})^{\sqcup 2}\setminus((L^{G,1})^r)^{\sqcup 2}$ may be moved or reset arbitrarily. 
  \end{itemize}
\end{proposition}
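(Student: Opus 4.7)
The plan is to extract each of the two target qubits into a dedicated ancilla codeblock at a matching position, swap the ancillas transversally, and reverse the extraction. Write $b=(b_1,\dots,b_r)$ and $b'=(b'_1,\dots,b'_r)$ and fix target coordinates $(b_1^A,\dots,b_r^A)\in(L^{G,1})^r$. Introduce $2r=O_r(1)$ ancilla codeblocks $A_1,\dots,A_r,A'_1,\dots,A'_r$ of $Q$, each prepared in the logical $\ket{0^k}$ state via Proposition~\ref{prop:stateprep} (using the $\ket{0}$-variant described in Remark~\ref{remark:spexchange}). For each $h=1,\dots,r$, invoke Proposition~\ref{prop:permslab} with $\kappa=2$ in direction $h$ acting on the pair (``holding'' block, $A_h$), where the holding block is $B_1$ for $h=1$ and $A_{h-1}$ for $h\geq 2$, and where $\pi$ is the transposition that swaps direction-$h$ slab $b_h$ of the holding block with direction-$h$ slab $b_h^A$ of $A_h$. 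An analogous chain with $A'_1,\dots,A'_r$ extracts $b'$ from $B_2$ into $A'_r$ at the same position $(b_1^A,\dots,b_r^A)$. Apply transversal SWAP between $A_r$ and $A'_r$ using three invocations of Lemma~\ref{lem:CNOTsame}. Uncompute by invoking the same $2r$ slab permutations in reverse order, and finally apply the error-correction gadget (Proposition~\ref{prop:errcorr}) to each of $B_1,B_2$.

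\textbf{Correctness.} Ignoring faults, the composite logical action is $P^{-1}\circ S\circ P$, where $P$ is the permutation of logical qubits realized by the two extraction chains and $S$ swaps the logical contents of $A_r$ and $A'_r$. A direct induction on $h$ shows that after $h$ steps of the unprimed chain, the $B_1$-originating qubits in the current holding block occupy exactly the $(r{-}h)$-dimensional sub-hypercube $\{(b_1^A,\dots,b_h^A)\}\times(L^{G,1})^{r-h}$; setting $h=r$, the only $B_1$-originating qubit in $A_r$ is $b$, sitting at $(b_1^A,\dots,b_r^A)$, and analogously $A'_r$ contains exactly one $B_2$-originating qubit, namely $b'$, at the same coordinate. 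All other positions in $A_r$ and $A'_r$ are preimages under $P$ of ancilla positions. Hence $P^{-1}SP$ acts on the non-ancilla logical qubits as the single transposition $b\leftrightarrow b'$, and permutes ancilla qubits among ancilla positions. Since the ancillas begin in $\ket{0^k}$, a pure swap permutation among them returns them to $\ket{0^k}$, so they can be discarded cleanly, yielding the desired $\bar{\cO}$ on $D^{\sqcup 2}$.

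\textbf{Fault-tolerance and parameters.} The circuit $\cQ$ is a composition of $O_r(1)$ invocations of Proposition~\ref{prop:stateprep}, Proposition~\ref{prop:permslab}, Lemma~\ref{lem:CNOTsame}, and Proposition~\ref{prop:errcorr}. By induction on gadget depth, the output of each intermediate subgadget is a Pauli $\cE(G_{\mathrm{run}},\eta_{\mathrm{in}},O_r(\gamma_{\mathrm{in}}))$-deviation of the intended logical state, using the fact that each of those results blows up the bad-set fraction $\gamma$ by a factor depending only on $r$. The concluding error-correction step then restores $\cE_{\mathrm{in}}$-avoiding errors, provided $\gamma_{\mathrm{in}}(r)$ and $\zeta_{\mathrm{run}}(r)$ are chosen sufficiently small compared to the constants appearing in those results. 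Since each subgadget uses $O_r(n)$ space and $O_r(1)$ time, and we use $u=2r+2=O_r(1)$ codeblocks, the total space is $N=O_r(n)$ and total time is $T=O_r(1)$.

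\textbf{Main obstacle.} The principal subtlety lies in verifying that the conjugated swap $P^{-1}SP$ really reduces to a single transposition of $b$ with $b'$, rather than disturbing other logical qubits in $B_1$ or $B_2$. This depends critically on the recursive narrowing described above: at each step, only those $B_1$-originating qubits whose direction-$h$ coordinate currently equals the fixed value $b_h^A$ in the holding block get carried forward to the next ancilla, so after $r$ steps exactly one position from $B_1$ survives in $A_r$. A secondary subtlety, absent in Propositions~\ref{prop:permslab} and~\ref{prop:hadsame}, is the need to prepare ancilla codeblocks in a known logical state and later discard them cleanly; this is what forces the symmetric structure (preparation, extraction, swap, uncompute) and the use of Proposition~\ref{prop:stateprep}, so that the ancillas remain decoupled from the computational subspace on both ends of the gadget.
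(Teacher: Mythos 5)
Your proof is correct and takes essentially the same approach as the paper's: iterative slab extraction along $r$ directions to isolate the target logical qubit, a transversal SWAP between the two isolated ancilla codeblocks, and reversal. The only difference is organizational: the paper's extraction subroutine uses $r$ forward plus $r-1$ backward slab permutations (the backward ones immediately restoring the non-target logical qubits to the source codeblock) and then re-runs that subroutine with fresh ancillas for codeblocks $1,\dots,r-1$ to perform insertion, whereas you do only the $r$ forward swaps per chain and uncompute globally at the end — a mild economization realizing the same conjugated permutation $P^{-1}SP$ with identical $O_r(1)$ overhead.
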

\begin{proof}
  We first describe a subroutine (i.e.~subgadget) that can swap the logical qubit $b$ out into some fixed position $\tilde{b}\in(L^{G,1})^r$ of a fresh ancilla codeblock; we can then run the same subroutine to extract $b'$ into another fresh ancilla codeblock, and then run the subroutine in reverse to insert each of these two logical qubits back into the other one's original location.

  The desired subroutine was illustrated in Figure~\ref{fig:target} in Section~\ref{sec:intro}. It simply prepares $r$ fresh ancilla $Q$-codeblocks using Proposition~\ref{prop:stateprep}, and then performs $2r-1$ applications of Proposition~\ref{prop:permslab} to swap slabs between these different codeblocks. Call the $Q$-codeblock containing $b=(b_1,\dots,b_h)\in(L^{G,1})^r$ codeblock $0$, and label the ancilla $Q$-codeblocks with $1,\dots,r$. The first $r$ swaps loop through $h=1,\dots,r$, and apply Proposition~\ref{prop:permslab} to swap the direction-$h$ slab at position $b_h$ in codeblock $h-1$ with the direction-$h$ slab at position $\tilde{b}_h$ in codeblock $h$. The final $r-1$ swaps then loop back through $h=r-1,\dots,1$, and apply Proposition~\ref{prop:permslab} to again swap the direction-$h$ slab at position $b_h$ in codeblock $h-1$ with the direction-$h$ slab at position $\tilde{b}_h$ in codeblock $h$.

  By definition, following the swaps above, position $b\in(L^{G,1})^r$ of codeblock $0$ will contain the logical qubit that was originally at position $\tilde{b}\in(L^{G,1})^r$ of codeblock $r$; all logical qubits in positions $(L^{G,1})^r\setminus\{b\}$ in codeblock $0$ remain unchanged. Meanwhile, position $\tilde{b}$ of codeblock $r$ will contain the logical qubit that was originally at position $b$ of codeblock $0$.

  Thus we have a procedure to extract a single logical qubit from a codeblock. We can apply this procedure to both logical qubits $b,b'$ (each with their own $r$ ancilla codeblocks), and then swap the resulting codeblocks containing the extracted qubits. Subsequently, running the same extraction procedure but with codeblock $r$ set to contain the extracted logical qubit (and fresh ancilla code states only for codeblocks $1,\dots,r-1$) has the effect of inserting the extracted qubits back into the original codeblocks. Hence we have given a procedure to insert logical qubit $b$ into the position previously held by $b'$, and vice versa. At the end of this entire procedure, we run the error correction gadget in Proposition~\ref{prop:errcorr}.

  The algorithm described above yields the desired gadget $\cQ$. This gadget by definition runs $O_r(1)$ applications of the gadgets in Proposition~\ref{prop:stateprep}, Proposition~\ref{prop:permslab}, and Proposition~\ref{prop:errcorr}. Each such gadget uses quantum space $O_r(n)$ and time $O_r(1)$, so our entire quantum space and time usage is $O_r(n)$ and $O_r(1)$, respectively. These propositions also directly imply that $((\cQ,(\cE_{\mathrm{run}}^{\sqcup u})^{\sqcup T}),D,D)$ is a Pauli fault-tolerant gadget for $\bar{\cO}$, by analogous reasoning as used in the proof of Proposition~\ref{prop:permslab}.
\end{proof}

As an immediate corollary to Proposition~\ref{prop:permslab}, we obtain gadgets that perform a targeted $CNOT$ or Hadamard gate using constant space-time overhead.

\begin{corollary}
  \label{cor:target}
  Define $r,Q,n,k,i,\cC^*,G_{\mathrm{run}},\cE_{\mathrm{in}}=\cE_{\mathrm{out}},\cE_{\mathrm{run}},D$ as in Proposition~\ref{prop:permslab}. Then the following hold:
  \begin{enumerate}
  \item (Targeted inter-block $CNOT$) Given arbitrary $b,b'\in((L^{G,1})^r)^{\sqcup 2}$, there exists a Pauli fault-tolerant gadget $((\cQ,(\cE_{\mathrm{run}}^{\sqcup u})^{\sqcup T}),D^{\sqcup 2},D^{\sqcup 2})$ for $\bar{\cO}$, where $\bar{\cO}:\bC^{(\cM^{\cC,i})^2\times(\cM^{\cC,i})^2}\rightarrow\bC^{(\cM^{\cC,i})^2\times(\cM^{\cC,i})^2}$ is a channel that acts on $2k$ qubits labeled by the set $(M^{\cC,i})^{\sqcup 2}$ (i.e.~two blocks of $M^{\cC,i}$), that performs a logical $CNOT$ gate on the two qubits $b,b'\in((L^{G,1})^r)^{\sqcup 2}$, and leaves all other qubits inside $((L^{G,1})^r)^{\sqcup 2}$ unchanged; the qubits in $(M^{\cC,i})^{\sqcup 2}\setminus((L^{G,1})^r)^{\sqcup 2}$ may be moved or reset arbitrarily. 
  \item (Targeted intra-block $CNOT$) Given arbitrary $b,b'\in(L^{G,1})^r$, there exists a Pauli fault-tolerant gadget $((\cQ,(\cE_{\mathrm{run}}^{\sqcup u})^{\sqcup T}),D,D)$ for $\bar{\cO}$, where $\bar{\cO}:\bC^{\cM^{\cC,i}\times\cM^{\cC,i}}\rightarrow\bC^{\cM^{\cC,i}\times\cM^{\cC,i}}$ is a channel that acts on $k$ qubits labeled by the set $M^{\cC,i}$, that performs a logical $CNOT$ gate on the two qubits $b,b'\in(L^{G,1})^r$, and leaves all other qubits inside $(L^{G,1})^r$ unchanged; the qubits in $(M^{\cC,i})^{\sqcup 2}\setminus((L^{G,1})^r)^{\sqcup 2}$ may be moved or reset arbitrarily. 
  \item\label{it:cthad} (Targeted Hadamard) Given arbitrary $b\in(L^{G,1})^r$, there exists a Pauli fault-tolerant gadget $((\cQ,(\cE_{\mathrm{run}}^{\sqcup u})^{\sqcup T}),D,D)$ for $\bar{\cO}$, where $\bar{\cO}:\bC^{\cM^{\cC,i}\times\cM^{\cC,i}}\rightarrow\bC^{\cM^{\cC,i}\times\cM^{\cC,i}}$ is a channel that acts on $k$ qubits labeled by the set $M^{\cC,i}$, that performs a logical Hadamard gate on the qubit $b\in(L^{G,1})^r$, and leaves all other qubits inside $(L^{G,1})^r$ unchanged; the qubits in $(M^{\cC,i})^{\sqcup 2}\setminus((L^{G,1})^r)^{\sqcup 2}$ may be moved or reset arbitrarily. 
  \end{enumerate}
  In each case above, $\cQ$ is an adaptive quantum circuit using quantum space $N=O_r(n)$ and time $T=O_r(1)$, and $u=O(1)$.
\end{corollary}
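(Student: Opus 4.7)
The plan is to realize each of the three gadgets as a short composition of previously established gadgets: Proposition~\ref{prop:stateprep}, Proposition~\ref{prop:target}, either Lemma~\ref{lem:CNOTsame} or Proposition~\ref{prop:hadsame}, Proposition~\ref{prop:target} a second time to invert the extraction, and finally Proposition~\ref{prop:errcorr}. The common idea is to prepare one or two fresh ancilla $Q$-codeblocks in logical $\ket{0^k}$, use Proposition~\ref{prop:target} to swap the target logical qubit(s) of the input block(s) with a fixed ``slot'' $\tilde{b}\in(L^{G,1})^r$ of the ancilla(s), then apply a transversal physical operation on the ancilla(s) that induces a logical gate at \emph{every} position of $(L^{G,1})^r$, and then run Proposition~\ref{prop:target} a second time to invert the extraction. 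Since every ancilla position other than $\tilde{b}$ is in $\ket{0}$ and $CNOT\ket{0}\ket{0}=\ket{0}\ket{0}$, a transversal logical $CNOT$ from one such ancilla to another acts as the identity away from $\tilde{b}$ and as the desired $CNOT_{b,b'}$ at $\tilde{b}$; for Hadamard the analogous point is that the non-$\tilde{b}$ positions simply become $\ket{+}$, which is harmless because they live in an ancilla block that is never read out.

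Concretely, for the inter-block $CNOT$ I would fix any $\tilde{b}\in(L^{G,1})^r$, prepare two ancilla blocks $A,B$ in $\ket{0^k}$ via Proposition~\ref{prop:stateprep} (together with Remark~\ref{remark:spexchange}), invoke Proposition~\ref{prop:target} on (input block $1$, $A$) to swap positions $b$ and $\tilde{b}$, invoke it again on (input block $2$, $B$) to swap positions $b'$ and $\tilde{b}$, apply transversal $CNOT$ from $A$ to $B$ using Lemma~\ref{lem:CNOTsame}, invoke Proposition~\ref{prop:target} twice more with the same arguments to reverse the two extractions, and then apply Proposition~\ref{prop:errcorr}. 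The intra-block $CNOT$ is the same but with two extractions from a single input block into two disjoint ancillas $A$ and $B$ — the second extraction is valid because the first has already set position $b$ of the input block to $\ket{0}$ and does not touch position $b'$. For the targeted Hadamard I would prepare a single ancilla $A$ in $\ket{0^k}$, extract $q_b$ to position $\tilde{b}$ of $A$ via Proposition~\ref{prop:target}, apply Proposition~\ref{prop:hadsame} to $A$, and then re-invoke Proposition~\ref{prop:target} on (input block, $A$) with the same positions $b,\tilde{b}$; this swaps the newly Hadamarded $H\ket{q_b}$ at $\tilde{b}$ of $A$ back with the $\ket{0}$ residue at $b$ of the input block, leaving all other logical qubits of the input block untouched by Proposition~\ref{prop:target}'s guarantee.

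The fault-tolerance analysis is a mechanical chaining argument of the same flavor as the proofs of Proposition~\ref{prop:permslab}, Proposition~\ref{prop:target}, and Proposition~\ref{prop:hadsame}: each subgadget maps $\cE(G_{\mathrm{run}},\eta_{\mathrm{in}},\gamma)$-deviations of the expected ideal state to $\cE(G_{\mathrm{run}},\eta_{\mathrm{in}},O_r(\gamma))$-deviations, and the overall circuit chains only $O_r(1)$ subgadgets before the final Proposition~\ref{prop:errcorr} collapses the accumulated deviation back down to $\cE_{\mathrm{in}}=\cE_{\mathrm{out}}$. This will require shrinking the constants $\gamma_{\mathrm{in}}(r)$ and $\zeta_{\mathrm{run}}(r)$ by a further $r$-dependent constant factor, which is absorbed into the $r$-dependent constants already inherited from Proposition~\ref{prop:permslab}. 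The space and time bounds $N=O_r(n)$, $T=O_r(1)$, $u=O(1)$ are immediate from the corresponding bounds for each subgadget. The step I expect to require the most care is verifying that in the noiseless idealization each ancilla position other than $\tilde{b}$ genuinely remains in $\ket{0}$ (or, in the Hadamard case, only ever transitions to $\ket{+}$) up to and including the transversal gate — this follows from Proposition~\ref{prop:target}'s guarantee that its swap leaves every logical position in $((L^{G,1})^r)^{\sqcup 2}\setminus\{b,b'\}$ untouched, combined with the fact that the transversal operations are chosen precisely so that they act as the identity on the $\ket{0}\ket{0}$ (or benignly on the $\ket{0}$) qubits present at those positions immediately before they are applied.
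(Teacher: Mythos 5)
Your proposal is correct and takes essentially the same approach as the paper: both realize each gadget by preparing fresh ancilla codeblock(s) via Proposition~\ref{prop:stateprep}, extracting the targeted logical qubit(s) to a fixed slot of the ancilla via Proposition~\ref{prop:target}, applying the transversal gate (Lemma~\ref{lem:CNOTsame} or Proposition~\ref{prop:hadsame}), reversing the extraction, and finishing with Proposition~\ref{prop:errcorr}. Your added remark about why the ancilla positions other than $\tilde{b}$ remain benign is a useful sanity check, but does not change the structure of the argument.
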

\begin{proof}
  We consider the three statements separately:
  \begin{enumerate}
  \item\label{it:targetCNOTdiff} The desired gadget $\cQ$ performs the following:
    \begin{itemize}
    \item Prepare two ancilla $Q$-codeblocks using Proposition~\ref{prop:stateprep}.
    \item Swap the two logical qubits $b,b'$ from the input codeblocks into some fixed position $\tilde{b}\in(L^{G,1})^r$ of these two respective ancilla codeblocks using Proposition~\ref{prop:target}.
    \item Apply transversal $CNOT$ between the two ancilla codeblocks using Lemma~\ref{lem:CNOTsame}.
    \item Swap the two logical qubits back into their original positions in the input codeblocks using Proposition~\ref{prop:target}.
    \item Run the error correction gadget in Proposition~\ref{prop:errcorr}.
    \end{itemize}
    The fault-tolerance analysis follows directly from the results referenced above.
  \item The proof is analogous to that for the targeted inter-block $CNOT$ in item~\ref{it:targetCNOTdiff} above, except that in the intra-block case, all four applications of Proposition~\ref{prop:target} are applied to the unique input codeblock (along with the appropriate ancilla codeblock).
  \item The desired gadget $\cQ$ performs the following:
    \begin{itemize}
    \item Prepare an ancilla $Q$-codeblock using Proposition~\ref{prop:stateprep}.
    \item Swap the logical qubit $b$ from the input codeblock into some position $b\in(L^{G,1})^r$ of this ancilla codeblock using Proposition~\ref{prop:target}.
    \item Apply logical $H^{\otimes(L^{G,1})^r}$ to this ancilla codeblock using Proposition~\ref{prop:hadsame}.
    \item Swap the logical qubit $b$ back into its original position in the input codeblock using Proposition~\ref{prop:target}.
    \item Run the error correction gadget in Proposition~\ref{prop:errcorr}.
    \end{itemize}
    The fault-tolerance analysis follows directly from the results referenced above.
  \end{enumerate}
\end{proof}

Our gadget for targeted logical swap in Proposition~\ref{prop:target} is related to the selective inter-block teleportation gadget in \cite[Section~A.2]{xu_fast_2024}. Specifically, \cite{xu_fast_2024} show how to swap the logical qubits at a given position across two codeblocks using a constant number of ``logical cycles'' (which corresponds to a constant-depth circuit in our setting where we have constant-depth state preparation). They also extend this result to swap more such pairs of qubits, though with a higher time overhead.

Using a single logical swap, we are able to obtain the constant-overhead targeted $CNOT$ gate in Corollary~\ref{cor:target} because Proposition~\ref{prop:target} allows us to swap two logical qubits at \emph{arbitrary} (possibly different) positions within the codeblock; in contrast, the techniques in \cite[Section~A.2]{xu_fast_2024} most naturally apply to swapping logical qubits at the same position in different codeblocks, and hence seem to require additional properties such as cyclic code symmetries to obtain targeted $CNOT$ gates.

For demonstrative purposes, we have stated Proposition~\ref{prop:target} and Corollary~\ref{cor:target} for the case of performing a single logical swap, $CNOT$, or Hadamard gate. However, our gadgets described in Section~\ref{sec:appperm} naturally perform highly parallel and flexible logical permutations, which can be used to efficiently perform many such gates in parallel. Such techniques will are particularly well suited for performing the same logical gate on many logical qubits within a given slab.

\subsection{Constant-Overhead State Preparation of 2-Dimensional Codes in Bulk}
\label{sec:app2dprep}
In this section, we simply remark that combining our state preparation and downwards code switching gadgets yield a gadget for preparing a stack of 2-dimensional product codes with constant-space-time overhead.

Specifically, set $r=3$, and consider applying Proposition~\ref{prop:stateprep} to construct either a $\ket{+^k}$ logical state (if $i=1$) or a $\ket{0^k}$ logical state (if $i=2$) in the $[[n,k]]$ code $Q$ associated to level $i$ of $\cC^*={\cC^{(1)}}^*\otimes{\cC^{(2)}}^*\otimes{\cC^{(3)}}^*$ (as defined in Section~\ref{sec:applycode}). Then applying Proposition~\ref{prop:switchdown} to this state results in a stack of $|L^{G,1}|=\Theta(\sqrt{n'})$ copies of the $[[n',k']]$ code $Q'$ associated to level $1$ of $\cC^{G,*}\otimes\cC^G_*$, in logical state either $\ket{+^{k'}}$ or $\ket{0^{k'}}$. The propositions listed above imply the fault-tolerance of this procedure. Indeed, we implicitly use this procedure in our proof of Proposition~\ref{prop:switchup} in Appendix~\ref{app:suproof}.

The code $Q'$ is a 2-dimensional product code (often called a hypergraph product code \cite{tillich_quantum_2014}). It is uncommon to have state preparation gadgets for such 2-dimensional codes with constant space-time overhead. Indeed, we are unaware of any such gadgets for a single such code state. It is therefore perhaps interesting that we are able to circumvent this challenge by simultaneously preparing $\Theta(k')$ states of $Q'$ with constant overhead.

\section{Acknowledgments}
We thank Ting-Chun Lin for helpful discussions.

\bibliographystyle{alpha}
\bibliography{library1,library2}

\appendix

\section{Fault-Tolerance Proof for State Preparation}
\label{app:spproof}
In this section, we provide the proof of Proposition~\ref{prop:stateprep}. The general structure if this proof is similar to that of Proposition~\ref{prop:switchdown} given in Section~\ref{sec:csanalysis}.

\begin{proof}[Proof of Proposition~\ref{prop:stateprep}]
  The desired circuit $\cQ$ is given in Algorithm~\ref{alg:stateprep}. By definition line~\ref{li:spinit} uses $1$ timestep, line~\ref{li:spse} uses $w+4$ timesteps (see Corollary~\ref{cor:syndext}), line~\ref{li:spdec} and line~\ref{li:spge} together use $0$ timesteps (as they simply perform a classical computation), and line~\ref{li:spcorr} uses $1$ timestep. Thus the entire circuit uses $T=w+6=O(w)$ timesteps. Furthermore, line~\ref{li:spse} uses $|C^i|+|C^{i+1}|\leq n+nw=n(1+w)=O(nw)$ physical qubits, and every other line uses a subset of these qubits, so the entire space usage is $N=O(nw)$.

  Also note that the vertex set $V_{\mathrm{run}}\supseteq V(G^{\cC}_{i,i+1,i+2})=C^i\sqcup C^{i+1}\sqcup C^{i+2}$ contains the set $C^i\sqcup C^{i+1}$ of physical qubits, so $\cE_{\mathrm{run}}$ is a well-defined family of bad sets for $\cQ$.
  
  Our goal is to show that for every $\cE_{\mathrm{run}}$-avoiding Pauli fault $\cF$, the resulting output $\cQ[\cF](1)$ is a Pauli $\cE_{\mathrm{out}}$-deviation of $\Enc(\ket{+^k}\bra{+^k})$. The reference system in Definition~\ref{def:gadget} has no effect on the proof, so we ignore it here for simplicity (see Remark~\ref{remark:refsys}).

  For this purpose, we track all errors that arise at any point during the execution of the circuit $\cQ$ using the graph $G_{\mathrm{run}}$. By definition Algorithm~\ref{alg:stateprep} consists entirely of Clifford gates, and we have restricted attention to a Pauli fault $\cF$. Therefore all of the unitary (i.e.~non-measurement) Clifford gates simply propagate Pauli errors to (possibly) different Pauli errors.



  Let $S_{\mathrm{\cF}}\subseteq C^i\sqcup C^{i+1}$ denote the set of all physical qubits that lie in (any timestep of) the forward-looking lightcone in $\cQ$ (given by Algorithm~\ref{alg:stateprep}) of any qubit in $\bigcup_{t\in[T]}\supp(\cF_t)$, where we take the lightcones starting from the start of the circuit (i.e.~timestep $1$). Note that this foward-looking lightcone only counts quantum gates in line~\ref{li:spinit}, line~\ref{li:spse}, and line~\ref{li:spcorr}, and does not count classical gates from line~\ref{li:spdec} or line~\ref{li:spge}.
  Let $S_{\mathrm{syn}}\subseteq C^{i+1}\sqcup C^{i+2}$ denote the footprint (see Definition~\ref{def:footprint}) of the call to \FnSSFlipSyn{$\delta(s);i+1,\cC^*$} in line~\ref{li:spdec}.

  Furthermore, because errors remain Pauli when propagating through our Clifford circuit as described above, just prior to performing the Pauli-$Z$ measurements in the syndrome extraction subroutine in line~\ref{li:spse}, our corrupted state equals
  \begin{equation*}
    E_0\left(\sum_{y\in\cC^i}\ket{y}\otimes\ket{\delta(y)}\right)\left(\sum_{y\in\cC^i}\bra{y}\otimes\bra{\delta(y)}\right)E_0'
  \end{equation*}
  which is the true state from the execution described in Lemma~\ref{lem:nesp} with some Pauli error $E_0,E_0'$ applied. This error $E_0,E_0'$ is precisely determined by propagating the Pauli errors from $\cF$ through the Clifford gates in the circuit, and $\supp(E_0),\supp(E_0')\subseteq S_{\cF}$. Write $E_0=Z^{(e_Z,f_Z)}X^{(e_X,f_X)}$ and $E_0'=X^{(e_X',f_X')}Z^{(e_Z',f_Z')}$ where $e_P,e_P'\in\cC^i$ and $f_P,f_P'\in\cC^{i+1}$ for $P\in\{X,Z\}$. Then we can rewrite the state above as
  \begin{equation*}
    \sum_{s,s'\in\cC^{i+1}}\sum_{y,y'}Z^{e_Z}X^{e_X}\ket{y}\bra{y'}X^{e_X'}Z^{e_Z'} \otimes Z^{f_Z}\ket{s}\bra{s'}Z^{f_Z'},
  \end{equation*}
  where the second sum is over all $y,y'\in\cC^i$ for which $\delta(y)+f_X=s$ and $\delta(y')+f_X'=s'$.
  Then after performing the Pauli-$Z$ measurements on the ancilla register in the call to \FnSyndExt{$\rho;Z,\delta_i$}, the state collapses to only those terms in the sum with $s=s'$, which gives
  \begin{equation}
    \label{eq:postmeasure}
    \sum_{s\in\cC^{i+1}}\sum_{y,y'\in\delta^{-1}(s)}Z^{e_Z}X^{e_X}\ket{y+g(f_X)}\bra{y'+g'(f_X')}X^{e_X'}Z^{e_Z'} \otimes Z^{f_Z}\ket{s}\bra{s}Z^{f_Z'},
  \end{equation}
  where $g=g(f_X),\;g'=g'(f_X')\in\cC^i$ are any fixed elements satisfying $\delta(g)=f_X,\;\delta(g')=f_X'$. Algorithm~\ref{alg:stateprep} then uses the syndrome $s$ to compute an appropriate $x\in\cC^i$, and returns the state above with the correction $X^x$ applied (with possibly some additional Pauli errors supported inside $S_{\cF}$ from the fault during this final Pauli correction).

  To complete the proof of the proposition, we show Lemma~\ref{lem:oneterm} below, which analyzes the output of Algorithm~\ref{alg:stateprep} resulting from the term associated to each possible syndrome in~(\ref{eq:postmeasure}). Specifically, because $T\leq w+6$ as shown below, Lemma~\ref{lem:oneterm} implies that Algorithm~\ref{alg:stateprep} outputs a Pauli $\cE_{\mathrm{out}}$-deviation of $\Enc(\ket{+^k}\bra{+^k})$, as desired.

  \begin{lemma}
    \label{lem:oneterm}
    There exist Pauli errors $E,E'$ whose supports are $\cE(G_{\mathrm{run}},\; \eta_{\mathrm{run}},\; 100w^7T2^T\gamma_{\mathrm{run}})$-avoiding such that when lines~\ref{li:spdec}--\ref{li:spret} of Algorithm~\ref{alg:stateprep} are applied to each term
    \begin{equation}
      \label{eq:postmeasureterm}
      \sum_{y,y'\in\delta^{-1}(s)} Z^{e_Z}X^{e_X}\ket{y+g}\bra{y'+g'}X^{e_X'}Z^{e_Z'} \otimes \ket{s}\bra{s}
    \end{equation}
    in the sum in~(\ref{eq:postmeasure}),\footnote{Note that $Z^{f_Z},Z^{f_Z'}$ applied to $\ket{s}\bra{s}$ simply induce a global phase on each such term.} the output is of the form $E\Enc(\ket{+^k}\bra{+^k})E'$.
  \end{lemma}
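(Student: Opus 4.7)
\textbf{Proof proposal for Lemma~\ref{lem:oneterm}.} The plan is to adapt the error-propagation analysis from the proof of Proposition~\ref{prop:switchdown} to the state-preparation setting. The main conceptual differences are that here (i) the circuit uses a depth-$T = O(w)$ syndrome-extraction subroutine in which a single fault can spread through $CNOT$ gates (producing the $T 2^T$ factor in $\gamma_{\mathrm{out}}$), and (ii) rather than merely controlling the growth of an existing encoded error, we must also convert the low-weight syndrome $\delta(s)$ computed from the measurement outcomes into an accurate Pauli correction via the small-set ``soundness'' application of $\FnSSFlipSyn$ described in Section~\ref{sec:decappinf}.

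First, I will track errors through $\cQ$ by forming $S_{\cF} \subseteq C^i \sqcup C^{i+1}$, the forward lightcone in $\cQ$ of the fault support $\bigcup_t \supp(\cF_t)$. Since each fault location at timestep $t$ can propagate through at most $T - t \leq T$ layers of $CNOT$ gates, its forward lightcone contains at most $2^T$ qubits, all lying within graph distance $T$ of the fault location in $G_{\mathrm{run}}$ (here I will use that the syndrome-extraction graph is a subgraph of $G^{\cC}_{i,i+1}$, and hence of $G_{\mathrm{run}}$). A standard growth argument, essentially identical to the reasoning in Claim~\ref{claim:sdSFavoid} but inflated by the lightcone factor, will then show that $S_{\cF}$ is $\cE(G_{\mathrm{run}},\eta_{\mathrm{run}},T 2^T \gamma_{\mathrm{run}})$-avoiding given that $\cF$ is $\cE_{\mathrm{run}}^{\sqcup T}$-avoiding. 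As in the proof of Proposition~\ref{prop:switchdown}, each unitary step is Clifford, so the residual error just before the $Z$-measurements factors as a Pauli $E_0 \Enc^{\cC}(\ket{+^k}\bra{+^k}) E_0'$ with $\supp(E_0) \cup \supp(E_0') \subseteq S_{\cF}$, which yields the decomposition~(\ref{eq:postmeasure}).

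Next, I will introduce $S_{\mathrm{syn}} \subseteq C^{i+1} \sqcup C^{i+2}$, the footprint of \FnSSFlipSyn{$\delta(s); i+1, \cC^*$} in line~\ref{li:spdec}. Since $\delta(s) = \delta(f_X)$ with $\supp(f_X) \subseteq S_{\cF}$, Lemma~\ref{lem:footprint} applied inside every connected component of the subgraph of $G_{\mathrm{run}}$ induced by $S_{\cF} \cup S_{\mathrm{syn}}$ yields, exactly as in Claim~\ref{claim:sdsynavoid}, that $S_{\cF} \cup S_{\mathrm{syn}}$ is $\cE(G_{\mathrm{run}},\eta_{\mathrm{run}},15 w^3 T 2^T \gamma_{\mathrm{run}})$-avoiding. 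With $\gamma_{\mathrm{run}}$ small enough, every connected component has size $< \eta_{\mathrm{run}} \leq m$, so restricting to each component and applying the $(m,0)$-small-set syndrome-flip decoder of Definition~\ref{def:ssflip}, together with Lemma~\ref{lem:sslocal}, implies $\delta(s + a^{i+1}) = 0$, i.e.~$s + a^{i+1} \in Z^{i+1}(\cC)$. Furthermore, applying the \emph{error}-flip decoder (via Lemma~\ref{lem:sserrorflip}) to $a^{i+1} + f_X$ inside each connected component (exactly as in Claim~\ref{claim:sderravoid}) shows that within a larger footprint $S_{\mathrm{err}}$ of comparable size, there exists $\tilde{a}^i \in \cC^i$ with $\delta(\tilde{a}^i) = a^{i+1} + f_X$ and $\supp(\tilde{a}^i) \subseteq S_{\mathrm{err}}$, and $S_{\cF} \cup S_{\mathrm{syn}} \cup S_{\mathrm{err}}$ is $\cE(G_{\mathrm{run}},\eta_{\mathrm{run}}, 50 w^7 T 2^T \gamma_{\mathrm{run}})$-avoiding.

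Finally, I will combine these pieces. Re-parameterizing the sum in~(\ref{eq:postmeasureterm}) over preimages of $s$ by shifting by any fixed $g''(s) \in \delta^{-1}(s)$, the pre-correction state per syndrome term is proportional to $X^{e_X} Z^{e_Z} X^{g + g''(s)} \Enc^{\cC}(\ket{+^k}\bra{+^k}) X^{g' + g''(s)} Z^{e_Z'} X^{e_X'}$. By construction $\delta(x + g + g''(s)) = a^{i+1} + f_X = \delta(\tilde{a}^i)$, so $x + g + g''(s) + \tilde{a}^i \in Z^i(\cC)$ and its action on $\Enc^{\cC}\ket{+^k}$ is the identity; an analogous statement holds for the primed side. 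Hence the post-correction state equals $E \Enc^{\cC}(\ket{+^k}\bra{+^k}) E'$, where $E = Z^{e_Z} X^{\tilde{a}^i} X^{\cF_T}$ and $E'$ is its primed analog, with $\supp(E) \cup \supp(E') \subseteq S_{\cF} \cup S_{\mathrm{syn}} \cup S_{\mathrm{err}}$ (absorbing the Pauli fault $\cF_T$ from the final correction step into $S_{\cF}$). Since this union is $\cE(G_{\mathrm{run}},\eta_{\mathrm{run}}, 100 w^7 T 2^T \gamma_{\mathrm{run}})$-avoiding with room to spare, the claim follows. I expect the main obstacle to be bookkeeping the lightcone-induced blowup factor of $T 2^T$ consistently across the error, syndrome, and correction stages; the remaining steps are direct analogs of arguments already carried out in Section~\ref{sec:csanalysis}.
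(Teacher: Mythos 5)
Your proposal follows essentially the same route as the paper's proof: bound the forward lightcone $S_{\cF}$ by $T2^T\gamma_{\mathrm{run}}$, bound the syndrome-flip footprint $S_{\mathrm{syn}}$ and error-flip footprint $S_{\mathrm{err}}$ via Lemma~\ref{lem:footprint} applied componentwise, use small-set flip decodability on each small connected component to get $\delta(s+a^{i+1})=0$ and $\delta(\tilde{a}^i)=f_X+a^{i+1}$, and then reparameterize the sum over $\delta^{-1}(s)$ by a fixed representative to see that the corrected $X$-error is $\tilde{a}^i$ up to a cocycle, which acts trivially on $\Enc\ket{+^k}$. The one slip is a constant: you import the $15w^3$ factor from Claim~\ref{claim:sdsynavoid}, but that factor of~$3$ arose there because $S_{\cF}$ in the switch-down setting is a union of three $\cE_{\mathrm{run}}$-avoiding sets (input error plus two fault timesteps); here $S_{\cF}$ is already bounded directly by $T2^T\gamma_{\mathrm{run}}$ in Claim~\ref{claim:SFavoid}, so the correct syndrome bound is $5w^3 T2^T\gamma_{\mathrm{run}}$, and propagating your $15w^3$ forward would overshoot the stated $100w^7 T2^T\gamma_{\mathrm{run}}$ after the error-footprint step and the primed/unprimed union.
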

  \begin{proof}
    Fix some $\bar{y}\in\cC^i$ with $\delta(\bar{y})=s$, and let $b=\bar{y}+g$, $f=f_X$, $b'=\bar{y}+g'$, $f'=f_X'$. Then by definition, the measurement syndrome is
    \begin{equation*}
      s = \delta(b)+f = \delta(b')+f'
    \end{equation*}
    where $\supp(f),\supp(f')\subseteq S_{\cF}\cap C^{i+1}$.
    Letting $a^{i+1}\in\cC^{i+1}$ be the variable defined in line~\ref{li:spdec} of Algorithm~\ref{alg:stateprep}, we then define $S_{\mathrm{err}}$ and $\tilde{a}^i,\tilde{a}^{i+1}$ (resp.~$S_{\mathrm{err}}'$ and $(\tilde{a}^i)',(\tilde{a}^{i+1})'$) to be the footprint and output of \FnSSFlipErr{$f+a^{i+1};i+1,\cC^*$} (resp.~\FnSSFlipErr{$f'+a^{i+1};i+1,\cC^*$}), respectively. Note that by definition $f+a^{i+1}=s+a^{i+1}+\delta(b)$ (resp.~$f'+a^{i+1}=s+a^{i+1}+\delta(b')$). We state and prove the following claims for the unprimed variables $b,f,\dots$, but they analogously apply to the primed variables $b',f',\dots$.

    \begin{claim}
      \label{claim:SFavoid}
      The set $S_{\cF}$ is $\cE(G_{\mathrm{run}},\; \eta_{\mathrm{run}},\; T2^T\gamma_{\mathrm{run}})$-avoiding.
    \end{claim}
    \begin{proof}
      Assume for a contradiction that there exists some $V\subseteq V_{\mathrm{run}}$ of size $|V|\geq\eta_{\mathrm{run}}$ that induces a connected subgraph of $G_{\mathrm{run}}$ with $|V\cap S_{\cF}|/|V|\geq T2^T\gamma_{\mathrm{run}}$. We can repeatedly add to $V$ any point in $S_{\cF}\setminus V$ that lies in the neighborhood of $V$ until there are no more such points. The resulting set $V$ can only have larger size $|V|\geq\eta_{\mathrm{run}}$ and $S_{\cF}$-density $|V\cap S_{\cF}|/|V|\geq 2^T\gamma_{\mathrm{run}}$, and contains all or none of every connected component of the subgraph of $G_{\mathrm{run}}$ induced by $S_{\cF}$.

      By definition, the circuit $\cQ$ has $T$ timesteps and consists of 1 and 2-qubit gates, where every 2-qubit gate acts on a pair of qubits connected by an edge in $G_{\mathrm{run}}$. Therefore every point in $\bigcup_{t\in[T]}\supp(\cF_t)$ has a forward-lightcone hitting $<2^T$ qubits in $C^i\sqcup C^{i+1}$, which induce a conected subgraph of $G_{\mathrm{run}}$. Thus
      \begin{align*}
        |V\cap S_{\cF}|
        &< \left|V\cap\bigcup_{t\in[T]}\supp(\cF_t)\right|\cdot 2^T \leq T2^T\gamma_{\mathrm{run}}|V|,
      \end{align*}
      which contradicts the assumption that $|V\cap S_{\cF}|/|V|\geq T2^T\gamma_{\mathrm{run}}$, as desired. The second inequality above holds by the assumption that $\cF$ is $\cE_{\mathrm{run}}^{\sqcup T}$-avoiding, so that each $\cF_t$ is $\cE_{\mathrm{run}}$-avoiding.
    \end{proof}

    \begin{claim}
      \label{claim:synavoid}
      The set $S_{\cF}\cup S_{\mathrm{syn}}$ is $\cE(G_{\mathrm{run}},\; \eta_{\mathrm{run}},\; 5w^3T2^T\gamma_{\mathrm{run}})$-avoiding.
    \end{claim}
    \begin{proof}
      Assume for a contradiction that there exists some $V\subseteq V_{\mathrm{run}}$ of size $|V|\geq\eta_{\mathrm{run}}$ that induces a connected subgraph of $G_{\mathrm{run}}$ with $|V\cap(S_{\cF}\cup S_{\mathrm{syn}})|\geq 5w^3T2^T\gamma_{\mathrm{run}}$. As in the proof of Claim~\ref{claim:SFavoid}, we may repeatedly add points in $(S_{\cF}\cup S_{\mathrm{syn}})\setminus V$ that lie in the neighborhood of $V$, in order to assume that $V$ contains all or none of every connected component of the subgraph of $G_{\mathrm{run}}$ induced by $S_{\cF}\cup S_{\mathrm{syn}}$.

      Recall as described above that the call \FnSSFlipSyn{$\delta(s);i+1,\cC^*$} in line~\ref{li:spdec} of Algorithm~\ref{alg:stateprep} has $s=\delta(b)+f$ with $\supp(f)\subseteq S_{\cF}\cap C^{i+1}$. Then by Lemma~\ref{lem:footprint}, we have
      \begin{align*}
        |V\cap S_{\mathrm{syn}}|
        &\leq |V\cap\supp(f)|/\gamma_{\mathrm{syn}} \\
        &\leq |V\cap S_{\cF}| \cdot 4w^3 \\
        &< T2^T\gamma_{\mathrm{run}}|V| \cdot 4w^3,
      \end{align*}
      where the third inequality above holds by Claim~\ref{claim:SFavoid}. We then again apply Claim~\ref{claim:SFavoid} to conclude that
      \begin{equation*}
        |V\cap (S_{\cF}\cup S_{\mathrm{syn}})| \leq |V\cap S_{\cF}|+|V\cap S_{\mathrm{syn}}| < (4w^3+1)T2^T\gamma_{\mathrm{run}}|V|,
      \end{equation*}
      which contradicts the assumption that $|V\cap(S_{\cF}\cup S_{\mathrm{syn}})|\geq 5w^3T2^T\gamma_{\mathrm{run}}$, as desired.
    \end{proof}

    By definition, each update $c^{i+1}$ that gets added into $a^{i+1}$ during the a given step in the execution of \FnSSFlipSyn{$\delta(s);i+1,\cC^*$} has $\supp(c^{i+1})$ contained within the neighborhood in $G^{\cC}_{i,i+1,i+2}\subseteq G_{\mathrm{run}}$ of the footprint from the previous step. As a consequence, if we restrict $s$ to a given connected component $V$ of the subgraph of $G_{\mathrm{run}}$ induced by $S_{\cF}\cup S_{\mathrm{syn}}$ (and zero out all values outside of $V$), the output of \FnSSFlipSyn{$\delta(s|_{V\cap C^{i+1}});i+1,\cC^*$} must equal the restriction $a^{i+1}|_{V\cap C^{i+1}}$ of the output $a^{i+1}$ of \FnSSFlipSyn{$\delta(s);i+1,\cC^*$} (see Lemma~\ref{lem:sslocal}).

    Recall that by assumption $\eta_{\mathrm{run}}\leq m$. Also by definition, $\gamma_{\mathrm{run}}\leq 1/5w^3T2^T$, so it follows by Claim~\ref{claim:synavoid} that the subgraph of $G_{\mathrm{run}}$ induced by $S_{\cF}\cup S_{\mathrm{syn}}$ has no connected components containing $\geq\eta_{\mathrm{run}}$ vertices. Therefore within every connected component $V$ of this induced subgraph, by the assumption that $\cC^*$ has a $(m,0)$-small-set syndrome-flip decoder at level $i+1$, the while loop in the restricted execution \FnSSFlipSyn{$\delta(s|_{V\cap C^{i+1}});i+1,\cC^*$} will not terminate until the computed output $a^{i+1}|_{V\cap C^{i+1}}$ satisfies $\delta(s+a^{i+1}|_{V\cap C^{i+1}})=0$. Furthermore, by definition $s=\delta(b)+f$ with $\supp(f)\subseteq S_{\cF}$ and $\supp(a^{i+1})\subseteq S_{\mathrm{syn}}$, and hence $s+a^{i+1}+\delta(b)=f+a^{i+1}$ has $\supp(f+a^{i+1})\subseteq S_{\cF}\cup S_{\mathrm{syn}}$.

    \begin{claim}
      \label{claim:erravoid}
      The set $S_{\cF}\cup S_{\mathrm{syn}}\cup S_{\mathrm{err}}$ is $\cE(G_{\mathrm{run}},\; \eta_{\mathrm{run}},\; 50w^7T2^T\gamma_{\mathrm{run}})$-avoiding.
    \end{claim}
    \begin{proof}
      The proof is similar to that of Claim~\ref{claim:synavoid}. Assume for a contradiction that there exists some $V\subseteq V_{\mathrm{run}}$ of size $|V|\geq\eta_{\mathrm{run}}$ that induces a connected subgraph of $G_{\mathrm{run}}$ with $|V\cap(S_{\cF}\cup S_{\mathrm{syn}}\cup S_{\mathrm{err}})|\geq 50w^7T2^T\gamma_{\mathrm{run}}$. As in the proof of Claim~\ref{claim:synavoid}, we may repeatedly add points in $(S_{\cF}\cup S_{\mathrm{syn}}\cup S_{\mathrm{err}})\setminus V$ that lie in the neighborhood of $V$, in order to assume that $V$ contains all or none of every connected component of the subgraph of $G_{\mathrm{run}}$ induced by $S_{\cF}\cup S_{\mathrm{syn}}\cup S_{\mathrm{err}}$.

      By Lemma~\ref{lem:footprint}, we have
      \begin{align*}
        |V\cap S_{\mathrm{err}}|
        &\leq |V\cap\supp(f+a^{i+1})|/\gamma_{\mathrm{err}} \\
        &\leq |V\cap(S_{\cF}\cup S_{\mathrm{syn}})| \cdot 8w^4 \\
        &< 5w^3T2^T\gamma_{\mathrm{run}}|V| \cdot 8w^4,
      \end{align*}
      where the third inequality above holds by Claim~\ref{claim:synavoid}. We then again apply Claim~\ref{claim:synavoid} to conclude that
      \begin{equation*}
        |V\cap(S_{\cF}\cup S_{\mathrm{syn}}\cup S_{\mathrm{err}})| \leq |V\cap(S_{\cF}\cup S_{\mathrm{syn}})| + |V\cap S_{\mathrm{err}}| < (8w^4+1)5w^3T2^T\gamma_{\mathrm{run}}|V|,
      \end{equation*}
      which contradicts the assumption that $|V\cap(S_{\cF}\cup S_{\mathrm{syn}}\cup S_{\mathrm{err}})|\geq 50w^7T2^T\gamma_{\mathrm{run}}$, as desired.
    \end{proof}

    Recall that we use $\tilde{a}^i,\tilde{a}^{i+1}$ to denote the variables that are updated throughout the execution of and then returned by \FnSSFlipErr{$f+a^{i+1};i+1,\cC^*$} (corresponding to $a^{i-1},a^i$ respectively in Algorithm~\ref{alg:ssflip}). As shown above, we have
    \begin{equation*}
      \delta(f+a^{i+1}) = \delta(s+a^{i+1}+\delta(b)) = \delta(s+a^{i+1}) = 0,
    \end{equation*}
    and hence the entire execution of \FnSSFlipErr{$f+a^{i+1};i+1,\cC^*$} will have $\tilde{a}^{i+1}=0$; that is, every iteration of the while loop that does not fail must add some $c^i$ in to $\tilde{a}^i$. By definition, every such $c^i$ must have $\supp(c^i)$ contained within the neighborhood in $G^{\cC}_{i,i+1,i+2}\subseteq G_{\mathrm{run}}$ of the footprint from the previous step. As a consequence, if we restrict $f+a^{i+1}$ to a connected component $V$ if $S_{\cF}\cup S_{\mathrm{syn}}\cup S_{\mathrm{err}}$ (and zero out all values outside of $V$), the output of \FnSSFlipErr{$f+a^{i+1}|_{V\cap C^{i+1}};i+1,\cC^*$} must equal the restriction $\tilde{a}^i|_{V\cap C^i},\tilde{a}^{i+1}|_{V\cap C^{i+1}}$ of the output $\tilde{a}^i,\tilde{a}^{i+1}$ of \FnSSFlipErr{$f+a^{i+1};i+1,\cC^*$} (see Lemma~\ref{lem:sslocal}).

    Recall that by assumption $\eta_{\mathrm{run}}\leq m$. Also by definition, $\gamma_{\mathrm{run}}\leq 1/50w^7T2^T$, so it follows by Claim~\ref{claim:erravoid} that the subgraph of $G_{\mathrm{run}}$ induced by $S_{\cF}\cup S_{\mathrm{syn}}\cup S_{\mathrm{err}}$ has no connected components containing $\geq\eta_{\mathrm{run}}$ vertices. Therefore within every connected component $V$ of this induced subgraph, by the assumption that $\cC^*$ has a $m$-small-set error-flip decoder at level $i+1$, the while loop in the restricted execution \FnSSFlipErr{$f+a^{i+1}|_{V\cap C^{i+1}};i+1,\cC^*$} will not terminate until the computed output $\tilde{a}^i|_{V\cap C^i}$ satisfies $\delta(\tilde{a}^i|_{V\cap C^i})=f+a^{i+1}|_{V\cap C^{i+1}}$; recall from above that we will have $\tilde{a}^{i+1}|_{V\cap C^{i+1}}=0$. It also follows that the output will always be such a valid $\tilde{a}^i|_{V\cap C^i}$, and never FAIL. Therefore we conclude that
    \begin{equation*}
      \delta(\tilde{a}^i) = f+a^{i+1},
    \end{equation*}
    and by definition $\supp(\tilde{a}^i)\subseteq S_{\mathrm{err}}$.

    It follows that $\delta(b)=s+f=s+a^{i+1}+\delta(\tilde{a}^i)$, so $b+\tilde{a}^i$ is a valid choice for $x\in\cC^i$ in line~\ref{li:spge} of Algorithm~\ref{alg:stateprep}.
    Recall that the entire analysis above also applies to the primed variables $b',f',(\tilde{a}^i)'$, so that we also have $\delta(b')=s+f'=s+a^{i+1}+\delta((\tilde{a}^i)')$, and $b'+(\tilde{a}^i)'$ is also a valid choice for $x\in\cC^i$.

    Therefore assuming Algorithm~\ref{alg:stateprep} is in the state~(\ref{eq:postmeasureterm}) in prior to line~\ref{li:spdec}, it will compute $x=b+\tilde{a}^i+z=b'+(\tilde{a}^i)'+z'$ for some $i$-cocycles $z,z'\in Z^i(\cC)$, and then output the first register (i.e.~qubits $C^i$) of
    \begin{align*}
      \hspace{1em}&\hspace{-1em}\tilde{E}\left(\sum_{y,y'\in\delta^{-1}(s)} \ket{y+g+x}\bra{y'+g'+x} \otimes \ket{s}\bra{s}\right)\tilde{E'} \\
                  &= \tilde{E}\left(\sum_{y,y'\in\delta^{-1}(s)} \ket{(y+\bar{y}+z)+\tilde{a}^i}\bra{(y'+\bar{y}+z')+(\tilde{a}^i)'} \otimes \ket{s}\bra{s}\right)\tilde{E'} \\
                  &= \tilde{E}\left(\sum_{y,y'\in Z^i(\cC)}X^{\tilde{a}^i}\ket{y}\bra{y'}X^{(\tilde{a}^i)'}\otimes\ket{s}\bra{s}\right)\tilde{E}' \\
                  &= \tilde{E}\left(X^{\tilde{a}^i}\Enc\left(\ket{+^k}\bra{+^k}\right)X^{(\tilde{a}^i)'}\otimes\ket{s}\bra{s}\right)\tilde{E}',
    \end{align*}
    where $\tilde{E},\tilde{E}'$ are simply the Pauli errors arising from propagating the errors from $\cF$ through the unitary Clifford gates. Hence these errors do not depend on the measurement outcome $s$, and satisfy $\supp(\tilde{E}),\supp(\tilde{E}')\subseteq S_{\cF}$.
    Also, by definition $\tilde{a}^i,(\tilde{a}^i)'$ only depend on $f,f'$ and $a^{i+1}$, the latter of which only depends on $\delta(s)=\delta(f)=\delta(f')$. By definition $\supp(\tilde{a}^i)\subseteq S_{\mathrm{err}}$ and $\supp((\tilde{a}^i)')\subseteq S_{\mathrm{err}}'$, so we conclude that the output above differs from $\Enc(\ket{+^k}\bra{+^k})$ by a Pauli error that only depends on $\cF$, and is supported inside
    \begin{equation*}
      S_{\cF}\cup S_{\mathrm{syn}}\cup S_{\mathrm{err}}\cup S_{\mathrm{err}}' = (S_{\cF}\cup S_{\mathrm{syn}}\cup S_{\mathrm{err}}) \cup (S_{\cF}\cup S_{\mathrm{syn}}\cup S_{\mathrm{err}}'),
    \end{equation*}
    which by Claim~\ref{claim:erravoid} is $\cE(G_{\mathrm{run}},\; \eta_{\mathrm{run}},\; 100w^7T2^T\gamma_{\mathrm{run}})$-avoiding, as desired.
  \end{proof}

\end{proof}

\section{Fault-Tolerance Proof for Error Correction}
\label{app:ecproof}
In this section, we present the proof of Proposition~\ref{prop:errcorr}, which is somewhat similar in structure to the proofs of Propostion~\ref{prop:switchdown} and Proposition~\ref{prop:stateprep}.

\begin{proof}[Proof of Proposition~\ref{prop:errcorr}]
  The desired gadget $\cQ$ is given in Algorithm~\ref{alg:errcorr}. By definition line~\ref{li:ecseZ} and line~\ref{li:ecseX} each use $w+4$ timesteps (see Corollary~\ref{cor:syndext}), line~\ref{li:ecdecZ} and line~\ref{li:ecdecX} each use $0$ timesteps (as they perform classical computations), and line~\ref{li:eccorrZ} and line~\ref{li:eccorrX} each use $1$ timestep, for a total of $T=2(w+5)=2w+10=O(w)$ timesteps. Furthermore, in addition to the $n=|C^i|$ physical qubits, line~\ref{li:ecseZ} uses $|C^{i+1}|$ ancilla qubits and line~\ref{li:ecseX} uses $|C^{i-1}$ ancilla qubits, for a total space usage of $N=|C^{i-1}|+|C^i|+|C^{i+1}|\leq nw+n+nw\leq 3nw=O(nw)$ physical qubits.

  Also note that the vertex set $V_{\mathrm{run}}\supseteq V(G^{\cC}_{i,i+1,i+2})=C^i\sqcup C^{i+1}\sqcup C^{i+2}$ contains the set $C^i\sqcup C^{i+1}$ of physical qubits, so $\cE_{\mathrm{run}}$ is a well-defined family of bad sets for $\cQ$.

  Our goal is to show that for every $\ell\in\bN$, every $\rho\in\bC^{2^{k+\ell}\times 2^{k+\ell}}$, every $\cE_{\mathrm{in}}$-deviation $\sigma=E_0\Enc(\rho)E_0'\in\bC^{2^{n+\ell}\times 2^{n+\ell}}$ of\footnote{Similarly as described in Footnote~\ref{footnote:sdabuse}, we abuse notation by letting $\Enc$ denote the isomorphism from $\bF_2^k\xrightarrow{\sim}H^i(\cC)$, the associated encoding isometry, and the associated encoding channel. We also sometimes write $\Enc$ as a shorthand for $\Enc\otimes\cI_\ell$. The meaning will always be made clear from the argument.} $\Enc(\rho)$ (for Paulis $E_0,E_0'$ such that $\supp(E_0)\cup\supp(E_0')\subseteq C^i$ is $\cE_{\mathrm{in}}$-avoiding), and every $\cE_{\mathrm{run}}^{\sqcup T}$-avoiding Pauli fault $\cF$, the resulting output $\cQ[\cF](\sigma)$ is a Pauli $\cE_{\mathrm{out}}$-deviation of $\Enc(\rho)$.

  Let $S_{\cF}\subseteq C^{i-1}\sqcup C^i\sqcup C^{i+1}$ denote the set of all physical qubits that lie in (any timestep of) the forward-looking lightcone in $\cQ$ (given by Algorithm~\ref{alg:errcorr}) of any qubit in $\bigcup_{t\in[T]}\supp(\cF_t)$, where we take each lightcone starting from timestep $1$ for data qubits in $C^i$, or the timestep the qubit was initialized for ancilla qubits in $C^{i-1}$ and $C^{i+1}$. Note that this foward-looking lightcone only counts quantum gates in line~\ref{li:ecseZ}, line~\ref{li:eccorrZ}, line~\ref{li:ecseX}, and line~\ref{li:eccorrX}, and does not count classical gates from line~\ref{li:ecdecZ} and line~\ref{li:ecdecX}.

  Similarly, let $S_{\mathrm{in}}\subseteq C^{i-1}\sqcup C^i\sqcup C^{i+1}$ denote the set of all physical qubits that lie in (any timestep of) the forward-looking lightcone in $\cQ$ of any point in $\supp(E_0)\cup\supp(E_0')$.

  The following claim follows from the same reasoning used to show Claim~\ref{claim:SFavoid} in Section~\ref{sec:stateprep}, so we omit the proof to avoid redundancy.
  
  \begin{claim}
    \label{claim:ecSFinavoid}
    The set $S_{\cF}$ is $\cE(G_{\mathrm{run}},\; \eta_{\mathrm{run}},\; T2^T\gamma_{\mathrm{run}})$-avoiding, and the set $S_{\mathrm{in}}$ is $\cE(G_{\mathrm{run}},\; \eta_{\mathrm{in}},\; 2^T\gamma_{\mathrm{in}})$-avoiding.
  \end{claim}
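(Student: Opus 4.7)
The plan is to adapt the proof of Claim~\ref{claim:SFavoid} from the state-preparation analysis essentially verbatim, handling the fault support $S_{\cF}$ and the input support $S_{\mathrm{in}}$ in parallel arguments. Both sets are defined as forward-lightcones in $\cQ$ of some ``seed'' set (either $\bigcup_t\supp(\cF_t)$ or $\supp(E_0)\cup\supp(E_0')$), and the key quantitative inputs are (i) the seed set is sparse inside connected subgraphs of $G_{\mathrm{run}}$ of size $\geq\eta$, and (ii) the forward-lightcone of a single qubit in $\cQ$ has size at most $2^T$ and forms a connected subgraph of $G_{\mathrm{run}}$.

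The first step is the usual ``closure'' trick: assuming for contradiction a witness $V$ of size $\geq\eta_{\mathrm{run}}$ inducing a connected subgraph of $G_{\mathrm{run}}$ with $|V\cap S_{\cF}|/|V|\geq T2^T\gamma_{\mathrm{run}}$, I repeatedly absorb into $V$ any vertex of $S_{\cF}\setminus V$ lying in the $G_{\mathrm{run}}$-neighborhood of $V$. This only makes $|V|$ and $|V\cap S_{\cF}|/|V|$ larger, and ensures that $V$ contains each connected component of the subgraph induced by $S_{\cF}$ entirely or not at all. An identical closure is performed for $S_{\mathrm{in}}$ with density threshold $2^T\gamma_{\mathrm{in}}$.

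The second step controls the forward-lightcone. All quantum operations in Algorithm~\ref{alg:errcorr} are either single-qubit gates or $CNOT$s between a data qubit in $C^i$ and an ancilla in $C^{i-1}\sqcup C^{i+1}$, and the corresponding pairs of qubits are connected by an edge in $G^{\cC}_{i-1,i,i+1}\subseteq G_{\mathrm{run}}$ (this is exactly how \FnSyndExt{} is implemented in Algorithm~\ref{alg:syndext}). Over $T$ timesteps, the forward-lightcone of a single space-time location therefore contains at most $2^T$ qubits, all lying in a single connected subgraph of $G_{\mathrm{run}}$. By the closure step, this lightcone lies entirely inside $V$ once its root does, so
\[
|V\cap S_{\cF}| \;<\; 2^T\cdot\Big|V\cap\bigcup_{t\in[T]}\supp(\cF_t)\Big|,
\qquad
|V\cap S_{\mathrm{in}}| \;<\; 2^T\cdot|V\cap(\supp(E_0)\cup\supp(E_0'))|.
\]

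The third step invokes the avoidance hypotheses. Since $\cF$ is $\cE_{\mathrm{run}}^{\sqcup T}$-avoiding, each individual $\supp(\cF_t)$ has at most $\gamma_{\mathrm{run}}|V|$ elements inside $V$, so summing over the $T$ timesteps yields $|V\cap\bigcup_t\supp(\cF_t)|\leq T\gamma_{\mathrm{run}}|V|$ and hence $|V\cap S_{\cF}|<T2^T\gamma_{\mathrm{run}}|V|$, contradicting the assumed density. The analogous bound for $S_{\mathrm{in}}$ uses that $\supp(E_0)\cup\supp(E_0')$ is $\cE_{\mathrm{in}}$-avoiding (a single set, not a $T$-fold disjoint union), giving $|V\cap S_{\mathrm{in}}|<2^T\gamma_{\mathrm{in}}|V|$ and the desired contradiction. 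There is no serious obstacle here; the only thing to double-check is that the lightcone really does stay inside $G_{\mathrm{run}}$, which follows because $G_{\mathrm{run}}$ contains $G^{\cC}_{i-1,i,i+1}$ as a subgraph and every two-qubit gate of $\cQ$ acts on a pair of qubits adjacent in $G^{\cC}_{i-1,i,i+1}$.
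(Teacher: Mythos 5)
Your proposal is correct and matches the paper's intended approach exactly: the paper explicitly declines to write out a proof of Claim~\ref{claim:ecSFinavoid}, saying only that it ``follows from the same reasoning used to show Claim~\ref{claim:SFavoid}'' in the state-preparation analysis, and your plan is precisely that adaptation (closure of $V$ under the induced subgraph, the $2^T$ connected-lightcone bound using the fact that the two-qubit gates in \FnSyndExt{} act on pairs adjacent in $G^{\cC}_{i-1,i,i+1}\subseteq G_{\mathrm{run}}$, and then the avoidance hypotheses, applied once with the $T$-fold union $\bigcup_t\supp(\cF_t)$ for $S_{\cF}$ and once with the single seed set $\supp(E_0)\cup\supp(E_0')$ for $S_{\mathrm{in}}$, which is why the latter bound has no factor of $T$).
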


  Lemma~\ref{lem:ecz} below analyzes the $Z$-correction in lines~\ref{li:ecseZ}--\ref{li:eccorrZ} of Algorithm~\ref{alg:errcorr}; the analysis of the $X$-correction in lines~\ref{li:ecseX}--\ref{li:eccorrX} is then exactly analogous.
  
  \begin{lemma}
    \label{lem:ecz}
    Following the execution of line~\ref{li:eccorrZ} in Algorithm~\ref{alg:errcorr}, the joint state of the code register $C^i$ and the $\ell$-qubit ancilla system is proportional to $E_2\Enc(\rho)E_2'$ for a Pauli error $E_2=Z^{e_{2,Z}}X^{e_{2,X}},\; E_2'=X^{e_{2,X}'}Z^{e_2,Z'}$ such that
    \begin{equation*}
      \supp(e_{2,X})\cup\supp(e_{2,X}')\subseteq C^i \hspace{1em}\text{ is }\hspace{1em} \cE\left(G_{\mathrm{run}},\; \eta_{\mathrm{run}},\; \frac{(w+1)T2^{T+4}\gamma_{\mathrm{run}}}{\gamma_{\mathrm{dec}}}\right)\text{-avoiding},
    \end{equation*}
    and
    \begin{equation*}
      \supp(e_{2,Z})\cup\supp(e_{2,Z})\subseteq (S_{\mathrm{in}}\cup S_{\cF})\cap C^i.
    \end{equation*}
  \end{lemma}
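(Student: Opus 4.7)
My approach mirrors the proofs of Proposition \ref{prop:switchdown} and Proposition \ref{prop:stateprep}, but with one new ingredient: exploiting the $\gamma_{\mathrm{dec}} > 0$ parameter of the small-set syndrome-flip decoder to obtain genuine error \emph{reduction} rather than merely error control.

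I would first propagate $E_0, E_0'$ together with the fault $\cF$ through the Clifford syndrome-extraction subroutine in line~\ref{li:ecseZ}. Standard Pauli-frame tracking through the $CNOT$ gates presents the state just before the ancilla measurement as the ideal encoding with an effective error decomposing into an $X$-type data error $e_X \in \cC^i$, an $X$-type syndrome error $f_X \in \cC^{i+1}$, and $Z$-type analogues (plus primed counterparts), with $\supp(e_X) \cup \supp(e_X') \subseteq (S_{\mathrm{in}} \cup S_{\cF}) \cap C^i$ and $\supp(f_X) \cup \supp(f_X') \subseteq S_{\cF} \cap C^{i+1}$. The measurement collapses the state to a mixture over syndromes $s_Z = \delta(e_X) + f_X = \delta(e_X') + f_X'$; the decoder outputs some $a_Z^i$; and line~\ref{li:eccorrZ} applies $X^{a_Z^i}$, so the residual $X$-error is $e_{2,X} = e_X + a_Z^i + g_X$, where $g_X$ absorbs any $X$-fault from the correction timestep (and hence is supported in $S_{\cF}$). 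The $Z$-part of the residual is unaffected by the $X$-correction, so $\supp(e_{2,Z}) \cup \supp(e_{2,Z}') \subseteq (S_{\mathrm{in}} \cup S_{\cF}) \cap C^i$ follows immediately from the forward-lightcone definitions, settling the second claim of the lemma.

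For the $X$-part, let $S_{\mathrm{syn}}$ denote the footprint of the call to \FnSSFlipSyn{} in line~\ref{li:ecdecZ}. By an argument analogous to Claim~\ref{claim:synavoid}, combining Lemma~\ref{lem:footprint} (applied with $\supp(e) \cup \supp(f) \subseteq S_{\mathrm{in}} \cup S_{\cF}$) with Claim~\ref{claim:ecSFinavoid}, I would show that $\tilde{S} := S_{\cF} \cup S_{\mathrm{in}} \cup S_{\mathrm{syn}}$ is $\cE(G_{\mathrm{run}}, \eta_{\mathrm{in}}, \gamma')$-avoiding for some $\gamma' = O_w(T 2^T (\gamma_{\mathrm{in}} + \gamma_{\mathrm{run}}))$. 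Given the hypothesis (\ref{eq:ecbadparams}) on $\gamma_{\mathrm{in}}, \gamma_{\mathrm{run}}$, this density lies strictly below $1$, so every connected component $W$ of the subgraph of $G_{\mathrm{run}}$ induced by $\tilde{S}$ satisfies $|W| < \eta_{\mathrm{in}} \leq m$.

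I would then argue by contradiction. Suppose some connected $V \subseteq V_{\mathrm{run}}$ of $G_{\mathrm{run}}$ has $|V| \geq \eta_{\mathrm{run}}$ and $|V \cap \supp(e_{2,X})|/|V| \geq \gamma_{\mathrm{out}}$. The contribution from $g_X$ is at most $T 2^T \gamma_{\mathrm{run}}$ by Claim~\ref{claim:ecSFinavoid}, so the bulk of the excess must come from $\supp(e_X + a_Z^i) \subseteq S_{\mathrm{syn}}$. For every connected component $W$ of $\tilde{S}$ intersecting $V$ (with $|W| < m$), Lemma~\ref{lem:sslocal} guarantees that the decoder ran independently on $W$ and terminated, so no small-set flip on $W$ reduces $|\delta((e_X + a_Z^i)|_W) + f_X|_W|$. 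The contrapositive of the $(m,\gamma_{\mathrm{dec}})$-small-set syndrome-flip decoder property then forces, on each $W$ with $(e_X + a_Z^i)|_W \neq 0$, the bound $|f_X|_W| > \gamma_{\mathrm{dec}} \cdot |\delta((e_X + a_Z^i)|_W)|$. Summing these local inequalities over the components inside $V$, converting $|\delta(\cdot)|$ back to $|\supp(\cdot)|$ via an $O_w(1)$ locality factor (valid because $|W| < d_i(\cC)$ forbids a nontrivial cocycle on $W$), and bounding $|f_X|_V| \leq |V \cap S_{\cF}| \leq T 2^T \gamma_{\mathrm{run}} |V|$ via Claim~\ref{claim:ecSFinavoid}, gives $|V \cap \supp(e_X + a_Z^i)| \leq (w+1) T 2^T \gamma_{\mathrm{run}} |V| / \gamma_{\mathrm{dec}}$. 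Adding in the $g_X$ contribution and accounting for the remaining numerical slack yields a contradiction with the choice $\gamma_{\mathrm{out}} = (w+1) T 2^{T+4} \gamma_{\mathrm{run}}/\gamma_{\mathrm{dec}}$, where the $2^4$ factor absorbs the constants. The main obstacle is pinning down the locality-type conversion from $|\delta(e)|$ back to $|\supp(e)|$ on a local component: globally this inequality fails for $e$ near a cocycle, so I will need to exploit that on a component $W$ of size less than the distance $d_i(\cC)$ no nontrivial cocycle fits, and combine this with a careful iteration-level accounting similar to the proof of Lemma~\ref{lem:sssyndromeflip} to get the correct constants.
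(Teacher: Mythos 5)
Your approach reproduces the paper's first two stages correctly — propagating Pauli errors, identifying $S_{\mathrm{syn}}$, bounding its density via the syndrome-flip contrapositive — but there is a genuine gap in the final conversion step, and you have identified it yourself without resolving it.

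The gap: you set $e_{2,X} = e_X + a_Z^i + g_X$ and try to bound $|\supp(e_X + a_Z^i)|$ directly from a bound on $|\delta(e_X + a_Z^i)|$. That conversion is simply false, even on a connected component $W$ with $|W| < d_i(\cC)$. The distance lower bound only rules out nontrivial \emph{cocycles} supported on $W$; it does nothing to rule out large \emph{coboundaries} (i.e.~$X$-stabilizers) $\delta(c^{i-1})$ supported on $W$, which have $|\delta(\delta(c^{i-1}))| = 0$ regardless of how large $|\delta(c^{i-1})|$ is. Indeed it is entirely possible that the decoder's output $a_Z^i$ differs from $e_X$ by a large stabilizer locally, in which case $e_X + a_Z^i$ has small image under $\delta$ but large weight, and your sum over components cannot control it.

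What the paper does instead is not to bound $|\supp(e_X + a_Z^i)|$ at all, but to find a \emph{better representative} of the residual Pauli modulo stabilizers. It runs the error-flip decoder \FnSSFlipErr{} on $e_{1,X} + a_Z^i$ to produce a decomposition $e_{1,X} + a_Z^i = \delta(\tilde{a}_Z^{i-1}) + \tilde{a}_Z^i$. The term $X^{\delta(\tilde{a}_Z^{i-1})}$ is a stabilizer and acts trivially on $\Enc(\rho)$, so the lemma's $E_2$ can be taken to have $X$-part $\tilde{a}_Z^i$ (plus the fault from line~\ref{li:eccorrZ}), not $e_{1,X} + a_Z^i$. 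The crucial local version of your desired conversion — equation~(\ref{eq:tadta}), namely $|\tilde{a}_Z^i|_V| \leq w\,|\delta(\tilde{a}_Z^i|_V)|$ — holds not as a general cochain inequality but as a structural property of the error-flip decoder's output: each accepted flip $c^i$ adds at most $w$ to $\tilde{a}_Z^i$ while strictly decreasing the residual syndrome, so the weight of $\tilde{a}_Z^i$ is controlled by the number of flips, which is controlled by the syndrome weight. Combining this with the syndrome-flip contrapositive bound (now applied to bound $|\delta(\tilde{a}_Z^i|_{V'})|$ via $|f_X|_{V'}|/\gamma_{\mathrm{dec}}$, as in Claim~\ref{claim:tSavoid}) then closes the argument. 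Without the intermediate error-flip decomposition, the step you flag as "the main obstacle" has no valid resolution.

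As a secondary, smaller point: to make the syndrome-flip contrapositive give a statement per connected component, you need not only that the decoder stopped globally but that its behavior restricted to each component is identical to a local run — this is Lemma~\ref{lem:sslocal} and Lemma~\ref{lem:footprint}, which you invoke, so this part is fine once the component sizes are shown to be $< \eta_{\mathrm{in}} \leq m$ (which also requires the density argument of Claims~\ref{claim:ecsynavoid} and~\ref{claim:ecerravoid}).
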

  \begin{proof}
    By the assumption that all on the input and in the fault $\cF$ are Pauli errors, and all gates that we apply are Clifford. Therefore throughout the execution from the beginning of Algorithm~\ref{alg:errcorr} through the moment just prior to the Pauli $Z$ measurements in the call to \FnSSFlipSyn{$s_Z;i,\cC^*$} in line~\ref{li:ecseZ}, the state differs from the noisless execution (with no input or fault errors) by a Pauli error. In this noiseless execution, all measurement outcomes are $0$. Therefore in the noisy execution, the measurement outcome $s_Z\in\cC^{i+1}$ is also deterministic, and the post-measurement state is either $0$ (if one of the syndrome qubits has an $X$ error from one side but not the other) or else is
    \begin{equation}
      \label{eq:ecpostmeas}
      Z^{e_{1,Z}}X^{e_{1,X}}\Enc(\rho)X^{e_{1,X}'}Z^{e_{1,Z}'}
    \end{equation}
    for some $e_{1,X},e_{1,Z},e_{1,X}',e_{1,Z}'\in\cC^i$ supported inside $S_{\mathrm{in}}\cup S_{\cF}$, such that\footnote{Throughout this proof, we write $\delta=\delta^{\cC}$.}
    \begin{equation}
      \label{eq:sZequiv}
      \delta(e_{1,X})+f_X = s_Z = \delta(e_{1,X}')+f_X'
    \end{equation}
    for some $f_X,f_X'\in\cC^{i+1}$ supported inside $S_{\cF}$. If the state collapses to $0$, then the output is $0$ and the lemma holds, so assume that the post-measurement state is of the form in~(\ref{eq:ecpostmeas}).
    
    Let $S_{\mathrm{syn}}\subseteq C^i\sqcup C^{i+1}$ (resp.~$S_{\mathrm{syn}}'\subseteq C^i\sqcup C^{i+1}$) denote the footprint (see Definition~\ref{def:footprint}) of the call to \FnSSFlipSyn{$s_Z=\delta(e_{1,X})+f_X;\; i,\; \cC^*$} (resp.~\FnSSFlipSyn{$s_Z=\delta(e_{1,X}')+f_X';\; i,\; \cC^*$}) in line~\ref{li:ecdecZ} of Algorithm~\ref{alg:errcorr}. Note that we could have $S_{\mathrm{syn}}\neq S_{\mathrm{syn}}'$ even though by~(\ref{eq:sZequiv}) holds because by Definition~\ref{def:footprint}, the footprint can depend on the choice decomposition of $s_Z$ into a syndrome plus a syndrome error. The claims we prove below for the unprimed variables $e_{1,X},f_X,\dots$ apply analogously to the primed variables $e_{1,X}',f_X',\dots$.

    \begin{claim}
      \label{claim:ecsynavoid}
      The set $S_{\mathrm{in}}\cup S_{\cF}\cup S_{\mathrm{syn}}$ is $\cE(G_{\mathrm{run}},\; \eta_{\mathrm{in}},\; 5w^3T2^{T+1}\gamma_{\mathrm{in}})$-avoiding.
    \end{claim}
    \begin{proof}
      The proof is similar to that of Claim~\ref{claim:synavoid}. Assume for a contradiction that there exists some $V\subseteq V_{\mathrm{run}}$ of size $|V|\geq\eta_{\mathrm{in}}$ that induces a connected subgraph of $G_{\mathrm{run}}$ with $|V\cap(S_{\mathrm{in}}\cup S_{\cF}\cup S_{\mathrm{syn}})|\geq 5w^3T2^{T+1}\gamma_{\mathrm{in}}$. As in the proof of Claim~\ref{claim:synavoid}, we may repeatedly add points in $(S_{\mathrm{in}}\cup S_{\cF}\cup S_{\mathrm{syn}})\setminus V$ that lie in the neighborhood of $V$, in order to assume that $V$ contains all or none of every connected component of the subgraph of $G_{\mathrm{run}}$ induced by $S_{\mathrm{in}}\cup S_{\cF}\cup S_{\mathrm{syn}}$.

      Recall as described above that the call \FnSSFlipSyn{$s_Z;i,\cC^*$} in line~\ref{li:ecdecZ} of Algorithm~\ref{alg:errcorr} has $s_Z=\delta(e_{1,X})+f_X$ with $\supp(e_{1,X})\subseteq S_{\mathrm{in}}\cup S_{\cF}\cap C^i$ and $\supp(f_X)\subseteq S_{\mathrm{in}}\cup S_{\cF}\cap C^{i+1}$. Then by Lemma~\ref{lem:footprint}, we have
      \begin{align*}
        |V\cap S_{\mathrm{syn}}|
        &\leq |V\cap(\supp(e_{1,X})\cup\supp(f))|/\gamma_{\mathrm{syn}} \\
        &\leq |V\cap (S_{\mathrm{in}}\cup S_{\cF})| \cdot 4w^3 \\
        &< (2^T\gamma_{\mathrm{in}}+T2^T\gamma_{\mathrm{run}})|V| \cdot 4w^3 \\
        &\leq T2^{T+1}\gamma_{\mathrm{in}}|V|\cdot 4w^3.
      \end{align*}
      where the third inequality above holds by Claim~\ref{claim:ecSFinavoid} and because $\eta_{\mathrm{run}}\leq\eta_{\mathrm{in}}$, and the fourth inequality holds because $\gamma_{\mathrm{run}}\leq\gamma_{\mathrm{in}}$ (see~(\ref{eq:ecbadparams}). We then again apply Claim~\ref{claim:ecSFinavoid} to conclude that
      \begin{equation*}
        |V\cap (S_{\mathrm{in}}\cup S_{\cF}\cup S_{\mathrm{syn}})| \leq |V\cap (S_{\mathrm{in}}\cup S_{\cF})|+|V\cap S_{\mathrm{syn}}| < (4w^3+1)T2^{T+1}\gamma_{\mathrm{in}}|V|,
      \end{equation*}
      which contradicts the assumption that $|V\cap(S_{\mathrm{in}}\cup S_{\cF}\cup S_{\mathrm{syn}})|\geq 5w^3T2^{T+1}\gamma_{\mathrm{in}}$, as desired.
    \end{proof}
    
    Similarly as in the proof of Proposition~\ref{prop:stateprep} and Proposition~\ref{prop:switchdown}, each update $c^i$ computed in \FnSSFlipSyn{$s_Z;i,\cC^*$} is supported within the neighborhood in $G^{\cC}_{i-1,i,i+1}\subseteq G_{\mathrm{run}}$ of the footprint from the previous step. As a consequence, if we restrict $s_Z$ to a given connected component $V$ of the subgraph of $G_{\mathrm{run}}$ induced by $S_{\mathrm{in}}\cup S_{\cF}\cup S_{\mathrm{syn}}$ (and zero out all values outside of $V$), the output of \FnSSFlipSyn{$\delta(s_Z|_{V\cap C^i});i,\cC^*$} must equal the restriction $a_Z^i|_{V\cap C^i}$ of the output $a_Z^i$ of \FnSSFlipSyn{$s_Z;i+1,\cC^*$} (see Lemma~\ref{lem:sslocal}).

    Recall that by assumption $\eta_{\mathrm{in}}\leq m$. Also by definition, $\gamma_{\mathrm{in}}\leq 1/5w^3T2^{T+1}$, so it follows by Claim~\ref{claim:ecsynavoid} that the subgraph of $G_{\mathrm{run}}$ induced by $S_{\mathrm{in}}\cup S_{\cF}\cup S_{\mathrm{syn}}$ has no connected components containing $\geq\eta_{\mathrm{in}}$ vertices.
    
    
    Furthermore, by definition $\supp(e_{1,X})\subseteq S_{\mathrm{in}}\cup S_{\cF}$ and $\supp(a_Z^i)\subseteq S_{\mathrm{syn}}$, so $\supp(e_{1,X}+a_Z^i)\subseteq S_{\mathrm{in}}\cup S_{\cF}\cup S_{\mathrm{syn}}$.

    Let $S_{\mathrm{err}}$ and $\tilde{a}_Z^{i-1},\tilde{a}_Z^i$ (resp.~$S_{\mathrm{err}}'$ and $(\tilde{a}_Z^{i-1})',(\tilde{a}_Z^i)'$) denote the footprint and output respectively of \FnSSFlipErr{$e_{1,X}+a_Z^i\;i,\cC^*$} (resp.~\FnSSFlipErr{$e_{1,X}'+a_Z^i\;i,\cC^*$}).

    \begin{claim}
      \label{claim:ecerravoid}
      The set $S_{\mathrm{in}}\cup S_{\cF}\cup S_{\mathrm{syn}}\cup S_{\mathrm{err}}$ is $\cE(G_{\mathrm{run}},\; \eta_{\mathrm{in}},\; 50w^7T2^{T+1}\gamma_{\mathrm{in}})$-avoiding.
    \end{claim}
    \begin{proof}
      The proof is similar to that of Claim~\ref{claim:erravoid}. Assume for a contradiction that there exists some $V\subseteq V_{\mathrm{run}}$ of size $|V|\geq\eta_{\mathrm{in}}$ that induces a connected subgraph of $G_{\mathrm{run}}$ with $|V\cap(S_{\mathrm{in}}\cup S_{\cF}\cup S_{\mathrm{syn}}\cup S_{\mathrm{err}})|\geq 50w^7T2^{T+1}\gamma_{\mathrm{in}}$. As in the proof of Claim~\ref{claim:ecsynavoid}, we may repeatedly add points in $(S_{\mathrm{in}}\cup S_{\cF}\cup S_{\mathrm{syn}}\cup S_{\mathrm{err}})\setminus V$ that lie in the neighborhood of $V$, in order to assume that $V$ contains all or none of every connected component of the subgraph of $G_{\mathrm{run}}$ induced by $S_{\mathrm{in}}\cup S_{\cF}\cup S_{\mathrm{syn}}\cup S_{\mathrm{err}}$.
      
      By Lemma~\ref{lem:footprint}, we have
      \begin{align*}
        |V\cap S_{\mathrm{err}}|
        &\leq |V\cap\supp(e_{1,X}+a_Z^i)|/\gamma_{\mathrm{err}} \\
        &\leq |V\cap(S_{\mathrm{in}}\cup S_{\cF}\cup S_{\mathrm{syn}})| \cdot 8w^4 \\
        &< 5w^3T2^{T+1}\gamma_{\mathrm{in}}|V| \cdot 8w^4,
      \end{align*}
      where the third inequality above holds by Claim~\ref{claim:ecsynavoid}. We then again apply Claim~\ref{claim:ecsynavoid} to conclude that
      \begin{equation*}
        |V\cap(S_{\mathrm{in}}\cup S_{\cF}\cup S_{\mathrm{syn}}\cup S_{\mathrm{err}})| \leq |V\cap(S_{\mathrm{in}}\cup S_{\cF}\cup S_{\mathrm{syn}})| + |V\cap S_{\mathrm{err}}| < (8w^4+1)5w^3T2^{T+1}\gamma_{\mathrm{in}}|V|,
      \end{equation*}
      which contradicts the assumption that $|V\cap(S_{\mathrm{in}}\cup S_{\cF}\cup S_{\mathrm{syn}}\cup S_{\mathrm{err}})|\geq 50w^7T2^{T+1}\gamma_{\mathrm{in}}$, as desired.
    \end{proof}

    Similarly as in the proof of Proposition~\ref{prop:stateprep} and Proposition~\ref{prop:switchdown}, each update $c^{i-1}$ and $c^i$ computed in \FnSSFlipErr{$e_{1,X}+a_Z^i\;i,\cC^*$} is supported within the neighborhood in $G^{\cC}_{i-1,i,i+1}\subseteq G_{\mathrm{run}}$ of the footprint from the previous step. As a consequence, if we restrict $e_{1,X}+a_Z^i$ to a given connected component $V$ of the subgraph of $G_{\mathrm{un}}$ induced by $S_{\mathrm{in}}\cup S_{\cF}\cup S_{\mathrm{syn}}\cup S_{\mathrm{err}}$ (and zero out all values outside of $V$), the output of \FnSSFlipErr{$e_{1,X}+a_Z^i|_{V\cap C^i}\;i,\cC^*$} must equal the restriction $\tilde{a}_Z^{i-1}|_{V\cap C^{i-1}},\; \tilde{a}_Z^i|_{V\cap C^i}$ of the output $\tilde{a}_Z^{i-1},\tilde{a}_Z^i$ of \FnSSFlipErr{$e_{1,X}+a_Z^i\;i,\cC^*$} (see Lemma~\ref{lem:sslocal}).

    Recall that by assumption $\eta_{\mathrm{in}}\leq m$. Also by assumption~(\ref{eq:ecbadparams}), $\gamma_{\mathrm{in}}\leq 1/50w^7T2^{T+1}$, so it follows by Claim~\ref{claim:ecerravoid} that the subgraph of $G_{\mathrm{run}}$ induced by $S_{\mathrm{in}}\cup S_{\cF}\cup S_{\mathrm{syn}}\cup S_{\mathrm{err}}$ has no connected components containing $\geq\eta_{\mathrm{in}}$ vertices. Therefore within every connected component $V$ of this induced subgraph, by the assumption that $\cC^*$ has a $m$-small-set error-flip decoder at level $i$, the while loop in the restricted execution \FnSSFlipErr{$e_{1,X}+a_Z^i|_{V\cap C^i}\;i,\cC^*$} will not terminate until the computed output $\tilde{a}_Z^{i-1}|_{V\cap C^{i-1}},\; \tilde{a}_Z^i|_{V\cap C^i}$ satisfies
    \begin{equation*}
      \delta(\tilde{a}_Z^{i-1}|_{V\cap C^{i-1}})+(\tilde{a}_Z^i|_{V\cap C^i})=e_{1,X}+a_Z^i|_{V\cap C^i}.
    \end{equation*}
    It also follows that the output will never be FAIL, and that
    \begin{equation}
      \label{eq:taea}
      \delta(\tilde{a}_Z^{i-1})+\tilde{a}_Z^i = e_{1,X}+a_Z^i,
    \end{equation}
    where by definition $\supp(\tilde{a}_Z^{i-1}),\delta(\supp(\tilde{a}_Z^{i-1})),\supp(\tilde{a}_Z^i)\subseteq S_{\mathrm{err}}$.

    Furthermore, because every update to $\tilde{a}_Z^i|_{V\cap C^i}$ in the execution of \FnSSFlipErr{$e_{1,X}+a_Z^i|_{V\cap C^i}\;i,\cC^*$} increases $|\tilde{a}_Z^i|_{V\cap C^i}|$ by at most $|c^i|\leq w$ while decreasing $|\delta(e_{1,X}+a_Z^i+\tilde{a}_Z^i|_{V\cap C^i})|$ by at least $1$, we must have
    \begin{equation}
      \label{eq:tadta}
      |\tilde{a}_Z^i|_{V\cap C^i}| \leq w|\delta(e_{1,X}+a_Z^i|_{V\cap C^i})| = w|\delta(\tilde{a}_Z^i|_{V\cap C^i})|.
    \end{equation}

    \begin{claim}
      \label{claim:tSavoid}
      The set $\tilde{S}:=\supp(\tilde{a}_Z^i)\cup\supp(\delta(\tilde{a}_Z^i))$ is $\cE(G_{\mathrm{run}},\; \eta_{\mathrm{run}},\; \tilde{\gamma})$-avoiding for
      \begin{equation}
        \label{eq:tgamma}
        \tilde{\gamma} = \frac{(w+1)T2^{T+2}\gamma_{\mathrm{run}}}{\gamma_{\mathrm{dec}}}.
      \end{equation}
    \end{claim}
    \begin{proof}
      Assume for a contradiction that there exists some $V\subseteq V_{\mathrm{run}}$ of size $|V|\geq\eta_{\mathrm{run}}$ that induces a connected subgraph of $G_{\mathrm{run}}$ with $|V\cap\tilde{S}|/|V|\geq \tilde{\gamma}$. We may repeatedly add points in $(S_{\cF}\cup\tilde{S})\setminus V$ that lie in the neighborhood of $V$, in order to assume that $V$ contains all or none of every connected component of the subgraph of $G_{\mathrm{run}}$ induced by $S_{\cF}\cup\tilde{S}$. As adding points in $S_{\cF}$ cannot decrease the fraction of points in $V$ that lie in $S_{\cF}\cap\tilde{S}$, We also then must have
      \begin{equation*}
        |V\cap(S_{\cF}\cup\tilde{S})| \geq \tilde{\gamma}|V|.
      \end{equation*}
      However, by Claim~\ref{claim:ecSFinavoid}
      \begin{equation}
        \label{eq:VcSF}
        |V\cap S_{\cF}| \leq T2^T\gamma_{\mathrm{run}}|V|,
      \end{equation}
      so we must have
      \begin{equation}
        \label{eq:VctS}
        |V\cap\tilde{S}| \geq (\tilde{\gamma}-T2^T\gamma_{\mathrm{run}})|V|.
      \end{equation}

      Now because $\tilde{S}\subseteq S_{\mathrm{err}}$ and $\gamma_{\mathrm{in}}\leq 1/50w^7T2^{T+1}$, Claim~\ref{claim:ecerravoid} implies that every connected component $V'\subseteq V_{\mathrm{run}}$ of the subgraph of $G_{\mathrm{run}}$ induced by $S_{\cF}\cup\tilde{S}\subseteq S_{\cF}\cup S_{\mathrm{err}}$ contains at most $|V'|\leq\eta_{\mathrm{in}}\leq m$ vertices. Therefore for every such $V'$, we must have
      \begin{equation}
        \label{eq:tabound}
        |\delta(\tilde{a}_Z^i|_{V'\cap C^i})| \leq \frac{|f_X|_{V'\cap C^{i+1}}|}{\gamma_{\mathrm{dec}}}.
      \end{equation}
      Specifically, if~(\ref{eq:tabound}) did not hold, then by the assumption that $\cC^*$ has a $(m,\gamma_{\mathrm{dec}})$-small-set syndrome-flip decoder at level $i$ (see Definition~\ref{def:ssflip}), there would exist some $c^0\in C^0$ and $c^i\in\cC^i$ with $c^0\preceq c^i$ such that
      \begin{align*}
        |\delta(c^i)+(\delta(\tilde{a}_Z^i)+f_X)|_{V'\cap C^{i+1}}|
        &< |\delta(\tilde{a}_Z^i)+f_X|_{V'\cap C^{i+1}}|.
      \end{align*}
      In order for the above inequality to hold, we must have $\supp(\delta(c^i))\cap\supp(\delta(\tilde{a}_Z^i)+f_X|_{V'\cap C^{i+1}})\neq\emptyset$, and hence $c^0$ lies above some element of $\supp(\delta(\tilde{a}_Z^i)+f_X|_{V'\cap C^{i+1}})$, so every element of $\supp(c^i)$ and $\supp(\delta(c^i))$ lie inside the neighborhood in $G_{\mathrm{run}}$ of $\supp(\delta(\tilde{a}_Z^i)+f_X|_{V'\cap C^{i+1}})\subseteq V'$. But by definition no other connected components of the subgraph of $G_{\mathrm{run}}$ induced by $S_{\cF}\cup\tilde{S}$ intersect this neighborhood, and $\supp(\tilde{a}_Z^i),\supp(\delta(\tilde{a}_Z^i))\subseteq\tilde{S}$ and $\supp(f_X)\subseteq S_{\cF}$, so it follows that 
      \begin{align*}
        |\delta(c^i)+\delta(\tilde{a}_Z^i)+f_X|
        &< |\delta(\tilde{a}_Z^i)+f_X|.
      \end{align*}
      By~(\ref{eq:taea}) we have $\delta(\tilde{a}_Z^i)=\delta(e_{1,X}+a_Z^i)$, so the above inequality is equivalent to
      \begin{align*}
        |\delta(c^i)+\delta(e_{1,X}+a_Z^i)+f_X|
        &< |\delta(e_{1,X}+a_Z^i)+f_X|.
      \end{align*}
      But this inequality contradicts the assumption that \FnSSFlipSyn{$s_Z=\delta(e_{1,X})+f_X;i+1,\cC^*$} terminated with output $a_Z^i$, as $c^i$ is a valid update to add in to $a_Z^i$ that further reduces the sydrome weight. Thus~(\ref{eq:tabound}) holds for every connected component $V'\subseteq V_{\mathrm{run}}$ of the subgraph of $G_{\mathrm{run}}$ induced by $S_{\cF}\cup\tilde{S}$.

      By construction $V$ contains all or none of every such connected component $V'$, and $\tilde{a}_Z^i,\delta(\tilde{a}_Z^i),f_X$ are supported inside $S_{\cF}\cup\tilde{S}$, so it follows that
      \begin{equation*}
        |\delta(\tilde{a}_Z^i|_{V\cap C^i})| \leq \frac{|f_X|_{V\cap C^{i+1}}|}{\gamma_{\mathrm{dec}}}.
      \end{equation*}
      However, because $\supp(f_X)\subseteq S_{\cF}$, by~(\ref{eq:VcSF}) we have
      \begin{equation*}
        |f_X|_{V\cap C^{i+1}}| \leq |V\cap S_{\cF}| \leq T2^T\gamma_{\mathrm{run}}|V|.
      \end{equation*}
      Meanwhile, because by~(\ref{eq:tadta})
      \begin{equation*}
        |V\cap\tilde{S}| = |\tilde{a}_Z^i|_{V\cap C^i}|+|\delta(\tilde{a}_Z^i|_{V\cap C^i})| \leq (w+1)|\delta(\tilde{a}_Z^i|_{V\cap C^i})|,
      \end{equation*}
      by~(\ref{eq:VctS}) we have
      \begin{equation*}
        |\delta(\tilde{a}_Z^i|_{V\cap C^i})| \geq \frac{|V\cap\tilde{S}|}{w+1} \geq \frac{\tilde{\gamma}-T2^T\gamma_{\mathrm{run}}}{w+1}|V|.
      \end{equation*}
      Combining the above inequalities, we conclude that
      \begin{equation*}
        \frac{\tilde{\gamma}-T2^T\gamma_{\mathrm{run}}}{w+1} \leq \frac{T2^T\gamma_{\mathrm{run}}}{\gamma_{\mathrm{dec}}},
      \end{equation*}
      which contradicts the definition of $\tilde{\gamma}$ in~(\ref{eq:tgamma}), where here we use the fact that $\gamma_{\mathrm{dec}}\leq 1$ by Remark~\ref{remark:gdl1}. Thus the assumption that there exists some $V\subseteq V_{\mathrm{run}}$ of size $|V|\geq\eta_{\mathrm{run}}$ that induces a connected subgraph of $G_{\mathrm{run}}$ with $|V\cap\tilde{S}|/|V|\geq \tilde{\gamma}$ was false, as desired.
    \end{proof}

    Recall that the entire analysis above was stated for the unprimed variables $e_{1,X},f_X,\dots$ but similarly applies to the primed variables $e_{1,X}',f_X',\dots$, so that for instance~(\ref{eq:taea}) becomes
    \begin{equation}
      \label{eq:taeap}
      \delta((\tilde{a}_Z^{i-1})')+(\tilde{a}_Z^i)' = e_{1,X}'+a_Z^i.
    \end{equation}
    Then by~(\ref{eq:taea}) and~(\ref{eq:taeap}), applying the correction $X^{a_Z^i}$ in line~\ref{li:eccorrZ} of Algorithm~\ref{alg:errcorr} to the state in~(\ref{eq:ecpostmeas}) yields the corrected state 
    \begin{equation*}
      Z^{e_{1,Z}}X^{e_{1,X}+a_Z^i}\Enc(\rho)X^{e_{1,X}'+a_Z^i}Z^{e_{1,Z}'} = Z^{e_{1,Z}}X^{\tilde{a}_Z^i}\Enc(\rho)X^{(\tilde{a}_Z^i)'}Z^{e_{1,Z}'},
    \end{equation*}
    where we use the fact that $X^{\delta(\tilde{a}_Z^{i-1})},X^{\delta((\tilde{a}_Z^{i-1})')}$ are by definition stabilizers of our code $Q$ and hence preserve the code state $\Enc(\rho)$. Recall here that $\supp(e_{1,Z}),\supp(e_{1,Z'})\subseteq(S_{\mathrm{in}}\cup S_{\cF})\cap C^i$, while $\supp(\tilde{a}_Z^i),\supp(\tilde{a}_Z^i)'$ are both $\cE(G_{\mathrm{run}},\;\eta_{\mathrm{run}},\;\tilde{\gamma})$-avoiding by Claim~\ref{claim:tSavoid}.

    An additional Pauli error from $\cF$, which is supported inside $S_{\cF}$, will also occur at the timestep of line~\ref{li:eccorrZ}. Therefore the final state after line~\ref{li:eccorrZ} will be
    \begin{equation*}
      Z^{e_{2,Z}}X^{e_{2,X}}\Enc(\rho)X^{e_{2,X}'}Z^{e_{2,Z}'},
    \end{equation*}
    where $\supp(e_{2,Z})\cup\supp(e_{2,Z}')\subseteq(S_{\mathrm{in}}\cup S_{\cF})\cap C^i$, and $\supp(e_{2,X})\cup\supp(e_{2,X}')\subseteq C^i$ is $\cE(G_{\mathrm{run}},\;\eta_{\mathrm{run}},\;\gamma)$-avoiding for
    \begin{equation*}
      \gamma = 2\tilde{\gamma}+T2^T\gamma_{\mathrm{run}} \leq \frac{(w+1)T2^{T+4}\gamma_{\mathrm{run}}}{\gamma_{\mathrm{dec}}},
    \end{equation*}
    as desired, where we have also applied Claim~\ref{claim:ecSFinavoid}
  \end{proof}

  Applying exactly analogous reasoning as used to prove Lemma~\ref{lem:ecz} regarding the $Z$-correction in lines~\ref{li:ecseZ}--\ref{li:eccorrZ} of Algorithm~\ref{alg:errcorr}, but instead to the $X$-correction in lines~\ref{li:ecseX}--\ref{li:eccorrX}, yields the desired result. Specifically, because the only additional $X$ errors after line~\ref{li:eccorrX} arise from $\cF$, we conclude that the final output of Algorithm~\ref{alg:errcorr} is proportional to $E_3\Enc(\rho)E_3'$ for a Pauli error $E_3=Z^{e_{3,Z}}X^{e_{3,X}},\; E_3'=X^{e_{3,X}'}Z^{e_3,Z'}$ such that $\supp(e_{3,X})\cup\supp(e_{3,X}')\subseteq C^i$ and $\supp(e_{3,Z})\cup\supp(e_{3,Z}')\subseteq C^i$ are both
  \begin{equation*}
    \cE\left(G_{\mathrm{run}},\; \eta_{\mathrm{run}},\; \frac{(w+1)T2^{T+5}\gamma_{\mathrm{run}}}{\gamma_{\mathrm{dec}}}\right)\text{-avoiding},
  \end{equation*}
  so the overall error support $\supp(E_3)\cup\supp(E_3')$ is 
  \begin{equation*}
    \cE\left(G_{\mathrm{run}},\; \eta_{\mathrm{run}},\; \frac{(w+1)T2^{T+6}\gamma_{\mathrm{run}}}{\gamma_{\mathrm{dec}}}\right)\text{-avoiding},
  \end{equation*}
  as desired.
\end{proof}

\section{Fault-Tolerance Proof for Upwards Code Switching}
\label{app:suproof}
In this section, we present the proof of Proposition~\ref{prop:switchup}. As described in Section~\ref{sec:switchup}, this proof simply combines gadgets presented previously in the paper.

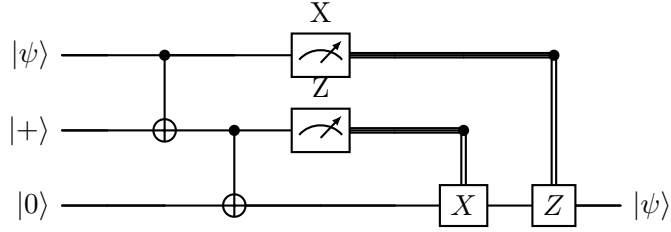
\begin{figure}
  \centering
  \begin{quantikz}[row sep={1cm,between origins}, column sep=0.6cm]
    \lstick{$\ket{\psi}$} & \qw      & \ctrl{1} & \qw       & \meter{$X$} & \cw & \cw & \cwbend{2} \\
    \lstick{$\ket{+}$}    & \qw      & \targ{}  & \ctrl{1}  & \meter{$Z$} & \cw & \cwbend{1} \\
    \lstick{$\ket{0}$}    & \qw      & \qw      & \targ{}   & \qw         & \qw & \gate{X}     & \gate{Z} & \qw \rstick{$\ket{\psi}$}
  \end{quantikz}
  \caption{\label{fig:teleport} Standard circuit for teleporting a qubit $\ket{\psi}$.}
\end{figure}


\begin{proof}[Proof of Proposition~\ref{prop:switchup}]
  We construct the gadget $\cQ$ by implementing the standard teleportation gadget in Figure~\ref{fig:teleport} using the gadgets previously presented in this paper. Specifically, Figure~\ref{fig:teleport} shows the circuit for teleporting a single qubit. We implement this circuit on $k_{\mathrm{in}}$ logical qubits simultaneously, where the first wire corresponds to the input code state in $Q_{\mathrm{in}}$, the second wire corresponds to the state $\Enc_{\mathrm{in}}(\ket{+}\bra{+}^{\otimes k_{\mathrm{in}}})$ the we prepare, and the third wire corresponds to the state $\Enc_{\mathrm{out}}(\ket{0}\bra{0}^{\otimes k_{\mathrm{out}}})$ that we also prepare. The extra $k_{\mathrm{out}}-k_{\mathrm{in}}$ logical $\ket{0}$ qubits in this third codeblock are precisely the logical $\ket{0}$ qubits that $\bar{\cO}$ pads onto the input.

  Specifically, we prepare $\Enc_{\mathrm{out}}(\ket{0}\bra{0}^{\otimes k_{\mathrm{out}}})$ directly using the gadget in Proposition~\ref{prop:stateprep}, though where we dualize to the chain complex $\cC_*$ and exchange the roles of the $X$ and $Z$ bases (see Remark~\ref{remark:spexchange}). Here we rely on the $(m,0)$-small-set flip decodability of $\cC_*$ at level $i-1$.

  We are unable to similarly prepare $\Enc_{\mathrm{in}}(\ket{+}\bra{+}^{\otimes k_{\mathrm{in}}})$ by directly applying Proposition~\ref{prop:stateprep}, as we do not assume the necessary small-set flip decoders for this lower-dimensional code (see Remark~\ref{remark:sulowdim}). Instead, letting $k_{\cD}=\dim(H^{i-1}(\cD))$, we first apply Proposition~\ref{prop:stateprep} to prepare $\Enc^{\cD,i-1}(\ket{0}\bra{0}^{\otimes k_{\cD}}$. Here we use the $(m,0)$-small-set flip decodability of $\cD^*$ at level $i$. We then apply Proposition~\ref{prop:switchdown} to map this $r$-dimensional code state state down to the $(r-1)$-dimensional code state $\Enc_{\mathrm{in}}(\ket{0}\bra{0}^{\otimes k_{\mathrm{in}}})=(\Enc^{\cA})^{\sqcup L^{\cB,1}}(\ket{0}\bra{0}^{\otimes k_{\mathrm{in}}})$. Here we use the $(m,0)$-small-set flip decodability of $\cD_{\bar{L},*}$ at level $i-1$. Also note that in this application of Proposition~\ref{prop:switchdown}, we have dualized to a chain complex, and exchanged the roles of the $X$ and $Z$ bases (see Remark~\ref{remark:sdexchange}).

  Once we have prepared these logical ancilla states, we then perform the two logical $CNOT$ gates in Figure~\ref{fig:teleport} using Lemma~\ref{lem:CNOTsame} and Lemma~\ref{lem:CNOTdiff} respectively. We then perform the logical Pauli measurements using Lemma~\ref{lem:measure}, and the Pauli corrections by simply applying the appropriate logical Pauli operators (in a depth-$1$ Pauli circuit; the fault-tolerance analysis of this step is straightforward).

  The desired result then follows directly from the results listed above proving fault-tolerance of our gadgets; Proposition~\ref{prop:parcomp} implies that the fault-tolerance holds even when each gadget in the construction only acts on a subset of the physical qubits in the overall circuit.
\end{proof}

\begin{remark}
  \label{remark:sulowdim}
  When $r=3$, then $(r-1)=2$-dimensional codes typically will not support the single-shot state preparation gadget given in Section~\ref{sec:stateprep}. Hence as described in the proof of Proposition~\ref{prop:switchup} above, we prepare the logical $\ket{+}$ state in Figure~\ref{fig:teleport} using the downwards code switching gadget from Section~\ref{sec:switchdown}.

  If we consider codes of one dimension higher, and only consider $r\geq 4$ and $2\leq i\leq r-2$, then we are able to avoid using the downwards code switching gadget, and instead directly prepare the logical $\ket{+}$ states using Proposition~\ref{prop:stateprep}. In fact, this approach also allows us to perform downwards code switching without Proposition~\ref{prop:switchdown} (using a similar construction as in Proposition~\ref{prop:switchup} with state preparation via Proposition~\ref{prop:stateprep}). This approach furthermore allows us to instantiate these gadgets with codes of rate arbitrarily close to $1$, as opposed to just some small constant rate, because we would no longer need small-set flip decodability of the restricted complex $\cD_{\bar{L},*}$ (recall that $R>0$ is just a small constant in Corollary~\ref{cor:losslessfamily}).

  To avoid redundancy in our presentation, we primarily state results that apply to arbitrary $r\geq 3$; the extensions described above for the $r\geq 4$ case are then straightforward adaptations of our arguments.
\end{remark}

\section{Proposed Method for Universal Computation via Transversal Non-Clifford Gates}
\label{app:protocol}

In the main text above, we focused on constant-overhead gadgets for various addressable and parallel Clifford gates. To achieve universal quantum computation, it suffices to add a gadget for performing a non-Clifford gate, such as $CCZ$ or $T$. Standard approaches include magic state distillation (see e.g.~\cite{bravyi_magic-state_2012}), or switching into a code that natively supports transversal (i.e.~constant-depth fault-tolerant) non-Clifford gates. As this latter approach seems well-suited for our techniques, in this appendix we discuss the possibility of applying our code switching to codes with transversal non-Clifford gates. Such a protocol could be viewed as an extension of the color/surface-code-based protocol of \cite{bombin_dimensional_2016} to qLDPC codes with better parameters. As explained below, here we simply propose a high-level approach without proving fault-tolerance. The independent and concurrent work of \cite{tan2025single} provides a more rigorous treatment for a specific family of codes based on those of \cite{zhu_topological_2025}.

We emphasize that while this section describes one proposal for universal fault-tolerant computation, our gadgets in the main text can already be applied to reduce the space-time overhead of specific components of existing schemes for universal fault-tolerant computation.

QLDPC codes with transversal non-Clifford (in particular, $CCZ$) gates often arise from 3-dimensional tensor products of classical LDPC codes (e.g.~\cite{bombin_topological_2007,golowich_quantum_2025-1,zhu_topological_2025})\footnote{The codes of \cite{golowich_quantum_2025-1} have polylogarithmic rather than constant locality (i.e.~check weight), so they may be considered ``nearly LDPC.''}. If our code-switching-based gadgets for Clifford gates could be applied to these codes, then the transversal $CCZ$ gate would provide a constant-overhead gadget for $CCZ$ gates, yielding a complete scheme for universal fault-tolerant quantum computation. In particular, the constructions of \cite{golowich_quantum_2025-1} and \cite{zhu_topological_2025} are able to perform many (specifically, $\Theta(n^{1-\epsilon})$ and $\Theta(n^{1/3})$ respectively; see Section~\ref{sec:transnc} below) logical $CCZ$ gates in a codeblock using a constant-depth physical circuit. Hence a code-switching-based scheme using these codes may yield improvements over distillation-based schemes, which often only perform a single logical non-Clifford gate in a codeblock in any given gadget (see e.g.~\cite{nguyen_quantum_2025}).

However, the construction of qLDPC codes with transversal non-Clifford gates has proven to be a challenging problem, and the constructions of \cite{golowich_quantum_2025-1,zhu_topological_2025} still have far from optimal parameters. It therefore remains an interesting direction of future work to determine if the overall space-time overhead of a fault-tolerance scheme based on these codes would improve upon existing schemes such as in \cite{nguyen_quantum_2025}.

Furthermore, our proofs of a fault-tolerance threshold rely on our small-set flip decoder in Section~\ref{sec:ssflip}. We only construct such a decoder for quantum codes arising as products of classical codes associated to lossless expanders. Yet known product constructions of qLDPC codes with transversal non-Clifford gates (see above) require (at least some of) the classical factor codes to have structure that is incompatible with lossless expansion. Hence our fault-tolerance proofs do not immediately apply to such codes.

Below, we provide some more details on the codes of \cite{golowich_quantum_2025-1,zhu_topological_2025}, which exhibit some of the best known parameters for qLDPC codes with transversal non-Clifford gates. We highlight interesting directions for future work regarding applying our code switching techniques to these codes.


\subsection{Codes with Transversal Non-Clifford Gates}
\label{sec:transnc}

We now provide more details on qLPDC codes with transversal non-Clifford (specifically $CCZ$) gates, to provide context for the question of performing code switching on such codes. In this section, for convenience we will use the non-standard notation that a quantum code is $[[n,k,d]]_{CCZ}$ if it supports a constant-depth physical circuit that induces logical $CCZ$ gates on $k$ disjoint triples of logical qubits.

For many years, the state-of-the-art qLDPC codes with transversal $CCZ$ were given by the $[[n,\Theta(1),\Theta(n^{1/3})]]_{CCZ}$ 3-dimensional color/surface code \cite{bombin_topological_2007} (see also \cite{bombin_exact_2007,bombin_gauge_2015,kubica_unfolding_2015}), which can be viewed as a tensor product of three chain complexes associated to classical repetition codes. 

Recently, for arbitrarily small constant $\epsilon>0$, \cite{golowich_quantum_2025-1} obtained a $[[n,\Theta(n^{1-\epsilon}),\tilde{\Theta}(n^{1/3})]]_{CCZ}$ construction that is nearly LDPC, meaning that the locality is polylogarithmic instead of constant. This almost-linear code dimension $k=\Theta(n^{1-\epsilon})$ was achieved by taking a tensor product of three chain complexes associated to classical algebraic codes, which can be viewed as Reed-Muller codes with sparsified parity-check matrices.

Subsequently, \cite{zhu_topological_2025} gave constructions of $[[n,\Theta(n^{1/3}),\Theta(n^{1/3})]]_{CCZ}$ and $[[n,\Theta(n^{1/2}),\Theta(n^{1/2})]]_{CCZ}$ qLDPC codes. These codes are obtained by ``thickening'' 3-dimensional product codes, meaning that the product code is mapped to a manifold or higher-dimensional CW complex using a modified version of the mapping from \cite{freedman_building_2021}. The underlying classical codes used in the product consist of both classical repetition codes and good classical LDPC codes (based on expander graphs, e.g.~\cite{sipser_expander_1996}). 

Both the construction of \cite{golowich_quantum_2025-1} and the $\Theta(n^{1/3})$-distance construction of \cite{zhu_topological_2025} are based on tensor products of classical LDPC codes, and hence may be amenable to similar code switching techniques as we develop. The $\Theta(n^{1/2})$-distance construction of \cite{zhu_topological_2025} uses the related but more involved lifted/balanced product \cite{breuckmann_balanced_2021,panteleev_asymptotically_2022,leverrier_quantum_2022-1,dinur_good_2023,lin_good_2022}. However, as the input codes in these constructions are not based on lossless expanders (see above), our fault-tolerance proofs do not directly apply. It is an interesting direction for future work to see if a threshold can still be proven for analogues of our code switching gadgets on these codes.

A positive resolution to this question would immediately yield a new scheme for universal fault-tolerant quantum computation. By the code parameters listed above, this scheme would be capable of performing polynomially many logical $CCZ$ gates in a codeblock using a constant-depth physical circuit. Specifically, the codes of \cite{golowich_quantum_2025-1} can perform $n^{1-\epsilon}$ such $CCZ$ gates in parallel for arbitrarily small constant $\epsilon>0$, while the $\Theta(n^{1/3})$-distance codes of \cite{zhu_topological_2025} can perform $\Theta(n^{1/3})$ such $CCZ$ gates in parallel. Note that while the polylogarithmic locality of the codes of \cite{golowich_quantum_2025-1} may present an apparent barrier against establishing a fault-tolerance threshold, this issue may be resolved by concatenating with a small code such as a polylogarithmic-sized surface code (see e.g.~\cite{pattison_hierarchical_2023}).

We emphasize again that our gadgets for Clifford gates in the main text already improve the efficiency of specific components of existing fault-tolerance schemes. If single-shot code switching for the qLDPC codes with $CCZ$ described above could be established, it would be an interesting direction to compare the overall scheme's space-time overhead to that of other methods for performing non-Clifford gates.

\end{document}